\let\newfloat\newfloat@ltx
\newtheorem{theorem}{Theorem} 
\theoremstyle{definition}
\newtheorem{definition}{Definition}[section]
\newtheorem{claim}{Claim} 
\newtheorem{conjecture}{Conjecture} 
\theoremstyle{remark}
\newtheorem*{remark}{Remark}
\newtheorem{example}{Example}
\renewcommand{\paragraph}[1]{%
  \vspace{1.5ex}%
  \begin{center}
    \textit{#1}
  \end{center}
  \vspace{0.5ex}%
}
\def\l@subsubsection#1#2{} 
\newcommand{\appsubsection}[1]{%
  \refstepcounter{subsection}%
  \bigskip
  \noindent\textbf{\thesubsection\quad #1}\par
  \smallskip
}
\begin{document} 
\title{Do quantum linear solvers offer advantage for networks-based system of linear equations? }
\author{Disha Shetty}
\email{dishag.shetty12@gmail.com}
\affiliation{Centre for Quantum Engineering, Research and Education, TCG CREST, Sector V, Salt Lake, Kolkata 700091, India} 
 
\author{Supriyo Dutta} 
\affiliation{Department of Mathematics,
National Institute of Technology Agartala,
Jirania, West Tripura, India - 799046}
\author{Palak Chawla}
\affiliation{Centre for Quantum Engineering, Research and Education, TCG CREST, Sector V, Salt Lake, Kolkata 700091, India}
\author{Akshaya Jayashankar}
\affiliation{Centre for Quantum Engineering, Research and Education, TCG CREST, Sector V, Salt Lake, Kolkata 700091, India}
\author{Jordi Riu}
\affiliation{Qilimanjaro Quantum Tech, Carrer de Veneçuela, 74, Sant Martí, 08019, Barcelona, Spain}
\affiliation{Universitat Politècnica de Catalunya, Carrer de Jordi Girona, 3, 08034 Barcelona, Spain}
\author{Jan Nogué}
\affiliation{Qilimanjaro Quantum Tech, Carrer de Veneçuela, 74, Sant Martí, 08019, Barcelona, Spain}
\affiliation{Universitat Politècnica de Catalunya, Carrer de Jordi Girona, 3, 08034 Barcelona, Spain}
\author{K. Sugisaki}
\affiliation{Centre for Quantum Engineering, Research and Education, TCG CREST, Sector V, Salt Lake, Kolkata 700091, India}
\affiliation{Deloitte Tohmatsu Financial Advisory LLC, 3-2-3 Marunouchi, Chiyoda-ku, Tokyo 100-8363, Japan}
\author{V. S. Prasannaa}
\email{srinivasaprasannaa@gmail.com}
\affiliation{Centre for Quantum Engineering, Research and Education, TCG CREST, Sector V, Salt Lake, Kolkata 700091, India} 
\affiliation{Academy of Scientific and Innovative Research (AcSIR), Ghaziabad 201002, India} 

\begin{abstract} 
In this exploratory numerical study, we assess the suitability of Quantum Linear Solvers (QLSs) toward providing a quantum advantage for Networks-based Linear System Problems (NLSPs). NLSPs naturally arise from graphs, and are of importance as they are connected to real-world applications. The achievable advantage with a QLS for an NLSP depends on the interplay between the scaling of condition number and sparsity of matrices associated with the graph family. We analyze 50 graph families and identify that within the scope of our study, only 21 of them exhibit prospects for an exponential advantage with the Harrow-Hassidim-Lloyd (HHL) algorithm relative to an efficient classical solver. We call graph families that offer advantage with HHL as good graph families. We also compare the performance of the considered 50 graph families with 7 other QLSs. Furthermore, we report that some graph families graduate from offering no advantage with HHL to promising an exponential advantage with improved algorithms such as the Childs-Kothari-Somma algorithm. We also introduce a unified graph superfamily and show the existence of infinite good graph families in it. Since the runtime expressions for linear solvers involve condition number, which in itself is not easy to compute, ascertaining advantage prospects with quantum linear solvers itself is not an easy problem. Thus, we conjecture the conditions under which one may visually examine a graph family and guess the prospects for an advantage. Finally, we very briefly touch upon some practical issues that may arise even if the aforementioned graph theoretic requirements are satisfied, including quantum hardware challenges. 
\end{abstract} 

\maketitle 

\tableofcontents 

\section{Introduction}\label{sec:intro} 

Quantum algorithms promise speed-up for certain problems relative to their best known classical counterparts, thus motivating the ongoing second quantum revolution, which involves building reliable quantum computers to eventually advance toward commercial realization \cite{scholten2024assessing, bova2021commercial, gill2025quantum, macquarrie2020emerging, cumming2022using, sinno2023performance, ruane2025quantum}. Quantum linear solvers (QLSs) are particularly significant in this context, as they are among the few classes of known quantum algorithms that can, in principle, offer an exponential advantage  \cite{morales}. A QLS prescribes a protocol towards evaluating systems of linear equations, $A\vec{x}=\vec{b}$, where the $(\mathcal{N} \times \mathcal{N})$ matrix $A$ and the vector $\vec{b}$ are known quantities. Since such systems are ubiquitous in natural sciences and engineering \cite{anton2013elementary, machol1961linear, hamid2019balancing, wilson1958solution}, solving them efficiently in view of the aforementioned quantum speed-up is of importance. An example of a QLS is the well-known Harrow-Hassidim-Lloyd (HHL) algorithm ~\cite{Harrow2009QuantumEquations}, whose runtime complexity goes as $\mathrm{poly}(\mathrm{log}(\mathcal{N}), s, \kappa, 1/\epsilon)$ where $\mathcal{N}$, $s$, and $\kappa$ refer to the system size, sparsity, and the condition number of the $A$ matrix respectively, while $\epsilon$ refers to the additive error in the output state after performing HHL. The algorithm can outperform their best known classical counterparts such as the conjugate gradient algorithm \cite{cg1994}, when these parameters grow in a certain way with respect to the system size. However, identifying such settings where QLSs offer such an advantage is challenging in spite of a lot of efforts in literature in this direction. Our exploratory work is aimed toward addressing this timely problem. 

\textit{Analysis of quantum advantage from HHL for specific problems-- }Prior numerical studies in this direction indicate that the prospects of speed-up are limited at best due to the condition number scaling. In Ref. \cite{coffrin}, the authors consider the DC power flow problem and analyze the end-to-end complexity of the HHL algorithm, including obtaining the scaling behaviours of the condition number and sparsity with system size for this application. Their numerical simulations demonstrate that since $\kappa$ grows polynomially in system size (where system size is the number of buses), practical advantage from HHL for this application is unlikely. The authors of Ref. \cite{mlhhl} study the suitability of the HHL algorithm along these lines in the context of labeling problems using machine learning classifiers, and reach the conclusion that the condition number has a critical impact on the problem. The authors of Ref. \cite{hydro} explore modeling hydrological fracture networks, and en route, carry out an analysis of $\kappa$ with system size. Their results point to $\kappa$ growing unfavourably in system size, unless preconditioning, which itself is classically resource intensive, is employed. A work by the authors of Ref. \cite{yalovetzky2021hybrid} applies the HHL algorithm to finance (portfolio optimization), and their data indicates that $\kappa$ scales quadratically with system size(number of assets) even in the best case scenario. A relatively recent work that comments on $\kappa$ scaling is Ref. \cite{gopalakrishnan2024solving}, where the authors solve Hele-Shaw flow in fluid dynamics using HHL, and find that $\kappa$ scales exponentially for their example case with system size (domain grid points). A recent analysis on limited molecular systems indicated that HHL-based quantum chemistry application \cite{psihhl} seems promising as it shows polylogarithmic scaling in $\kappa$. 

\textit{Analysis of quantum advantage from HHL in the most general setting-- }The exact opposite viewpoint would involve comparing the runtime complexities of the HHL and some suitably chosen efficient classical algorithm based on their respective complexity expressions. The authors of Ref. \cite{tu2025towards} do exactly this in their work, with the end goal of carrying out a resource estimation analysis in terms of space, time, and energy for the HHL algorithm. 

\begin{figure*}[t]
\includegraphics[width=17cm]{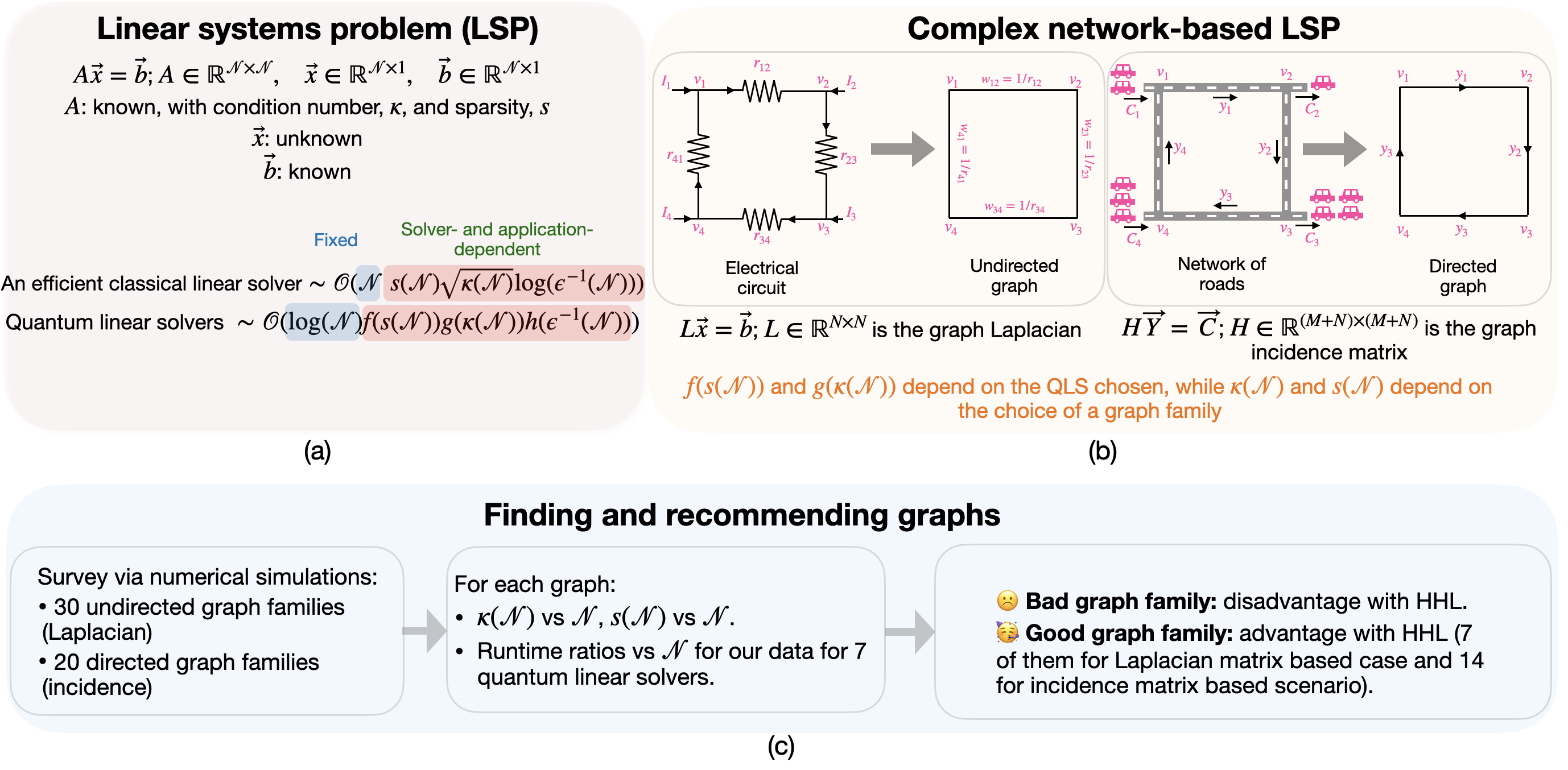}
\caption{An overview of the current study. (a) Illustration of the linear systems problem (where for simplicity, we assume real-valued entries for $A$, $\vec{b}$, and $\vec{x}$) and the runtime complexity scaling of quantum linear solvers and an efficient classical linear solver, for which we happen to borrow the runtime expression of the otherwise limited conjugate gradient method. Here, ${f}(s(\mathcal{N}))$ denotes a function of sparsity, $s(\mathcal{N})$, which in turn depends on the system size, $\mathcal{N}$. (b) Depiction of the connection between real-world applications, such as effective resistance determination and traffic flow congestion detection with graph Laplacian and graph incidence matrices respectively, and linear equations. (c) Schematic of our numerical survey on 50 graph families, where for each of them, we study $\kappa(\mathcal{N})$ and $s(\mathcal{N})$ behaviour with system size, $\mathcal{N}$ ($N$ for Laplacian matrix and $N+M$, for the incidence matrix), to infer within the scope of our calculations the potential for quantum advantage (good graph family) or no advantage (bad graph family), all with the HHL algorithm and compared relative to the efficient classical linear solver. }\label{fig:gblsp}
\end{figure*} 

\textit{Aurea mediocritas?: Analysis of quantum advantage from quantum linear solvers in networks-based linear system problems-- }A third route, a road not taken, could be to chart a middle course by neither pursuing application-specific studies nor a fully general analysis, but rather group together diverse applications under a common mathematical framework, and then perform extensive numerical analysis to analyze the potential for advantage. Such a framework should offer flexible scaling in $\kappa$ and $s$, and system size. 

For this work, we consider for our numerical analyses the highly flexible Networks-based Linear System Problems (NLSP) framework, where one begins with a complex network and by applying a set of rules arrives at a system of linear equations. The NLSP framework accommodates a panoply of graphs/complex networks, and thus admits not only a huge variety of functions for $\kappa$, $s$, and system size, but also many potential applications. We pick two types of NLSPs, one where the $A$ matrix is the graph Laplacian, and the other where it is the graph incidence matrix. The former finds its applications in problems such as determining effective resistances in electrical circuits and finding voltages in power flow problems \cite{vishnoi2013lx, spielman2010algorithms}, while the latter can be used in applications that involve finding flows in branches of networks, such as traffic flow congestion detection \cite{anton2013elementary}. 

Our study investigates different graph families to identify those that offer an advantage. Through detailed numerical analyses, we study the scaling of $\kappa$, which has garnered much attention in literature as discussed in the earlier paragraphs, and $s$ across candidate graph families. It is especially important to note that finding analytical expressions for $\kappa$ for a graph family is extremely rare, and is hard in general. This necessitates and motivates our numerical study. Using our numerical data for $\kappa$ and $s$, we compare the runtime of a QLS relative to a fictitious efficient classical linear solver (CLS), enabling us to estimate the crossover points where quantum advantage can be realized. We identify graph families that offer advantage with HHL (good graph families), and also discuss a new graph superfamily that subsumes in it an infinity of good graph families. 

As performing numerical analyses to categorize the graph families are computationally costly, we ask whether one can \textit{qualitatively} assess $\kappa$ and $s$, and thus by extension the possibility of assessing prospects of an advantage by only looking at small instances of a graph family. We conjecture that it is possible, based on conclusions from our data. 

Lastly, we briefly discuss challenges outside graph theoretic considerations, and as an illustration of how the ideas from NLSP can be applied to a specific problem, we switch gears and consider the calculation of effective resistances in electrical circuits using the HHL algorithm on trapped ion quantum hardware. Fig. \ref{fig:gblsp} presents a summary of the topics covered in our study. 

The rest of the work is structured as follows: Given that the topics discussed in this study lie at the intersection of quantum algorithms, complexity, and graph theory and thus can be of interest to readers from all three communities, we attempt to make the article self-contained by devoting Sections \ref{sec:qls} and \ref{sec:cn-lsp} for introducing quantum linear solvers and networks-based linear system problems. Our results and subsequent discussions form the remaining sections: In Section \ref{sec:survey}, we discuss the results from our survey on candidate complex networks and their suitability for achieving quantum advantage. We begin with the results for Laplacian matrix (Section \ref{sec:LG}), and then move to incidence matrix (Section \ref{sec:IG}). This is followed by Section \ref{sec:graph_superfamily}, where we discuss a general graph superfamily construction from which we identify new good graphs. We introduce a conjecture to guess the possibility of advantage from a graph construction in Section \ref{sec:lessons}. In Section \ref{sec:hardware}, we briefly comment on our quantum hardware computations carried out on toy matrices. Finally, we conclude with a summary of the work and future prospects in Section \ref{sec:conclusion}. 

\section{Quantum linear solvers} \label{sec:qls} 

Given a linear system of equations, $A \vec{x}=\vec{b}$, where the coefficient matrix $A$ and the vector $\vec{b}$ are known, we `find' the vector $\vec{x}$ as $A^{-1} \vec{b}$ using a quantum algorithm, preferably in a time $\mathcal{O}(\mathrm{log}(\mathcal{N}))$, where $\mathcal{N}$ is the system size. We write `find' within quotation marks, since in practice, we calculate a feature of $\vec{x}$, as reading the elements of the solution vector takes away the advantage that the algorithm offers. 

In this sub-section, we mostly focus on the prototypical quantum linear solver, the HHL algorithm, as it typically conveys the core ideas that a QLS is built on. This is followed by a brief introduction to the other QLSs that we consider for this work. We note that the list of QLSs we consider here is not exhaustive. 

We begin with some definitions that are relevant to this section. 

\subsection{Definitions} 

\begin{definition} \label{def:kappa}
The finite condition number, $\kappa$, of a matrix is defined by the ratio of the absolute value of its largest to the absolute value of the smallest non-zero eigenvalues. We refer to the quantity simply as condition number for brevity in this work. 
\end{definition} 

\begin{definition} \label{def:sparsity}
The sparsity, $s$, of a matrix is defined by the number of non-zero entries in the row that contains the maximum number of non-zero entries. 
\end{definition} 

\begin{definition} \label{def:syssize} 
System size, $\mathcal{N}$, is defined as the number of rows of the matrix, $A$. 
\end{definition} 

In this work, we consider the following functions for growth of $\kappa$ and $s$ with system size: 

\begin{itemize}
\item Constant: $c$. 
\item Polylogarithmic: $a_p\ \mathrm{log}(\mathcal{N})^p + a_{p-1}\ \mathrm{log}(\mathcal{N})^{p-1} + \cdots +a_1\ \mathrm{log}(\mathcal{N}) + a_0$. We abbreviate this function as `polylog'. 
\item Polynomial: $a_p\mathcal{N}^p + a_{p-1}\mathcal{N}^{p-1} + \cdots +a_1\mathcal{N} + a_0$. 
\item Exponential: $a_2e^{a_1\mathcal{N}}+a_0$. 
\end{itemize} 

In the above equations, $a_i \in \mathbb{R}$. 

\subsection{The HHL algorithm} \label{sec:hhl} 

The HHL algorithm `finds' the solution $\ket{x} = A^{-1}\ket{b}$ by starting with the state, $\ket{b}$, and using a combination of quantum phase estimation (QPE) and controlled rotation gates, followed by measuring an ancillary qubit and a post-selection step. The steps involved in the algorithm, including an example of extracting a feature of the solution vector, is presented in Section S.1 of the Supplemental Material. 

The runtime complexity of the HHL algorithm goes as 
$$\mathcal{O}\left(\frac{\mathrm{log}(\mathcal{N}) \times (s(\mathcal{N}))^2 \times (\kappa(\mathcal{N}))^3}{\epsilon(\mathcal{N})}\right).$$ As discussed earlier, $\mathcal{N}$ is the system size, $s(\mathcal{N})$ is the sparsity of $A$, $\kappa(\mathcal{N})$ the condition number of $A$, and $\epsilon(\mathcal{N})$ is the additive error in the output state that we incur in the algorithm. In deriving the complexity of HHL, this error is assumed to be solely from inadequacy in the number of clock register qubits \cite{Harrow2009QuantumEquations}. In the above expression, we have explicitly shown the dependence of condition number, sparsity, and error on $\mathcal{N}$ to stress its importance in the context of our study. The scaling in $\kappa(\mathcal{N})$ and $\epsilon(\mathcal{N})$ are usually considered as typical drawbacks, and variants of HHL and subsequent QLSs improve on one or both of these aspects. 

\subsection{Variants of HHL and other QLSs} \label{sec:hhlvar} 

We now list four variants of HHL that we consider for this work, all of which focus on reducing $\kappa$ scaling: 

\begin{itemize}
\item 
\textbf{HHL with Amplitude Amplification (HHL-AA)} \cite{Harrow2009QuantumEquations, Brassard}: 
This is often assumed when one discusses HHL, but since the subroutine adds significant depth to the HHL quantum circuit (for example, see Ref. \cite{morales}), we keep it distinct from the original HHL circuit. The benefit that the variant offers is a reduction of complexity in $\kappa(\mathcal{N})$, from $\kappa(\mathcal{N})^3$ to $\kappa(\mathcal{N})^2$. Practically, this has the effect of reducing the number of shots in an HHL calculation. 

\item 
\textbf{HHL with Variable Time Amplitude Amplification (HHL-VTAA)} \cite{ambainis}: 
This approach improves over HHL-AA and can be thought of as its generalization. The method reduces the complexity further to $\kappa(\mathcal{N}) \mathrm{log}^3(\kappa(\mathcal{N}))$, but trades off in $\epsilon(\mathcal{N})$ scaling (see Table \ref{tab:complexities}). 

\item 
\textbf{Psi-HHL} \cite{psihhl}: 
This recently introduced approach reduces the complexity in $\kappa$ to its optimal scaling, that is, $\kappa(\mathcal{N})$, and with little increase in circuit depth, for cases with large $\kappa$ values. 
\item \textbf{CKS algorithm} \cite{Childs}: 
In this landmark work, the authors introduced two QLS algorithms (the Fourier approach and the Chebyshev approach) that are based on combining ideas such as the linear combination of unitaries, gapped phase estimation (to reduce the $1/\epsilon(\mathcal{N})$ scaling), and the VTAA technique (to reduce the scaling in $\kappa(\mathcal{N})$), which we together club under the term CKS algorithm, as both of them, though different in terms of their applicability in terms of sparsity of the $A$ matrix, scale near-linearly in $\kappa(\mathcal{N})$ and as polylog($1/\epsilon(\mathcal{N})$). The net scaling of the algorithm is 
$$\mathcal{O}\left(\mathrm{log}(\mathcal{N}) \times s(\mathcal{N}) \times \kappa(\mathcal{N})\times \mathrm{polylog}\left(\frac{s(\mathcal{N}) \times \kappa(\mathcal{N})}{\epsilon(\mathcal{N})}\right)\right).$$ 
Thus, this approach is near-optimal in scaling of both condition number and precision. 
\end{itemize}

Despite the promised speedup, these algorithms inherently suffer from the problem of preparing the input state, $\ket{b}$, from the classical data in $\vec{b}$. Unless $\vec{b}$ has specific structure that lowers the cost, preparing $\ket{b}$ can be exponentially costly in the number of gates, going as $\mathcal{O}(2^q)$, where $q$ is the number of qubits in the state register. Furthermore, loading an input matrix of size $2^q \times 2^q$ to the QPE module requires at most $O(2^{2q})$ operations, requiring resources that may grow exponentially with the input. We also note that reading the entire solution vector at the output of HHL or its variants using, for instance, quantum state tomography, is exponentially costly in its sample complexity, thus ruining the advantage offered by these algorithms. Instead, one needs to focus on extracting a \textit{feature} of the solution vector that is relevant to the problem considered \cite{Aaronson, morales}. 

We now list the other QLSs that we consider besides HHL and its variants: 

\begin{itemize}
\item 
\textbf{Phase randomisation method} \cite{prm}: 
The method is inspired by the adiabatic quantum computing model, and employs evolution randomisation. The method scales as 
$$\mathcal{O}\left(\frac{\mathrm{log}(\mathcal{N}) \times s(\mathcal{N}) \times \kappa(\mathcal{N}) \times \mathrm{log}(\kappa(\mathcal{N}))}{\epsilon(\mathcal{N})}\right),$$
and thus is near-optimal in $\kappa(\mathcal{N})$ without the need for the expensive VTAA procedure. However, the scaling in $\epsilon$ is still $1/\epsilon(\mathcal{N})$, as in HHL. 

\item \textbf{AQC(exp) method} \cite{toa}: 
The work demonstrates solving system of linear equations within the adiabatic quantum computing framework. This method too scales as 
$$\mathcal{O}\left(\mathrm{log}(N) \times s(\mathcal{N})  \times \kappa(\mathcal{N})\times \mathrm{polylog}\left(\frac{s(\mathcal{N}) \times \kappa(\mathcal{N})}{\epsilon(\mathcal{N})}\right) \right)$$ 
and thus is near-optimal in scaling of both condition number and precision, but unlike the CKS algorithm, the use of the expensive VTAA step is avoided. However, the minimum gap between the ground state and excited state of the time-dependent Hamiltonian that is varied adiabatically plays a strong role in determining the runtime of the algorithm \cite{toa}. Thus, only problems where the spectral gap shrinks polynomially in system size retain the speedup offered by this QLS. 

\item \textbf{Dream QLS: }
This is a fictitious QLS, which scales ideally in all of its variables. We assume that such a solver would go in its runtime complexity as 
$$\mathcal{O}\left(\mathrm{log}(\mathcal{N}) \times \sqrt{s(\mathcal{N})} \times \kappa(\mathcal{N}) \times \mathrm{log}\left(\frac{1}{\epsilon(\mathcal{N})}\right) \right).$$
The dream QLS serves as a benchmark to how much of an advantage we can get in the best case scenario, and thus subsumes all other QLSs that we do not consider. We assume in defining this solver that one cannot go below $\kappa(\mathcal{N})$ \cite{Harrow2009QuantumEquations} , $\sqrt{s(\mathcal{N})}$ \cite{mori2026sparsity}, and $\mathrm{log}(1/\epsilon(\mathcal{N}))$ \cite{morales} in its complexity expression. 
\end{itemize} 

\begin{table*}[t]
\centering
\caption{Table presenting the runtime complexities of the QLSs that we consider in this work for our survey. We note that all of the QLSs we consider are fault-tolerant era algorithms, and all of them offer a $\mathrm{log}(\mathcal{N})$ factor in their runtime as opposed to an efficient classical linear solver (CLS), which offers $\mathcal{N}$. } 
\label{tab:complexities} 
\begin{tabular}{c@{\hspace{40pt}}c} 
\hline \hline 
Algorithm & Runtime complexity \\
\hline
CLS &$\mathcal{O}\textbf{\bigg(}\mathcal{N}s(\mathcal{N}) \sqrt{\kappa(\mathcal{N})}\mathrm{log}\bigg(\frac{1}{\epsilon(\mathcal{N})}\bigg)\textbf{\bigg)}$ \\ 
HHL \cite{Harrow2009QuantumEquations}&$\mathcal{O}\textbf{\bigg(}\mathrm{log}(\mathcal{N})s(\mathcal{N})^2 \kappa(\mathcal{N})^3\frac{1}{\epsilon(\mathcal{N})}\textbf{\bigg)}$ \\ 
HHL-AA \cite{Harrow2009QuantumEquations, Brassard}&$\mathcal{O}\textbf{\bigg(}\mathrm{log}(\mathcal{N})s(\mathcal{N})^2 \kappa(\mathcal{N})^2\frac{1}{\epsilon(\mathcal{N})}\textbf{\bigg)}$ \\ 
HHL-VTAA \cite{ambainis}&$\mathcal{O}\textbf{\bigg(}\mathrm{log}(\mathcal{N})s(\mathcal{N})^2 \kappa(\mathcal{N}) \mathrm{log}^3\bigg(\frac{\kappa(\mathcal{N})}{\epsilon(\mathcal{N})}\bigg)\frac{1}{\epsilon(\mathcal{N})^3} \mathrm{log}^2 \bigg(\frac{1}{\epsilon(\mathcal{N})} \bigg) \textbf{\bigg)}$ \\ 
Psi-HHL \cite{psihhl}&$\mathcal{O}\textbf{\bigg(}\mathrm{log}(\mathcal{N})s(\mathcal{N})^2 \kappa(\mathcal{N})\frac{1}{\epsilon(\mathcal{N})}\textbf{\bigg)}$ \\ 
Phase randomisation method \cite{prm}&$\mathcal{O}\textbf{\bigg(}\mathrm{log}(\mathcal{N})s(\mathcal{N}) \kappa(\mathcal{N}) \mathrm{log}(\kappa(\mathcal{N}))\frac{1}{\epsilon(\mathcal{N})}\textbf{\bigg)}$ \\ 
CKS algorithm \cite{Childs} &$\mathcal{O}\textbf{\bigg(}\mathrm{log}(\mathcal{N})s (\mathcal{N}) \kappa(\mathcal{N}) \mathrm{polylog}\bigg(s(\mathcal{N}) \kappa(\mathcal{N}) \frac{1}{\epsilon(\mathcal{N})}\bigg)\textbf{\bigg)}$ \\ 
AQC(exp) method \cite{toa}&  $\mathcal{O}\textbf{\bigg(}\mathrm{log}(\mathcal{N})s (\mathcal{N}) \kappa(\mathcal{N}) \mathrm{polylog}\bigg(s(\mathcal{N}) \kappa(\mathcal{N}) \frac{1}{\epsilon(\mathcal{N})}\bigg)\textbf{\bigg)}$\\ 
Dream QLS&$\mathcal{O}\bigg(\mathrm{log}(\mathcal{N})\sqrt{s(\mathcal{N})}\kappa(\mathcal{N})\mathrm{log}\bigg(\frac{1}{\epsilon(\mathcal{N})}\bigg)\bigg)$\\ 
\hline \hline 
\end{tabular} 
\end{table*} 

Table \ref{tab:complexities} presents the expression for the runtime complexities of each of the aforementioned quantum algorithms in its second column. We assume throughout hereafter that $\epsilon^{-1}\sim \mathrm{log}(\mathcal{N})$, so that the desired polylog runtime complexity of a QLS is retained. We also drop the $\mathcal{O}$ hereafter when we discuss complexity expressions, for brevity. 

We now introduce a useful quantity, the runtime complexity ratio, $R(\mathcal{N})$. 

\begin{definition} \label{def:R}
$$R(\mathcal{N})=\frac{t_{\mathrm{CLS}}(\mathcal{N})}{t_{\mathrm{QLS}}(\mathcal{N})}=\frac{\mathcal{N} \mathrm{log}(\mathrm{log}(\mathcal{N}))}{\mathrm{log}^2(\mathcal{N})}\frac{s(\mathcal{N}) \sqrt{\kappa(\mathcal{N})}}{{f}(s(\mathcal{N})){g}(\kappa(\mathcal{N}))}$$ 
is defined as the ratio of runtime complexities of CLS to a QLS. 
\end{definition} 

The specific functional forms for ${f}(s(\mathcal{N}))$ and ${g}(\kappa(\mathcal{N}))$ depend on the QLS chosen. On the other hand, $\kappa(\mathcal{N})$ and $s(\mathcal{N})$ depend on the choice of application, and they determine the degree of quantum advantage that one obtains for an application. 

We now make some general observations. When $R(\mathcal{N})>1$, a QLS under consideration performs better than CLS. That is, when 
$$\frac{f(s(\mathcal{N}))g(\kappa(\mathcal{N}))}{s(\mathcal{N}) \sqrt{\kappa(\mathcal{N})}}<\frac{\mathcal{N} \mathrm{log}(\mathrm{log}(\mathcal{N}))}{\mathrm{log}^2(\mathcal{N})},$$ 
the QLS outperforms CLS. For example, in the case of HHL, the condition can be evaluated to be 
$$\kappa(\mathcal{N})^{5/2} s(\mathcal{N})<\frac{\mathcal{N}\mathrm{log}(\mathrm{log}(\mathcal{N}))}{\mathrm{log}^2(\mathcal{N})}.$$

That is, if an $(\mathcal{N} \times \mathcal{N})$ matrix $A$ is ill-conditioned enough and/or is sufficiently dense such that the product $\kappa(\mathcal{N})^{5/2} s(\mathcal{N})$ is equal to or greater than the right hand side, then we cannot expect an advantage from HHL. If we go to an improved variant such as Psi-HHL, the product $\kappa(\mathcal{N})^{1/2} s(\mathcal{N})$ can be less than  the right hand side for a matrix $A$ that is relatively denser and/or relatively more ill-conditioned. The condition for the phase randomisation method is particularly interesting, as its sparsity scaling is the same as that of CLS, and thus it is only the condition number scaling that matters when we compare the two approaches. Similarly, for the dream QLS case, as long as $\kappa$ and $s$ were to scale the same way, independent of their behaviours with $\mathcal{N}$, the runtime ratio is $\mathcal{N}/\mathrm{log}(\mathcal{N})$ thus resulting in an exponential separation in the runtimes. We also note that for the cases of the CKS and the AQC(exp) algorithms, it is not possible to separate out $\kappa(\mathcal{N})$ and $s(\mathcal{N})$ to one side due to the functional form for their complexities. Lastly, it is worth adding that since the complexity expressions contain a polylog in them, we will, for the purposes of our numerical analyses in the subsequent sections, consider three cases for the CKS and AQC(exp) algorithms for polylog: log, log$^2$, and log$^3$ behaviours. In the subsequent figures as well as main text, we use a shorthand notation: for example, the AQC(exp) algorithm with the polylog chosen to be log is denoted as AQC(1), and so on. 

We now list the expressions for $R(\mathcal{N})$ for HHL, CKS(1)/AQC(1), and the dream QLS: 

\begin{enumerate}
    \item HHL: 
    \begin{equation}
        R(\mathcal{N})_{1}=\frac{\mathcal{N}}{\mathrm{log}(\mathcal{N})} \frac{\mathrm{log(log(\mathcal{N})})}{\mathrm{log(}\mathcal{N})\kappa^{5/2}s}, 
        \label{Rhhl}
    \end{equation}
    \item : CKS(1)/AQC(1): 
    \begin{equation}
        R(\mathcal{N})_2=\frac{\mathcal{N}}{\mathrm{log}(\mathcal{N})} \frac{\mathrm{log(log(}\mathcal{N}))}{\sqrt{\kappa} \mathrm{log(log(}\mathcal{N})s\kappa)}, 
        \label{Rcks}
    \end{equation} and 
    \item Dream QLS: 
    \begin{equation}
        R(\mathcal{N})_3=\frac{\mathcal{N}}{\mathrm{log}(\mathcal{N})} \frac{\sqrt{s}}{\sqrt{\kappa}}. 
        \label{Rdqls}
    \end{equation}
\end{enumerate}

We remark here that $\frac{\mathcal{N}}{\mathrm{log}(\mathcal{N})}$ is an exponential separation in $\mathrm{log}(\mathcal{N})$. To retain the exponential behaviour in $R(\mathcal{N})_{1}$ under the assumption that $\epsilon^{-1} \sim \mathrm{log}(\mathcal{N})$, the term $\kappa^{5/2}s$ must strictly grow less than $\mathcal{N}$. With $R(\mathcal{N})_{2}$, it is sufficient for $\kappa$ to grow less than $\mathcal{N}^2$. However, with $R(\mathcal{N})_3$, to observe exponential advantage, the growth of $\kappa/s$ should be less than $\mathcal{N}^2$. We also note at this juncture that our analysis is purely based on graph theoretic considerations and neglects implementation overheads such as the number of physical qubits given a quantum error correcting code, cost of magic state distillation, etc. In view of the possibility of such overheads overriding advantages less than quadratic (for example, see \cite{prx2021}), we consider, for example, logarithmic speedup, as no advantage in this study.

\section{Networks-based linear system problems} \label{sec:cn-lsp}

\subsection{Graphs and their associated matrices} \label{sec:graphtheory}

Combinatorial graphs build up the mathematical foundations for complex networks. Here, we only define a graph, the graph Laplacian matrix, and the graph incidence matrix. The reader can refer to Section S.2 of the Supplemental Material for other definitions in graph theory that are essential for this work. 

\begin{definition} \label{def:graph}
A combinatorial graph, which we shall hereafter refer to as a graph $G = (V(G), E(G))$ for brevity, is a set of vertices, $V(G)$, and a set of edges, $E(G) \subset V(G) \times V(G)$. 
\end{definition} 

In this article, all the graphs have finite number of vertices. The vertex set of a graph $G$ can be written as $V(G) = \{v_i: i = 1, 2, \cdots, N\}$. Throughout the article, $N$ and $M$ denote the number of vertices and edges in the graph $G$, respectively. In the case of undirected graphs, with all of the edges being undirected, we consider only simple graphs. We define the Laplacian matrix of an undirected graph as follows: 

\begin{definition} \label{def:lapmat}
The Laplacian matrix $L(G)$ of a simple graph, or a weighted graph $G$, with $N$ vertices $v_i: i = 1, 2, \cdots, N$ is an $N \times N$ matrix $L(G) = D(G) - Q(G)$, where $D(G)$ is the degree matrix and $Q(G)$ is the adjacency matrix. 
\end{definition} 

A directed edge from $v_i$ to $v_j$ is denoted by $\vec{e}= \overrightarrow{(v_i, v_j)}$. Here $v_i$ and $v_j$ are the initial and terminal vertex of directed edge $\vec{e}$. For a directed graph, with all the edges being directed, we define source and sink vertices as follows: if the edges incident the vertex are all outgoing, then it considered as a source vertex. On the contrary, if all the edges incident on the vertex are incoming, it is a sink vertex. For the purposes of our analysis, we study the incidence matrix of directed graphs, which is defined below: 

\begin{definition} \label{def:incmat}
Let $\overrightarrow{G}$ be a directed graph with vertices $v_i: i = 1, 2, \cdots, N$ and edges $e_j: j = 1, 2, \cdots, M$. The vertex-edge incidence matrix $B(\overrightarrow{G}) = (b_{ij})_{N \times M}$ is defined by
\begin{eqnarray}
b_{ij} = 
\begin{cases}
-1 & \text{if $v_i$ is the initial vertex of $e_j$};\\ 
+1 & \text{if $v_i$ is the terminal vertex of $e_j$}; \\ 
0 & \text{otherwise}.
\end{cases}
\end{eqnarray}
\end{definition}

\subsection{Systems of linear equations arising from graphs}

We consider two systems of linear equations associated to the graphs due to their physical significance. We mention them below: 

\begin{enumerate}
    \item 
        \textbf{System of linear equations based on the Laplacian matrix: }\\
        Let $L$ be a Laplacian matrix of an undirected graph $G$ and $\vec{b}$ be a known vector. The problem statement is to find the vector $\vec{x}$, such that, 
        \begin{equation} \label{eqn:lx=b}
            L \vec{x} = \vec{b}. 
        \end{equation}
        As $G$ has $N$ vertices, the system size $\mathcal{N} = N$. $L$ is a square Hermitian matrix and it is positive semi-definite. At least one eigenvalue of $L$ is $0$. As $L^{-1}$ does not exist, we compute $\vec{x} = L^{+}\vec{b}$, where $L^{+}$ is the pseudo-inverse of the matrix $L$ \cite{penrose1956best}.
    \item 
        \textbf{System of linear equations based on the incidence matrix: }\\
        Let $B$ be an incidence matrix of a directed graph $G$ and $\vec{b}$ be a known vector. The problem statement is to find the vector $\vec{x}$, such that, 
        \begin{equation} \label{eqn:Bx=b}
            B \vec{x} = \vec{b}. 
        \end{equation}
        The sum of all the elements in any column of $B$ is $0$. Since $B$ is, in general, a non-Hermitian matrix, we transform the system of equations to 
\begin{equation} \label{eqn:HX=C}
   \begin{pmatrix}
        0 & B \\
        B^{\dagger} & 0
    \end{pmatrix}
{\begin{pmatrix}
        0 \\
        \vec{x}
    \end{pmatrix}}
    = {\begin{pmatrix}
        \vec{b} \\
        0
    \end{pmatrix}}.
\end{equation} 
    Note that the matrix $ \begin{pmatrix}
        0 & B \\
        B^{\dagger} & 0
    \end{pmatrix}$ is a Hermitian matrix of order $\mathcal{N} = N'= (M + N)$, which is the system size.
\end{enumerate} 

The system of linear equations arising from a graph Laplacian matrix is useful in electrical networks to find the effective resistance, $r_\text{eff}$, between two vertices \cite{spielman2010algorithms}. Example A of Section S.3 in the Supplemental Material illustrates it. Also, Example B of the same section discusses a system of linear equations related to the incidence matrix of a graph that is used to model the traffic flow congestion. 

\subsection{Categories of graph families} 

The complexity of solving a NLSP using a QLS depends on the growth of $\kappa(\mathcal{N})$ and $s(\mathcal{N})$. To understand the influence of these quantities on the runtime of a QLS, we numerically analyze them on different graph families. As HHL algorithm is the prototypical QLS, with the other QLSs considered in this study improving upon its runtime complexity, we classify the graph families into two categories based on whether HHL offers advantage or not. The categories are defined below: 

\begin{definition} \label{def:good} 
A good graph family is one for which the HHL algorithm gives advantage in its runtime relative to CLS. Here, advantage could be an exponential or a polynomial separation between the runtimes of HHL algorithm and CLS.
\end{definition}

\begin{definition} \label{def:bad}
A bad graph family is one for which the HHL algorithm yields no advantage with respect to CLS. 
\end{definition} 

If the HHL algorithm offers an advantage for some graph family, other QLSs considered in the study will too. However, there could be graph families where HHL does not offer an advantage, but improvements over it could. 

Before we proceed to the results, we summarize our typical workflow for each graph family that we pick. We start with a graph and a particular construction for generating a graph family. We plot $\kappa(\mathcal{N})$ of Laplacian or incidence matrix as well as $s(\mathcal{N})$ against system size, $\mathcal{N}$. We then fit the data points to appropriate functions, and plot $R(\mathcal{N})$ versus $\mathcal{N}$ to check if a curve goes above $1$ within our dataset. We recall that $R(\mathcal{N})>1$ is the regime of quantum advantage. We assume that our function fits for $\kappa(\mathcal{N})$ and $s(\mathcal{N})$ hold for much larger system sizes. We check if advantage can be expected from the graph family, that is, assess to which of the aforementioned set of categories our graph family belongs: good or bad. All the graph generation steps were carried out in NetworkX (version 3.4.2)~\cite{networkx}. For brevity, we hereafter call $\kappa(\mathcal{N})$, for example, as just $\kappa$, and so on. 

We add that while the interplay between $\kappa$ and $s$ scaling as functions of $\mathcal{N}$ decide the amount of advantage, the allowable values of $\mathcal{N}$ for a given graph family decides its classical hardness. Depending on the graph construction, $\mathcal{N}=f(n)$, where $n \in \mathbb{N}$. 

\newcommand{\col}[7]{#1 &#2 &#3 &#4 &#6  &#7 &#5 \\ } 
\begin{table*}[t] 
\caption{Table presenting our data for the Laplacian matrix-based good graph families that offer exponential advantage with HHL, CKS(1) (or AQC(1)) and dream QLS algorithms (see Eqns. \ref{Rhhl}, \ref{Rcks}, and \ref{Rdqls}). $t_{\mathrm{HHL}}$, $t_{\mathrm{CKS(1)}}$, $t_{\mathrm{AQC(1)}}$ and $t_{\mathrm{dream \ QLS}}$ and $t_{\mathrm{CLS}}$ denote the fitted runtime complexity expressions (excluding pre-factors) for HHL, CKS(1), AQC(1) and CLS algorithms. We set $1/\epsilon = \log{(N)}$ for our analysis. $ll(N)$ is a shorthand notation for $\log(\log(N))$. } 
\label{tab:yaya} 
\centering 
\hspace*{-1.3cm}
\begin{tabular}{|c|c|c|c|c|c|c|} 
\hline 
\col{Graph name}{$\kappa$}{$s$}{$t_{\mathrm{HHL}}$}{$t_{\mathrm{CLS}}$}{$t_{\mathrm{CKS(1)}}(\text{or }t_{\mathrm{AQC(1)}}) $}{$t_{\mathrm{dream\ QLS}}$}
\col{(Random (Yes/No))}{}{}{}{}{}{}
\col{}{}{}{}{}{}{}

\hline \hline 

\col{Hypercube graph (No)}{$\mathrm{log}(N)$}{$\mathrm{log}(N)$}{$\mathrm{log}^7(N)$}{$N \mathrm{log}^{3/2}(N) ll(N)$}{$\mathrm{log}^3(N)ll(N)$}{$\mathrm{log}^{5/2}(N)ll(N)$}
\hline 

\col{Modified Margulis-}{$\mathrm{log}^2(N)$}{$c$} {$\mathrm{log}^{8}(N)$}{$N \mathrm{log}(N)ll(N)$}{$\mathrm{log}^3Nll(N)$}{$\mathrm{log}^{3}(N)ll(N)$}
\col{Gabber-Galil (No)}{}{}{}{}{}{}
\hline 

\col{Sudoku (No)}{$\mathrm{log}^3(N)$}{$\sqrt{N}$} {$N\mathrm{log}^{11}(N)$}{$N^{3/2}\mathrm{log}^{3/2}(N) ll(N)$}{$\sqrt{N}\mathrm{log}^4(N)\mathrm{log}(\sqrt{N}\mathrm{log}^4(N))$}{$N^{1/4}\mathrm{log}^{4}(N)ll(N)$}
\hline 

\col{Barabási-Albert (Yes)}{$\mathrm{log}^3(N)$}{$\mathrm{log}^3(N)$} {$\mathrm{log}^{17}(N)$}{$N \mathrm{log}^{9/2}(N)ll(N)$}{$\mathrm{log}^7(N)ll(N)$,}{$\mathrm{log}^{11/2}(N)ll(N)$}
\hline 

\col{Newman-Watts}{$\mathrm{log}^3(N)$}{$\mathrm{log}^3(N)$} {$\mathrm{log}^{17}(N)$}{$N \mathrm{log}^{9/2}(N)ll(N)$}{$\mathrm{log}^7(N)ll(N)$}{$\mathrm{log}^{11/2}(N)ll(N)$}
\col{-Strogatz (Yes)}{}{}{}{}{}{}
\hline 

\col{Random regular}{$\mathrm{log}^3(N)$}{$c$} {$\mathrm{log}^{11}(N)$}{$N \mathrm{log}^{3/2}(N)ll(N)$}{$\mathrm{log}^4(N)ll(N)$}{$\mathrm{log}^{4}(N)ll(N)$}
\col{expander (Yes)}{}{}{}{}{}{}
\hline 

\col{Random regular (Yes)}{$\mathrm{log}^2(N)$}{$c$} {$\mathrm{log}^{8}(N)$}{$N \mathrm{log}(N)ll(N)$}{$\mathrm{log}^3(N)ll(N)$}{$\mathrm{log}^{3}(N)ll(N)$}
\hline

\hline \hline 
\end{tabular} 
\end{table*}

\section{Results from our survey of networks} \label{sec:survey} 

\newcommand{\columnn}[8]{#1 &#2 &#3 &#4 &#6  &#8 &#5 &#7  \\ }
\begin{table*}[t] 
\caption{Table presenting our data for Laplacian matrix-based bad graph families that offer no advantage with the HHL algorithm. We also give the advantage offered by these graph families with CKS(1) (or AQC(1)) (see Eq. \ref{Rcks}) and dream QLS algorithms (see Eq. \ref{Rdqls}). `Nil' represents no advantage and `Exp' refers to exponential advantage in the table. } 
\label{tab:yaya2} 
\centering 
\hspace*{-1.3cm}
\begin{tabular}{|c|c|c|c|c|c|c|c|}

\hline 
\columnn{Graph name}{$\kappa$}{$s$}{$t_{\mathrm{HHL}}$}{$t_{\mathrm{CLS}}$}{$t_{\mathrm{CKS(1)}}(\text{or }t_{\mathrm{AQC(1)}}), $}{Advantage with} {$t_{\mathrm{dream\ QLS}}$}
\columnn{(Random (Yes/No))}{}{}{}{}{}{HHL, }{}
\columnn{}{}{}{}{}{}{CKS(1)(or AQC(1)),}{}
\columnn{}{}{}{}{}{}{dream QLS}{}

\hline 
\hline 

\columnn{Grid 2d (No)}{$N^2$}{$c$} {$N^6 \mathrm{log}^{2}(N)$}{$N^2ll(N)$}{$N^2\mathrm{log}(N)\mathrm{log}(N^2\mathrm{log}(N))$}{Nil, Nil, Nil}{$N^2\mathrm{log}(N)ll(N)$}
\hline 

\columnn{Hexagonal }{$N^2$}{$c$} {$N^6 \mathrm{log}^{2}(N)$}{$N^2ll(N)$}{$N^2\mathrm{log}(N)\mathrm{log}(N^2\mathrm{log}(N))$,}{Nil, Nil, Nil}{$N^2\mathrm{log}(N)ll(N)$}
\columnn{lattice (No)}{}{}{}{}{}{}{}
\hline 

\columnn{Triangular}{$N$}{$c$} {$N^3 \mathrm{log}^{2}(N)$}{$N^{3/2}ll(N)$}{$N\mathrm{log}(N)\mathrm{log}(N\mathrm{log}(N))$}{Nil, Exp, Exp}{$N\mathrm{log}(N)ll(N)$}
\columnn{lattice (No)}{}{}{}{}{}{}{}
\hline 

\columnn{Complete (No)}{$c$}{$N$} {$N^2 \mathrm{log}^{2}(N)$}{$N^{2}ll(N)$}{$N\mathrm{log}(N)\mathrm{log}(N\mathrm{log}(N))$}{Nil, Exp, Exp}{$\sqrt{N}\mathrm{log}(N)ll(N)$}
\hline 

\columnn{Turan (No)}{$\mathrm{log}^2(N)$}{$N$} {$N^2 \mathrm{log}^{8}(N)$}{$N^{2}\mathrm{log}(N)ll(N)$}{$N\mathrm{log}^3(N)\mathrm{log}(N\mathrm{log}^3(N))$}{Nil, Exp, Exp }{$\sqrt{N}\mathrm{log}^3(N)ll(N)$}
\hline 

\columnn{Gaussian random}{$\mathrm{log}^3(N)$}{$N$} {$N^2 \mathrm{log}^{11}(N)$}{$N^{2}\mathrm{log}^{3/2}(N)ll(N)$}{$N\mathrm{log}^4(N)\mathrm{log}(N\mathrm{log}^4(N))$}{Nil, Exp, Exp}{$\sqrt{N}\mathrm{log}^4(N)ll(N)$}
\columnn{partition (Yes)}{}{}{}{}{}{}{}
\hline 

\columnn{Geographical}{$\mathrm{log}^3(N)$}{$N$} {$N^2 \mathrm{log}^{11}(N)$}{$N^{2}\mathrm{log}^{3/2}(N)ll(N)$}{$N\mathrm{log}^4(N)\mathrm{log}(N\mathrm{log}^4(N))$}{Nil, Exp, Exp}{$\sqrt{N}\mathrm{log}^4(N)ll(N)$}
\columnn{threshold (Yes)}{}{}{}{}{}{}{}
\hline 

\columnn{Soft random }{$\mathrm{log}^3(N)$}{$N$} {$N^2 \mathrm{log}^{11}(N)$}{$N^{2}\mathrm{log}^{3/2}(N)ll(N)$}{$N\mathrm{log}^4(N)\mathrm{log}(N\mathrm{log}^4(N))$}{Nil, Exp, Exp}{$\sqrt{N}\mathrm{log}^4(N)ll(N)$}
\columnn{geometric (Yes)}{}{}{}{}{}{}{}
\hline 

\columnn{Thresholded random}{$\mathrm{log}^3(N)$}{$N$} {$N^2 \mathrm{log}^{11}(N)$}{$N^{2}\mathrm{log}^{3/2}(N)ll(N)$}{$N\mathrm{log}^4(N)\mathrm{log}(N\mathrm{log}^4(N))$}{Nil, Exp, Exp}{$\sqrt{N}\mathrm{log}^4(N)ll(N)$}
\columnn{geometric (Yes)}{}{}{}{}{}{}{}
\hline 

\columnn{Planted }{$\mathrm{log}^3(N)$}{$N$} {$N^2 \mathrm{log}^{11}(N)$}{$N^{2}\mathrm{log}^{3/2}(N)ll(N)$}{$N\mathrm{log}^4(N)\mathrm{log}(N\mathrm{log}^4(N))$}{Nil, Exp, Exp}{$\sqrt{N}\mathrm{log}^4(N)ll(N)$}
\columnn{partition (Yes)}{}{}{}{}{}{}{}
\hline 

\columnn{Random}{$\mathrm{log}^3(N)$}{$N$} {$N^2 \mathrm{log}^{11}(N)$}{$N^{2}\mathrm{log}^{3/2}(N)ll(N)$}{$N\mathrm{log}^4(N)\mathrm{log}(N\mathrm{log}^4(N))$}{Nil, Exp, Exp}{$\sqrt{N}\mathrm{log}^4(N)ll(N)$}
\columnn{geometric (Yes)}{}{}{}{}{}{}{}
\hline 

\columnn{Uniform random}{$c$}{$N$} {$N^2 \mathrm{log}^{2}(N)$}{$N^{2}ll(N)$}{$N\mathrm{log}(N)\mathrm{log}(N\mathrm{log}(N))$}{Nil, Exp, Exp}{$\sqrt{N}\mathrm{log}(N)ll(N)$}
\columnn{intersection (Yes)}{}{}{}{}{}{}{}
\hline 

\columnn{H$_{k,n}$ Harary (No)}{$N^2$}{$c$} {$N^6 \mathrm{log}^{2}(N)$}{$N^{2}ll(N)$}{$N^2\mathrm{log}(N)\mathrm{log}(N^2\mathrm{log}(N))$}{Nil, Nil, Nil}{$N^2\mathrm{log}(N)ll(N)$}
\hline 

\columnn{H$_{n,m}$ Harary (No)}{$N^2$}{$c$} {$N^6 \mathrm{log}^{2}(N)$}{$N^{2}ll(N)$}{$N^2\mathrm{log}(N)\mathrm{log}(N^2\mathrm{log}(N))$}{Nil, Nil, Nil}{$N^2\mathrm{log}(N)ll(N)$}
\hline 

\columnn{Circular ladder (No)}{$N^2$}{$c$} {$N^6 \mathrm{log}^{2}(N)$}{$N^{2}ll(N)$}{$N^2\mathrm{log}(N)\mathrm{log}(N^2\mathrm{log}(N))$}{Nil, Nil, Nil}{$N^2\mathrm{log}(N)ll(N)$}
\hline 

\columnn{Ladder (No)}{$N^2$}{$c$} {$N^6 \mathrm{log}^{2}(N)$}{$N^{2}ll(N)$}{$N^2\mathrm{log}(N)\mathrm{log}(N^2\mathrm{log}(N))$}{Nil, Nil, Nil}{$N^2\mathrm{log}(N)ll(N)$}
\hline 

\columnn{Ring of cliques (No)}{$N^2$}{$c$} {$N^6 \mathrm{log}^{2}(N)$}{$N^{2}ll(N)$}{$N^2\mathrm{log}(N)\mathrm{log}(N^2\mathrm{log}(N))$}{Nil, Nil, Nil}{$N^2\mathrm{log}(N)ll(N)$}
\hline 

\columnn{Balanced binary}{$N$}{$c$} {$N^3 \mathrm{log}^{2}(N)$}{$N^{3/2}ll(N)$}{$N\mathrm{log}(N)\mathrm{log}(N\mathrm{log}(N))$}{Nil, Exp, Exp}{$N\mathrm{log}(N)ll(N)$}
\columnn{tree (No)}{}{}{}{}{}{}{}
\hline 

\columnn{Balanced ternary}{$N$}{$c$} {$N^3 \mathrm{log}^{2}(N)$}{$N^{3/2}ll(N)$}{$N\mathrm{log}(N)\mathrm{log}(N\mathrm{log}(N))$}{Nil, Exp, Exp}{$N\mathrm{log}(N)ll(N)$}
\columnn{tree (No)}{}{}{}{}{}{}{}
\hline 

\columnn{Binomial tree (No)}{$N$}{$\mathrm{log}(N)$} {$N^3 \mathrm{log}^{4}(N)$}{$N^{3/2} \mathrm{log}(N)ll(N)$}{$N\mathrm{log}^2(N)\mathrm{log}(N\mathrm{log}^2(N))$}{Nil, Exp, Exp}{$N\mathrm{log}^{3/2}(N)ll(N)$}
\hline 

\columnn{Grid 2d }{$N$}{$c$} {$N^3 \mathrm{log}^{2}(N)$}{$N^{3/2}ll(N)$}{$N\mathrm{log}(N)\mathrm{log}(N\mathrm{log}(N))$}{Nil, Exp, Exp}{$N\mathrm{log}(N)ll(N)$}
\columnn{graph ($r=c$) (No)}{}{}{}{}{}{}{}
\hline 

\columnn{Random }{$N^2$}{$N^2$} {$N^{10} \mathrm{log}^{2}(N)$}{$N^{4}ll(N)$}{$N^4\mathrm{log}(N)\mathrm{log}(N^4\mathrm{log}(N))$}{Nil, Nil, Exp}{$N^3\mathrm{log}(N)ll(N)$}
\columnn{lobster (Yes)}{}{}{}{}{}{}{}
\hline 

\columnn{G$_{n,p}$ random (Yes)}{$\mathrm{log}^2(N)$}{$N$} {$N^{2} \mathrm{log}^{8}(N)$}{$N^{2}\mathrm{log}(N)ll(N)$}{$N\mathrm{log}^3(N)\mathrm{log}(N\mathrm{log}^3(N))$}{Nil, Exp, Exp
}{$\sqrt{N}\mathrm{log}^3(N)ll(N)$}

\hline \hline 
\end{tabular} 
\end{table*} 

In this section, we present the results for Laplacian (Sections \ref{sec:LG}) and incidence matrix-based graph families (Section \ref{sec:IG}). We note that our numerical study on graph families involves a number of assumptions, which in turn set the scope of our study. Section S.4 of the Supplemental Material extensively discusses the assumptions and scope of this study, and the discussion is supported by error analysis on certain graph families. Section S.5 of the Supplemental Material provides the details on all of the input parameters along with the construction for the graph families considered. We try to ensure diversity in terms of $\kappa$, $s$ and system size growth when we pick the graph families. Hence, we have considered non-random graphs, random graphs, trees, and expanders in our graph families. We note at this juncture that to the best of our knowledge, among the 50 graph families that we pick, only the hypercube graph and the complete graph have known behaviours for their $\kappa$ scaling. In the subsequent sub-sections, we provide for each graph family a figure that contains two sub-figures: 

\begin{itemize} 
\item The first sub-figure shows $\kappa$ and $s$ versus $\mathcal{N}$ with suitably fitted curves, along with inset visualizations of two representative graphs from the family, one for a small $\mathcal{N}$ and the other for a slightly larger $\mathcal{N}$. 
\item The second sub-figure presents the runtime ratio, $R(\mathcal{N})$, between each of the considered QLSs (listed in Table \ref{tab:complexities} and the CLS. It also highlights the crossover at $R(\mathcal{N})=1$, above which the QLS outperforms CLS, even within the limited range of $\mathcal{N}$ values considered in our datasets. 
\end{itemize} 

Furthermore, we present the runtime expressions for HHL, CKS(1) (whose runtime complexity is the same as AQC(1)), dream QLS and CLS algorithms for all graph families considered in our study in Tables \ref{tab:yaya}, \ref{tab:yaya2}, \ref{tab:incidence_complexity} and \ref{tab:incidence2}. In these tables, we also compare runtime advantage from HHL which is the prototypical QLS, CKS(1) (or AQC(1)) which has near optimal runtime complexity among the QLSs considered in the study, and the dream QLS which is optimal in its runtime complexity.

\begin{figure*}
\centering 
\includegraphics[width=13.5cm]{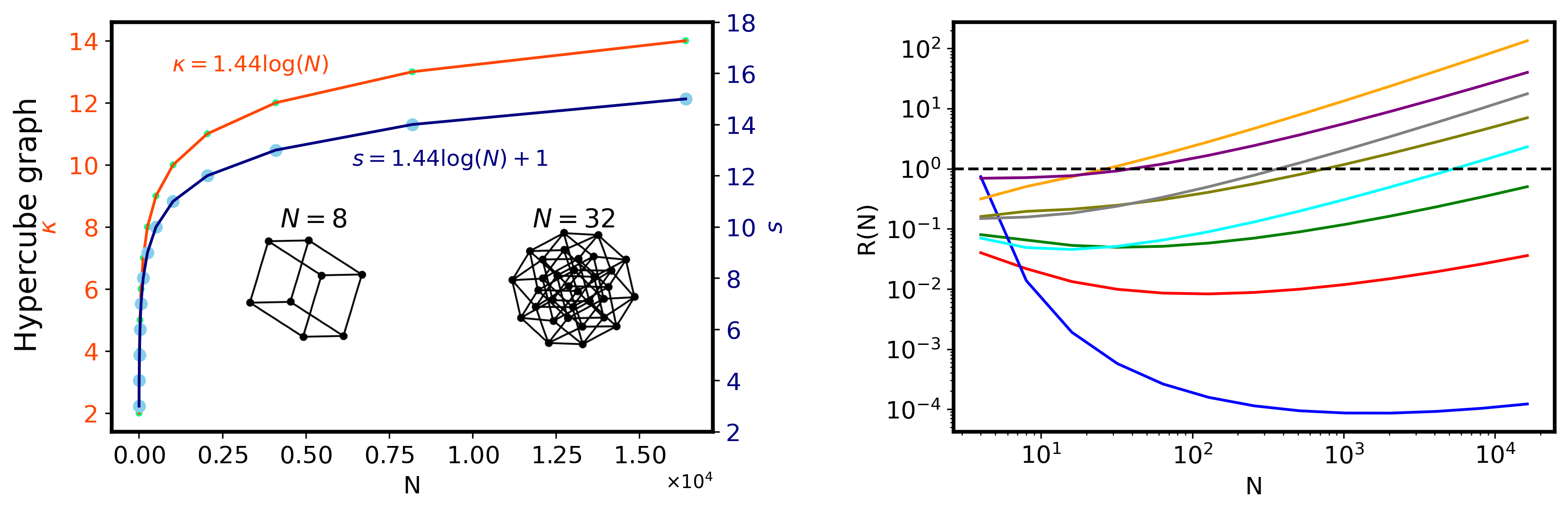} \\
\vspace{0.5em}
\makebox[0.45\textwidth]{(a)}
\makebox[0.36\textwidth]{(b)} \\
\includegraphics[width=13.5cm]{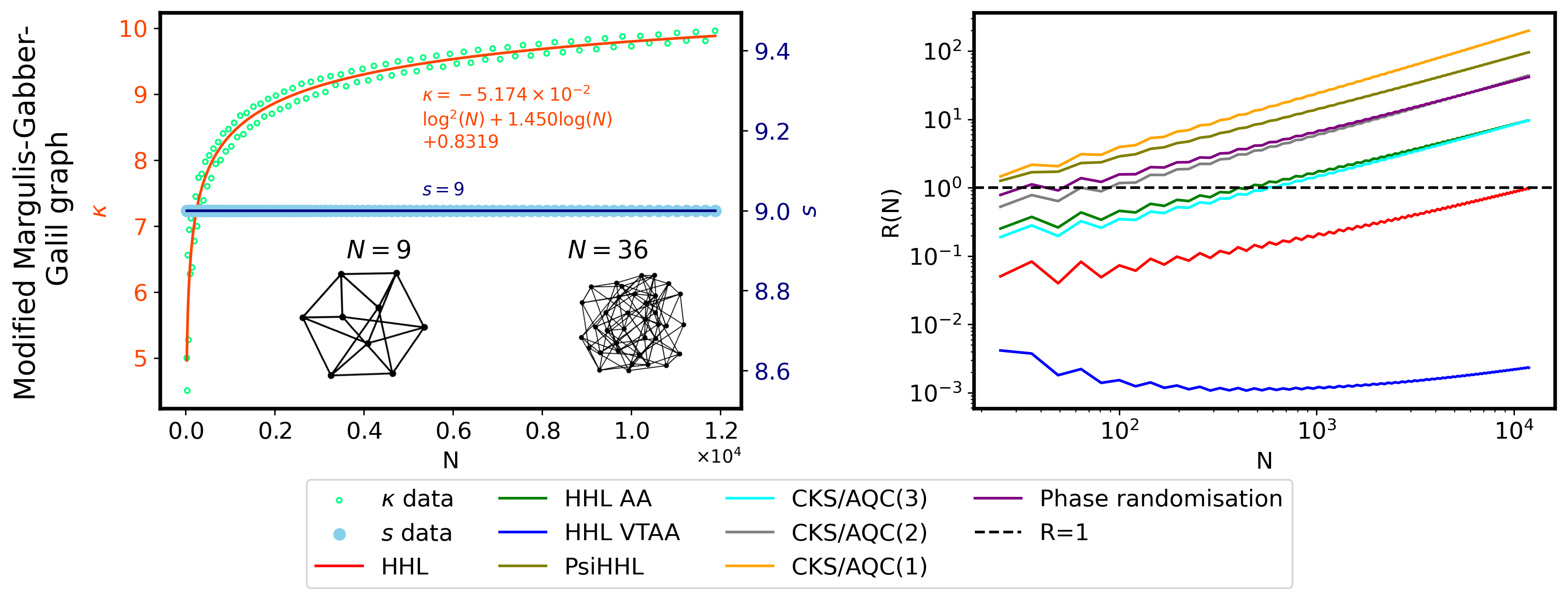} \\
\vspace{0.5em}
\makebox[0.45\textwidth]{(c)}
\makebox[0.36\textwidth]{(d)} \\

\caption{The figure presents two instances of good (advantage with HHL) graph families for Laplacian matrix-based systems. The first sub-figure of each panel shows condition number, $\kappa$, and sparsity, $s$, versus the number of vertices, $N$, and second sub-figure displays the runtime ratio, $R(N)$, versus $N$ for every QLS considered in the study. }\label{fig:bestgraph}
\end{figure*} 

\begin{figure*}
\centering 

\includegraphics[width=13.5cm]
{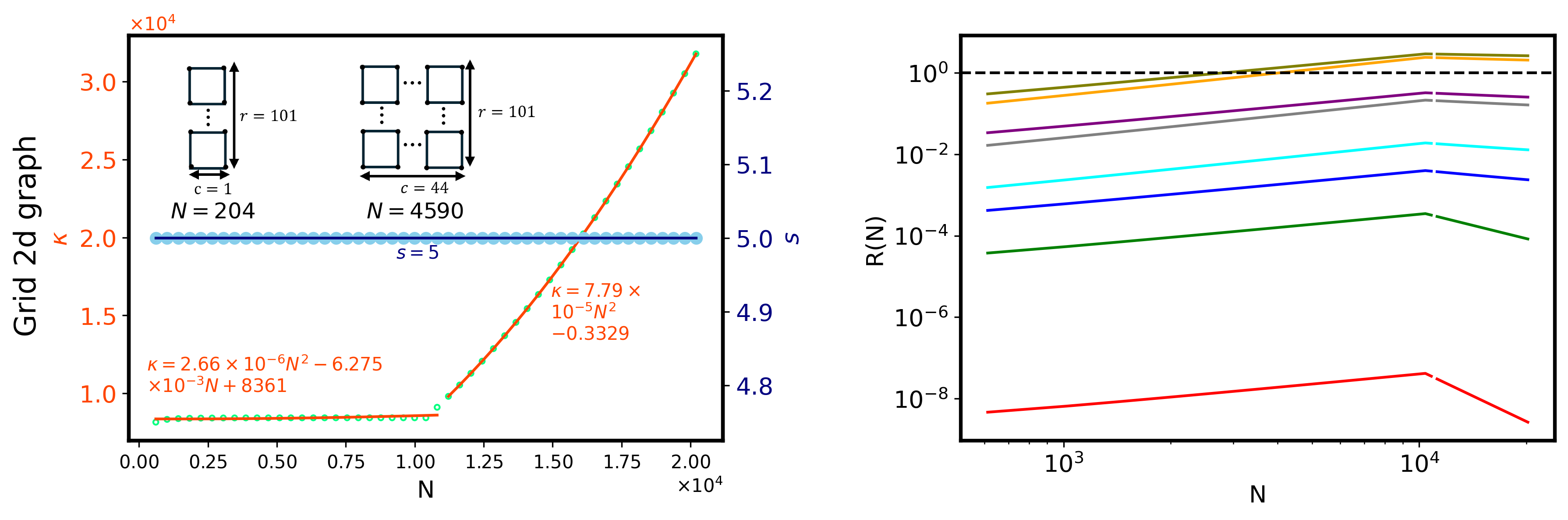}\\
\vspace{0.5em}
\makebox[0.45\textwidth]{(a)}
\makebox[0.36\textwidth]{(b)} \\

\includegraphics[width=13.5cm]
{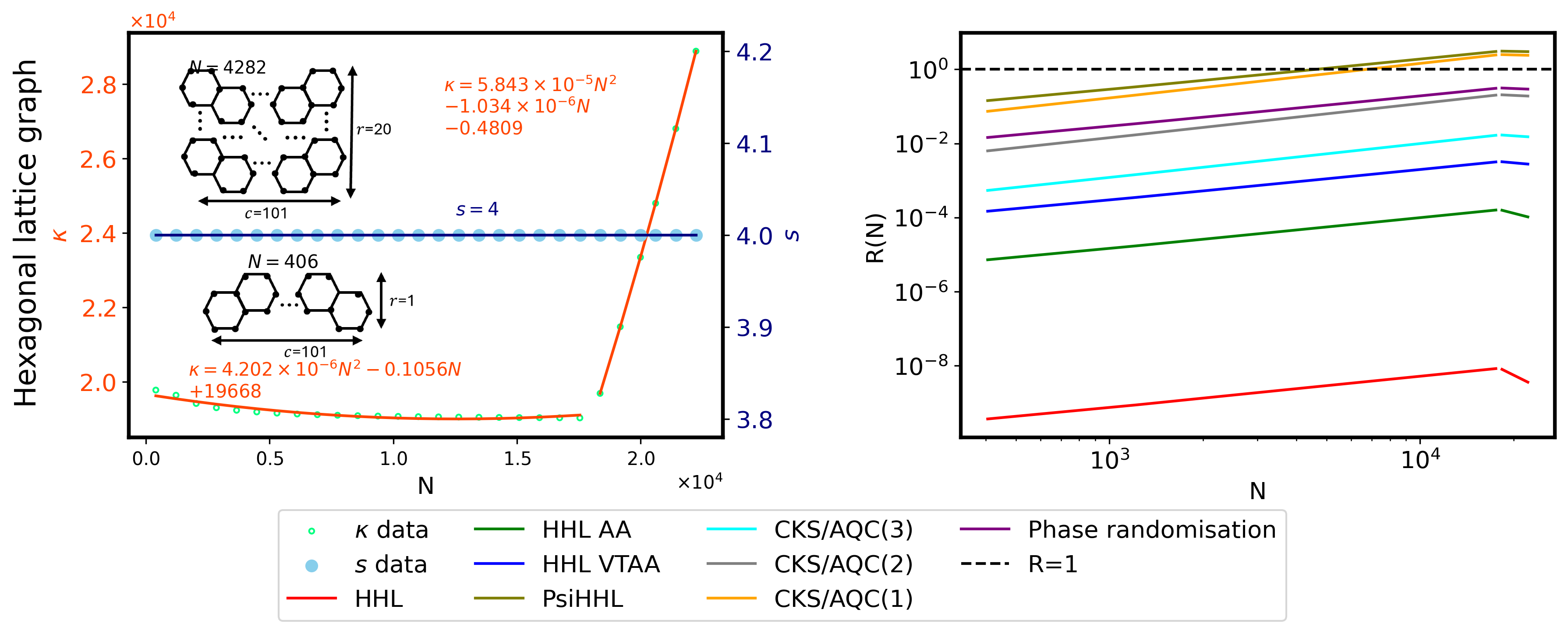}
\vspace{0.5em}
\makebox[0.45\textwidth]{(c)}
\makebox[0.36\textwidth]{(d)} \\ 

\caption{The figure illustrates two instances of bad (no advantage with HHL) graph families for Laplacian matrix-based systems. The first sub-figure of each panel represents growth of condition number, $\kappa$, and sparsity, $s$, with number of vertices $N$ while the second sub-figure shows the runtime ratio $R(N)$ with $N$ for every QLS considered in the study. }\label{fig:badgraph_lap}
\end{figure*}

\begin{figure*}[t] 
\centering 
\includegraphics[width=14cm]{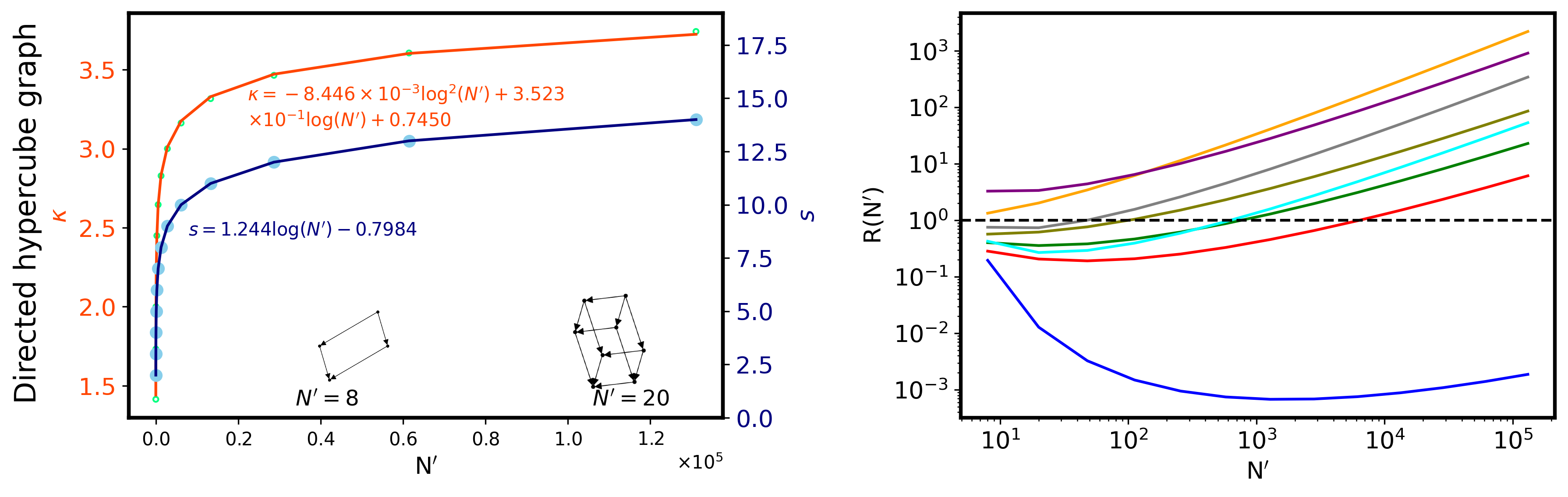} 
\vspace{0.5em}
\makebox[0.45\textwidth]{(a)}
\makebox[0.36\textwidth]{(b)} \\

\includegraphics[width=14cm]{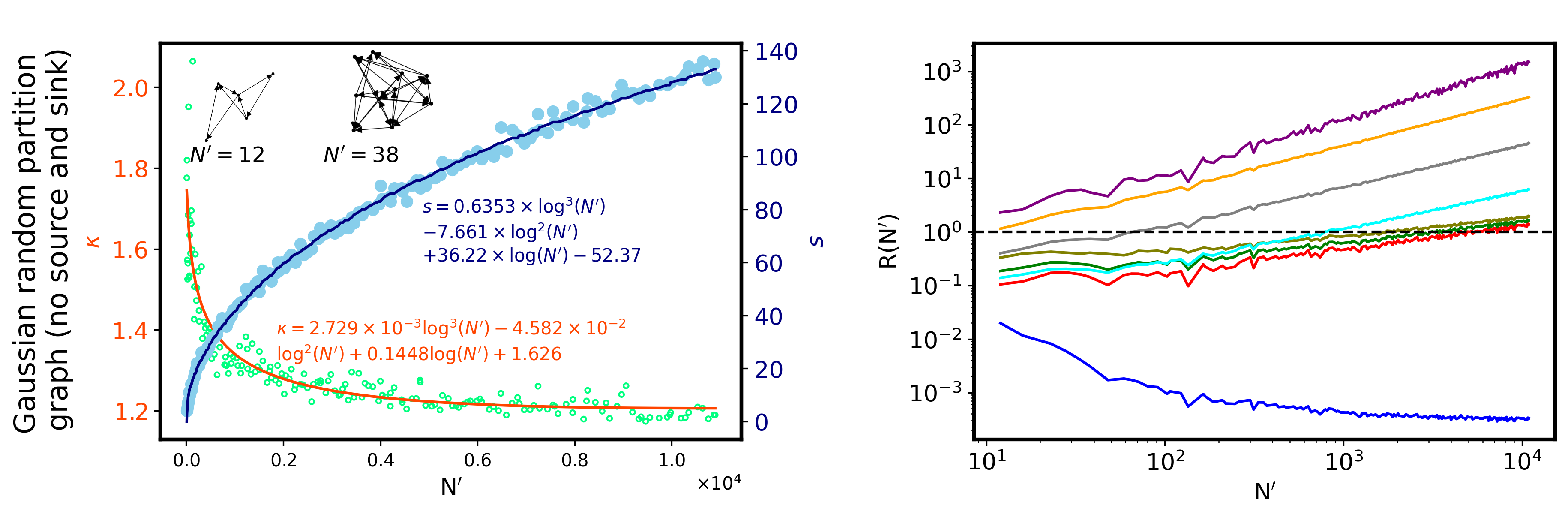} \\ 
\vspace{0.5em}
\makebox[0.45\textwidth]{(c)}
\makebox[0.36\textwidth]{(d)} \\

\includegraphics[width=14cm]{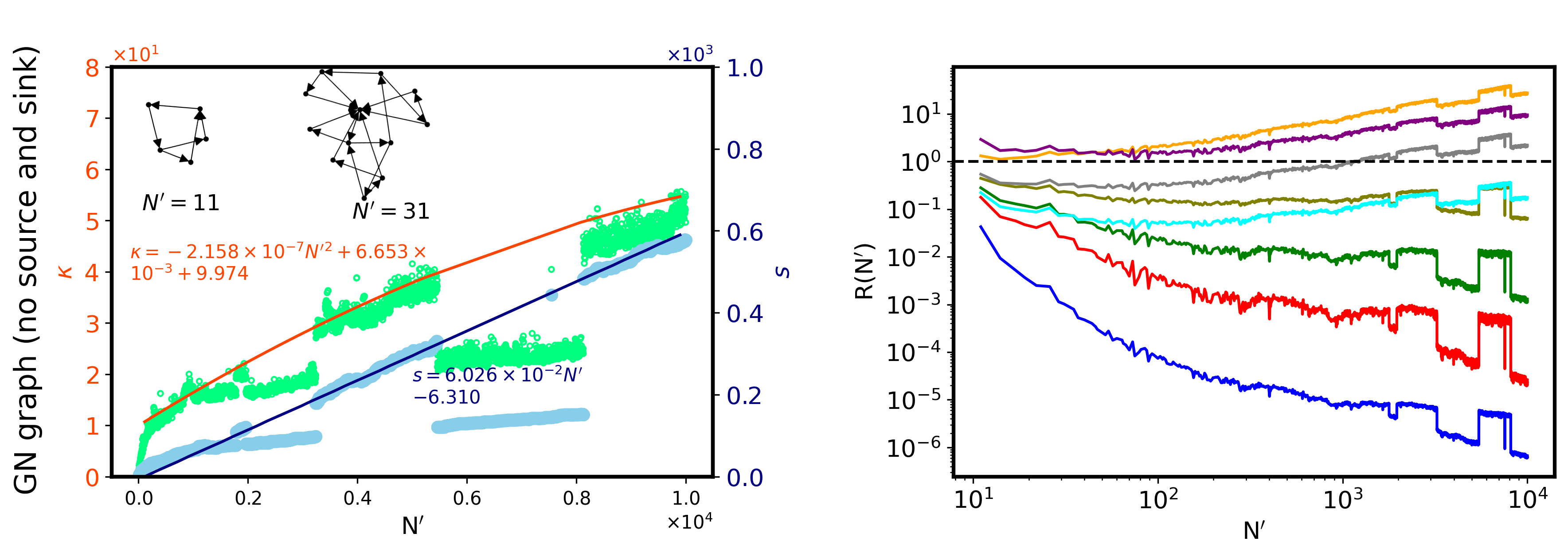} \\
\vspace{0.5em}
\makebox[0.45\textwidth]{(e)}
\makebox[0.36\textwidth]{(f)} \\

\includegraphics[width=14cm]{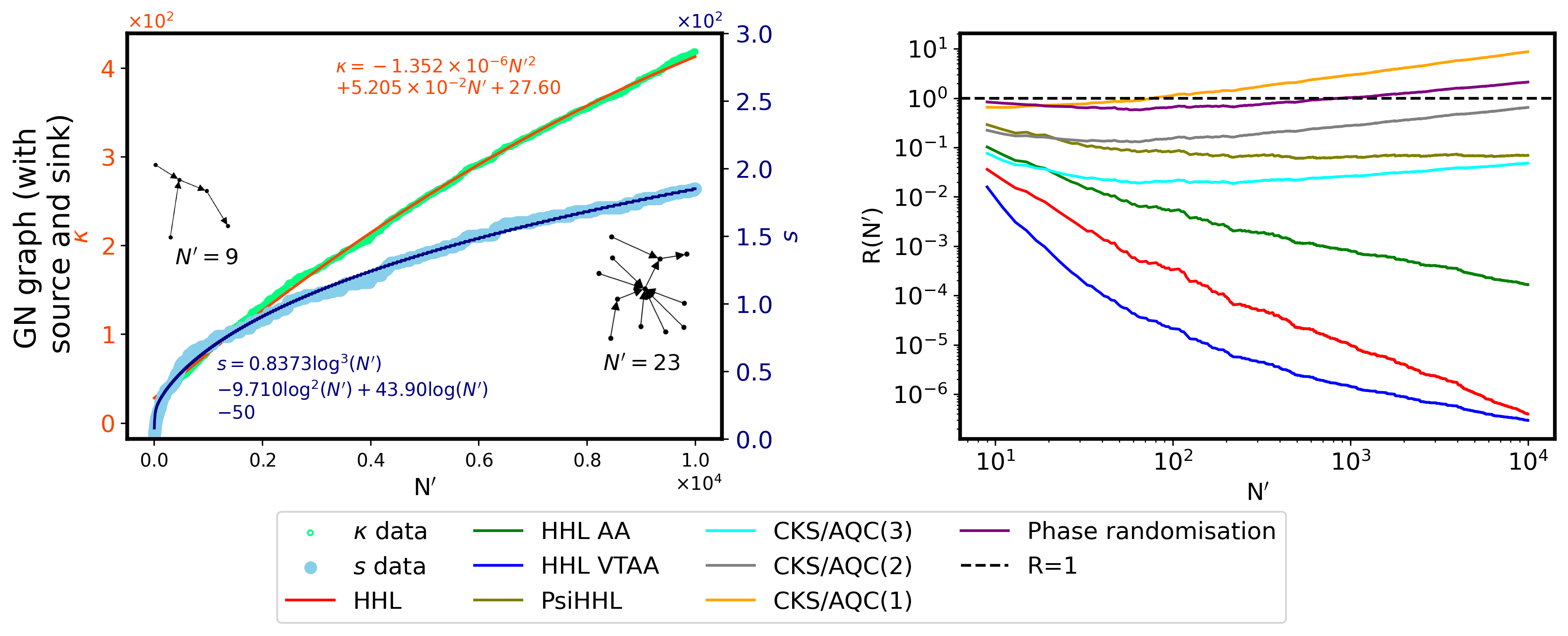} \\
\vspace{0.5em}
\makebox[0.45\textwidth]{(g)}
\makebox[0.36\textwidth]{(h)} \\

\caption{The figure presents two instances each for good and bad categories of graph families for incidence matrix-based systems. Sub-figures (a)-(d) belong to good graph category, whereas (e)-(h) belong to bad graph category.  }\label{fig:incidence_bestgraph}
\end{figure*} 

\subsection{Graph families for Laplacian matrix-based systems} \label{sec:LG} 

We survey 30 graph families and their Laplacian matrices as a part of our study, with the details provided in Tables \ref{tab:yaya} and \ref{tab:yaya2}. We find 7 good and 23 bad graph families. We present two instances each for good and bad graph categories in Figs. \ref{fig:bestgraph} and \ref{fig:badgraph_lap}. We refer the reader to Figs. S.8 and S.9 of the Supplemental Material for the rest of the good and bad graph families respectively. 

\subsubsection{Good graph families} 

As Table \ref{tab:yaya} indicates, all the graph families in this category show at most polylogarithmic growth in $\kappa$ and $s$ (with the exception of Sudoku graph, whose sparsity function grows as $\sqrt{N}$), leading to an exponential advantage with HHL. The exponential advantage is retained with CKS(1), AQC(1) and dream QLS algorithms. Furthermore, with CKS(1) and AQC(1) algorithms, good graph families offer advantage well before the other algorithms do in $N$. This is witnessed in the plot of $R(N)$ vs $N$, where the regime of advantage starts when the curve crosses $R({N})=1$ line. Among the good graph families that we identify, the admissible $N$ values in the hypercube graph family grow as $N=2^n$, and thus the problem is classically hard. As mentioned earlier, our analyses does not account for practical overheads in estimating quantum advantage. Thus, our numerical results for the crossover point cannot be taken quite literally but rather as qualitative runtime performance indicators even at small system sizes. We adopt this perspective in discussing the crossover points for the subsequent graph families. 

\subsubsection{Bad graph families} 

From Table \ref{tab:yaya2}, we see that either $\kappa$ or $s$ grow at least polynomially resulting in no advantage with the HHL algorithm. It is interesting to note that with CKS(1) and AQC(1) algorithms, 15 graph families graduate to good graph family and yield exponential advantage, while with dream QLS, 16 graph families offer exponential advantage. 

Of special interest among the families in this category is the transition that we observe in the growth of condition number for grid 2d, hexagonal lattice and triangular lattice graph families. For all the three cases, $\kappa$ grows slowly up to a certain system size and beyond it, grows polynomially. Since we are concerned with $\kappa$ scaling at large system sizes to calculate runtime ratios, we consider the functional form of $\kappa$ to be polynomial in $N$. Although sparsity remains constant for these graph families, they provide no advantage with HHL because of their polynomial growth in $\kappa$. 

\newcommand{\coll}[7]{#1 &#2 &#3 &#4 &#5  &#6 &#7 \\ } 
\begin{table*}[t] 
\caption{Table presenting our data for the incidence matrix-based good graph families that provide exponential speedup with HHL, CKS(1) (or AQC(1)) and the dream QLS. Here, we set $1/\epsilon=\log{(N')}$ for the runtime expressions of HHL($t_{\mathrm{HHL}}$), CKS(1)($t_{\mathrm{CKS(1)}}$), AQC(1)($t_{\mathrm{AQC(1)}}$), dream QLS($t_{\mathrm{dream \ QLS}}$) and CLS($t_{\mathrm{CLS}}$) algorithms. $\log(\log(N'))$ is expressed as $ll(N')$ in the table. } 
\label{tab:incidence_complexity} 
\centering 
\hspace*{-1.3cm}
\begin{tabular}{|c|c|c|c|c|c|c|} 
\hline 
\coll{Graph name}{$\kappa$}{$s$}{$t_{\mathrm{HHL}}$}{$t_{\mathrm{CKS(1)}}(\text{or }t_{\mathrm{AQC(1)}}) $}
{$t_{\mathrm{dream\ solver}}$}
{$t_{\mathrm{CLS}}$}
\coll{(Source and sink vertices (Yes/No))}{}{}{}{}{}{}
\coll{}{}{}{}{}{}{}

\hline \hline 

\coll{Directed Hypercube}{$\mathrm{log}^2(N')$}{$\mathrm{log}(N')$}{$\mathrm{log}^{10}(N')$}{$\mathrm{log}^4(N')ll(N')$}{$\mathrm{log}^{7/2}(N')ll(N')$}
{$N' \mathrm{log}^{2}(N') ll(N')$}
\coll{graph (Yes)}{}{}{}{}{}{}
\hline 

\coll{Gaussian random }{$\mathrm{log}^3(N')$}{$\mathrm{log}^3(N')$}{$\mathrm{log}^{17}(N')$}{$\mathrm{log}^7(N')ll(N')$}{$\mathrm{log}^{11/2}(N')ll(N')$}
{$N' \mathrm{log}^{9/2}(N') ll(N')$}
\coll{partition (No)}{}{}{}{}{}{}
\hline 

\coll{Gaussian random }{$\mathrm{log}^3(N')$}{$\mathrm{log}^3(N')$}{$\mathrm{log}^{17}(N')$}{$\mathrm{log}^7(N')ll(N')$}{$\mathrm{log}^{11/2}(N')ll(N')$}
{$N' \mathrm{log}^{9/2}(N') ll(N')$}
\coll{partition (Yes)}{}{}{}{}{}{}
\hline 

\coll{Planted partition}{$\mathrm{log}^3(N')$}{$\mathrm{log}^3(N')$}{$\mathrm{log}^{17}(N')$}{$\mathrm{log}^7(N')ll(N')$}{$\mathrm{log}^{11/2}(N')ll(N')$}
{$N' \mathrm{log}^{9/2}(N') ll(N')$}
\coll{graph (No)}{}{}{}{}{}{}
\hline 

\coll{Planted partition}{$\mathrm{log}^3(N')$}{$\mathrm{log}^3(N')$}{$\mathrm{log}^{17}(N')$}{$\mathrm{log}^7(N')ll(N')$}{$\mathrm{log}^{11/2}(N')ll(N')$}
{$N' \mathrm{log}^{9/2}(N') ll(N')$}
\coll{graph (Yes)}{}{}{}{}{}{}
\hline 

\coll{Navigable small }{$\mathrm{log}^3(N')$}{$\mathrm{log}(N')$}{$\mathrm{log}^{13}(N')$}{$\mathrm{log}^5(N')ll(N')$}{$\mathrm{log}^{9/2}(N')ll(N')$}
{$N' \mathrm{log}^{5/2}(N') ll(N')$}
\coll{world graph (No)}{}{}{}{}{}{}
\hline 

\coll{Navigable small }{$\mathrm{log}^3(N')$}{$\mathrm{log}(N')$}{$\mathrm{log}^{13}(N')$}{$\mathrm{log}^5(N')ll(N')$}{$\mathrm{log}^{9/2}(N')ll(N')$}
{$N' \mathrm{log}^{5/2}(N') ll(N')$}
\coll{world graph (Yes)}{}{}{}{}{}{}
\hline 

\coll{G$_{n, p}$ graph (No)}{$\mathrm{log}(N')$}{$\mathrm{log}^3(N')$}{$\mathrm{log}^{11}(N')$}{$\mathrm{log}^5(N')ll(N')$}{$\mathrm{log}^{7/2}(N')ll(N')$}
{$N' \mathrm{log}^{7/2}(N') ll(N')$}
\hline 

\coll{G$_{n, p}$ graph (Yes)}{$\mathrm{log}(N')$}{$\mathrm{log}^3(N')$}{$\mathrm{log}^{11}(N')$}{$\mathrm{log}^5(N')ll(N')$}{$\mathrm{log}^{7/2}(N')ll(N')$}
{$N' \mathrm{log}^{7/2}(N') ll(N')$}
\hline 

\coll{Paley graph (No)}{$c$}{$\mathrm{log}^3(N')$}{$\mathrm{log}^{8}(N')$}{$\mathrm{log}^4(N')ll(N')$}{$\mathrm{log}^{5/2}(N')ll(N')$}
{$N' \mathrm{log}^{3}(N') ll(N')$}
\hline 

\coll{Random uniform}{$\mathrm{log}^3(N')$}{$\mathrm{log}^3(N')$}{$\mathrm{log}^{17}(N')$}{$\mathrm{log}^7(N')ll(N')$}{$\mathrm{log}^{11/2}(N')ll(N')$}
{$N' \mathrm{log}^{9/2}(N') ll(N')$}
\coll{k-out graph (No)}{}{}{}{}{}{}
\hline 

\coll{Random uniform}{$\mathrm{log}^3(N')$}{$\mathrm{log}^3(N')$}{$\mathrm{log}^{17}(N')$}{$\mathrm{log}^7(N')ll(N')$}{$\mathrm{log}^{11/2}(N')ll(N')$}
{$N' \mathrm{log}^{9/2}(N') ll(N')$}
\coll{k-out graph (Yes)}{}{}{}{}{}{}
\hline

\coll{Scale-free graph (No)}{$\mathrm{log}^3(N')$}{$\mathrm{log}^3(N')$}{$\mathrm{log}^{17}(N')$}{$\mathrm{log}^7(N')ll(N')$}{$\mathrm{log}^{11/2}(N')ll(N')$}
{$N' \mathrm{log}^{9/2}(N') ll(N')$}
\hline 

\coll{Scale-free graph (Yes)}{$\mathrm{log}^3(N')$}{$\mathrm{log}^3(N')$}{$\mathrm{log}^{17}(N')$}{$\mathrm{log}^7(N')ll(N')$}{$\mathrm{log}^{11/2}(N')ll(N')$}
{$N' \mathrm{log}^{9/2}(N') ll(N')$}
\hline

\hline \hline 
\end{tabular} 
\end{table*} 

\newcommand{\colm}[8]{#1 &#2 &#3 &#4 &#5  &#6 &#7 &#8  \\ }
\begin{table*}[t] 
\caption{Table presenting our data for incidence matrix-based graph families that offer no advantage with the HHL algorithm. We also give the advantage offered by these graph families with CKS(1) (or AQC(1)) and dream QLS algorithms (see Eqns. \ref{Rcks} and \ref{Rdqls} respectively). Here,  `Nil' represents no advantage and `Exp' represents exponential advantage in the table. } 
\label{tab:incidence2} 
\centering 
\hspace*{-1.3cm}
\begin{tabular}{|c|c|c|c|c|c|c|c|}

\hline 
\colm{Graph name}{$\kappa$}{$s$}{$t_{\mathrm{HHL}}$}
{$t_{\mathrm{CKS(1)}}(\text{or }t_{\mathrm{AQC(1)}}), $}
{$t_{\mathrm{dream\ solver}}$}
{$t_{\mathrm{CLS}}$}{Advantage with} 
\colm{(Random (Yes/No))}{}{}{}{}{}{}{HHL, }
\colm{}{}{}{}{}{}{}{CKS(1)(or AQC(1)),}
\colm{}{}{}{}{}{}{}{dream solver}

\hline 
\hline 

\colm{GN graph(No)}{$N'^2$}{$N'$} {$N'^8\mathrm{log}^{2}(N')$}{$N'^3\mathrm{log}(N')$}{$N'^{5/2}\mathrm{log}(N')ll(N')$}{$N'^3  ll(N')$}{Nil, Nil, Exp}
\colm{}{}{}{}{$\mathrm{log}(N'^3\mathrm{log}(N'))$}{}{}{}
\hline 

\colm{GN graph(Yes)}{$N'^2$}{$\mathrm{log}^3(N')$} {$N'^6\mathrm{log}^{8}(N')$}{$N'^2\mathrm{log}^4(N')$}{$N'^{2}\mathrm{log}^{5/2}(N')ll(N')$}{$N'^2 \mathrm{log}^3(N') ll(N')$}{Nil, Nil, Nil}
\colm{}{}{}{}{$\mathrm{log}(N'^2\mathrm{log}^4(N'))$}{}{}{}
\hline 

\colm{GNC graph(No)}{$\mathrm{log}^3(N')$} {$N'^2$} {$N'^4\mathrm{log}^{11}(N')$}{$N'^2\mathrm{log}^4(N')$}{$N'\mathrm{log}^{4}(N')ll(N')$}{$N'^3 \mathrm{log}^{3/2}(N') ll(N')$}{Nil, Exp, Exp}
\colm{}{}{}{}{$\mathrm{log}(N'^2\mathrm{log}^4(N'))$}{}{}{}
\hline

\colm{GNC graph(Yes)}{$\mathrm{log}^3(N')$} {$N'^2$} {$N'^4\mathrm{log}^{11}(N')$}{$N'^2\mathrm{log}^4(N')$}{$N'\mathrm{log}^{4}(N')ll(N')$}{$N'^3 \mathrm{log}^{3/2}(N') ll(N')$}{Nil, Exp, Exp}
\colm{}{}{}{}{$\mathrm{log}(N'^2\mathrm{log}^4(N'))$}{}{}{}
\hline

\colm{GNR graph(No)}{$\mathrm{log}^3(N')$} {$N'$} {$N'^2\mathrm{log}^{11}(N')$}{$N'\mathrm{log}^4(N')$}{$\sqrt{N'}\mathrm{log}^{4}(N')ll(N')$}{$N'^2 \mathrm{log}^{3/2}(N') ll(N')$}{Nil, Exp, Exp}
\colm{}{}{}{}{$\mathrm{log}(N'\mathrm{log}^4(N'))$}{}{}{}
\hline

\colm{GNR graph(Yes)}{$N'^2$} {$\mathrm{log}^3(N')$} {$N'^6\mathrm{log}^{8}(N')$}{$N'^2\mathrm{log}^4(N')$}{$N'^2\mathrm{log}^{5/2}(N')ll(N')$}{$N'^2 \mathrm{log}^{3}(N') ll(N')$}{Nil, Nil, Nil}
\colm{}{}{}{}{$\mathrm{log}(N'^2\mathrm{log}^4(N'))$}{}{}{}
\hline

\hline \hline 
\end{tabular} 
\end{table*} 

\subsection{Graph families for incidence matrix-based systems} \label{sec:IG}

In this sub-section, we turn our attention to the system of linear equations associated with the incidence matrix of directed graphs. We recall that a sink refers to a vertex whose out-degree is zero and a source to a vertex whose in-degree is zero. Sources and sinks in directed graphs may not be allowed depending on the application of interest. Therefore, for those considered graph families that admit sources and sinks, we consider them as well as their modified versions that do not admit sources and sinks. Our algorithm to convert source and/or sink vertices to non-source and/or sink vertices is presented in Algorithm 1 of the Supplemental Material. Tables \ref{tab:incidence_complexity} and \ref{tab:incidence2} present the data for all directed graphs considered in this study. We identified 14 good and 6 bad graph families from the survey. 

\subsubsection{Good graph families} 

From Table \ref{tab:incidence_complexity}, we observe that the growth of $\kappa$ and $s$ is polylogarithmic in $N'=N+M$, which is in agreement with the observations from good graph families of Laplacian matrix-based systems. Figs. \ref{fig:incidence_bestgraph}(a) - \ref{fig:incidence_bestgraph}(d) illustrate 2 of the good graphs of this system. The results for rest of the good graphs can be found in Fig. S.10 of the Supplemental Material. Along with the HHL algorithm, these graphs also show exponential advantage with CKS(1), AQC(1), and the dream QLS. Similar to the Laplacian-matrix based systems, the phase randomisation method, CKS(1) and AQC(1) approaches start showing advantage at smaller system sizes compared to the HHL algorithm. We also note that among the good graph families that we identify, the admissible $N'$ values in the directed hypercube graph family grow as $N' \sim 2^n$, and thus the problem is classically hard. 

\subsubsection{Bad graph families} 

Table \ref{tab:incidence2} lists all the graph families that offer no advantage with the HHL algorithm. We also present two instances of the bad graph family in Figs. \ref{fig:incidence_bestgraph}(e)-(h). The rest of the bad graph families can be found in Fig. S.11 of Supplemental Material. Since either $\kappa$ or $s$ or both grow polynomially, we observe that CLS fares well over the HHL algorithm. However, with CKS(1) and AQC(1) methods, 3 bad graph families offer exponential advantage, while with the dream QLS, 4 graph families show exponential advantage. 

\subsection{A big-picture view} 

\begin{itemize} 
    \item Of the 30 graph families considered for the Laplacian matrix case, 7 show advantage, whereas for the incidence matrix case, 14 out of 20 considered graphs show an advantage. 
    \item For all of the good graphs that we have identified, the CKS(1) and the AQC(1) algorithms show crossover into the regime of advantage even within the relatively small system sizes that we consider, whereas HHL crosses over only much later or not at all. 
    \item We recall that VTAA-HHL offers a near-optimal improvement in the condition number scaling relative to HHL. However, as Table \ref{tab:complexities} indicates, the protocol trades-off favourable scaling in $\kappa$ with costlier scaling in $\epsilon$. This is reflected in our limited data regime where HHL outperforms VTAA algorithm for most of the graphs. However, as the expression for complexity is defined for large values of $\mathcal{N}$ and we evaluate runtime ratio $R(\mathcal{N})$ within the regime of our numerical data, we anticipate VTAA to outperform HHL in the asymptotic limit. 
\end{itemize} 

\section{Combining graph families: generalized hypercube graphs} \label{sec:graph_superfamily}

\begin{figure}[t]
\includegraphics[width=8.5cm]{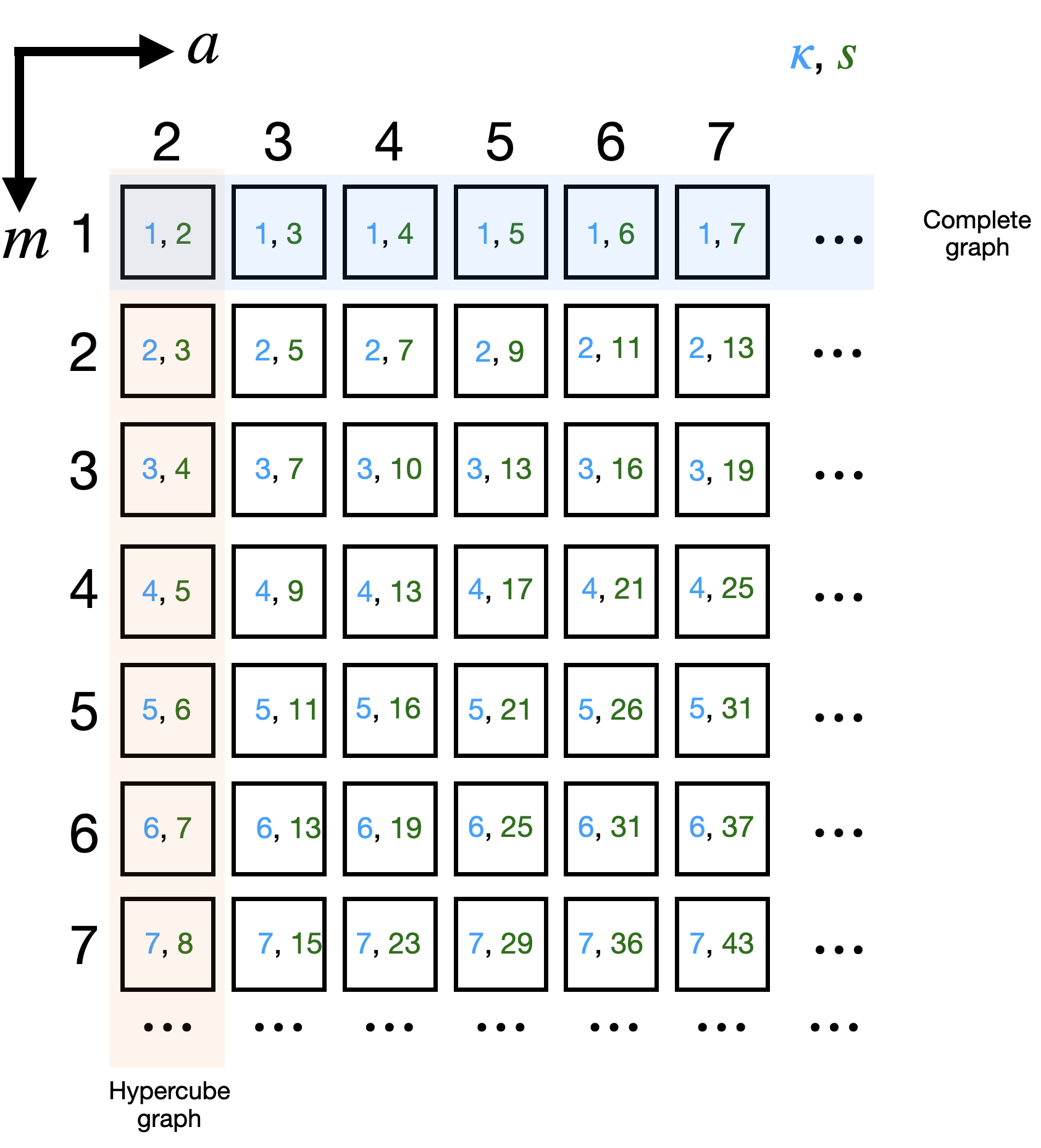}
\caption{A tableau of several families of the generalized hypercube graph superfamily, parametrized by $a$ and $m$ (the horizontal and vertical axis respectively). The blue band is the complete graph family, for which $m$ has been fixed and $a$ is varied, while the orange band, which is the hypercube graph family, constructed by fixing $a$ and varying $m$. The values of $\kappa$ and $s$ of each graph are indicated in blue and green font respectively. }\label{fig:hcg}
\end{figure} 

Our survey underscores the need for finding more graph families with slowly growing $\kappa$ and $s$. Motivated by this requirement, we generalize two of the graph families we considered for the Laplacian case to construct the generalized hypercube superfamily. 

The idea of hypercubes can be generalized in multiple ways \cite{ghcg}. In this work, we define the vertex set of a generalized hypercube $G_a^m$ as $V(G_a^m) = \{1, 2, 3, \cdots, a\}^{\times m}$ which contains $N = a^m$ vertices.  There is an edge between vertices $x = (x_1, x_2, \cdots, x_m)$ and $y = (y_1, y_2, \cdots, y_m)$ if the Hamming distance  $\mathfrak{H}(x, y) = 1$. Note that, $x_i$ and $y_i$ are any of the numbers $1, 2, \cdots, a$, for $i = 1, 2, \cdots, m$. We consider $m$ and $a$ as the parameters of generalized hypercube graphs. The generalized hypercube graphs become the hypercube graphs when $a = 2$. These graphs are important as their growth with respect to the parameter $m$ is faster than that of the hypercube graphs. Fig. \ref{fig:hcg} shows a tableau where we keep $m$ and $a$ on the Y- and X-axes, respectively. Each row or the column in the tableau is a graph family. We term the entire tableau of infinite cells as a generalized hypercube superfamily. The figure has each cell populated with two numbers, one denoting $\kappa$ and the other $s$, which we arrive at by computing the quantities. 

For a graph family with fixed $m$ and varying $a$, which corresponds any row from Fig. \ref{fig:hcg}, our numerical analysis infers the following:

\begin{enumerate}
\item The number of vertices, $N$ grows as $a^m$ (polynomial in $a$). 
\item $\kappa$ is constant, at $m$. 
\item $s$ grows as $am-m+1$ (polynomial in $a$). In terms of $N$, $s$ grows as $N^{1/m}$. 
\end{enumerate} 

\begin{remark} \label{rem:ghcg1}
The generalized hypercube $G_a^1$ gives the complete graph family ($K_a$). 
\end{remark} 

Although the complete graph is a bad graph, this is not the case for graph families in other rows of Fig. \ref{fig:hcg}. The runtime complexity ratio for HHL is, ${R}(N) \sim \frac{N^{\alpha} \mathrm{log}(\mathrm{log}(N))}{\mathrm{log(}N)} $ where $0<\alpha<1$. This ensures that there is an exponential separation between the runtimes of HHL algorithm and CLS for every row, which corresponds to a unique graph family of the tableau. 

\begin{claim}
There exist infinite families of good graphs in the generalized hypercube superfamily. Each row of the tableau, except the first, corresponds to one such family. \hfill $\square$ 
\end{claim}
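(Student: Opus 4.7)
The plan is to exploit the fact that the generalized hypercube $G_a^m$ is nothing but the $m$-fold Cartesian product of the complete graph $K_a$ with itself, since by construction two vertices $x,y \in \{1,\dots,a\}^{\times m}$ are adjacent exactly when their Hamming distance is one, which is precisely the Cartesian product adjacency rule applied to $K_a^{\square m}$. I would begin by stating this identification explicitly, which is essentially a definition-chase.

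Next, I would invoke the standard spectral result for Cartesian products: the Laplacian eigenvalues of $G_1 \square G_2$ are all sums $\lambda_i(G_1) + \mu_j(G_2)$ of Laplacian eigenvalues of the factors. Since $L(K_a)$ has spectrum $\{0, a, a, \dots, a\}$ (the eigenvalue $a$ with multiplicity $a-1$), iterating the product rule $m$ times shows that the eigenvalues of $L(G_a^m)$ are $k \cdot a$ for $k \in \{0, 1, \dots, m\}$, with multiplicities given by the appropriate binomial coefficients times $(a-1)^k$. In particular, the smallest non-zero eigenvalue is $a$ and the largest is $ma$, giving the clean closed form $\kappa(G_a^m) = m$, independent of $a$. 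For the sparsity, the degree of any vertex is $m(a-1)$ (each of the $m$ coordinates can be altered to any of $a-1$ other values), so each row of $L$ has exactly $m(a-1)+1$ non-zero entries, giving $s = am - m + 1$.

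Once $\kappa$ and $s$ are pinned down, the final step is a direct substitution into $\tilde{R}(a) = t_{\mathrm{CLS}}/t_{\mathrm{HHL}}$ from Definition~\ref{def:Rtilde}, using $\mathcal{N}(a) = a^m$ and $\epsilon^{-1} \sim \log(\mathcal{N})$. Dropping pre-factors depending only on $m$ (which is fixed in a single row of the tableau), this yields
\begin{equation*}
\tilde{R}(a) \;\sim\; \frac{a^m\,\log\log(a)}{\log^2(a)} \cdot \frac{1}{(am-m+1)\,m^{5/2}} \;\sim\; a^{\,m-1}\,\frac{\log\log(a)}{\log^2(a)},
\end{equation*}
which is a polynomially growing function of $a$ precisely when $m \geq 2$. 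By Definition~\ref{def:better}, this places every such row into the "better" category. Since each integer $m \geq 2$ yields a distinct graph family (the row being indexed by $m$, varying $a \in \mathbb{Z}^+$), we obtain an infinite collection of better graph families, and Remark~\ref{rem:ghcg} identifies the excluded first row ($m = 1$) as the complete graph family, which is already known from Table~\ref{tab:yaya} to be bad, explaining the exclusion.

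The only real subtlety I anticipate is being careful about what "better graph family" requires: the fits used elsewhere in the survey are numerical, whereas here the $\kappa$ and $s$ expressions are exact consequences of the Cartesian-product spectrum, so one should point out that the conclusion is actually stronger than the purely numerical evidence used in Section~\ref{sec:survey}. A minor bookkeeping concern is that the $m$-dependent multiplicative constants in $\kappa^{5/2}$ and $s$ are absorbed into the pre-factor and do not affect the polynomial-in-$a$ scaling, but this should be remarked on to reassure the reader that varying $m$ across rows does not accidentally demote any family.
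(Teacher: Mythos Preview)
Your argument is correct and in fact takes a genuinely different route from the paper's. The paper arrives at $\kappa=m$ and $s=am-m+1$ purely by numerical observation (``our numerical analysis infers the following''), then asserts $\tilde R(a)\sim a^m\,\log\log(a)/\log^2(a)$ and calls it polynomial. You instead recognise $G_a^m$ as the Hamming graph $K_a^{\square m}$ and obtain the Laplacian spectrum exactly via the Cartesian-product rule, which yields $\kappa=m$ and $s=m(a-1)+1$ as closed-form identities rather than fits. This is a real improvement: it is immune to the ``horizon problem'' the paper flags in Section~\ref{sec:Lim}, and it makes the claim a theorem rather than an extrapolation. Your $\tilde R$ bookkeeping is also more careful than the paper's---by retaining the linear-in-$a$ sparsity in the denominator you get $a^{m-1}\log\log(a)/\log^2(a)$ rather than the paper's $a^m\log\log(a)/\log^2(a)$---though both versions support the same qualitative conclusion for $m\ge 2$.

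One small point worth tightening: for the borderline row $m=2$ your expression reduces to $a\,\log\log(a)/\log^2(a)$, which in the paper's own taxonomy (compare the ``good'' rows of Table~\ref{tab:yaya}) is classed as sub-linear rather than polynomial. The paper sidesteps this because its looser computation gives $a^2\log\log(a)/\log^2(a)$ at $m=2$. You should either note that the $m=2$ row is on the good/better boundary under the strict reading of Definitions~\ref{def:good}--\ref{def:better}, or argue explicitly that $a^{m-1}$ up to polylog factors already qualifies as polynomial advantage for every $m\ge 2$. Either way the ``infinitely many better families'' conclusion survives, since rows $m\ge 3$ are unambiguously in the better category.
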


On the other hand, if $a$ is fixed, we infer from our numerical analysis that 

\begin{enumerate}
\item The number of vertices, $N$ grows as $a^m$ (exponential in $m$). 
\item $\kappa$ grows as $m$, but since the system size $N = a^m$, we find that $\kappa \sim \mathrm{log}_a(N)$. 
\item $s$ grows as $am-m+1$ (polynomial in $m$), and in terms of $N$, we get $s \sim \mathrm{log}_a(N)$, where $a > 1$ ($a=1$ corresponds to the trivial case of each graph in the family having only one vertex, since $N = 1^m,\ \forall m$). 
\end{enumerate} 

\begin{remark} \label{rem:ghcg2}
$G_2^m$ gives the hypercube graph family. 
\end{remark} 

The runtime complexity ratio with HHL is, $R(N) \sim N\frac{\mathrm{log}(\log(N))}{\log^{11/2}(N)} $, which is exponential.

\begin{claim}
There exist infinite families of good graphs in the generalized hypercube superfamily, with each column of the tableau corresponding to one such family. \hfill $\square$ 
\end{claim}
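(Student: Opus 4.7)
The plan is to exhibit an explicit infinite sequence of graph families indexed by $a \in \mathbb{Z}_{\geq 2}$, show that each satisfies the three structural requirements for a best family, namely $\kappa \sim \mathrm{polylog}(N)$, $s \sim \mathrm{polylog}(N)$, and $N$ growing exponentially in the family index, and then substitute these into the HHL and CLS runtime complexity expressions from Table \ref{tab:complexities} to confirm that $\tilde{R}$ grows exponentially.

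First, for each fixed $a \geq 2$, I would identify the column $\{G_a^m\}_{m \in \mathbb{Z}^+}$ as a candidate best family. The immediate goal is to rigorize the three numerical observations listed just above Remark \ref{rem:ghcg}. The cleanest route is to observe that $G_a^m$ is the $m$-fold Cartesian product of the complete graph $K_a$ with itself (since two length-$m$ words over an alphabet of size $a$ have Hamming distance $1$ iff they agree in $m-1$ coordinates and differ in the remaining one). Standard facts about Cartesian products then give $L(G_a^m) = L(K_a) \oplus L(K_a) \oplus \cdots \oplus L(K_a)$ (Kronecker sum), whose spectrum consists of sums of eigenvalues of $L(K_a)$. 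Since $L(K_a)$ has eigenvalues $0$ (once) and $a$ (with multiplicity $a-1$), the spectrum of $L(G_a^m)$ is $\{ka : k = 0, 1, \ldots, m\}$, yielding smallest nonzero eigenvalue $a$, largest eigenvalue $ma$, and hence $\kappa = m$. The sparsity follows from a direct count: each vertex has $a-1$ neighbours along each of the $m$ coordinates, giving degree $m(a-1)$ and Laplacian row nonzero count $s = m(a-1) + 1$. The vertex count $N = a^m$ is immediate from the construction.

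Next, I would rewrite the three quantities in terms of the system size: since $m = \log_a N$, we have $\kappa = \log_a N$ and $s = (a-1)\log_a N + 1$, so both are polylogarithmic in $N$ for every fixed $a \geq 2$, while $N$ itself grows exponentially in the family index $m$. Plugging these into the HHL complexity $\log(\mathcal{N}) s^2 \kappa^3 / \epsilon$ with the standing assumption $\epsilon^{-1} \sim \log \mathcal{N}$, and into the CLS complexity $\mathcal{N} s \sqrt{\kappa} \log(1/\epsilon)$, I obtain (dropping prefactors and writing everything in terms of $m$) $t_{\mathrm{HHL}}(m) \sim m^7$ and $t_{\mathrm{CLS}}(m) \sim a^m\, m^{3/2} \log m$, whence
\[
\tilde{R}(m) \;\sim\; \frac{a^m \log m}{m^{11/2}},
\]
which grows exponentially in $m$ for every $a \geq 2$. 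By Definition \ref{def:best}, each such column is a best graph family, and since the choices $a = 2, 3, 4, \ldots$ produce pairwise distinct families (they have different vertex counts for given $m$), there are infinitely many. The case $a = 2$ recovers the hypercube family of Remark \ref{rem:ghcg}, consistent with the single best family already identified in Table \ref{tab:yaya}.

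The main obstacle, modest as it is, lies in promoting the numerical scalings displayed in Fig.\ \ref{fig:hcg} to rigorous statements that hold uniformly in $a$ and $m$; without the Cartesian product identification this would require a separate spectral argument for each $a$, which is exactly the sort of hard analytic computation the paper's scope avoids. The product structure circumvents this entirely and lets the computation of $\kappa$ reduce to the well-known spectrum of $K_a$. A minor additional subtlety is confirming that the fit functions used to derive the runtime entries in Table \ref{tab:yaya} for the hypercube genuinely extrapolate to arbitrary $a$; since the $\kappa$ and $s$ expressions above are exact (not fits), this concern disappears.
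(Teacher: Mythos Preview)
Your proposal is correct and arrives at exactly the same runtime ratio $\tilde{R}(m)\sim a^m\log(m)/m^{11/2}$ that the paper records just before the claim. The overall route---fix $a$, compute $\kappa$, $s$, and $N$ in terms of $m$, then substitute into the HHL and CLS complexities from Table~\ref{tab:complexities}---is the paper's route as well.

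The one genuine difference is in how you obtain $\kappa=m$ and $s=m(a-1)+1$. The paper states these as inferences from numerical analysis (the values populating the cells of Fig.~\ref{fig:hcg}); you instead identify $G_a^m$ with the $m$-fold Cartesian product $K_a^{\square m}$ (the Hamming graph $H(m,a)$) and read off the Laplacian spectrum $\{ka:k=0,\dots,m\}$ from the Kronecker-sum formula for Cartesian products. This is a strict improvement: it converts the paper's empirical fit into an exact identity valid for all $a\geq 2$ and $m\geq 1$, and it closes the ``horizon problem'' loophole (Section~\ref{sec:Lim}) that would otherwise leave open whether the observed scalings persist for large $m$. The cost is minimal---the Cartesian-product spectrum of $K_a$ is textbook---so your version is both more rigorous and no longer than the paper's, and in particular your final paragraph addressing the fit-extrapolation concern becomes unnecessary once the spectral computation is exact.
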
 

\section{Assessing advantage from graph family constructions} \label{sec:lessons} 

We summarize the workflow of our numerical survey below:

\begin{enumerate}
\item For a given graph family, defined by its construction of edges and vertices, we compute either the Laplacian or the incidence matrix.
\item We then evaluate $\kappa$ and $s$ for the given system and study how these quantities scale with system size $\mathcal{N}$, which is the number of vertices, $N$, for Laplacian matrix-based systems, and the sum of the number of vertices and edges, $N'$, for incidence matrix-based systems. Calculating $\kappa$ is computationally not easy, as it involves finding the largest and smallest eigenvalues of a matrix. 
\item Finally, we calculate the runtime ratio, which depends on the growth of $\kappa$ and $s$ with $\mathcal{N}$, and classify the graph family as good or bad. 
\end{enumerate} 

Our survey on 50 graph families led to us finding 21 good graph families, where we could realize an (exponential) advantage with HHL, and 29 bad graph families, where HHL offers no advantage. 

A natural follow-up question is whether one can bypass step 2, where computing $\kappa$ for larger system sizes is costly, and instead \textit{qualitatively} assess the prospects of an advantage with HHL directly from the construction of graph family. We reiterate that analytical expressions for $\kappa$ growth is rare; of the 50 graph families considered, we know $\kappa$ behaviour only for the hypercube graph and the complete graph. In this section, we guess the growth of $\kappa$ and $s$, guided by the insights from graph properties, as opposed to quantitatively calculating $\kappa$. 

For the purposes of addressing this question, we only consider analyses of Laplacian matrix based graph families. Thus, the runtime gain/loss is governed by the growth of $\kappa$ and $s$ of the Laplacian matrix belonging to a graph family. Now, we elaborate on the properties of a graph family that determine the growth of these parameters which in turn influence the prospects of quantum advantage:

\begin{figure*}
\centering 

\includegraphics[width=17cm]{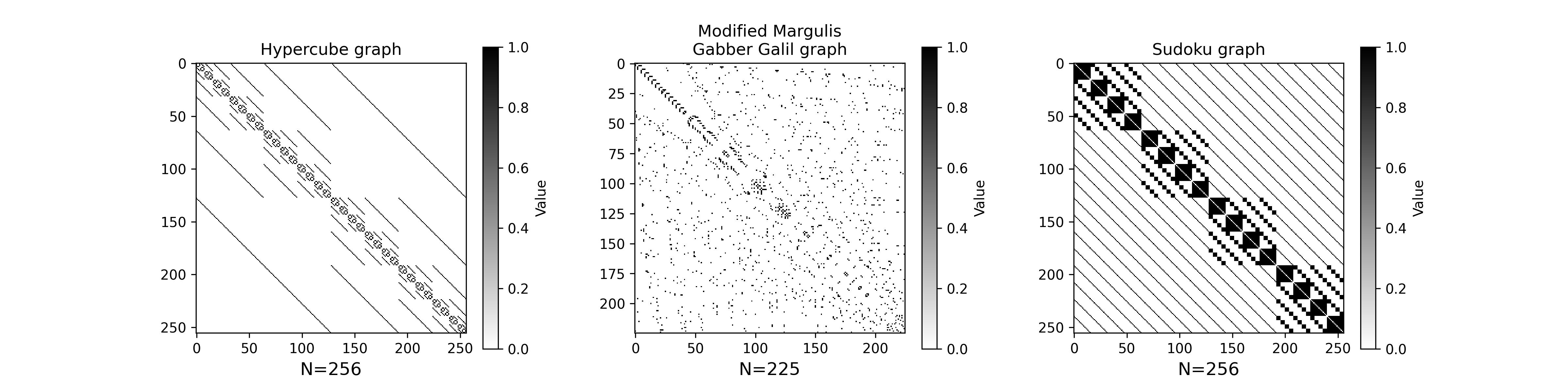} \\
    \vspace{0.5em}
    \makebox[0.31\textwidth]{(a)}
    \makebox[0.31\textwidth]{(b)}
    \makebox[0.31\textwidth]{(c)}\\ 
    
\includegraphics[width=17cm]{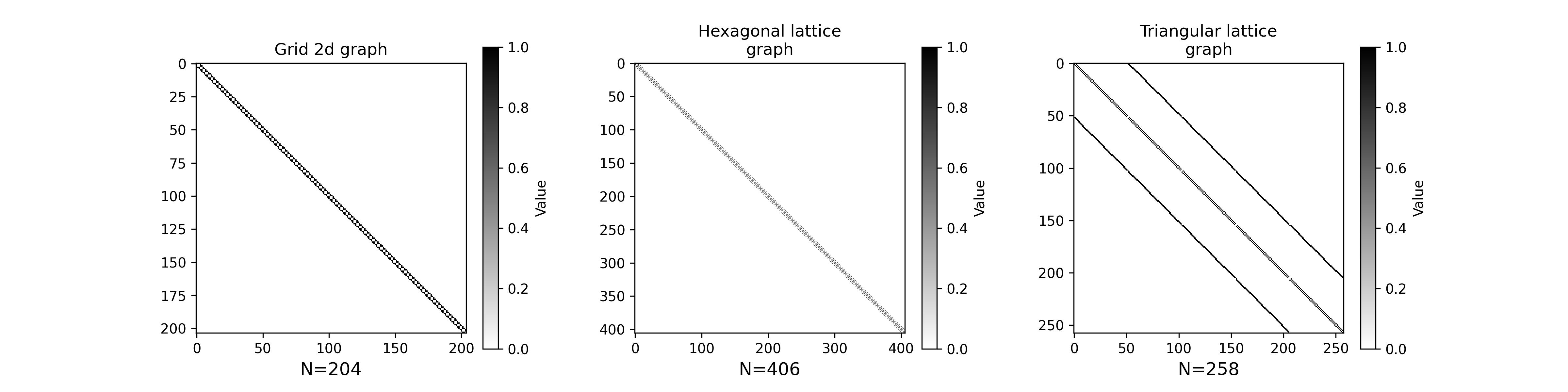} \\
    \vspace{0.5em}
    \makebox[0.31\textwidth]{(d)}
    \makebox[0.31\textwidth]{(e)}
    \makebox[0.31\textwidth]{(f)}\\ 

\caption{Sub-figures (a), (b) and (c) present the adjacency matrices of representative graphs: hypercube graph, modified Margulis-Gabber-Galil graph and Sudoku graph respectively. These graph families exhibit `diffuse' structure in their adjacency matrix elements. Similarly, sub-figures (d), (e) and (f) that show the matrix element map for grid 2d graph, hexagonal lattice and triangular lattice graphs respectively exhibit `sharp' structure.  }\label{fig:adj_conjecture} 
\end{figure*} 

\begin{itemize} 
    \item \textbf{Sparsity $s$}: The scaling of $s$ can be determined by the growth of maximum degree, $d_{max}$, defined by the maximum number of edges incident on any vertex of the graph. The quantity $d_{max}$ is related to $s$ of a Lapalcian matrix as $s= d_{max}+ 1$. The growth of $s$ can be estimated if one has access to growth of $d_{max}$ either through the definition of graph family or based on numerical analysis. For example, in a complete graph family (see S.5 A 1 of the Supplemental Material), the regularity of the graph grows linearly with increase in $N$, which is reflected in the growth of $s$ as well. As the data in Table \ref{tab:yaya} indicates, to realize runtime advantage with HHL algorithm, $s$ of a graph family must grow sufficiently slowly in $N$. 

    \item\textbf{Condition number $\kappa$}: We recall that $\kappa$ is the ratio of largest eigenvalue to the smallest non-zero eigenvalue of the Laplacian matrix. Mathematically, it is hard to extract the spectrum for a matrix as the system size grows. Thus, we intuitively understand the behaviour of $\kappa$, by examining the structure of Laplacian matrices of the surveyed graph families with the goal of finding a pattern through visual inspection, and correlate them with the growth of $\kappa$. Typically, a non-zero off-diagonal matrix element is either $-$1 or 0. These matrix elements often decide the growth of $\kappa$, and this information can be represented as a binary map of the adjacency matrix for all the Laplacian based graph families. Based on our visual inference, we could classify these graph families into two categories:
    \begin{enumerate}
        \item The non-zero off-diagonal matrix elements either occur as a scattered distribution (e.g., \ref{fig:adj_conjecture}(b)) or as a number of banded structures (for example, Fig. \ref{fig:adj_conjecture}(a) and \ref{fig:adj_conjecture}(c)), with increase in system size. We term these kind of matrices as `diffuse'. We pick a representative system size and present binary maps of graph families that occur as a diffuse pattern in their adjacency matrix structure for rest of the graph families in Fig. S.12 of Supplemental Material. 

        \item If the number of banded structures remain fixed with the increase in $N$ (for example, Fig. \ref{fig:adj_conjecture}(f) retains 3 bands with increase in $N$), then we term the pattern as `sharp'. The binary maps for the rest of the graph families with sharp structures are presented in Fig. S.13 of Supplemental Material. 
    \end{enumerate} 
    
       We observe that among 30 graph families considered, 16 graph families which exhibit polylog growth in $\kappa$ always have adjacency matrices with diffuse pattern, whereas all the 11 graph families with fast $\kappa$ growth correspond to having a sharp pattern in their adjacency matrices. Furthermore, we also note that bands are parallel for polynomially growing $\kappa$ cases and non-parallel bands are found for graph families with exponential growth in  $\kappa$. If the distribution of $-1$s is spread out in the matrix, it tends to stabilize the growth of $\kappa$, thus leading to polylog behaviour. Geometrically, this may be interpreted as spawning of new edges from most of the existing vertices of a smaller system size to new vertices of a larger system size of a graph family. We term this property of a graph as \emph{edge spawning}. In contrast, the sharp pattern is witnessed in those graph families where new vertices of larger system size are connected to only a limited number of existing vertices of smaller system size.
    
\end{itemize}

Based on our observations, we conjecture on the properties of graph that enable us to directly evaluate the prospects of an advantage in the Laplacian matrix case: 

\begin{conjecture}\label{conj:patterns}
Given a graph family defined by a construction, one may expect the family to exhibit a quantum advantage with HHL for the Laplacian-based NLSP if the following conditions are simultaneously satisfied: 
\begin{itemize} 
\item $d_{max}$ grows sufficiently slowly in $N$, and 
\item the construction exhibits edge spawning.
\end{itemize}
\end{conjecture} 

For illustration, we consider the example of the complete graph (see Fig. S.12(m) of Supplemental Material). The details of its construction can be found in Section S.5 A 1. At any system size, $N$, each vertex is connected to all other vertices. Thus, $d_{max}$ and hence $s$ grow linearly in $N$. However, since new edges keep spawning from every one of the old vertices for successively increasing system sizes, the family represents the most extreme instance of a diffuse pattern, and thus we also observe a constant value of $\kappa$. Therefore, although $\kappa$ is a constant as inferred from the extreme diffuse pattern, it is unlikely that one may realize an advantage with HHL for a complete graph given $s$ scales linearly in $N$. Our numerical data in Table \ref{tab:yaya2} shows that it is indeed a bad graph. 

On the other hand, when we examine graph families whose constructions involve attaching repeated structural units, such as grid 2d, hexagonal lattice, triangular lattice (plots of adjacency matrices can be found in \ref{fig:adj_conjecture}), ladder, circular ladder, and ring of cliques, and balanced binary tree families (plots of adjacency matrices can be found in Fig. S.13 of Supplemental Material), we find the opposite trend. In these cases, since a new structural unit is connected to a constant number of old vertices with the increase in number of vertices, $d_{max}$, and hence $s$ is constant. For the same reason, edge-spawning is not allowed in its construction and thus, $\kappa$ is expected to grow fast according to our conjecture. Therefore, in view of fast $\kappa$ growth in all the graphs, in spite of constant $s$ scaling, achieving advantage with HHL is unlikely. This is indeed the case as our data in Table \ref{tab:yaya2} indicates. 

\section{Quantum hardware demonstrations} \label{sec:hardware} 

In this section, we discuss our quantum hardware results, where we execute toy networks with classically optimized quantum circuits. We also identify the conditions on matrix structure that results in a 1-qubit HHL circuit. These analyses are restricted to Laplacian matrix-based graphs, as incidence matrix-based calculations are not presently feasible on current-day quantum hardware. 

\subsubsection{Effective resistance calculation with $4 \times 4\ A$ matrices} \label{app:ionq} 

In this sub-section, we describe our quantum hardware computations on toy networks whose $A$ matrix size is $(4 \times 4)$, where we compute effective resistance of electrical circuits shown in Fig. \ref{fig:hardware-runs} using the HHL algorithm as a representative example, on IonQ Forte-1 quantum computer (one of the current best commercially available computers). We use the HHL algorithm since it is the simplest and most established QLS. The effective resistance is extracted as an overlap between the HHL output vector and a vector with two non-zero entries, 1 and -1, at the vertices of interest. We set $n_r=3$ throughout. Due to current-day hardware limitations in terms of their gate fidelities, the use of significant resource reduction methods to reduce circuit depth is required. We borrow resource reduction strategies from Ref. \cite{nishanth2023adapt} for this purpose, and when further resource reduction is necessary, we apply reinforcement learning (RL)-based ZX calculus routine and a causal flow-preserving ZX calculus module discussed and used in Refs. \cite{riu2024reinforcement} and \cite{Palak2025relvqe}. We call this module as RLZX. Hence, our optimization strategy includes the following steps in that sequence: (i) Multi-qubit fixing (introduced in Ref. \cite{nishanth2023adapt}), (ii) Qiskit (version 0.39.5) \cite{Qiskit2021} optimization level 3, which we term L3, (iii) Pytket \cite{Sivarajah2021tket}, (iv) Qiskit L3 (version 0.39.5), (v)  Qiskit L3 (version 2.0.1), (vi) Approximating unitaries using Qiskit (version 2.0.1), (vii) RLZX, and (viii) Qiskit L3 (version 0.39.5). 

In addition to these optimization routines, we also implement the debiasing error mitigation strategy~\cite{maksymov2023symmetry} to improve our results. We could only go up to graphs containing $4$ vertices resulting in a $(4 \times 4)$ Laplacian matrix, as the resulting HHL quantum circuit for larger problem sizes was too deep to accommodate on the IonQ Forte-1 NISQ era quantum computer even after aggressive resource reduction. All of our circuits are compiled in the $\{RX,\ RY,\ RZ,\ RZZ\}$ gate set. Furthermore, even among the $(4 \times 4)$ matrices, we only consider those for which the gate counts post-resource reduction was sufficient for obtaining reasonable results. Table \ref{tab:hardware} presents the quantum hardware settings as well as the final quality of our result for effective resistance as percentage fraction difference (PFD), that is, our result relative to a classical calculation. \\ 

\begin{table*}[t]
    \centering
    \begin{ruledtabular}
    \caption{\label{tab:hardware} Table presenting the hardware settings for the quantum circuits executed on IonQ quantum computers. The $1q$ and $2q$ gates refer to the number of native one-qubit (GPI and GPI2) and two-qubit (ZZ) gates respectively. $p_{th}^{MQF}$ is the probability threshold for multi-qubit fixing. We recall that $n_r=3$ for all our computations. The percentage fraction difference (PFD) relative to a classical calculation is presented in the last column. } 
    \begin{tabular}{c|ccccccccc}
         \textrm{Circuit}&
          \textrm{$1q$-gate fidelity(\%))} &
          \textrm{$2q$-gate fidelity(\%) } &
          \textrm{Readout fidelity(\%)} &
          \textrm{T1 ($\mu s$)} & 
          \textrm{T2 ($\mu s$)} &
          
          \textrm{$1q$ gates} & 
          \textrm{$2q$ gates} &
          \textrm{$p_{th}^{MQF}$} & 
          \textrm{PFD(\%)}\\
         \colrule
         Circuit 1 & $99.98$ & $99.26$ & $99.6$ & $10^8$ & $10^6$ & $169$ & $28$ & $0.8$ & $8.82$ \\
         Circuit 2 & $99.98$ & $99.32$ & $99.59$ & $10^8$ & $10^6$ & $209$ & $30$ &$0.8$& $13.56$ \\
         Circuit 3 & $99.95$ & $99.33$ & $96.58$ & $1.88\times 10^8$ & $95 \times 10^6$ & $45$ & $7$ &$0.8$& $3.805$ \\
         Circuit 4 & $99.98$ & $99.32$ & $99.59$ & $10^8$ & $10^6$ & $223$ & $28$ &$0.8$& $10.98$ \\
         Hexagon & $99.97$ & $99.34$ & $99.26$ & $10^8$ & $10^6$ & $3$ & $0$ & $-$ & $2.72$ \\
         Octagon & $99.97$ & $99.35$ & $99.07$ & $10^8$ & $10^6$ & $3$ & $0$ &$-$& $0.2$ \\
     \end{tabular}
    \end{ruledtabular}
\end{table*} 

\noindent \textbf{Circuit 1: } 

We first consider a square graph with 4 vertices and 4 edges (see Fig. \ref{fig:hardware-runs}(a)). The resulting HHL circuit has $774$ one-qubit  and $46$ two-qubit gates. After applying optimization routines up to step (iv) of the routines listed earlier in this section, we reduce the one- and two-qubit gate counts to $78$ and $28$ respectively. We execute the circuit on the IonQ Forte-1 quantum computer with $5000$ shots over $5$ repeats. The average PFD is obtained to be $8.82$ percent. \\

\noindent \textbf{Circuit 2: }

Fig. \ref{fig:hardware-runs}(b) presents the next circuit considered. The HHL circuit initially has $3286$ one-qubit and $206$ two-qubit gates. We execute our full optimization pipeline to arrive at $89$ one-qubit gates and $30$ two-qubit gates. We execute this circuit too on the IonQ Forte-1 quantum computer with $4000$ shots and $5$ repetitions to get an average PFD of $13.56$ percent. Although we had a slightly better two-qubit gate fidelity available for this execution, the PFD is worse due to a combination of lower number of shots (limited by an upper limit of $1000000$ gate shots, which is decided by the product of the number of shots and total number of gates in a task) as well as increased gate count. \\ 

\begin{figure*}[t]
\begin{tabular}{cc}
\includegraphics[width=8cm]{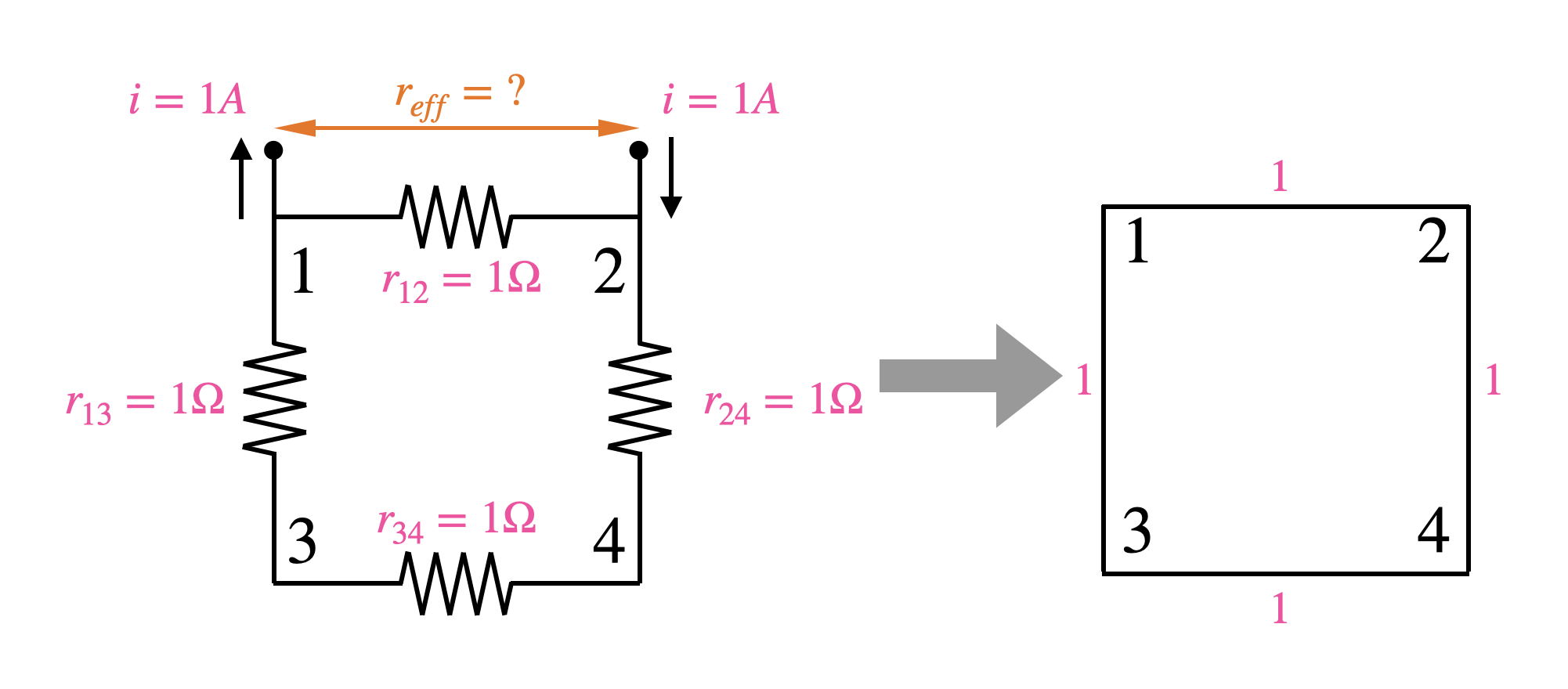} & \includegraphics[width=8cm]{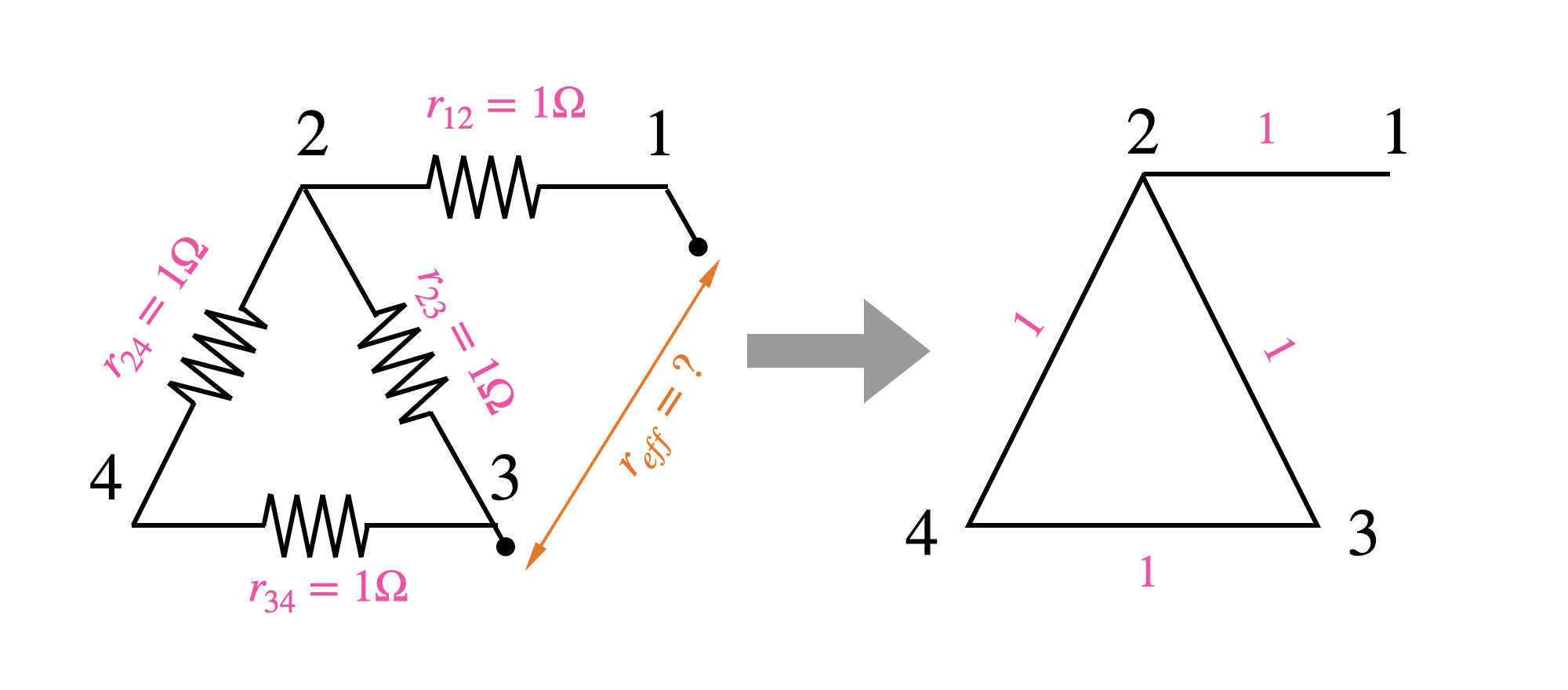}
\\ 
(a)&(b) \\ 
\includegraphics[width=8cm]{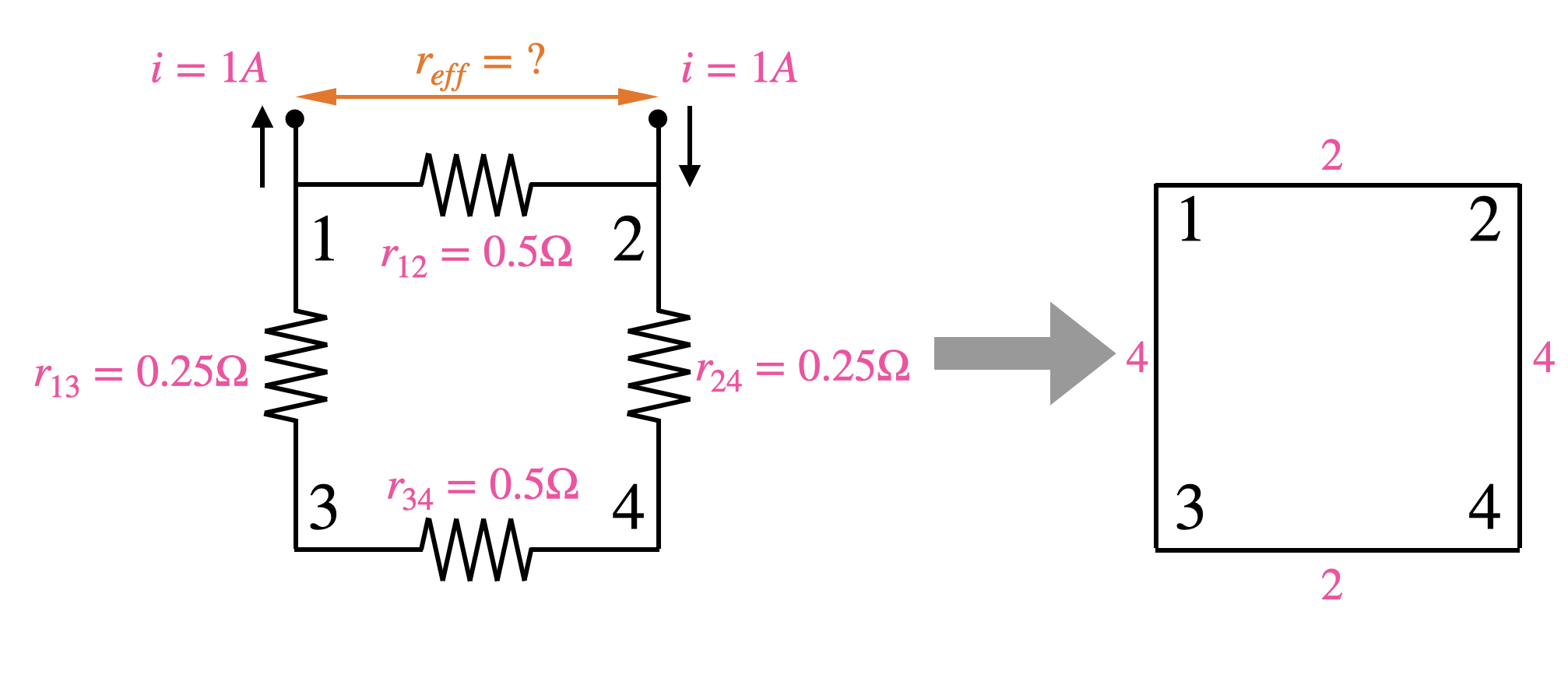}&\includegraphics[width=8cm]{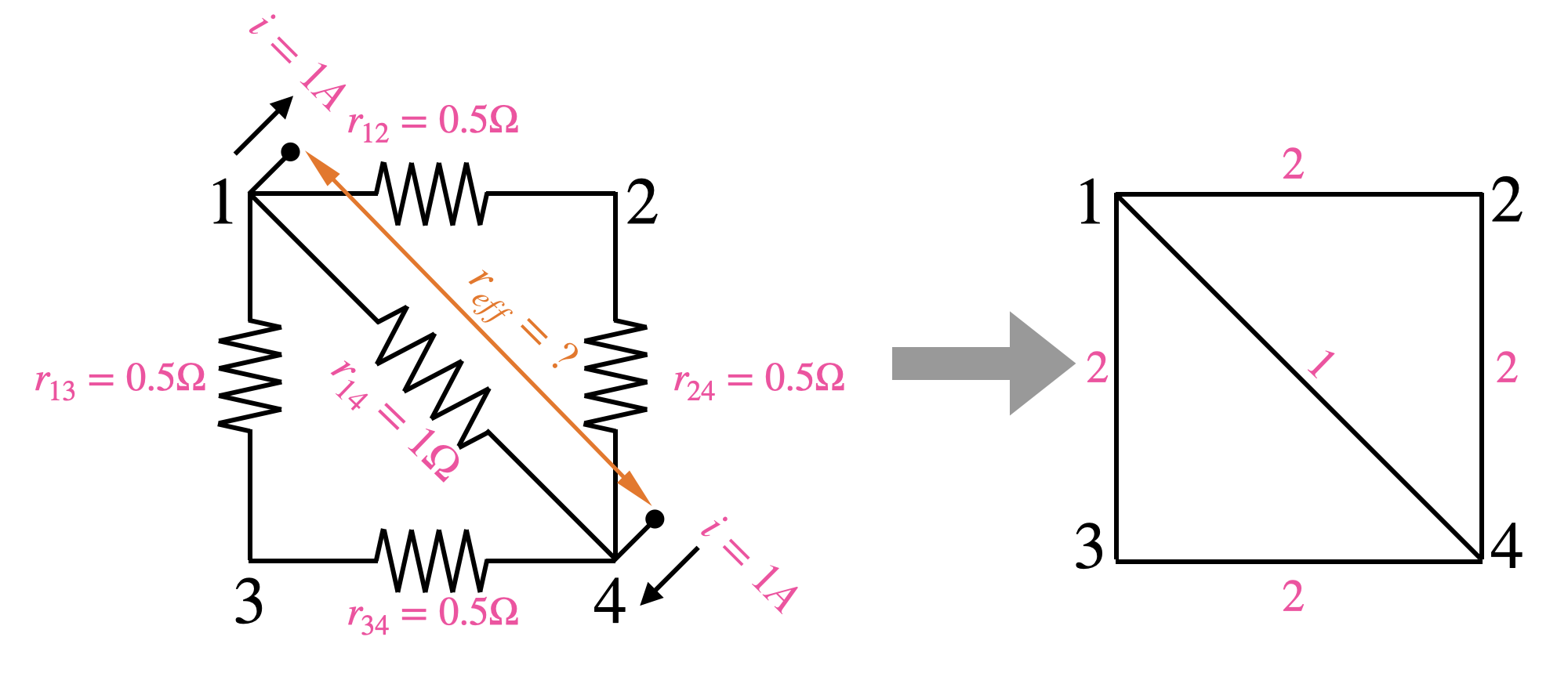} \\
(c)&(d) 
\end{tabular}
\caption{Figures showing the electrical circuits and the corresponding graphs that we considered for our quantum hardware computations. }\label{fig:hardware-runs}
\end{figure*} 

\noindent \textbf{Circuit 3: } 

For the network considered in Fig. \ref{fig:hardware-runs}(c), we allow for non-uniform edge weights, since we find that the circuit depth depends on edge weights and vertex labeling. Unlike the previous sub-figure, we did not need the additional RLZX, as we already arrived at $25$ and $7$ one-qubit ($\{RX,\ RY,\ RZ\}$) and two-qubit ($RZZ$) gates respectively, starting from $774$ one-qubit ($\{RX,\ RY,\ RZ\}$) and $46$ two-qubit gates. Due to availability constraints, we executed the tasks on IonQ Forte Enterprise-1, where the gate fidelities were comparable to the IonQ Forte-1 device, but the readout fidelity was substantially lower at only $96.58$ percent. We obtained an average PFD of $3.805$ percent with $5$ repetitions and $5000$ shots (limited once again by gate shots). \\ 

\noindent \textbf{Circuit 4: } 

We next move to Fig. \ref{fig:hardware-runs}(d), where we consider a Wheatstone bridge-like electrical circuit. Even though we assign edge weights non-uniformly, we still find the gate count to be significant. We begin with $3602$ one- and $226$ two-qubit gates, and after our resource reduction pipeline, we end up with $95$ one-qubit ($\{RX,\ RY,\ RZ\}$) and $28 (RZZ)$ two-qubit gates. We executed the tasks on IonQ Forte-1, and obtained an average PFD of $10.98$ percent with $5$ repeats and $3984$ shots. \\ 

\subsubsection{Computations on large $A$ matrices: all-qubit fixing in special graphs} \label{app:symm}

We intend to build on the idea of multi-qubit fixing (NISQ era variant) introduced in Ref. \cite{nishanth2023adapt} and ask if we can identify Laplacian graph families that accommodate the extreme case of all-qubit fixing. The multi-qubit fixing algorithm involves the execution of a QPE circuit prior to performing the HHL algorithm. Based on the probability distribution obtained from QPE, we fix the state of the clock register qubits in the HHL algorithm to either $\ket{0}$ or $\ket{1}$ provided their probability of occurrence is greater than a set threshold value. Hence, in the multi-qubit fixed HHL circuit (refer S.1 of the Supplemental Material), we can remove controlled unitary operations if their corresponding control qubit is in state $\ket{0}$, and replace controlled unitary operations with only the unitaries if their corresponding control qubit is in state $\ket{1}$. In the event that all the qubits can be fixed (all-qubit fixing), the control rotation module reduces to a series of single qubit $RY$ gates on HHL ancillary qubit, while QPE and QPE$^{\dagger}$ reset the clock register and state register to their respective initial states. In such a case, independent of $N$, we can carry out only a 1-qubit HHL calculation. We present below some relevant theorems in the context of effective resistance determination in electrical circuits involving graph Laplacians, whose proofs can be found in section S.7 of the Supplemental Material. 

\begin{theorem} \label{th:lapmat} 
Given a Laplacian matrix, $L$, and a vector, $\vec{b}$, such that it has for its entries exactly one $1$ and one $-1$ with the rest of its entries being $0$, any problem to be solved using the HHL algorithm reduces to a one-qubit calculation via all-qubit fixing as long as the input $\vec{b}$ is an eigenvector of $L$. 
\end{theorem} 

\begin{theorem} \label{th:condonlapmat}
The conditions on $L$ so that $\vec{b}=\vec{\delta_i}-\vec{\delta_j}$ is an eigenvector of the matrix are given by $l_{pi}=l_{pj}, \forall p \not \in \{i, j\}, \ \mathrm{and} \ l_{ii}=l_{jj}$. 
\end{theorem} 

\begin{theorem}\label{th:new_graph}
Let $G = (V(G), E(G))$ be any graph with the set of vertices $V(G) = \{v_i:i=1,\ 2,\ 3, \cdots, N\}$ and set of edges $E(G)$. Let $H = (V(H), E(H))$ be a graph such that $V(H) = V(G) \cup \{v_{(N + 1)}, v_{(N + 2)}\}$ and set of edges $E(H) = E(G) \cup \{(v_{(N + 1)}, u_1), (v_{(N + 1)}, u_2), \cdots, (v_{(N + 1)}, u_k)\} \cup \{(v_{(N + 2)}, u_1), (v_{(N + 2)}, u_2), \cdots, (v_{(N + 2)}, u_k)\}$,
where $u_1, u_2, \dots u_k \in V(G)$ in some order. Then, the Laplacian matrix $L(H)$ has eigenvalue $k$ with an eigenvector $\vec{b} = (\underbrace{0,0,\ldots,0}_{N \text{ times}}, 1, -1)^{\mathsf{T}}$.
\end{theorem}

\begin{theorem} \label{th:aqf}
A complete graph accommodates a one-qubit HHL via all-qubit fixing. 
\end{theorem} 

The first theorem shows that for effective resistance computations using only a 1-qubit all-qubit fixed HHL calculation, the input state has to be an eigenvector of the graph Laplacian. The second theorem provides the conditions on $L$ for possessing such an eigenvector, whereas the third provides a graph construction so that the graph Laplacian has such an eigenvector. The fourth theorem shows that a complete graph allows for all-qubit fixing. 

We consider two instances of the complete graph family to compute effective resistance on quantum hardware: all-to-all connected hexagon and all-to-all connected octagon. For complete graphs,  $\vec{I} = \delta_i - \delta_j$ (refer section S.3 of Supplemental Material) happens to be one of the eigenvectors of the Laplacian matrix. After QPE measurement, one would obtain a single peak in the histogram corresponding to the eigenvector. This, effectively, would reduce the circuit to one-qubit circuit consisting of HHL ancillary qubit. For all-to-all connected hexagon graph, we execute a one-qubit circuit containing $RY(0.3349)$ on IonQ Forte-1. The average PFD over 5 repeats with $1000$ shots was $2.72$ percent. For an all-to-all connected octagon, we execute a one- qubit circuit with gate $RY(\pi)$ on IonQ Forte-1 quantum device, to obtain an average PFD of $0.200$ with $1000$ shots and $5$ repeats. \\ 

\section{Conclusions} \label{sec:conclusion}

In summary, we investigate the prospects of quantum advantage using various quantum linear solvers (QLSs; see Section \ref{sec:qls}) for the networks-based linear systems problem (NLSP; introduced in Section \ref{sec:cn-lsp}) by surveying 50 graph families (random, non-random, trees, as well as those with/without sources and sinks), 30 based on Laplacian and 20 on incidence matrices. Our choice of picking NLSPs is motivated by both the flexibility that they offer in runtime complexity as well as their relevance to real-world problems. As Section \ref{sec:survey} describes, we find the scaling of condition number ($\kappa$) and sparsity ($s$) for each graph family, and within the scope of our survey (detailed in Section S.4 of Supplemental Material), we classify each family as good or bad, corresponding to an advantage or no advantage respectively with HHL. The rationale behind picking HHL is that other QLSs from our list provided in Table \ref{tab:complexities} build on HHL. For example, a graph family which offers polynomial advantage with HHL is guaranteed to do at least as well with other considered QLSs. 

Our findings show that 21 of the 50 graph families considered are good, that is, only they offer potential for (exponential) advantage with HHL. Realizing such an advantage almost always requires that $\kappa$ and $s$ scale at most polylogarithmically in system size (with the only exception from our dataset being the Sudoku graph family, where $\kappa \sim \sqrt{N}$). Among the 21, 7 are from Laplacian matrix based families (we considered a total of 30 of them for our study) and 14 from incidence matrix-based families (where we considered a total of 20 graph families). 

We now summarize our observations on the performance of other QLSs considered in our study: 

\begin{enumerate}
\item Improved QLSs such as the CKS(1) and the AQC(1) approaches consistently show crossover to the advantage regime at much smaller system sizes than HHL does. Thus, one can realize advantage at smaller problem instances with improved QLS algorithms. 
\item Furthermore, the CKS(1) and the AQC(1) algorithms elevate several graph families to classes with more advantage. In fact, 15 of the 23 bad graph families show exponential advantage with CKS(1)/AQC(1). We find one graph family where the dream QLS offers an advantage whereas the CKS(1) and the AQC(1) approaches do not, due to a peculiar cancellation between the growth of $\kappa$ and $s$. 
\end{enumerate} 

Section \ref{sec:graph_superfamily} discusses a generalization of the hypercube and complete graph families to the generalized hypercube superfamily, which we find accommodates an infinity of good graph families in its tableau (see Fig. \ref{fig:hcg}). We anticipate that further work in the direction of finding such generalized superfamilies can aid in finding many such new graphs.

Lastly, we briefly recap our findings from Section \ref{sec:lessons}, where we ask if one can directly look at a graph family properties and assess prospects of advantage for Laplacian matrix- based NLSPs. We examine the $A$ matrices of the 30 graph families considered in our survey. For \textit{all} of them, we find the matrix elements occurring as two distinct types of patterns for two types of $\kappa$ behaviour: `diffuse' and `sharp' corresponding to slow (at most polylogarithmic $\kappa$ scaling) and fast (polynomial/exponential). We thus conjecture that if one finds new edges spawning from existing vertices as system size increases, one could expect slow $\kappa$ growth, and hence better prospects for advantage. One could additionally look for a graph construction that could result in slow growing $d_{max}$ for favourable $s$ growth to comment on possible advantage expected from that graph family. 

Even if a graph family were to theoretically offer exponential advantage, there are other significant practical challenges that lie on the way to truly realizing such a speed-up. We list some of the fine print in this regard (see S.6 of Supplemental Material), and as proof-of-concept, perform effective resistance computations involving $(4\times 4)$ matrices on the newer generations of the commercially available IonQ quantum computers in Section \ref{app:ionq}. We find that even executing problems of this size is challenging, thus shedding light on the massive gap between dreams of achieving advantage in such classes of problems and current-day ground reality capabilities of quantum computers. 

Our work constitutes a preliminary survey that analyzes prospects for advantage, under suitable assumptions. Future directions could involve not only checking for suitability of more graph families, but also going beyond some of the assumptions that we make, such as accessing larger system sizes to arrive at better quality fits, modifying edge weights to be non-uniform, etc. 

Our work also opens new avenues to a notable offshoot; our survey involves calculating smallest and largest eigenvalues and thus arriving at a functional form via a fit for the condition number scaling with system size of Laplacians and incidence matrices for different graph families, thus naturally motivating future research on semi-empirical expressions for the spectra of such matrices. The issue of determining spectra of graphs is known to be a challenging problem in spectral graph theory for a majority of graph families, and our contribution would prove valuable not only for exploring quantum advantage in QLSs for NLSPs, but also for broader applications where graph spectra are of independent interest. 

\begin{acknowledgments} 
The work was carried out as a part of the Meity Quantum Applications Lab (QCAL) Cohort 2 projects. VSP acknowledges support from CRG grant (CRG/2023/002558). Our computations were carried out on the IIT Kharagpur HPC cluster, Param Shakti, and on TCG CREST's local HPC cluster. VSP thanks Prof. Sayan Chakraborty, IAI, TCG CREST, for useful discussions in the initial stages of the project on graphs. DS and VSP acknowledge Dr. Subimal Deb, Dr. Fazil, Mr. Peniel Bertrand Tsemo, Ms. Aashna Anil Zade, Ms. Tushti Patel and Mr. Nishanth Bhaskaran (and in no particular order) for their insightful comments and careful reading of the manuscript. DS and VSP thank Dr. Pedro C. S. Costa for discussions on complexity of block-encoding technique. 
\end{acknowledgments} 

\clearpage 

\begin{center}
     \Large \textbf{Supplemental Material} 
 \end{center} 

\begin{appendix} 

 \appendix
 \renewcommand{\thesection}{S.\arabic{section}}
 \renewcommand{\thefigure}{S.\arabic{figure}}
 \renewcommand{\thetable}{S.\arabic{table}}
 \renewcommand{\theequation}{S.\arabic{equation}} 
 \renewcommand{\thetheorem}{\thesection.\arabic{theorem}} 
 \renewcommand{\thedefinition}{\thesection. \arabic{definition}}

 \setcounter{section}{0}
 \setcounter{figure}{0}
 \setcounter{equation}{0} 
 \setcounter{theorem}{0}
 \setcounter{definition}{0}

 \makeatletter
 \def\l@subsection#1#2{} 
 \makeatother

\section{Steps involved in the HHL algorithm (including feature extraction)}\label{sm:hhl}

The HHL algorithm \cite{Harrow2009QuantumEquations} broadly encompasses the following steps.  Detailed reviews describing the HHL algorithm in general can be found in Refs. \cite{zaman2023step, lin2022lecture,morales}: 

\begin{itemize} 
\item We initialize three qubit registers to the state $\ket{b}_s\ket{0}^{\otimes n_r}_c\ket{0}_a$. The subscript,`$s$', denotes the state register, and it is assumed that the $n_b$-qubit state, $\ket{b}_s$, can be efficiently prepared from $\ket{0}_s$. $\ket{b}_s$ is obtained from normalized $\vec{b}$ via amplitude encoding. Furthermore, $\ket{b}_s$ can be expanded in the eigenbasis of $A$ as $\ket{b}_s=\sum_i b_i\ket{v_i}$. The subscripts `$c$' and `$a$' refer to the clock register (which contains $n_r$ qubits) and the HHL ancillary qubit register respectively. 
\item Perform QPE on the clock and the state registers, in order to obtain the eigenvalues ${\tilde{\lambda}_i}$ of $A$ captured using $n_r$ bits. The state after QPE is $\sum_i b_i\ket{v_i}_s\ket{\tilde{\lambda}_i}_c\ket{0}_a$. 
\item Carry out a controlled-rotation module between the clock and the HHL ancilla registers, so that the eigenvalues are inverted. This can be done in more than one way, including the use of a suitably chosen uniformly controlled rotation circuit. The state after this step is $\sum_i b_i\ket{v_i}_s\ket{\tilde{\lambda}_i}_c\bigg(\sqrt{1-\frac{C^2}{\tilde{\lambda}_i^2}}\ket{0}_a+\frac{C}{\tilde{\lambda}_i}\ket{1}_a\bigg)$, where $C$ is a suitably chosen constant. 
\item Undo QPE via a QPE$^\dag$ so that the HHL ancilla register is no longer entangled with the clock register. The state at the end of this step is $\sum_i b_i\ket{v_i}_s\ket{0}_c^{\otimes n_r}\bigg(\sqrt{1-\frac{C^2}{\tilde{\lambda}_i^2}}\ket{0}_a+\frac{C}{\tilde{\lambda}_i}\ket{1}_a\bigg)$.  
\item Measure the HHL ancilla qubit and post-select the outcome $1$. We obtain $\ket{x}=\sum_i\frac{b_i}{\tilde{\lambda}_i}\ket{v_i}$ on the state register. 
\item Extract the feature of interest via an appropriate circuit. One could add an additional state register, $\ket{b'}$, initialized to a suitable state, so that this register along with the state register after HHL serve as inputs to a feature extraction circuit module. For example, this register could be initialized to $\ket{b}$, so that the output of a SWAP test \cite{cswap} (or the Hong-Ou-Mandel (HOM) test, which is an ancilla-free version of the SWAP test but leads to destruction of state register qubits as a trade-off \cite{HOM}) module yields an overlap between the solution vector and $\vec{b'}$. 
\end{itemize} 

\begin{figure}[t]
\includegraphics[width=9cm]{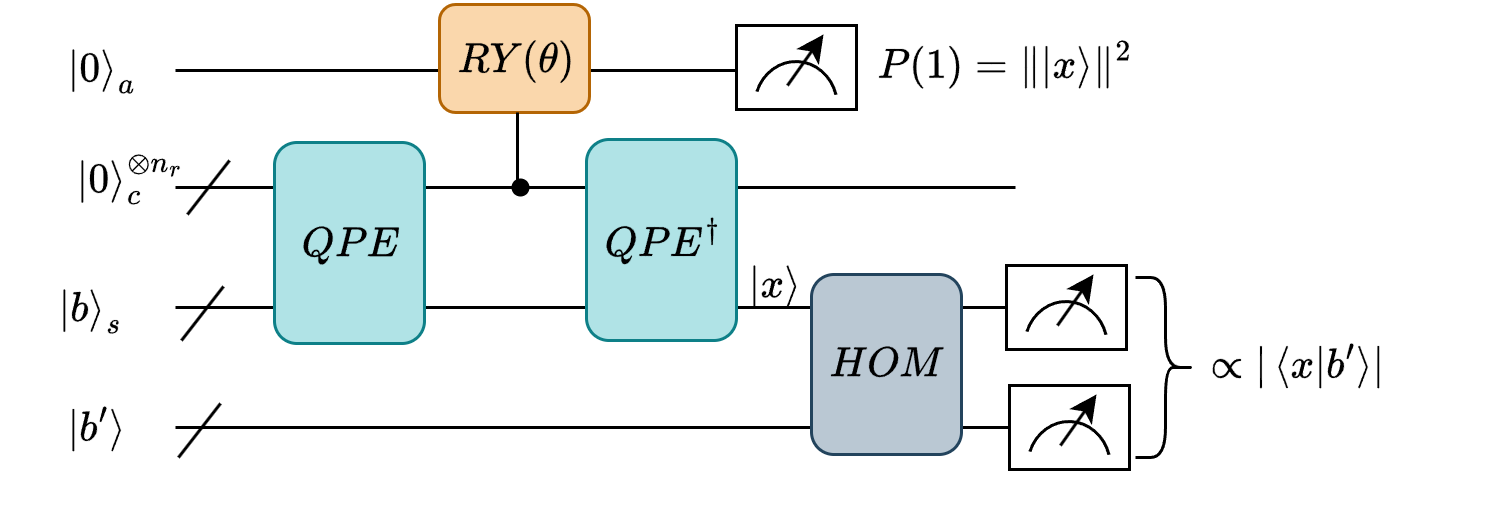}
\caption{Schematic of the HHL algorithm containing the HOM module to extract a feature from the solution vector. }\label{hhl}
\end{figure} 

A circuit that carries out the steps mentioned above (and for the specific example of extracting overlap between the input and output vectors) is illustrated in Fig. \ref{hhl}. 

\section{Some relevant graph theory definitions}\label{sm:gt}

\begin{definition}\label{def:degree}
    The degree of a vertex, $d(v)$, in a undirected graph $G$ is defined by the number of edges incident on the vertex $v$.
\end{definition}

\begin{definition}\label{def:edge_weight}
    An edge weight, $w_{ij}$, is a real number, which we assume to be positive throughout the article, which is assigned to an edge $(v_i, v_j)$. 
\end{definition}

\begin{definition}\label{def:simple}
    A simple graph is an undirected graph without any self loop, edge weights, and multiple edges between the vertices.
\end{definition}

\begin{definition}\label{def:random_graph}
    A random graph may be a directed graph or an undirected graph where existence of an edge is probabilistic.
\end{definition}

\begin{definition}\label{def:sub_graph}
    A graph $H = (V(H), E(H))$ is said to be a subgraph of a graph $G = (V(G), E(G))$ if $V(H) \subset V(G)$ and $E(H) \subset E(G)$.
\end{definition}

\begin{definition} \label{def:degmat} 
The degree matrix of an undirected simple graph $G$ with $N$ vertices $v_i: i = 1, 2, \cdots, N$ is a matrix $D(G) = (d_{ij})_{N \times N}$ where 
\begin{eqnarray}
d_{ij} = 
\begin{cases}
d(i) & \text{if}\ i=j; \\
0 & \text{otherwise}.
\end{cases}, 
\end{eqnarray}
Here, $d(i)$ is the degree of vertex $v_i$. For a weighted undirected graph, $d(i)$ is the sum of the edge-weights of the edges incident on the vertex $v_i$.
\end{definition}

\begin{definition} \label{def:adjmat}
The adjacency matrix, $Q(G) = (q_{ij})_{N \times N}$, of an undirected simple graph $G$ with $N$ vertices $v_i: i = 1, 2, \cdots, N$ is defined by 
\begin{eqnarray}
q_{ij} = 
\begin{cases}
1 & \text{if}~ (v_i, v_j)\in E; \\
0 & \text{otherwise}.
\end{cases}. 
\end{eqnarray}
In case of a weighted graph, if $w_{ij}$ is the weight of an edge $(v_i, v_j)$, we define
\begin{eqnarray}
q_{ij} = 
\begin{cases}
w_{ij} & \text{if}~ (v_i, v_j) \in E; \\
0 & \text{otherwise}.
\end{cases}
\end{eqnarray}
\end{definition} 

\section{Examples of NLSP} \label{sm:nlspexample}

\appsubsection{Determining effective resistance in complex networks} 

\begin{example} \label{ex1}
Electrical networks can be modeled as graphs, where each resistor is replaced by an edge, with the weight of the edge $(v_i, v_j) \in E$ being $w_{ij} = 1/r_{ij}$, where $r_{ij}$ is the resistance of the edge. Consider the graph, $G$, given in the top right panel of Figure 1 of the main manuscript, which corresponds to an electrical network. Let $u_1,\ u_2,\ u_3, \  \text{and} \ u_4$ represent the voltages at vertices $v_1,\ v_2,\ v_3$, and $v_4$ respectively. We represent the currents entering the network at vertices $ v_1, v_2, v_3, \  \text{and} \ v_4 $ as $I_1, I_2, I_3, \  \text{and} \ I_4$ respectively. We assume there is no leakage of currents in the network. Hence, the amount of current entering the vertex is equal to the amount of current exiting the vertex. We choose the direction of flow of current as $v_1 \to v_2$, $v_2 \to v_3$, $v_3 \to v_4$ and $v_4 \to v_1$. We remark here that the choice of directions are arbitrary, and it does not affect the final equations. From the conservation of currents at every vertex of the network, we have the following set of equations: 

\begin{align}
\text{At } v_1:\quad & (w_{12} + w_{41})u_1 - w_{12}u_2 - w_{41}u_4 = I_1 \\
\text{At } v_2:\quad & (w_{12} + w_{23})u_2 - w_{23}u_3 - w_{12}u_1 = I_2 \\
\text{At } v_3:\quad & (w_{34} + w_{23})u_3 - w_{34}u_4 - w_{23}u_2 = I_3 \\
\text{At } v_4:\quad & (w_{41} + w_{34})u_4 - w_{41}u_1 - w_{34}u_3 = I_4. 
\end{align} 

These equations can then be written in the form of a matrix as follows:
\begin{equation} \label{eqn:laplacian}
\underbrace{ 
\begin{bmatrix}
     w_{12}+w_{41} & -w_{12} & 0 & -w_{41} \\
     -w_{12} & w_{12}+w_{23} & -w_{23} & 0 \\
     0 & -w_{23} & w_{23} + w_{34} & -w_{34} \\
     -w_{41} & 0 & -w_{34} & w_{34+41}
 \end{bmatrix} }_{\text{L}}
\underbrace{\begin{bmatrix}
        u_1 \\
        u_2 \\
        u_3 \\
        u_4
    \end{bmatrix}}_{\vec{u}}
    =  \underbrace{\begin{bmatrix}
        I_1 \\
        I_2 \\
        I_3 \\
        I_4
    \end{bmatrix}}_{\vec{I}}.
\end{equation}

The above matrix represents the Laplacian matrix $L$ of graph $G$. Here, $\vec{u}$ is the vector of potentials, with each element of the vector providing the potential at a corresponding vertex of the network. $\vec{I}$ is the vector of currents entering each vertex of the network. In general, for any electrical network modeled as an undirected graph, we have 
\begin{equation}
    L\vec{u} =\vec{I}.
\end{equation}

Here, $\vec{u}$ is the unknown vector $\vec{x}$ and $\vec{I}$ takes the place of $\vec{b}$ in equation $L \vec{x} = \vec{b}$. This equation assists us to solve for the vector of potentials. We use $\vec{u} = L^{+}\vec{I}$ to compute the effective resistance between two vertices, say $v_1$ and $v_2$ of the graph, where $L^{+}$ is pseudo-inverse of $L$. This is obtained by allowing a unit current to flow into the network at $v_1$ and out of the network at $v_2$. We define $\vec{\delta}_i=(0, 0, \cdots,\ 1, \cdots, 0)^T$, where $1$ occurs only at $i^{th}$ index. To find effective resistance between $v_1$ and $v_2$, we set $I_1= 1$ unit of current and $I_2 =-1$ unit of current in Eq. \ref{eqn:laplacian}. Thus, $\vec{I}=\vec{\delta}_1 - \vec{\delta}_2$. As we allow only a unit current to enter and exit from vertices $v_1$ and $v_2$ respectively, finding effective resistance is as simple as finding the potential difference between vertices $v_1$ and $v_2$. Hence, $r_\text{eff}^{12}=u_1-u_2=(\vec{\delta}_1-\vec{\delta}_2)^T L^{+}(\vec{\delta}_1-\vec{\delta}_2)$. \hfill $\square$ 
\end{example}

\appsubsection{Determining number of vehicles on a lane: traffic flow congestion detection} 

\begin{example} \label{ex2} 
Consider the directed graph $\overrightarrow{G}$ depicted in the top right panel of Figure 1 of the main manuscript where all the vertices $v_1, v_2, v_3$, and $v_4$ preserve the flow conservation property. Viewed as a traffic flow problem, the edges are the lanes, while the weights on the edges denote the number of vehicles on that lane. The vehicles along a lane move in a particular direction, and is given by the direction marked on an edge. The vertices are the junctions. The vertices $v_1$ and $v_4$ take $c_1$ and $c_4$ vehicles respectively into the system of lanes. On the other hand, $c_2$ and $c_3$ are the number of vehicles that go out of the system via $v_2$ and $v_3$ respectively. The values of $c_1, c_2, c_3$, and $c_4$ are known. Let the flows via $\overrightarrow{(v_1, v_2)}, \overrightarrow{(v_2, v_3)}, \overrightarrow{(v_3, v_4)}$ be $\overrightarrow{(v_4, v_1)}$ are $y_1, y_2, y_3$, and $y_4$, respectively, which are unknowns. As the flow conservation property holds at all the vertices, we can construct four linear equations for the vertices, which are as follows 
\begin{equation}
    \begin{split}
    \text{At\ } v_1: & -y_1+y_4+c_1=0 \Rightarrow y_4-y_1 =-c_1, \\
    \text{At\ } v_2: & -y_2+y_1 - c_2= 0 \Rightarrow -y_2+y_1=c_2, \\
    \text{At\ } v_3: & -y_3+y_2 - c_3= 0 \Rightarrow -y_3+y_2=c_3,\\
     \text{At\ } v_4: & -y_4+y_3 + c_4= 0 \Rightarrow -y_4+y_3=-c_4.\\
     \end{split}
\end{equation}
These set of linear equations can be written in terms of a matrix equation 
\begin{equation}
\underbrace{ 
\begin{pmatrix}
     -1 & 0 & 0 & 1 \\
     1 & -1 & 0 & 0 \\
     0 & 1 & -1 & 0 \\
     0 & 0 & 1 & -1
 \end{pmatrix} }_{\text{B}}
\underbrace{\begin{pmatrix}
        y_1 \\
        y_2 \\
        y_3 \\
        y_4
    \end{pmatrix}}_{\vec{y}}
    =  \underbrace{\begin{pmatrix}
        -c_1 \\
        c_2 \\
        c_3 \\
        -c_4
    \end{pmatrix}}_{\vec{c}}.
\end{equation}
The system of linear equations can be expressed as $B\vec{y}=\vec{c}\ $, where $B$ is the incidence matrix of the graph $\overrightarrow{G}$, $\vec{y}$ is the vector of flows on each edge, which is unknown in the equation and $\vec{c}$ is the vector of flows entering the network at each vertex, which is a known quantity. To solve for $\vec{y}$ using a quantum algorithm, we must transform our matrix $B$ according to Equation (6) of the main manuscript. Our new matrix equation becomes 
\begin{equation} 
\underbrace{   \begin{pmatrix}
        0 & B \\
        B^{\dagger} & 0
    \end{pmatrix}}_{\text{H}}
\underbrace{{\begin{pmatrix}
        0 \\
        \vec{y}
    \end{pmatrix}}}_{\vec{Y}}
    = \underbrace{{\begin{pmatrix}
        \vec{c} \\
        0
    \end{pmatrix}}}_{\vec{C}}. 
\end{equation}  

We compute $\vec{Y}$ by solving $\vec{Y} = H^{+}\vec{C}$, where $H^+$ is the pseudo-inverse of $H$. To get the flow over any edge, say $y_i$, we evaluate the overlap between $\vec{\delta_i} \cdot \vec{y}$. We note that in the case of traffic flow congestion detection problem, this overlap yields the number of vehicles on a lane. \hfill $\square$ 
\end{example} 

\section{Scope and assumptions of our numerical analyses} \label{app:scope}

It is important to stress that taking an analytical route to obtain the information on spectra of these graph families is very non-trivial. Thus, we rather resort to numerical analysis, which is not only computationally taxing at large system sizes but also naturally comes with non-trivial assumptions. In this section, we provide the scope and assumptions made in our calculations, accompanied by error analysis on representative graph families whenever possible. \\

\begin{figure}[t]
    \centering
    \begin{tabular}{c}
    \includegraphics[width=0.68\linewidth]{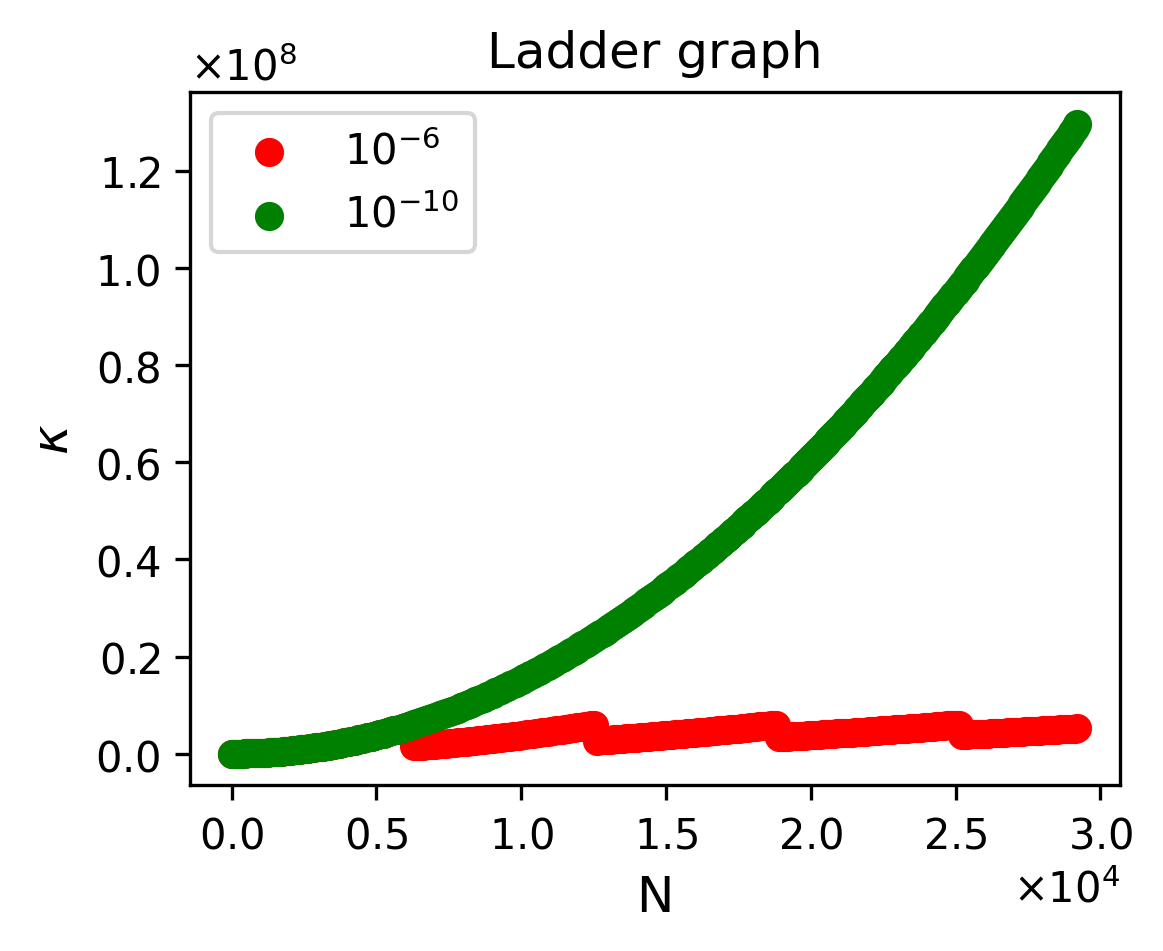} \\ 
    (a) \\ 
    \includegraphics[width=0.68\linewidth]{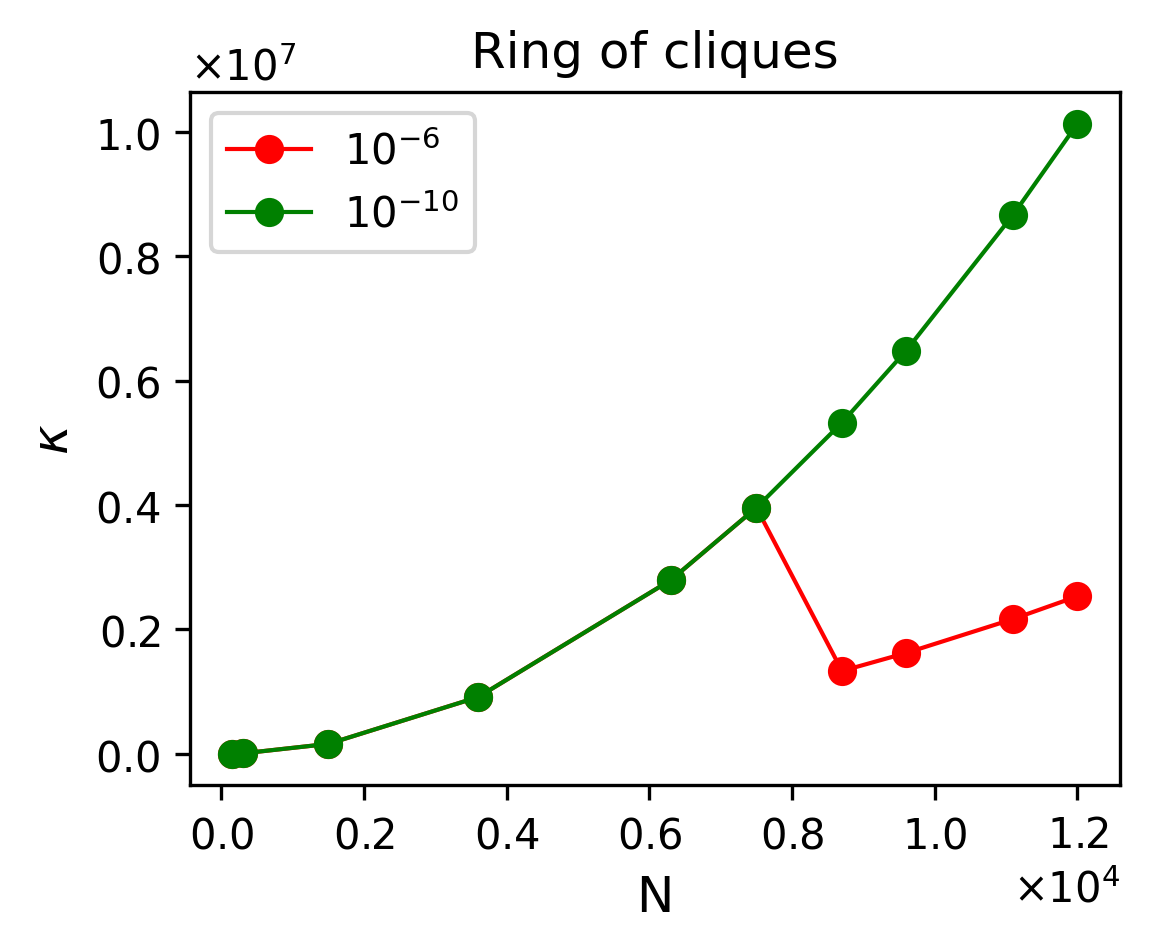} \\ 
    (b) \\ 
    \includegraphics[width=0.68\linewidth]{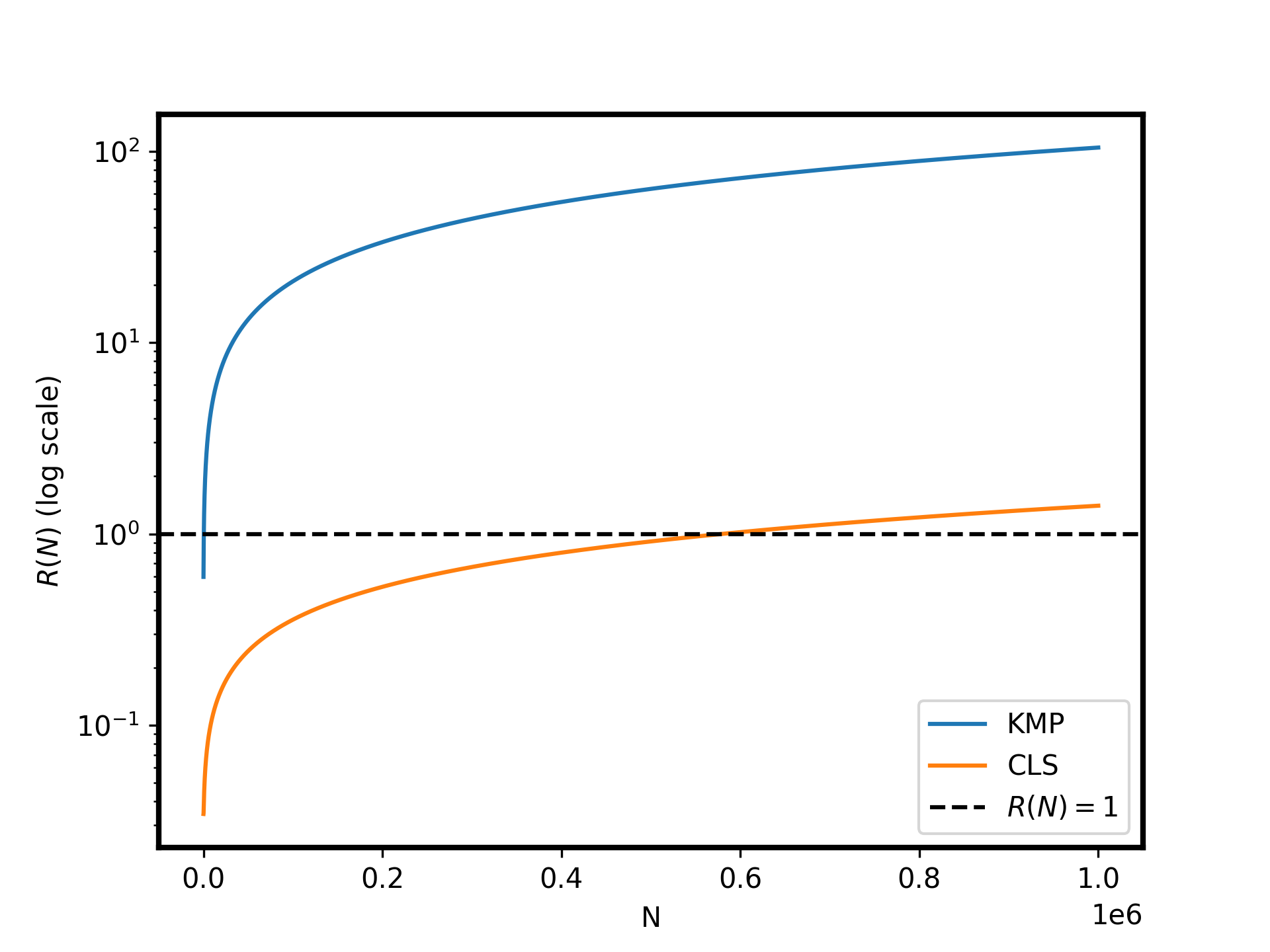}\\
    (c) \\ 
    \end{tabular}
    \caption{An illustration of cut-off problem with (a) ladder graph family, and (b) with ring of cliques graph family. Sub-figure (c) provides a comparison of $\tilde{R}(n)$ versus $n$ for CLS and KMP algorithms. The y-axis is on the log scale. }
  \label{fig:ks-vs-cg} 
\end{figure} 

\appsubsection{Assumptions} 

\begin{itemize}

\item \textbf{$1/\epsilon$ scaling: }As noted earlier, we assume for simplicity that $\epsilon^{-1} \sim \mathrm{log}(\mathcal{N})$. This assumption is reasonable, since for example, in HHL, this translates to $n_r$ growing as $\log(\log(\mathcal{N}))$, where $n_r$ is the number of clock register qubits in the HHL quantum circuit Section \ref{sm:hhl}). This assumption can be relaxed depending on the specific application and target precision of interest in future studies. 

\item We only consider graph Laplacians with \textbf{edge weight of each edge set to $1$}, for simplicity. A more general version of the analysis can be performed by relaxing this constraint for a future study, but for the purposes of this work, we perform preliminary analysis to inspect the effect of this assumption in obtaining advantage. We note that changing the edge weight of 1 for all of the edges to any other fixed real scalar for all of the edges does not affect $\kappa$ or $s$. Thus, we consider three cases of non-uniform edge weights, where they grow linearly, polynomially (second order), and logarithmically, in $y$, where an edge occurs between two vertices carrying indices $x$ and $y$. We carry out the analysis on two of the good graphs (refer Table II of the main manuscript): (i) the hypercube graph, and (ii) modified Margulis-Gabber-Galil graph. Our results are presented in Figs. \ref{fig:weighted_HCG} and \ref{fig:weighted_mgg}. We make the important observation that edge weight values do impact prospects of advantage; the hypercube graph retains its exponential advantage with HHL, CKS(1)/AQC(1) and dream QLS (refer Table I of main manuscript) for logarithmically growing edge weights, but becomes a bad graph (i.e., no advantage with HHL) when edge weights grow as a linear or a polynomial (second order) function of $y$. However, we do get exponential advantage with CKS(1)/AQC(1) for linearly or polynomially growing edge weights, while these scenarios show no advantage with dream QLS. For weighted modified Margulis-Gabber-Galil graph family, when the edge weight grows logarithmically as well as linearly, the graph family still remains in the good category and exhibits exponential advantage for HHL, CKS(1)/AQC(1) and the dream QLS. If the the edge weights grow polynomially (second order), the graph family is demoted to a bad graph family, which implies no advantage with HHL algorithm. We also find that, with polynomially growing edge weights, CKS(1)/AQC(1) and dream QLS show exponential advantage. 

\item \textbf{Cut-off problem: }To calculate the condition number, $\kappa$, of a matrix, one must be able to distinguish the minimum eigenvalue with zero. Since we carry out numerical analyses, one can immediately see that eigenvalues smaller than a certain value may not be capturable, thus leading to an incorrect estimate for $\kappa$. For instance, if we are not able to capture the minimum eigenvalue but instead end up capturing a larger one, we underestimate $\kappa$, and thus overestimate the degree of advantage obtained from that graph family. For all of our numerical calculations, we found that a cut-off of $10^{-6}$ for minimum eigenvalue is sufficient for the $\mathcal{N}$ values up to which we calculate $\kappa$. The pertinent data, where we compare the minimum eigenvalues at a few system sizes for two different thresholds, $10^{-6}$ and $10^{-10}$, and find them to be the same, is presented in Figs. \ref{app:Lcutoff} and \ref{app:Icutoff}. We note that the figures are only plotted for those graphs whose minimum eigenvalues show a downward trend with $\mathcal{N}$. Furthermore, when we move to $\mathcal{N}$ values larger than those considered for our study, we find two exceptions: the ladder graph family and the ring of cliques graph family. Figs. \ref{fig:ks-vs-cg}(a) and (b) highlight this `cut-off' problem for these two graph families. 

\item \textbf{The horizon problem: }Since for practical reasons, a numerical analysis can only go up to some finite value of $\mathcal{N}$, our assessment for advantage may be inaccurate, since there is always a possibility for the fit functional form to change as we reach much larger system sizes. Such a change would be hidden beyond the largest $\mathcal{N}$ that we pick for our numerical studies. We found three graph families: grid 2d graph, hexagonal lattice graph and triangular lattice graph, which exhibited different functional forms for condition number with the increase in system size. We also add at this juncture that the horizon problem is not avoidable and is fundamental, since we are using a classical computer (to compute $\kappa$) to assess quantum advantage. That is, to find if a QLS offers an advantage for a graph family, we must restrict our $N$ or $N'$ values (up to which we compute $\kappa$ subject to classical computing limitations) and then extrapolate our findings, subject now to the horizon problem, to larger system sizes in order to predict a potential onset of quantum advantage. 

\item \textbf{Choice of other graph input parameters: }We saw earlier the effect of edge weight choice on the prospect of realizing an advantage. Other graph input parameters such as choice of seed value may also influence our conclusions. For our analyses presented in the main text, we fix the seed value for reproducibility of our results. To understand the influence of seed values, we pick three representative candidates: Barabási-Albert graph family from Laplacian matrix based system of linear equations, Gaussian random partition graph (no source and sink) family, and planted partition graph (no source and sink) family from incidence matrix based system of linear equations. The results of our analyses are presented in Fig. \ref{fig:error_analysis}. For each graph family, we pick three different seed values and analyze the advantage offered by the respective graph families. We find that for the candidates that we consider, there is no change in their category upon changing the seed values. For example, the Barabási-Albert graph, which is a good graph, remains a good graph. Although the categories do not change for the graph families for the seed values that we considered for them, our observations cannot be generalized, and this aspect requires further analysis in a future study. The effect of tuning other parameters on advantage is also beyond the scope of our current work, and we defer it for a future study. \\
\end{itemize} 

\appsubsection{Scope of our numerical study} 

\begin{itemize}

\item \textbf{Number of quantum linear solvers considered: }Although we consider several efficient QLSs for our survey, we do exclude some, such as the quantum singular value transformation (QSVT)-based QLS. While QSVT is important due to its potential to offer near-optimal scaling in $\kappa$ and $\epsilon$, we do not consider it due to the non-trivial nature of $\mathcal{N}$ dependence entering via block encoding costs in its complexity. Other near-optimally scaling approaches built using block encoding have been excluded for the same reason (for example, the discrete adiabatic method \cite{costa2022optimal} and the Zeno eigenstate filtering method \cite{Lin2020optimalpolynomial}). Furthermore, algorithms such as AQC(exp) and CKS approaches that we consider in our study also scale near-optimally in $\kappa$ and $s$, and thus they may provide a reasonable idea of what to expect in this context from the approaches that we have excluded. Lastly, our analysis includes a `dream QLS' whose scaling is more favourable than any of the quantum linear solvers that we do not consider for our survey. 

\item \textbf{Number of graph families considered: }Since we carry out a numerical study, we are limited in the choice of graph families that we consider. Thus, our conclusions on the percentage of graph families that are best or better also are reliant heavily on the choice of graph families for our survey. Future work needs to be done in surveying broader classes of graphs for QLSs, especially given the dearth of literature on this front. 

\item In our numerical analyses, we substitute $\kappa$ and $s$ obtained from our numerical dataset (which usually do not exceed $\mathcal{N}$ of 10000) into runtime ratios, even though the runtime expressions themselves are for large $\mathcal{N}$ regimes. We adopt this simplification not only because deriving spectra of graph families analytically is very hard, but also predicting quantum advantage using classical resources is at least as challenging. Consequently, our extensive numerical analyses are to only be viewed as providing a reasonable understanding of the potential advantage that a graph family offers, and further future work is required in this direction. 

\item \textbf{The classical benchmark algorithm: }The runtime complexity for the fictitious CLS that we choose is set to be the same as that of the well-known conjugate gradient method purely due to favourable scaling in $\mathcal{N}$, $\kappa$, $s$, and $\epsilon^{-1}$, but the CLS itself should not be thought of as possessing the limitations of the conjugate gradient method such as being restricted only for positive definite matrices. In this context, we acknowledge that other important works exist in literature and by choosing a fictitious CLS as a representative benchmark classical solver to compare QLSs against, we do not do justice to the vast body of literature on classical algorithms for linear systems of equations. We defer a comparison of QLSs against leading classical linear solvers to a future study. Having said this, we pick two realistic specialized classical linear solvers for Laplacians as representative examples to qualitatively comment on our solver as a reasonable candidate. To this end, we pick a graph family from our candidate graph families and carry out a comparison: 
\begin{enumerate} 
\item The authors in Ref.\cite{koutis2014approaching} discuss an efficient method for solving systems of linear equations arising from a graph Laplacian ($(N\times N)$, symmetric, and diagonally dominant matrix), whose associated graph has $N$ vertices and $M$ edges, and which scales as $\mathcal{\tilde{O}}(M \mathrm{log}^2(N)\mathrm{log}(1/\epsilon))$. Our goal is to compare this solver, which we shall term the KMP solver (based on the authors' initials), against our CLS, and we do so by analyzing a situation where the former excels, thereby stacking the odds against CLS. We set $\epsilon^{-1}=\mathrm{log}(N)$. Noting that a graph that has no isolated vertices needs to have at least $N-1$ edges (best case scenario; the worst case would be $\sim N^2$ edges) and also simplifying $\widetilde{O}(f(n)) = O(f(n)  \log^k n)
$ by setting $k=1$ (best case scenario), the expression for complexity is $\mathcal{O}(N \mathrm{log}^2(N)\mathrm{log}(1/\epsilon) \ \mathrm{log}(N \mathrm{log}^2(N)\mathrm{log}(1/\epsilon)))$. For illustration, we now consider the specific case of a graph for which $\kappa$ and $s$ grow as log($N$). We see that the runtime complexity ratio, $R(N)=t_{\mathrm{CLS}}/t_{\mathrm{HHL}}$, simplifies to $N\log(\log(N))/\log^{5.5}(N)$, which is an exponential advantage. Instead, if we use the KMP algorithm in place of CLS, the runtime ratio is $\frac{(N\log(\log(N))\log[N \mathrm{log}^2(N) \log(\log(N))])}{\log^{5}(N)}$. Fig. \ref{fig:ks-vs-cg}(c) presents the curves for ${R}(N)$ versus $N$ for the cases of CLS and the KMP solver. We see that using either our CLS or the KMP solver still yields an exponential advantage with HHL. 

\item We consider another example, where the authors propose algorithms for solving systems of linear equations that arise in the specific case of graph Laplacians of planar graphs \cite{koutis2007linear, spielman2010algorithms}. Their algorithms scale as $\mathcal{O}(M \mathrm{log}(1/\epsilon))$. Assuming $M \sim N$, $1/\epsilon \sim \mathrm{log}(N)$, we obtain $R(N)$ to be $(N \mathrm{log}(\mathrm{log}(N))/\mathrm{log}^{7}(N)$, which gives an exponential advantage, just as in the case of CLS being used in place of these approaches. Furthermore, these algorithms too reach the crossover point at smaller system sizes. 
\end{enumerate} 
Thus, we see that qualitatively, the CLS we consider is a reasonable representative example at least for the purposes of our preliminary study to all of the classical linear solvers. 
For completeness, we comment on a special case of LSP applied to effective resistance determination where the conjugate gradient method works. Since the input vector, $\vec{b}$ has in it one $1$, one $-1$, and the rest of its elements as $0$, the vector is orthogonal to the all-1 vector that lies in the null space of $L$, and thus conjugate gradient suffices in spite of its limitation in being able to handle only positive definite matrices (for example, see Section 3.2 of Ref. \cite{spielman2010algorithms}). In fact, this holds when the condition $\sum_i b_i = 0$ is satisfied. 

\item \textbf{Padding the $A$ matrix: }We note that if the $A$ matrix from a problem/application is not of dimension $2^{n_b} \times 2^{n_b}$, we pad the matrix by adding a scalar along the diagonals, that is, $\begin{pmatrix}
A_{\mathrm{problem}} & 0 \\
0 & \mathcal{D}
\end{pmatrix}$, where $\mathcal{D}$ is an appropriately chosen diagonal matrix with all the diagonal entries being the same (a scalar matrix), such that we then solve by using a QLS for the equation $\begin{pmatrix}
A_{\mathrm{problem}} & 0 \\
0 & \mathcal{D}
\end{pmatrix} \begin{pmatrix}
    \vec{x} \\ 0
\end{pmatrix} = \begin{pmatrix}
    \vec{b} \\ 0
\end{pmatrix}$. We note that the sparsity of the non-padded and the padded matrices remain the same. Furthermore, one needs to ensure that $\kappa$ remains unaffected with a suitable choice for the scalar in the scalar matrix. We assume that there exists an oracle that supplies an appropriate scalar in such situations. Therefore, for our numerical analyses, we only compute $s$ and $\kappa$ of the graph Laplacian or graph incidence matrices we consider, and not those of the padded matrices. 

\item \textbf{Function fitting considerations: }When we seek fitting functions for our data points for $\kappa$ and $s$ of some considered graph, we mainly check the goodness of fit by simple visual inspection as well as tests that check if the fitted functional form gives absurd values for the quantities (for instance, $\kappa<1$ and $s<0$). For a not-so-obvious distribution of data points in a graph family, we fit the upper envelope as a reasonable approximation to the quantity in question. We also note that for the case of $s$, since it is constrained to be a positive integer, we round off the fit sparsity values to their nearest integer value. Furthermore, we only consider for fitting the following candidate functions: exponential, polylogarithmic (at most third order), and polynomial (at most third order). We exclude functions like $1/N^{p}; p \in \mathbb{Z}^{+}$, since they can lead to ill-behaved situations. For example, the complexity of the phase randomisation algorithm contains in it a $\mathrm{log}(\kappa)$ factor, and in the event that $\kappa \sim 1/N$, the complexity expression is not defined anymore. We also add that wherever analytical expressions are available, we fit that quantity (for example, s) to that expected function. For other cases, which constitute most of our candidates, we restrict ourselves to exponential/ polylogarithmic/ polynomial for simplicity. A caveat is that in spite of all these considerations for the fit function, it may or may not exactly reproduce the actual $\kappa$ or $s$ behaviour. For instance, we found that for the directed hypercube graph, we could get equally convincing fits within both log$^2(N'$) and log$^3(N')$. Although we expect the fit to do a reasonable job in general and not change the graph family category, this will remain an important fundamental limitation in such analyses. 
\end{itemize} 

\section{Details of the graph families surveyed: construction and input parameters} \label{app:params} 

In this section, we list all the classes of graph families which we consider for our numerical investigations. We recall that we consider both random and non-random graphs for our study. A random graph is a graph where the existence of an edge is probabilistic. The probability depends on a seed value. For different seed values, we get different probability functions that result in different graphs belonging to the same class. In a non-random graph, the existence of an edge is deterministic. It is unrealistic to consider all the graphs of the same class of random graphs for a numerical investigation. But fixing the seed value makes a random graph unique, which we consider as a representative of the family of random graphs. Other than the seed values, there can be other parameters that influence the structural properties of the graphs. Therefore, for each of the graphs, we present the seed values and parameters that we set for the numerical calculations in this work. We follow NetworkX (version 3.4.2) for the construction of all random and non-random graphs except the hypercube graphs. In the following two sections, we describe the graphs used for our studies on the system of linear equations generated by the Laplacian matrices and the incidence matrices. 

\appsubsection{Graphs whose Laplacian matrices are considered in our study} 

    All the graphs that we utilize in our work are undirected and simple. Our constructions ensure that no self-loop or multiple edges are generated in the graph. 
    \subsubsection{Non-Random graphs}
    \label{Non_Random_graphs_Laplacian}
	\begin{enumerate}
		\item 
		\textbf{Hypercube graphs:} The hypercube graphs are generalizations of cubes in graph theory. The $n$-dimensional hypercube graph $G_2^n$ has vertex set $V(G_2^n) = \{0, 1\}^{\times n}$ which contains $N=2^n$ vertices. The Hamming distance $\mathfrak{H}(x, y)$ between two tuples $x = (x_1, x_2, \cdots, x_n)$ and $y = (y_1, y_2, \cdots, y_n) \in \{0, 1\}^{\times n}$ is the number of indices $i$ for which $x_i \neq y_i$. There is an edge between vertices $x$ and $y$ if the Hamming distance, $\mathfrak{H}(x, y) = 1$. Note that $x_i$ and $y_i$ are either $0$ or $1$ for $i = 1, 2, \cdots, n$. The dimension $n$ is the only parameter for this class of graphs. In this article, we consider all the hypercube graphs with number of vertices $N$ ranging from $4$ to $16384$. The illustrations of the hypercube graph can be found in Figure 2(a) of the main manuscript.  
		\item 
		\textbf{Margulis-Gabber-Galil graphs:} In this article, we consider the vertex sets of the Margulis-Gabber-Galil graphs as  $\mathbb{Z}_n \times \mathbb{Z}_n$, where $\mathbb{Z}_n = \{\overline{0}, \overline{1}, \overline{2}, \cdots, \overline{(n - 1)}\}$, $\overline{k}$ is the set of all integers $k$ modulo $n$, and $n$ is a natural number. Each vertex $(x,y) \in \mathbb{Z}_n \times \mathbb{Z}_n $ is connected to $(x \oplus 2y, y), (x \oplus 2y \oplus 1, y), (x, y \oplus 2x)$ and $(x, y \oplus 2x \oplus 1)$, where $ \oplus $ denotes the addition modulo $n$. This construction would lead to self loops and parallel edges. We modify the graph to exclude self loops and parallel edges from this family. We note that the number of vertices in this family grows as $N=n^2$. We consider all such possible modified Margulis-Gabber-Galil graphs with $25 \leq N \leq 11881$. We depict modified Margulis-Gabber-Galil  graphs for $n = 3$ and $n= 6$ in Figure 2(c) of the main manuscript. 
		
		\item 
		\textbf{Sudoku graphs} \cite{herzberg2007sudoku}: The Sudoku graphs are graphs that consist of $N=n^4$ vertices, which are distributed into the cells of an $n^2 \times n^2$ grid. Every cell contains $n^2$ vertices. Two distinct vertices are adjacent if they belong to the same row, column, or cell. Thus, every vertex of Sudoku graph is connected to $3n^2-2n-1$ vertices. The quantity $d_{max}$ grows as $n^2$, which is also reflected in sparsity of the Laplacian matrix. In our computations, we consider all the Sudoku graphs whose vertices lie in the range of  $ 16\leq N \leq 104976$. We draw Sudoku graphs in Fig. \ref{fig:sm_goodgraph_lap}(a) for $n =2 $ and $n=3$.
		
		\item 
		\textbf{Grid 2d  graphs}: To construct a grid 2d graph, we arrange the $N= (r+1)(c+1)$ vertices into $(r+1)$ rows and $(c+1)$ columns, where $r, c \in \mathbb{N}$. Two distinct vertices are adjacent if they are consecutive in a row or a column. These graphs have two parameters $r$ and $c$. In our work we consider two families of grid 2d graphs, which are as follows: 
		\begin{enumerate}
			\item 
			We fix $r = 101$ and vary $c$ from $5$ to $197$ to generate a family of graphs whose vertices lie in the range $602 \leq N \leq 20196$. Two grid 2d graphs belonging to this family are depicted in Figure 3(a) of the main manuscript.
			\item 
			For the other family of graphs we consider $r + 1 = c+ 1 = 2^n$. With this constraint on the number of vertices, we generate graphs with $16 \leq N \leq 65536$. Grid 2d graphs belonging to this family for $n=2$ and $n=3$ are depicted in Fig. \ref{fig:sm_badgraph_lap}(ak).
		\end{enumerate}
		Since every vertex is connected to its immediate neighbouring vertices in rows or columns, its degree is at most $4$. Therefore, the sparsity of the Laplacian matrix of all the grid 2d graphs is $5$.
		
		\item 
		\textbf{Hexagonal lattice graphs}: 
		The graph is constructed on a two dimensional plane having $r$ rows and $c$ columns of hexagons. The vertices of the hexagons constitute the vertices of the graph and the edges of the graph are the sides of the hexagons (see Figure 3(c) of the main manuscript). For our analysis, we set $c=101$ and generate the family of graphs with $r=1$ to $r=108$. Thus, the number of vertices lie in the range $406 \leq N \leq 22234$.
		
		\item 
		\textbf{Triangular lattice graphs:} 
		The construction involves arranging triangles in $r$ rows and $c$ columns on a two-dimensional plane. The corners of the triangles are the vertices of the graph.  Also, the sides of the triangles are the edges of the graph. The number of rows $r$ is varied from $1$ to $100$ while keeping the number of columns $c$ fixed at $101$. The graphs are generated for $N$ in the range $103 \leq N \leq 5202$. Two instances of triangular lattice graphs are presented in Fig. \ref{fig:sm_badgraph_lap}(a).
		
		\item 
		\textbf{Complete graphs:} A complete graph is a graph, such that an edge between any two distinct vertices exists. The degree of all the vertices is $N-1$ in a complete graph with $N$ vertices. We consider all the complete graphs with number of vertices  $2 \leq N \leq 5004$. Two instances of this graph can be found in Fig. \ref{fig:sm_badgraph_lap}(c).
		
		\item 
		\textbf{Turan graphs:} 
		In general, Turan graphs are constructed from $p$ partitions or disjoint sets of different sizes. For our calculations, we set $p=2$, making it a bipartite graph. The number of vertices in those two partitions are $\big\lfloor\frac{N}{2}\big\rfloor$ and $N - \big\lfloor\frac{N}{2}\big\rfloor$. We analyze graphs whose vertices are in the range $5 \leq N \leq 5003$. Two illustrations of the graph family can be found in Fig. \ref{fig:sm_badgraph_lap}(e). 
		
		\item 
		\textbf{$H_{k,n}$ Harary graphs}~\cite{harary1962maximum}: For this family of graphs, we set the vertex connectivity, which refers to the minimum number of edges to be removed from the graph to make it a disconnected graph, to $3$. The number of vertices $N=n$ is varied from $5$ to $5009$. The construction procedure tries to minimize the number of edges in the graph. Two instances of the graph family are presented in Fig. \ref{fig:sm_badgraph_lap}(u).
		
		\item 
		\textbf{$H_{m,n}$ Harary graphs}~\cite{harary1962maximum} : Given $m$ edges and $N=n$ vertices, this construction yields a graph that ensures maximum vertex connectivity. The number of edges in this family of graphs is set to be one greater than the number of vertices that the graph contains. Here, we generate graphs for $5 \leq N \leq 5009$. We present two graphs from this family in Fig. \ref{fig:sm_badgraph_lap}(w).
		
		\item 
		\textbf{Ladder graphs:} 
		A path graph $P = (V, E)$ of order $n$ is a graph with vertex set $V = \{v_1, v_2, \cdots, v_n\}$ and edge set $E = \{(v_i, v_{i + 1}): ~\text{for}~ i = 1, 2, \cdots, (n - 1)\}$. A ladder graph with $N = 2n$ vertices is constructed with all the vertices and edges of two path graphs $P_1$ and $P_2$ each of order $n$ as well as the edges $\{(u_i, v_i): ~\text{for}~ i = 1, 2, \cdots, n ~\text{and}~ u_i \in V(P_1), v_i \in V(P_2)\}$. We consider all the ladder graphs with $ 10 \leq N \leq 5000$. Two graphs of this family are presented in Fig. \ref{fig:sm_badgraph_lap}(aa).
		
		\item 
		\textbf{Circular ladder graphs}: 
		A cycle graph $C = (V, E)$ of order $n$ is a graph with vertex set $V = \{v_1, v_2, \cdots, v_n\}$ and edge set $E = \{(v_i, v_{i + 1}): ~\text{for}~ i = 1, 2, \cdots, (n - 1)\} \cup \{(v_n, v_1)\}$. A circular ladder graph with $N = 2n$ vertices is constructed with all the vertices and edges of two cycle graphs $C_1$ and $C_2$ each of order $n$ as well as the edges $\{(u_i, v_i): ~\text{for}~ i = 1, 2, \cdots, n ~\text{and}~ u_i \in V(P_1), v_i \in V(P_2)\}$. In this case also, we consider all the circular ladder graphs with $10 \leq N \leq 5000$. Two graphs of this family are given in Fig. \ref{fig:sm_badgraph_lap}(y).
		  
		\item 
		\textbf{Ring of cliques:}
		A clique is a complete subgraph of a graph. A ring of clique is constructed by connecting each clique by a single edge. In this work, we set the number of vertices in each clique to be $3$, and thus all cliques are of equal size. Hence, the number of vertices in the graph grows as $N=3n$, for $n$ cliques. We consider all possible graphs with vertices in the range $15 \leq N \leq 5001$. Fig. \ref{fig:sm_badgraph_lap}(ac). 
		
		\item 
		\textbf{Balanced trees:}
		A tree is a graph without any cyclic subgraph. A rooted tree is a tree with a marked vertex called root. Usually, we label it with $0$. A leaf is a vertex in a tree with degree $1$. We say a tree is a balanced tree if the distance from the root to all the leaf vertices are equal. We call it the height of the balanced tree, which may be considered as a parameter to construct the families of the balanced trees. We denote it by $n$. If the degree of the root vertex is $r$ then the degree of all other non-leaf vertices is $(r + 1)$. We consider the following two families of balanced trees for our investigations: 
		\begin{enumerate}
			\item 
			\textbf{Balanced binary tree:} 
			In a balanced binary tree the degree of the root vertex $r = 2$. If the height of the tree is $n$, then the number of vertices, $N = 2^{n+1}-1$. For our calculations, we consider $n$ from $2$ to $14$. Two graphs from the graph family are presented in Fig. \ref{fig:sm_badgraph_lap}(ae).
			
			\item 
			\textbf{Balanced ternary tree:}
			Similarly, in a balanced ternary tree the degree of root vertex $r = 3$. If the height of the tree is $n$, then the number of vertices, $N = (3^{n+1} - 1)/2$. For our calculations, we consider $n$ from $2$ to $9$. Two instances are presented in Fig. \ref{fig:sm_badgraph_lap}(ag).
		\end{enumerate}
		\item 
		\textbf{Binomial tree:}
		A binomial tree of order $n$ is created by linking the root vertices of two identical copies of binomial trees of order $n-1$. A binomial tree of order $0$ has a single vertex and this vertex acts as a root vertex in successive graphs. Suppose that the root vertex of binomial tree of order $n-1$ is $v_1$. We create its identical copy, whose root vertex is now marked as $v_c$, and join $v_1$ and $v_c$ to construct a binomial tree of order $n$. The root vertex $v_c$ can be viewed as the leftmost child of $v_1$. The number of vertices, $N$, in the tree grows as $N=2^n$. For our calculations, we consider $n$ from $2$ to $14$. Two instances of the graph family are given in Fig. \ref{fig:sm_badgraph_lap}(ai).
	\end{enumerate}
	
	\subsubsection{Random graphs}
    \label{Random_graphs_Laplacian}
    
	For all the below random graphs, we fix the seed value to be $23$, unless specified.
	\begin{enumerate} 
		\item 
		\textbf{Random regular expander graphs:} Our procedure for constructing the random regular expander graph is as follows. We generate a random regular graph with regularity $k$ and $N$ vertices using the construction prescribed in Ref. \cite{friedman2003proof}. Next, we check if the second largest eigenvalue of the adjacency matrix $\lambda_2 \le 2\sqrt{k-1}$, where $k$ is an even number. In our construction, we set $k=6$. Recall that all the Ramanujan graphs ~\cite{murty2020ramanujan}, which is a prominent family of expander graphs, must fulfill this property. We repeat the procedure at most $200$ times for getting a Ramanujan graph. We experience that this process could not produce a Ramanujan graph in $3$ percent of the considered values of $N$, which is a negligible percentage. Varying the value of $N$ between $9$ and $5009$ we construct a family of Ramanujan graphs, used in our investigation. We present two graphs of this family in Fig. \ref{fig:sm_goodgraph_lap}(g). 
		
		\item 
		\textbf{Barabási-Albert graphs}\cite{barabasi1999emergence}: 
         The Barab\'{a}si-Albert graph is a class of random scale-free networks, which grows in size via the following mechanism. A new vertex is adjacent to $k$ old vertices such that the probability of choosing an old vertex depends on the degree of the old vertex. Therefore, the vertices with high degree are preferred while growing the graph, leading to fewer vertices with large vertex degree. In our analysis, we set $k=3$ and consider all graphs in this class with the number of vertices $5 \leq N \leq 5009$. Two of the graphs from this family are presented in Fig. \ref{fig:sm_goodgraph_lap}(e). 
		 
		\item 
        \textbf{Newman-Watts-Strogatz graphs} \cite{newman1999renormalization}: We first arrange $N$ vertices over a ring lattice. Each vertex in the ring is connected to its $g$ nearest neighbors or $(g - 1)$ nearest neighbors if $g$ is odd. Further we add new edges as follows. For each edge $(v_i, v_j)$ in the underlying ``$N$-ring with $g$ nearest neighbors", with probability $p$ add a new edge $(v_i, v_k)$ with randomly-chosen existing node $v_k$. 
        For our analysis, we set $g=3$ and $p=1$ with seed value $19$. We generate all Newman-Watts-Strogatz graphs with the number of vertices in the range $5 \leq N \leq 5005$. Fig. \ref{fig:sm_goodgraph_lap}(c) contains two instances of this family. 
        
		\item 
        \textbf{Random regular graphs}: We generate the random regular graphs using the algorithms discussed in \cite{steger1999generating, kim2003generating}. Here, the product of the number of vertices $N$ and the regularity $k$ must be an even number. For our calculations, we fix $k = 4$ and generate all graphs with the number of vertices $5 \leq N \leq 5013$. Two graphs from the family are given in Fig. \ref{fig:sm_goodgraph_lap}(i).
        
		\item 
        \textbf{G$_{n,p}$ random graphs:} We generate all G$_{n,p}$ random graphs, in other words the Erd\"{o}s-R\'{e}nyi random graphs \cite{erdds1959random}, with the number of vertices $5 \leq N \leq 5009$. The only parameter in this construction is $n$ which sets the number of vertices in the graph, i.e., $N=n$. Here, we set the probability of existence of an edge between two randomly selected vertices to be $p = 0.8$. Two graphs of this family are given in Fig. \ref{fig:sm_badgraph_lap}(ao).
        
		\item 
        \textbf{Gaussian random partition graphs} 
        \cite{brandes2003experiments}: A Gaussian random partition graph is constructed by generating $k$ partitions on the set of vertices $N$. The size of the partitions are drawn from a normal distribution with a fixed mean and variance. For our work, we consider the Gaussian random partition graphs with $k=5$ partitions whose size is determined by the Gaussian distribution with mean $5$ and variance $1$. The probability of establishing the edges between two vertices inside a partition is set to be $0.5$ and in between the partitions is fixed to be $0.4$. For our investigation, we generate all the Gaussian random partition graphs with vertices $5 \leq N \leq 5009$. Two of the graphs from this family are given in Fig. \ref{fig:sm_badgraph_lap}(g). 
        
		\item 
        \textbf{Geographical threshold graphs} 
        \cite{masuda2005geographical, bradonjic2007giant}: To construct a graph from this family, we place all the $N$ vertices on a Cartesian plane, with the distances between the vertices chosen at random. Each vertex is assigned a random weight drawn from an exponential distribution with rate parameter $k = 1$, in our case. Two vertices $p$ and $q$ with weights $x_p$ and $x_q$ are connected if $(x_p+x_q)r^{-2} \ge 10$, where $r$ is the Euclidean distance between two vertices. We generate graphs of this family in the range $5 \leq N \leq 5009$. Two of the graphs from this family are presented in Fig. \ref{fig:sm_badgraph_lap}(i).
		
		\item 
        \textbf{Soft random geometric graphs} 
        \cite{penrose2016connectivity}: To construct the soft random geometric graph, we place $N$ vertices randomly on a Cartesian plane. A probability function determines the connection between two vertices, which takes the Euclidean distance between two vertices as an input. If the Euclidean distance, $r$, between two vertices is at most $1$, then the probability of joining them by an edge is decided by the probability density function $e^{-r}$. Here, we generate graphs with number of vertices $5 \leq N \leq 5009$. Fig. \ref{fig:sm_badgraph_lap}(k) presents two graphs from this family.

        \item 
        \textbf{Thresholded random geometric graphs:} To construct this graph family, we place $N$ vertices randomly on the Cartesian plane, and each vertex is assigned a weight that is randomly chosen from an exponential distribution of rate parameter $k$. We set $k = 1$. Two vertices are connected by an edge if the sum of the their weights is greater than or equal to $2$ and the Euclidean distance between the vertices is less than or equal to $1$. We generate graphs with vertices $5 \leq N \leq 5009$. Fig. \ref{fig:sm_badgraph_lap}(m) displays two instances of this graph family. 
        
		\item 
        \textbf{Planted partition graphs} \cite{condon2001algorithms}: A planted partition graph has $N = l \times n$ vertices distributed into $l$ groups with $n$ vertices in each. Two vertices in the same group are linked with a probability $p$. Two vertices belonging to two different groups are linked with probability $q$. For our calculations, we consider $l = 2$, $p = 0.5$ and $q = 0.4$. To generate a graph family, we vary $n$, such that, the number of vertices $N$ varies from $10$ to $5014$. Fig. \ref{fig:sm_badgraph_lap}(o) presents two instances from this graph family.
        
		\item 
        \textbf{Random geometric graphs} 
        \cite{penrose2003random}: We place $N$ vertices randomly on a Cartesian plane join two vertices by an edge if their Euclidean distance, $r$ is at most $1$. For our calculations, we construct a graph family where the the number of vertices range from $7$ to $5008$. Fig. \ref{fig:sm_badgraph_lap}(q) contains two instances of this graph family.
        
		\item 
        \textbf{Uniform random intersection graphs}
        \cite{fill2000random}: These graphs are generated from random bipartite graphs. First, we construct a bipartite graph with two partite sets containing $N$ and $N - 3$ vertices, respectively. The probability of existence of an edge between the partite sets is set to be $0.6$. Then, the bipartite graph is projected onto the partite sets having $N$ vertices. There is an edge between two vertices in the resultant graph, if those two vertices share a common neighbor in the first bipartite graph. We vary the number of vertices $N$ from $5$ to $2999$, due to computational limitations. Fig. \ref{fig:sm_badgraph_lap}(s) represents two graphs belonging to this family.
		
		\item 
		\textbf{Random lobster graphs:} 
        This graph is a tree which generates a caterpillar graph when all the leaf vertices are removed. We vary the number of vertices present in the backbone of the graph to generate a family. An edge between two vertices belonging to the backbone of the graph is drawn with probability $0.6$, while the probability of edge creation beyond backbone level is set to be $0.5$. We set the seed value to $19$. The number of vertices in the graph varies in the range $20 \leq N \leq 5014$. Fig. \ref{fig:sm_badgraph_lap}(am) contains two instances of random lobster graph family.
	\end{enumerate}

\appsubsection{Graphs whose incidence matrices are considered in our studies} 

    For our work, we have random and non-random directed graphs. These graphs may have a source or a sink vertex by the virtue of their construction. Therefore, we follow two approaches for random directed graphs. In the first approach, we carry out our analysis by retaining the source and sink vertices of the graph, whereas in the second approach, we minimally modify the edge set to not have any sink or source vertices in the graph. In the latter case, the number of vertices remain the same as the original graph. We recall that, for incidence matrices, system size $\mathcal{N}$ refers to the sum of number of vertices and edges, $\mathcal{N} = N' = N+M$. We also do not allow multiple edges and self-loops. 

\subsubsection{Non-random directed graphs}
We consider two non-random directed graph-families in our work. Below, we mention them briefly.
\begin{enumerate} 
    \item 
    \textbf{Paley graphs:} 
    For an odd prime number $N$, the vertex set of the Paley graph is $\mathbb{Z}/N\mathbb{Z}$. A directed edge exists from vertex $u$ to $w$ if $u - w \equiv{x^2 (\text{mod}\ N)}$, where $x^2 \in \mathbb{Z}/N\mathbb{Z}$ and $N$ is the number of vertices. Also, to avoid bi-directed edges, $N \equiv 3  (\text{mod}\ 4)$, so that $-1 \not \equiv x^2(\text{mod}\ N)$. We follow the construction given in \cite{networkx}. Our system size $N'$ varies from $6$ to $9730$. Fig. \ref{fig:sm_good_inc}(o) presents two instances of this graph family.
     
    \item 
    \textbf{Directed hypercube graphs:}
    Similar to the undirected hypercube graphs of dimension $n$, we consider the vertex set $V(G_2^n) = \{0, 1\}^{\times n}$ which contains $N=2^n$ vertices. There is a directed edge from vertex $v_1$ to $v_2$ if the Hamming distance between the vertices is $1$ and the Hamming weight of $v_1 < v_2$. Here, Hamming weight of a vertex refers to the number of $1$s in the corresponding $n$-tuple. This arrangement of directed edges indicates that there is only one source and sink in a directed hypercube graph. The number of edges in a hypercube graph grows as $2^{n-1}n$. Hence, the system size in a directed hypercube graph grows as $N' = 2^n(1+n/2) \sim \mathcal{O}(2^nn)$. The number of vertices in the graph assists the graph-family to grow in terms of $2^n$, where $n=2, 3, \cdots$. Our system size varies $N'$ from $8$ to $131072$. Two instances of this graph family can be found in  Figure 4(a) of the main manuscript. 

\end{enumerate}

\subsubsection{Random directed graphs}
We consider nine random directed graphs in this article. For the sake of reproducibility, we fix seed value of every random directed graph family to be $19$. Further, we followed the constructions from Ref.~\cite{networkx} with slight modifications to avoid bi-directed edges and self-loops in the graphs. 

\begin{algorithm} 
    \caption{To modify source/sink vertices to non-source/non-sink vertices}\label {alg:source_sink}
    
    \begin{algorithmic}[1]
    \Statex \textbf{Input:} Directed graph, $G$, with source and/or sink vertices.
    \Statex \textbf{Output:} Directed graph without source and sink vertices. 
    \For{every vertex $v \in V(G)$}
        \If{in-degree$(v)$ + out-degree$(v) = 1$}
            \If{in-degree$(v) = 1$}
                \State Find the vertex $u$ for which $\overrightarrow{(u, v)} \in E(G)$.
                \State Add edge $\overrightarrow{(v, y)}$ and $y \notin \{v, u\}$ chosen at random.

            \Else 
                \State Find the vertex $u$ for which $\overrightarrow{(v, u)} \in E(G)$.
                \State Add edge $\overrightarrow{(y, v)}$ and $y \notin \{v, u\}$ chosen at random.
            \EndIf
        \EndIf
    \EndFor

    \State Create two lists containing the source and sink vertices: $source, \ sink$.

    \If {$source$ and $sink$ are empty}
        \State print "There are no source and sink vertices in the graph".
        
    \ElsIf {$source$ is empty}
        \State Create $list1 = \{u | u \notin sink\}$.
        
        \State Add edges from vertices in $sink$ to vertices in $list1$ chosen at random. If there are edges toward $sink$ from vertices in $list1$ whose out-degree$~>1$, reverse them. 
    \ElsIf {$sink$ is empty}
            \State Create $list2 = \{v | v \notin source\}$.
            \State Add edges towards vertices in $source$ from vertices in $list2$ chosen at random. If there are edges from $source$ to vertices in $list2$ whose in-degree$~>1$, reverse them.
    \ElsIf{$\text{length} (source)  \le \text{length}(sink)$}
            \State Add an edge from sink vertex to corresponding source vertex without creating a bi-directed edge.
            \State For the remaining sink vertices, choose vertices from $source$ at random and build an edge from sink vertex to a randomly chosen source vertex.
    \Else 
                \State Repeat step 22.
                \State For the remaining source vertices, add edges from randomly chosen sink vertices to these source vertices.         
    \EndIf
  \end{algorithmic}
\end{algorithm}

\begin{enumerate}

    \item 
    \textbf{GN graph:} A GN graph grows the network by linking a newly added vertex to one of the existing vertices based on their degree. We define a linear function $f(d)=d$, where $d$ is the degree of the vertex. This function determines the probability of connecting the new vertex to one of the existing vertices. In our work, the value of system size $N'$ goes from $9$ to $9997$ for GN graphs with source and sink vertices. Also, the system size varies from $11$ to $9998$ for GN graphs without source and sink vertices. Figures 4(e) and 4(g) of the main manuscript present two instances each for without and with source and sink vertices, respectively.
    
    \item 
    \textbf{GNC graph} 
    \cite{krapivsky2005network}: A GNC graph is grown by adding a directed edge from a new vertex to an existing vertex chosen randomly. Further, directed edges go from the new vertex to the successors of the existing vertex to which it is connected. A vertex, $v_u$, is said to be the successor to a vertex $v_w$, if there is a directed edge going from $v_w$ to $v_u$. For GNC graphs with source and sink vertices, we consider $N'$ in the range of $11$ to $8498$. Also, for the family of GNC graphs without any source and sink vertices, the system size $N'$ varies from $11$ to $6759$. Two GNC graphs without and with source and sink vertices are depicted in Figs. \ref{fig:sm_bad_inc}(a) and \ref{fig:sm_bad_inc}(c) respectively.
    
    \item 
    \textbf{GNR graph}~\cite{krapivsky2001organization}: 
    The construction of the GNR graphs is similar to the GN graphs. Here, when a new vertex is added, a directed edge appears from the new vertex to a randomly chosen existing vertex, called the target vertex. There is a finite chance for this new directed edge to get redirected to first successor of target vertex. For our work, we set the probability of reconnection to $0.5$. For the GNR graphs with source and sink vertices, the system size grows from $9$ to $9997$. Also, for GNR graphs without source and sink vertices, $N'$ varies from $12$ to $10000$. Figs. \ref{fig:sm_bad_inc}(e) and \ref{fig:sm_bad_inc}(g) respectively present the figures for graphs without and with source and sink vertices belonging to this family.
    
    \item 
    \textbf{Gaussian random partition graphs:} The underlying principles for the directed Gaussian random partition graph is similar to its undirected version as described in sub-section \ref{Random_graphs_Laplacian}. We follow the construction from Ref. \cite{networkx} to build directed Gaussian random partition graphs. The values of the parameters to build this graph family are same as the values set in the undirected version. We vary $N'$ from $12$ to $10897$ for both the cases. Figure 4(c) contains two instances of this family for without source and sink vertices and Figure \ref{fig:sm_good_inc}(a) has illustrations of two graphs for with source and sink vertices of this family. 
    
    \item 
    \textbf{Planted partition graphs:} 
    Our construction for the planted partition graph is also similar to their undirected mentioned in the sub-section \ref{Random_graphs_Laplacian}. Here, we fix the number of vertices present in each partition to be $5$ and vary the number of partitions to grow the graph. We set the probability of linking vertices inside a partition to $0.8$ and between the partitions to $0.4$. For graphs with sink and source vertices, we vary $N'$ from $228$ to $24405$. For graphs without sink and source vertices, we vary $N'$ from $228$ to $20208$. Two graphs each for without and with source and sink vertices are presented in Figs. \ref{fig:sm_good_inc}(c) and \ref{fig:sm_good_inc}(e) respectively.
    
    \item  
    \textbf{Navigable small world graphs} 
    \cite{kleinberg2000small}: The graph is defined on a two-dimensional $n \times n$ grid with $N = n^2$ vertices. The distance between two vertices $(a, b)$ and $(c, d)$ is defined as $\mathcal{R} = |a-c| +|b-d|$. Based on the distance between vertices, we define short-range and long-range connections. A vertex $(a, b)$ is connected to all vertices which are at distance $1$ to it. The number of long-range connections of a vertex is set to $1$. The probability of connecting the vertex to a target vertex at a distance $\mathcal{R}$ is proportional to $1/\mathcal{R}^2$. For graphs without source and sink vertices, $N'$ is varied from $84$ to $20537$, whereas for graphs with source and sink vertices, $N'$ goes from $83$ to $14827$. Figs. \ref{fig:sm_good_inc}(g) and \ref{fig:sm_good_inc}(i) show two graphs each for without and with source and sink vertices corresponding to this family. 

    \item 
    \textbf{G$_{n, p}$ random graphs:} 
    Our construction for the G$_{n, p}$ random graph is also similar to their counterparts mentioned in the sub-section \ref{Random_graphs_Laplacian} with same values for the parameters. For graphs without sink and source vertices, $N'$ is varied from $44$ to $42534$, while for graphs with sink and source vertices, $N'$ ranges from $44$ to $21721$. Two graphs each for without and with source and sink vertices can be found in Figs. \ref{fig:sm_good_inc}(k) and \ref{fig:sm_good_inc}(m) respectively.
    
    \item 
    \textbf{Random uniform $k$-out graphs:} 
    In this graph, $k$ directed edges go from every vertex of the graph to $k$ vertices chosen randomly without replacement. For our calculations, we set $k = 2$. For graphs with source and sink, $N'$ takes the values from $12$ to $10943$ and graphs without source and sink, $N'$ goes from $13$ to $10817$. Figs. \ref{fig:sm_good_inc}(q) and \ref{fig:sm_good_inc}(s) present two graphs each for without and with source and sink vertices respectively.
    
    \item 
    \textbf{Scale-free graphs} 
    \cite{bollobas2003directed}: In this graph family, the in-degree and out-degree distributions of vertices follow power laws. We start with a cycle graph with 3 edges. A directed edge from a new vertex to an existing vertex, chosen according to its in-degree distribution, is added with probability $0.41$. An edge between two existing vertices are added with probability $0.54$ and the vertices are chosen according to their in-degree and out-degree distributions. The probability of adding a directed edge from an existing vertex, chosen according to its out-degree distribution, to a new vertex is $0.05$. The probability of choosing an existing vertex would depend on the in-degree (or out-degree) in addition to a constant bias. The constant bias for in-degree distribution is set to be $\delta_{\mathrm{in}} = 0.2$ and for out-degree distribution is set to be $\delta_{\mathrm{out}}=0$. Here, we vary the system size $\mathcal{N}$ from $12$ to $10997$ for graphs without sink and source vertices. For graphs with sink and source vertices, $\mathcal{N}$ is varied from $11$ to $9995$. Figs. \ref{fig:sm_good_inc}(u) and \ref{fig:sm_good_inc}(w) present two instances each for graph without and with source and sink vertices respectively.
\end{enumerate} 

\section{Beyond graph theoretic considerations: Other requirements and challenges} \label{sec:req}

\appsubsection{Algorithmic challenges} 

\begin{enumerate} 
\item \textbf{State preparation for $\ket{b}$:} This scales exponentially in system size in a general setting, and is one of those important open problems not only for QLSs, but also whenever one employs quantum phase estimation (QPE) as a primitive, and is a major impediment to quantum advantage even if other factors align in our favour. Depending on the specific application of interest, one may be able to identify problem-specific symmetries to reduce the scaling. 

\item \textbf{Extracting a feature of the solution vector:} HHL, for example, loses its exponential advantage if we seek to extract the solution vector, $\vec{x}$. Therefore, in practice, one needs to extract a feature of the vector and not the vector itself. For instance, if one considers the traffic flow problem in Fig. \ref{fig:gblsp}, and asks what a particular $x_i$ is, that is, the number of vehicles on that particular lane, then one could append a SWAP test module at the end of the HHL circuitry in order to obtain an overlap between the solution vector and a computational basis vector to obtain $x_i$. This is reasonable for a traffic flow congestion detection problem, as often, one would want to only understand the traffic congestion on specific lanes and not all the exponentially many lanes in a network. However, the ability to extract a feature is application-dependent, and it is unclear whether or not an efficient protocol exists for feature extraction for any application. 

\item NLSPs often occur in the context of optimization problems, where one envisages an \textbf{iterative procedure} that involves, for example, executing several HHLs in tandem. Such an execution is hardly straightforward, especially given the post-selection step at the end of each iteration. 

\item \textbf{Constraints imposed by specific applications: }If we consider the traffic flow congestion detection problem, the elements of the solution vector are constrained to be positive integers. Furthermore, the application can also naturally admit overdetermined or underdetermined systems, and interpreting the solutions themselves is no longer straightforward. In such cases, modifications to the quantum algorithm and/or the preceding classical or quantum pre-processing steps and their net effect on advantage offered is unclear. 
\end{enumerate}

\appsubsection{Quantum hardware challenges} 

\begin{enumerate} 
\item \textbf{NISQ era requirements: } Although HHL and other QLSs discussed in this work are outside of the scope of NISQ computers and would ideally require a fault-tolerant implementation, it is important to test these solvers on smaller system sizes for various applications even in the NISQ era, continuously pushing the boundaries as quantum hardware advances. A typical theme in carrying out such computations would be to rely upon classical resources (which do not necessarily scale well) to reduce quantum circuit depth and number of qubits, we term these gamut of techniques as resource reduction. In Section VII.1 of the main manuscript, we illustrated these ideas through toy graph instances. Our results show that with current quantum hardware capabilities, we can only probe $(4 \times 4)$ size Laplacian matrices (unless we find special graphs and leverage techniques like all-qubit fixing, as explained in Section VII.2 of the main manuscript), indicating the actual gap between what is possible and what is our goal for the long-term. For perspective, it is routine to solve systems of linear equations involving matrices that are about $(10^9 \times 10^9)$ in size, for example, in quantum chemical computations on traditional computers \cite{Timo}. 

\item \textbf{Fault-tolerant quantum computing era requirements: }Of most relevance in the late NISQ/ early fault-tolerant quantum computing eras would be graph families that provide exponential speed-up and polynomial speed-up (with a larger-than-quadratic degree polynomial), in view of overheads associated with quantum error corrected implementations eating up a sub-linear or a lower degree polynomial speed-up offered by good graph families and some better graph families (for example, see Ref. \cite{prx2021}, where the authors show that quadratic speedups will not enable quantum advantage on early generation fault-tolerant computers that use surface codes). It is hard to predict the extent to which such issues will plague good and better graph families in the fault-tolerant quantum computing era, where one can use codes with lower thresholds. However, in all of these cases, one aims at reducing either the runtime or the number of physical qubits or in an ideal case, both of them, in fault-tolerant QLS implementations. This would in turn depend on achieving lower physical error rates and faster gate times in the quantum hardware that is used, as well as choice of suitable quantum error correction codes. Each of these is formidable, and until these obstacles are cleared, truly realizing an advantage even with better and best graphs is hard. 
\end{enumerate} 

\section{Some theorems relevant to all-qubit fixing}\label{sec:aqf} 

\begin{theorem} \label{th:lapmat} 
Given a Laplacian matrix, $L$, and a vector, $\vec{b}$, such that it has for its entries exactly one $1$ and one $-1$ with the rest of its entries being $0$, any problem to be solved using the HHL algorithm reduces to a one-qubit calculation via all-qubit fixing as long as the input $\vec{b}$ is an eigenvector of $L$. 
\end{theorem}

\begin{proof}
If $\vec{b}$ is an eigenvector of $L$, a QPE calculation would yield only the eigenvalue corresponding to $\vec{b}$, that is, only one bitstring. Thus, all the qubits get fixed in the subsequent HHL calculation. Therefore, every clock register qubit of the QPE module of HHL is set to either $\ket{0}$ or $\ket{1}$, and hence the state register has correspondingly either identity ($I_{2^{n_b}\times2^{n_b}}$) or a unitary ($U^{2^{n_r}}=e^{iAt 2^{n_r}}$) respectively. Each gate cancels out with its counterpart in the following QPE$^\dag$ module. Thus, we get back $\ket{b}$ at the end of the circuit, that is, $|\langle x|b \rangle|=1$. The controlled-rotation module that occurs between QPE and QPE$^\dag$ only now has $RY(\theta_i)$ gates on the HHL ancillary qubit register. In fact, since we recover the eigenvalue corresponding to the input eigenstate, only one $\theta_i$ is non-zero. We measure $Z$ on this trivial one-qubit computation to obtain $P(1) = || \ket{x}_{un}||^2$. 
\end{proof}

\begin{theorem} \label{th:condonlapmat}
The conditions on $L$ so that $\vec{b}=\vec{\delta_i}-\vec{\delta_j}$ is an eigenvector of the matrix are given by $l_{pi}=l_{pj}, \forall p \not \in \{i, j\}, \ \mathrm{and} \ l_{ii}=l_{jj}$. 
\end{theorem}

\begin{proof}
Let $L\vec{b}=\beta\vec{b}$. Since, $\vec{b}=\vec{\delta_i}-\vec{\delta_j}$, we have $b_i=1$ and $b_j=-1$. The action of $L$, whose matrix elements are denoted by $l_{pq}$, on the vector $\vec{b}$ is given by 
\begin{eqnarray}
L \begin{bmatrix} 0 \\  0 \\ \cdots \\ 0 \\ b_i \\ 0 \\ \cdots \\  0 \\ b_j \\ 0 \\ \cdots \\ 0 \end{bmatrix} = \begin{bmatrix} l_{1i}-l_{1j} \\  l_{2i}-l_{2j} \\ \cdots \\ l_{i-1\ i}-l_{i-1\ j} \\ l_{ii}-l_{ij} \\ l_{i+1\ i}-l_{i+1\ j} \\ \cdots \\  l_{j-1\ i}-l_{j-1\ j} \\ l_{ji}-l_{jj} \\ l_{j+1\ i}-l_{j+1\ j} \\ \cdots \\ l_{Ni}-l_{Nj} \end{bmatrix}. 
\end{eqnarray}
Noting that the right hand side should equal $\beta\vec{b}$, we arrive at the conditions 
\begin{equation}
\boxed{l_{pi}=l_{pj}, \forall p \neq i, j, \ \mathrm{and} \ l_{ii}=l_{jj}. } 
\end{equation}

\end{proof} 

Theorem \ref{th:condonlapmat} indicates that the degree of vertices $i$ and $j$ must be equal. Also, for any vertex $p$ in the graph, either both $i$ and $j$ are adjacent to $p$ or not adjacent to $p$. Therefore, both $i$ and $j$ have same sets of neighbors. Also, if there is no edge between $i$ and $j$, the distance between them is $2$. Given any graph, $G$, we can construct a new graph, $H$, by adding two new vertices and a set of edges, such that $H$ will satisfy the conditions of Theorem \ref{th:condonlapmat}. 

\begin{theorem}\label{th:new_graph}
Let $G = (V(G), E(G))$ be any graph with the set of vertices $V(G) = \{v_i:i=1,\ 2,\ 3, \cdots, N\}$ and set of edges $E(G)$. Let $H = (V(H), E(H))$ be a graph such that $V(H) = V(G) \cup \{v_{(N + 1)}, v_{(N + 2)}\}$ and set of edges $E(H) = E(G) \cup \{(v_{(N + 1)}, u_1), (v_{(N + 1)}, u_2), \cdots, (v_{(N + 1)}, u_k)\} \cup \{(v_{(N + 2)}, u_1), (v_{(N + 2)}, u_2), \cdots, (v_{(N + 2)}, u_k)\}$,
where $u_1, u_2, \dots u_k \in V(G)$ in some order. Then, the Laplacian matrix $L(H)$ has eigenvalue $k$ with an eigenvector $\vec{b} = (\underbrace{0,0,\ldots,0}_{N \text{ times}}, 1, -1)^{\mathsf{T}}$.
\end{theorem}

\begin{proof}
    We know that $L(H) = D(H) - Q(H)$, where $D(H)$ and $Q(H)$ are the degree matrix and the adjacency matrix of the graph $H$, respectively. Denote $L(H) = (l_{ij})_{(N + 2) \times (N + 2)}$ where
	$$l_{ij} = \begin{cases} d_i^{(H)} &\text{if}~ i = j; \\ -1 &\text{if}~ (i, j) \in E(H); \\ 0 &\text{otherwise};
	\end{cases}$$
	where $d_i^{(H)}$ is the degree of vertex $v_i$ in the graph $H$. Now multiply $L(H)$ with the vector $\vec{b}$. The $i$-th row of $L(H)\times \vec{b}$ is
	\begin{equation}
		(L(H) \vec{b})_{i \bullet} = \sum_{j = 1}^{N + 2} l_{ij} b_j,
	\end{equation}
	where $i = 1, 2, \cdots, (N + 2)$. Let $b_1 = b_2 = \cdots = b_N = 0$. Now, consider the following cases: 
	\begin{enumerate}[label = {Case \arabic*:}]
		\item 
		Let $v_i$ be a vertex other than $v_1, v_2, \cdots, v_k, v_{(N + 1)}$, and $v_{(N + 2)}$. Now, $l_{i (N + 1)} = l_{i (N + 2)} = 0$. Also, $b_1 = b_2 = \cdots = b_N = 0$. Therefore, $(L(H) \vec{b})_{i \bullet} = 0$.
		\item 
		Let $u_q$ be any one of $u_1, u_2, \cdots, u_k$. Note that the rows of $L(H)$ corresponding to $u_q$ always have two non-zero entries $l_{q (N + 1)} = l_{q (N + 2)} = -1$. As $b_1 = b_2 = \cdots = b_N = 0$, we have $\sum_{j = 1}^{N} l_{q j} b_j = 0$. As $b_{(N + 1)}$ and $b_{(N + 2)}$ have opposite signs, we have $l_{q (N + 1)} b_{(N + 1)} + l_{q (N + 2)} b_{(N + 2)} = 0$. 
		\item 
		Let $i = (N + 1)$ or $i = (N + 2)$. Note that the degrees of $(N + 1)$ and $(N + 2)$ are $d_{(N + 1)}^{(H)} = d_{(N + 2)}^{(H)} = k$. Therefore $l_{(N + 1)(N + 1)} = l_{(N + 2)(N + 2)} = k$. Now, $(L(H) b)_{(N + 1), \bullet} = k$ and $(L(H) b)_{(N + 2), \bullet} = -k$. 
	\end{enumerate}
	Combining all the cases, we observe that 
	\begin{equation}
		L(H)\vec{b} = (\underbrace{0,0,\ldots,0}_{n \text{ times}}, k, -k)^{\mathsf{T}} = k \vec{b}. 
	\end{equation}
	Therefore, $L(H)$ has an eigenvalue $k$ with eigenvector $\vec{b}$.
\end{proof}

\begin{theorem} \label{th:aqf}
A complete graph accommodates a one-qubit HHL via all-qubit fixing. 
\end{theorem}

\begin{proof}
The degree matrix, $D$, is a diagonal matrix with all $N$ entries taking the value $(N-1)$, and thus the condition $l_{ii}=l_{jj}$ is satisfied. Furthermore, since a complete graph is all-to-all connected by construction, $l_{ij}=-1, \forall i\neq j$. Thus, the second condition is satisfied too. Therefore, the Laplacian of a complete graph satisfies Theorem \ref{th:condonlapmat}, and hence Theorem \ref{th:lapmat}. 

An important implication for the complete graph is that one need not \textit{find} an eigenvector; the $1$ and $-1$ entries of $\vec{b}$ can be placed anywhere and such a resulting $\vec{b}$ is still an eigenvector. 
\end{proof}


\begin{figure*}[t]
\begin{tabular}{c}
\includegraphics[width=14cm]{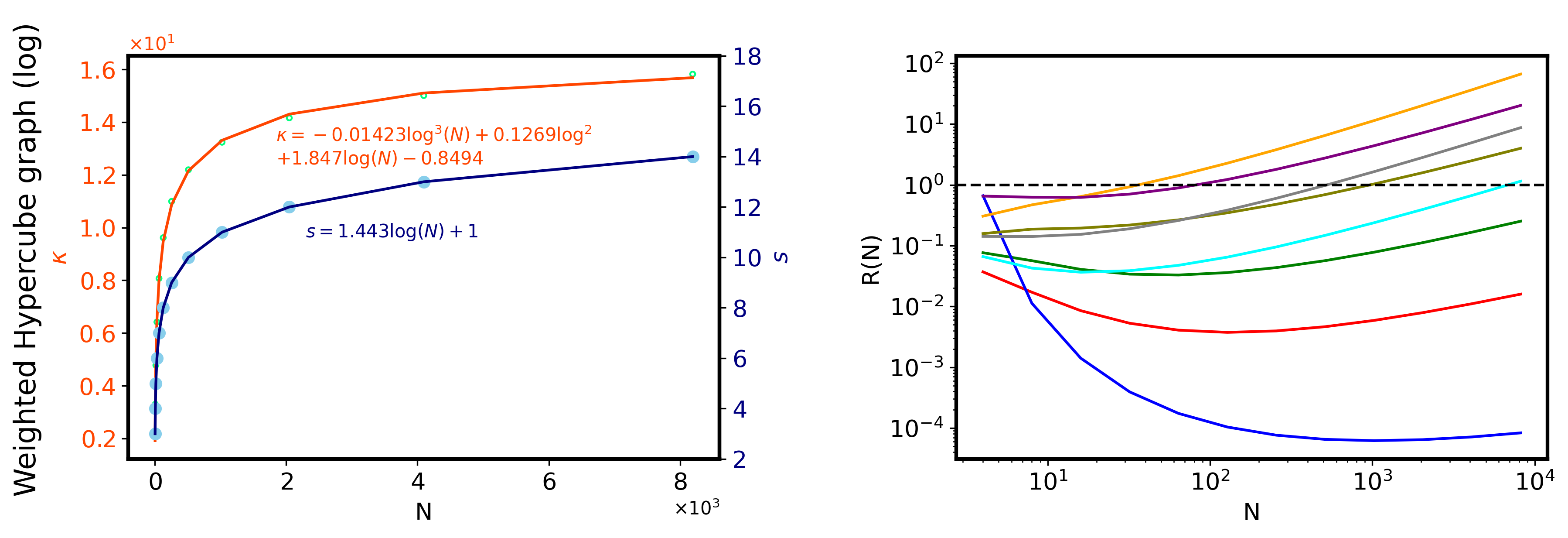} \\ 
\vspace{0.5em}
\makebox[0.45\textwidth]{(a)}
\makebox[0.36\textwidth]{(b)} \\
    
\includegraphics[width=14cm]{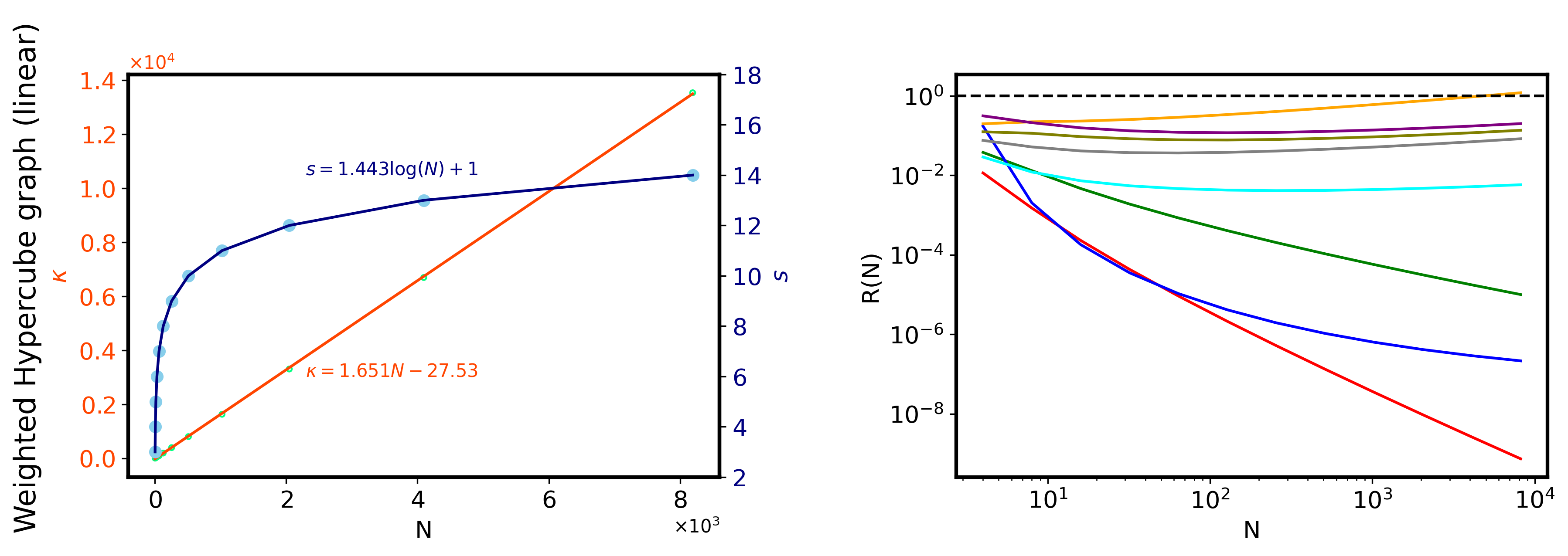} \\
\vspace{0.5em}
\makebox[0.45\textwidth]{(c)}
\makebox[0.36\textwidth]{(d)} \\

\includegraphics[width=14cm]{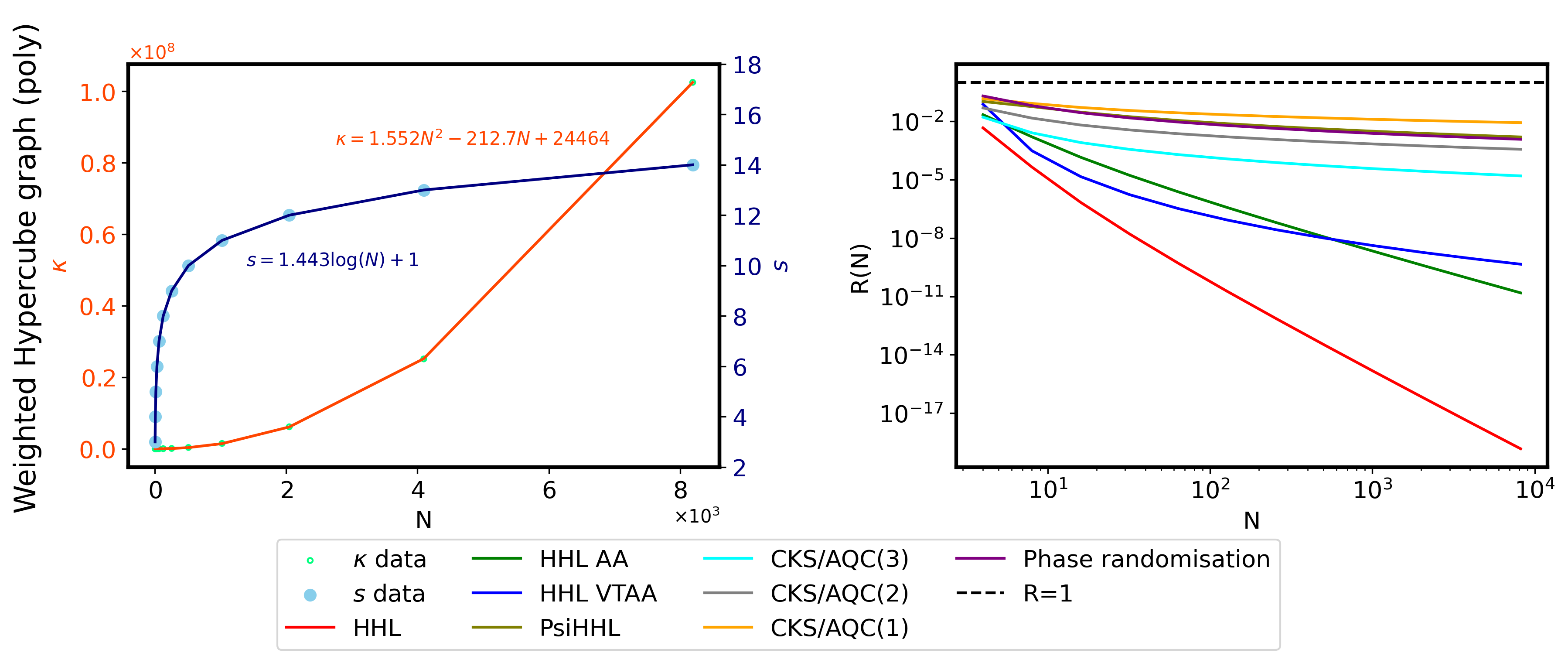} \\
\vspace{0.5em}
\makebox[0.45\textwidth]{(e)}
\makebox[0.36\textwidth]{(f)} \\

\end{tabular} 
\caption{Figures providing the edge weight analysis of the hypercube graph family for three different edge weight functions: logarithmic, given in sub-figures (a) and (b), where the edge weight function $w_{ij} = \mathrm{log}(j+5)$, linear, given in sub-figures (c) and (d) with $w_{ij} = j+1$, and polynomial, shown in sub-figures (e) and (f) for the polynomial function defined by $w_{ij} = j^2+1$. }\label{fig:weighted_HCG}
\end{figure*}

\begin{figure*}[t]
\begin{tabular}{c}
\includegraphics[width=14cm]{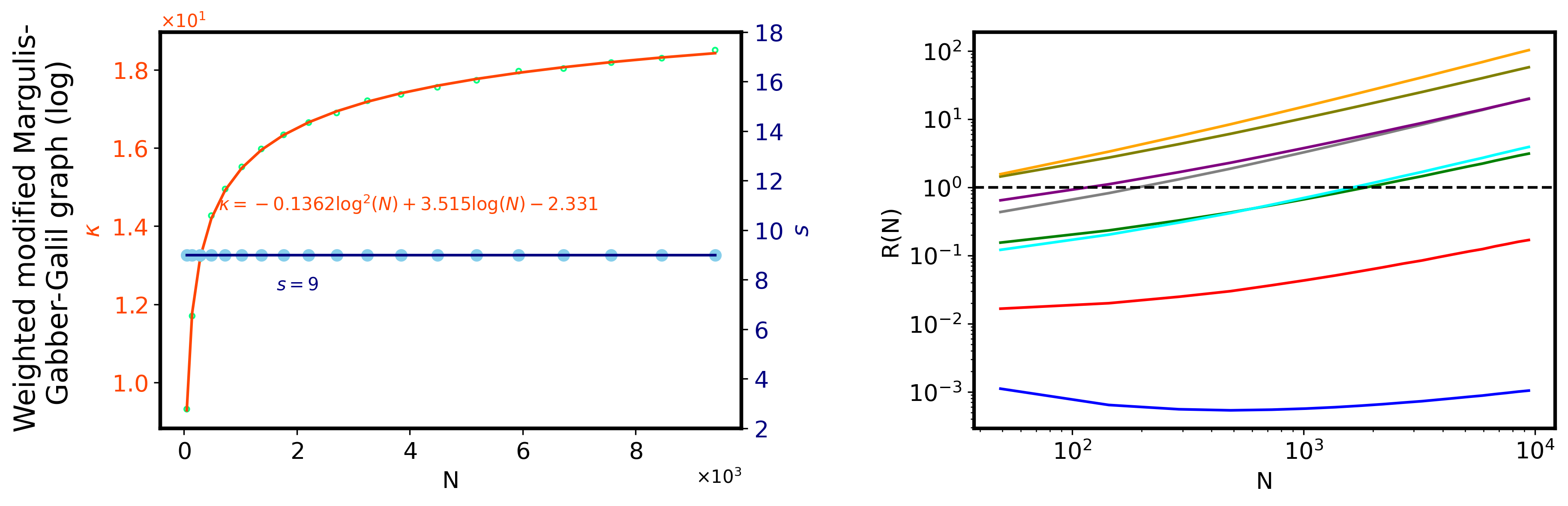} \\ 
\vspace{0.5em}
\makebox[0.45\textwidth]{(a)}
\makebox[0.36\textwidth]{(b)} \\
    
\includegraphics[width=14cm]{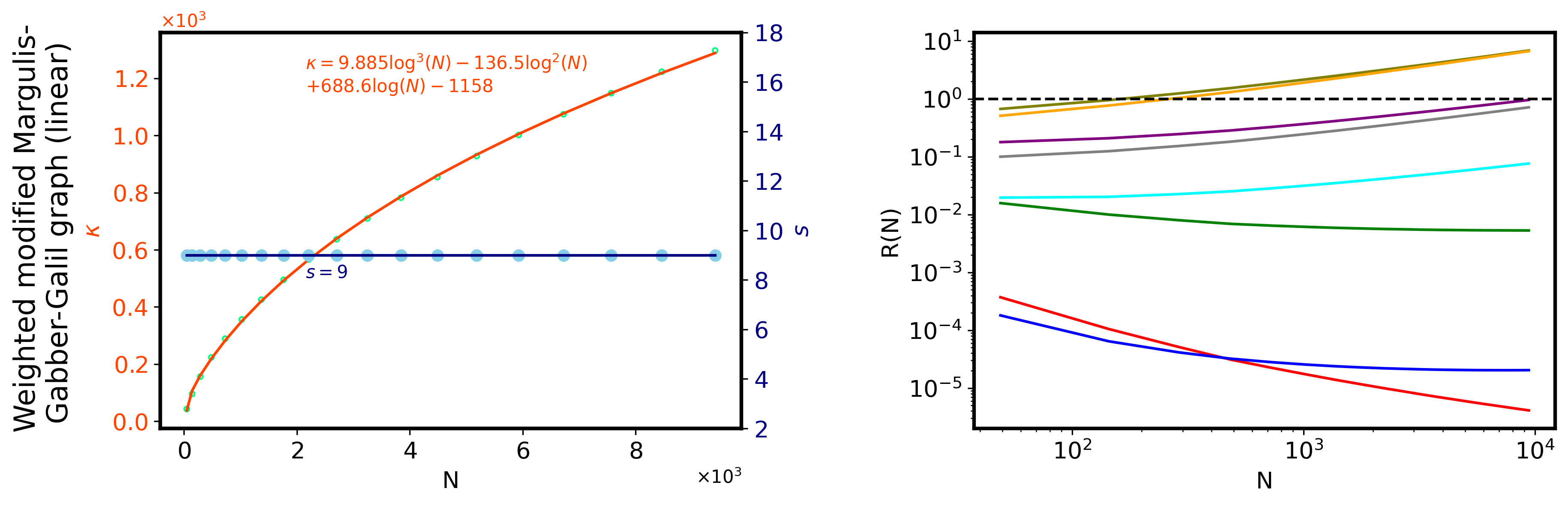} \\
\vspace{0.5em}
\makebox[0.45\textwidth]{(c)}
\makebox[0.36\textwidth]{(d)} \\

\includegraphics[width=14cm]{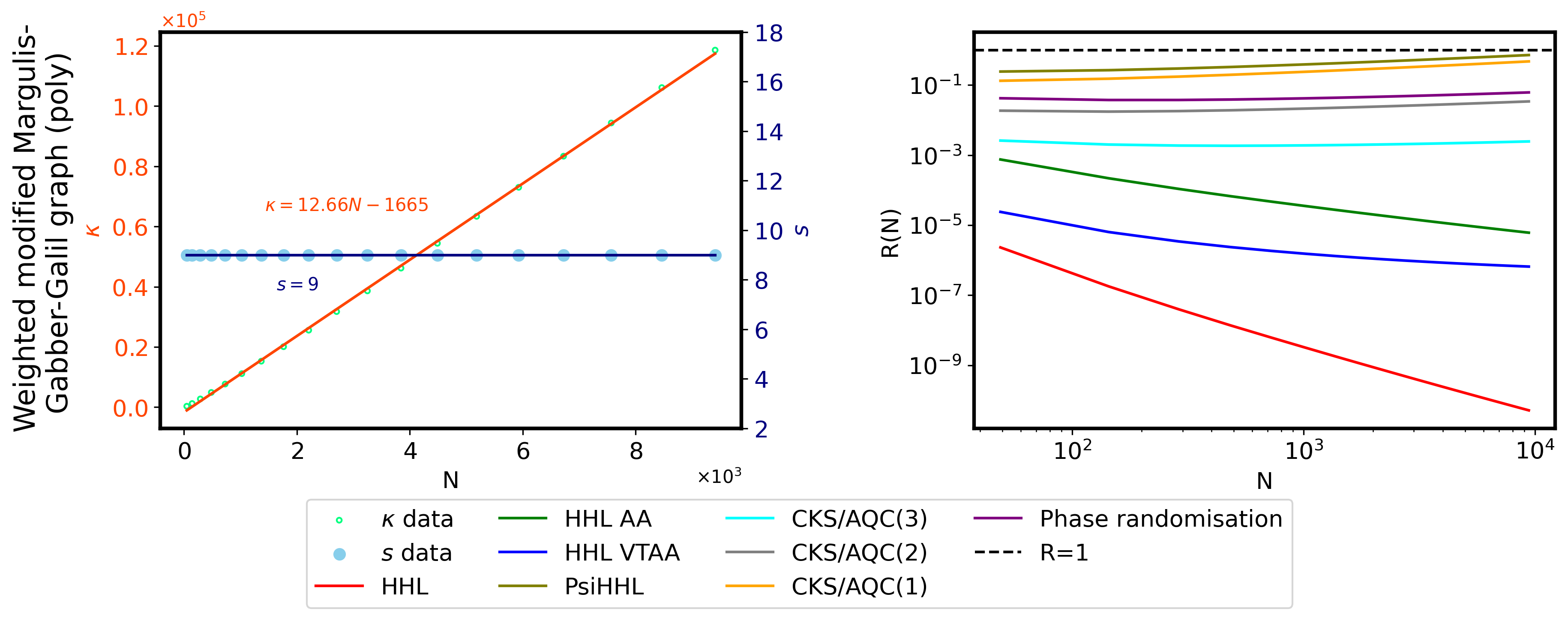} \\
\vspace{0.5em}
\makebox[0.45\textwidth]{(e)}
\makebox[0.36\textwidth]{(f)} \\

\end{tabular} 
\caption{Figures presenting the analysis for modified Margulis-Gabber-Galil graph family, for which each vertex is  denoted by a tuple $(p, q)$. Plots (a) and (b) contain the analysis for graphs with logarithmic edge weights defined as $w_{(p,q)(r,s)} = \mathrm{log}(nr + s+1) + 1$, while sub-figures (c) and (d) presents our results for linear edge weights, $w_{(p,q)(r,s)} = nr + s +1$, and (e) and (f) show the results for edge weight function $w_{(p,q)(r,s)} =(nr+s+1)^2$. Here, the number of vertices in the graph is $N=n^2$.}  \label{fig:weighted_mgg}
\end{figure*} 

\begin{figure*}[h!]
\begin{tabular}{c}
\includegraphics[width=17cm]{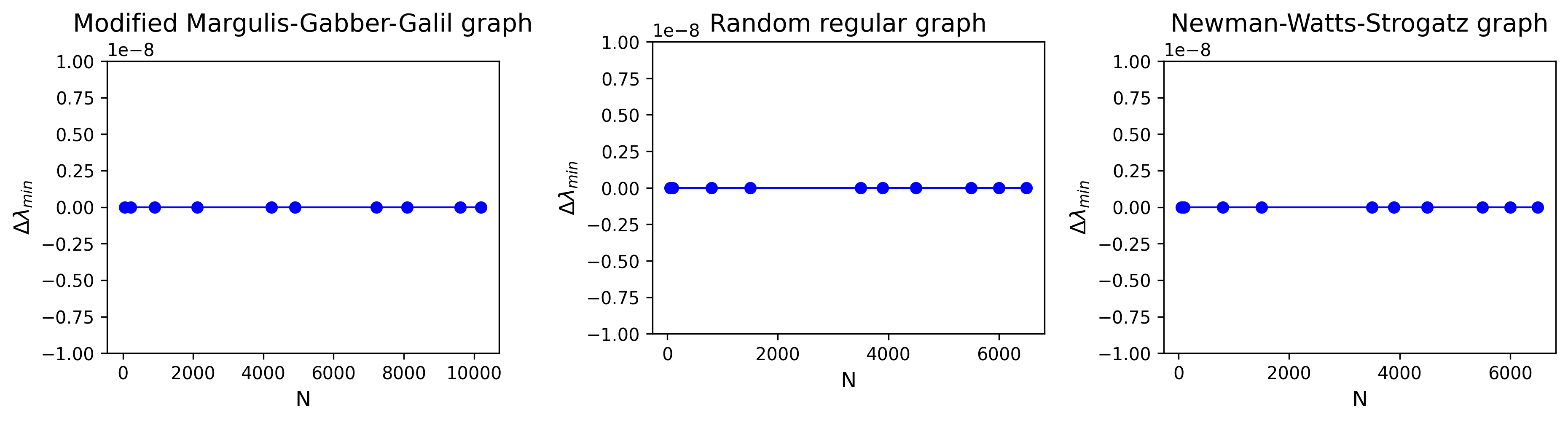 } \\
    \vspace{0.5em}
    \makebox[0.31\textwidth]{(a)}
    \makebox[0.31\textwidth]{(b)}
    \makebox[0.31\textwidth]{(c)}\\ 

\includegraphics[width=17cm]{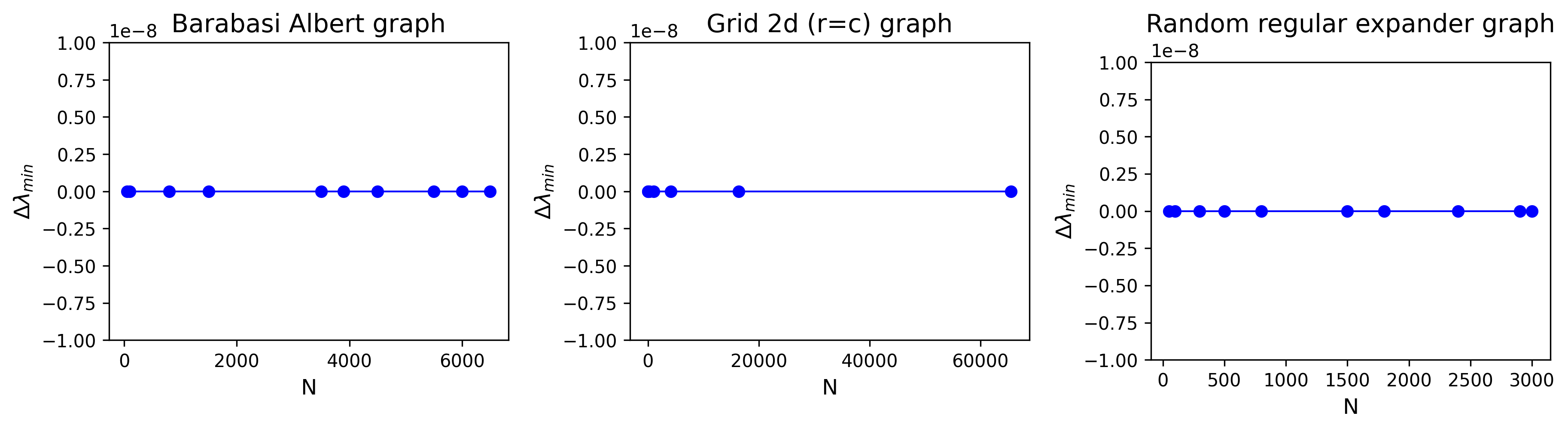} \\
    \vspace{0.5em}
    \makebox[0.31\textwidth]{(d)}
    \makebox[0.31\textwidth]{(e)}
    \makebox[0.31\textwidth]{(f)}\\ 

\includegraphics[width=17cm]{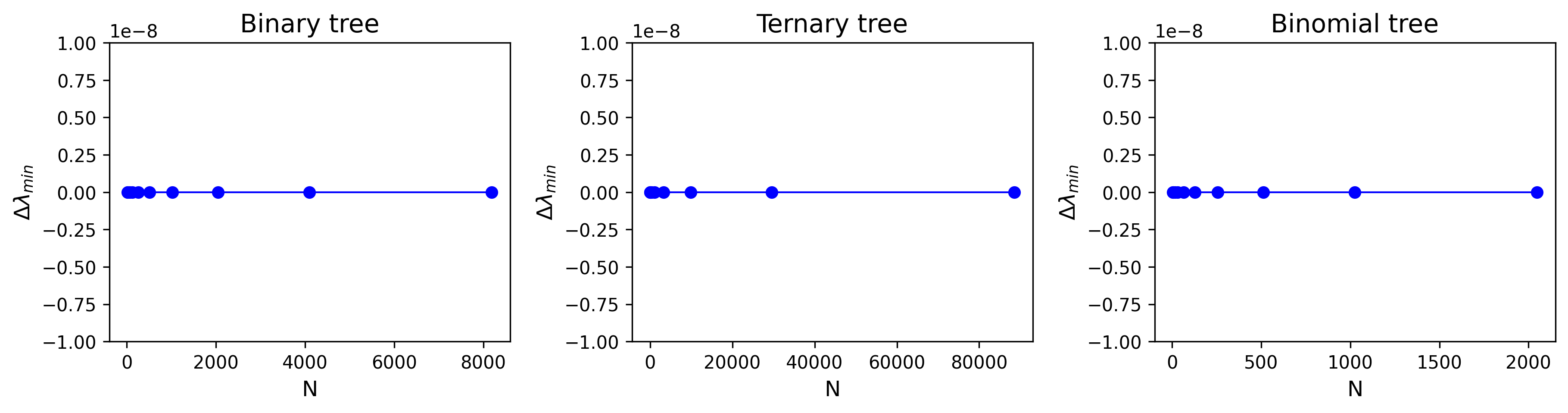} \\
    \vspace{0.5em}
    \makebox[0.31\textwidth]{(g)}
    \makebox[0.31\textwidth]{(h)}
    \makebox[0.31\textwidth]{(i)}\\ 
\includegraphics[width=17cm]{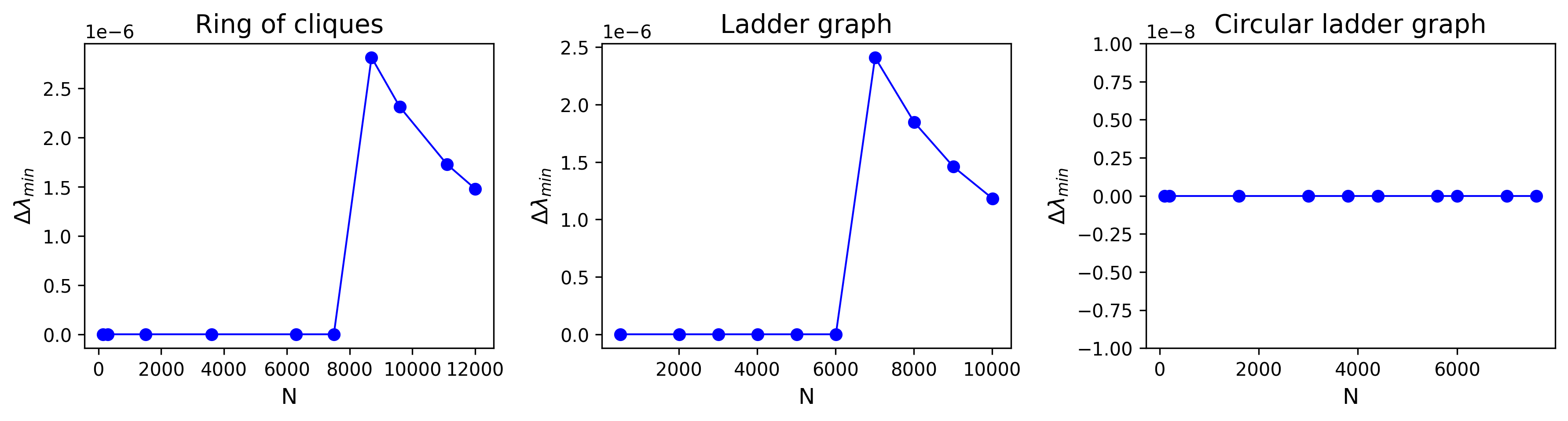} \\
    \vspace{0.5em}
    \makebox[0.31\textwidth]{(j)}
    \makebox[0.31\textwidth]{(k)}
    \makebox[0.31\textwidth]{(l)}\\
\multicolumn{1}{r}{(\textit{Continued})} \\ 
\end{tabular}
\end{figure*}

\begin{figure*}[h!]
\begin{tabular}{c}
\includegraphics[width=17cm]{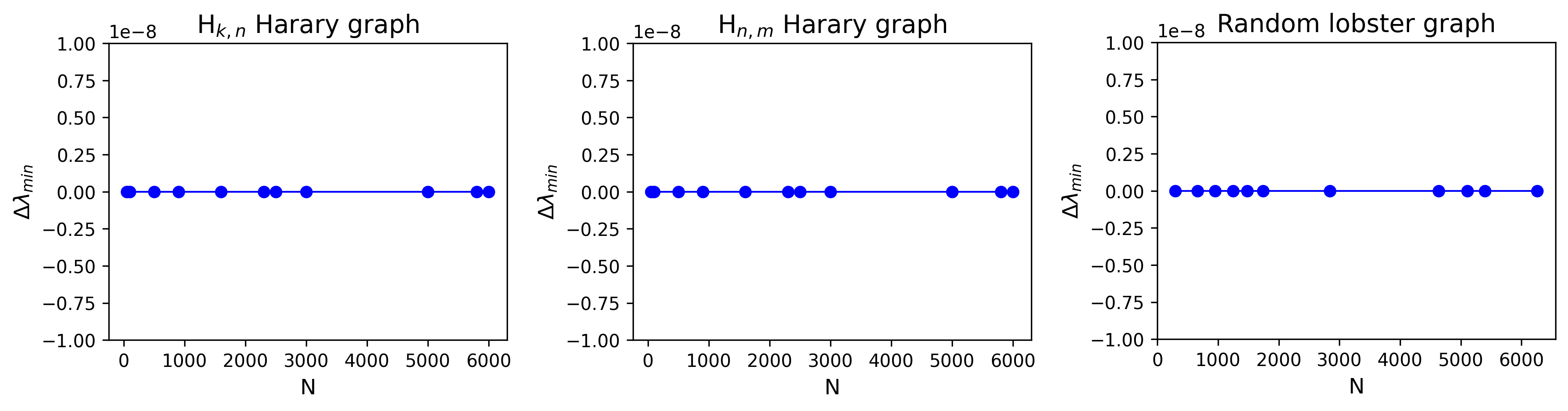} \\
    \vspace{0.5em}
    \makebox[0.31\textwidth]{(m)}
    \makebox[0.31\textwidth]{(n)}
    \makebox[0.31\textwidth]{(o)}\\
\end{tabular} 
\caption{Figures representing the difference in minimum eigenvalues between two threshold choices, $10^{-6}$ (our main results) and $10^{-10}$, for those considered graph Laplacians whose minimum eigenvalues show a downward trend with system size, $N$. }\label{app:Lcutoff}
\end{figure*} 

\begin{figure*}[h!]
\hspace*{-2cm}
\begin{tabular}{cc}
\includegraphics[width=10cm]{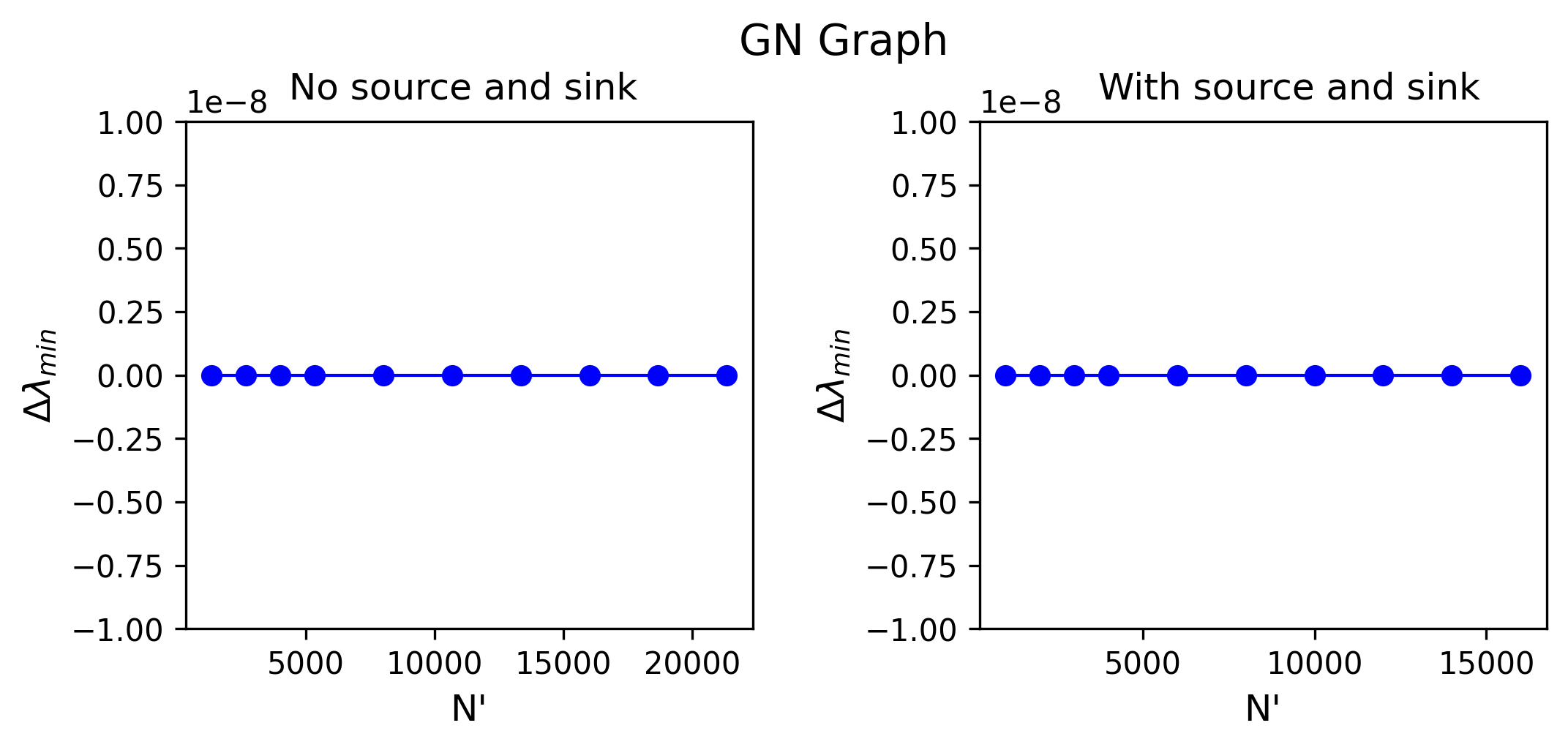 } &\includegraphics[width=10cm]{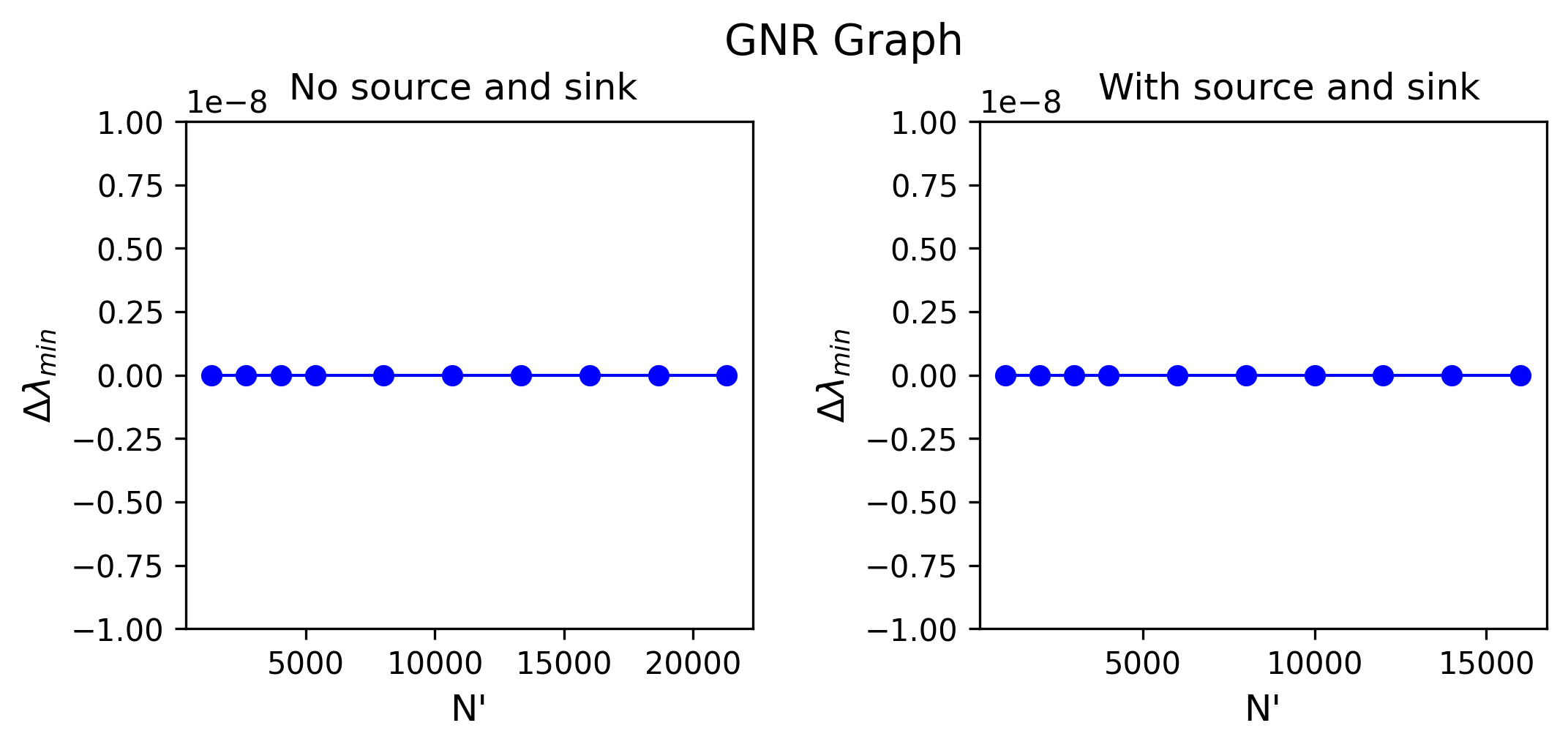} \\ 
\hspace*{0.7cm}\makebox[10cm][c]{\makebox[5cm][c]{(a)}\makebox[5cm][c]{(b)}} &
\hspace*{0.7cm}\makebox[10cm][c]{\makebox[5cm][c]{(c)}\makebox[5cm][c]{(d)}} \\
\includegraphics[width=10cm]{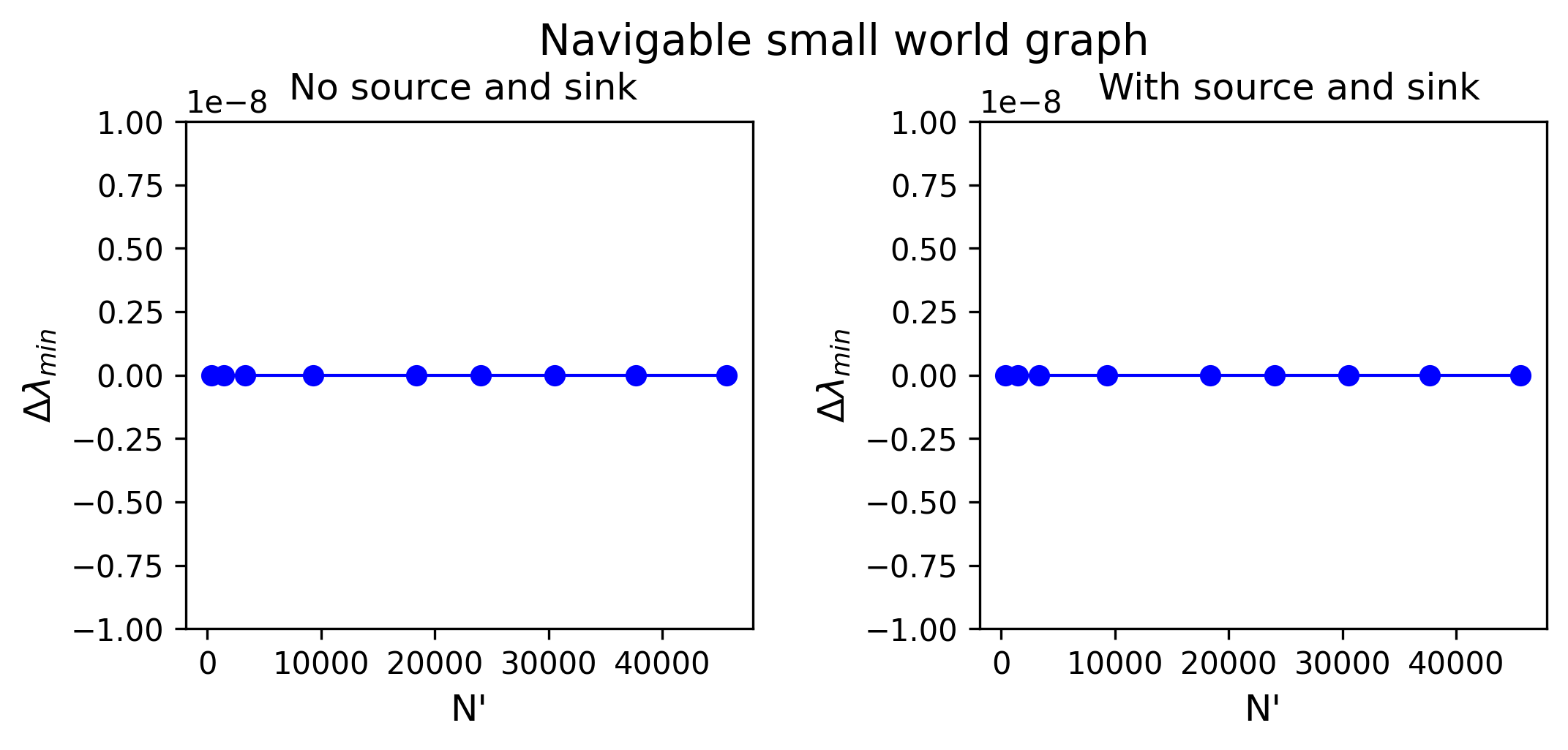} &\includegraphics[width=10cm]{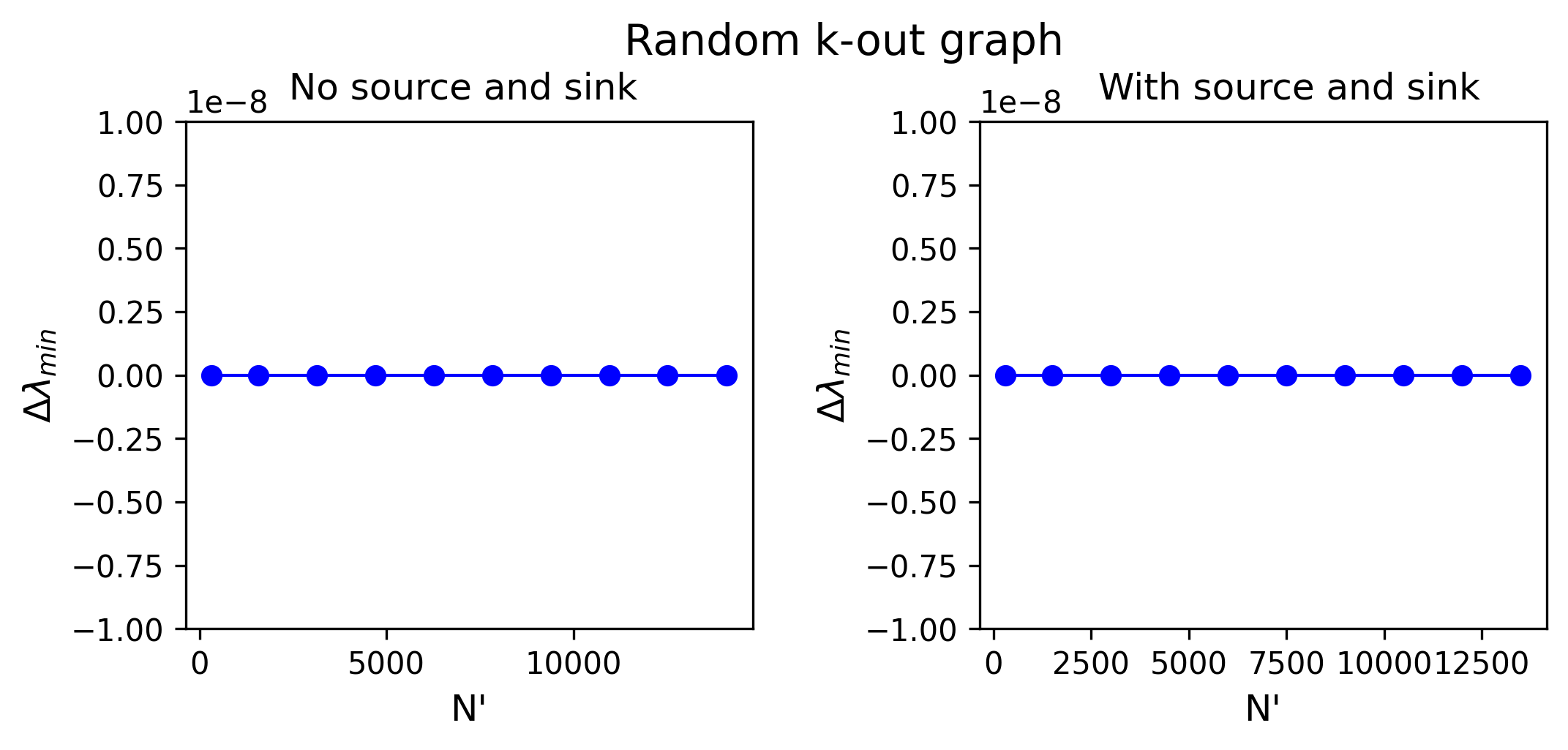} \\
\hspace*{0.7cm}\makebox[10cm][c]{\makebox[5cm][c]{(e)}\makebox[5cm][c]{(f)}} &
\hspace*{0.7cm}\makebox[10cm][c]{\makebox[5cm][c]{(g)}\makebox[5cm][c]{(h)}} \\ 
\multicolumn{1}{r}{(\textit{Continued})} \\ 
\end{tabular}
\end{figure*}

\begin{figure*}[h!]
\begin{tabular}{c}
\includegraphics[width=10cm]{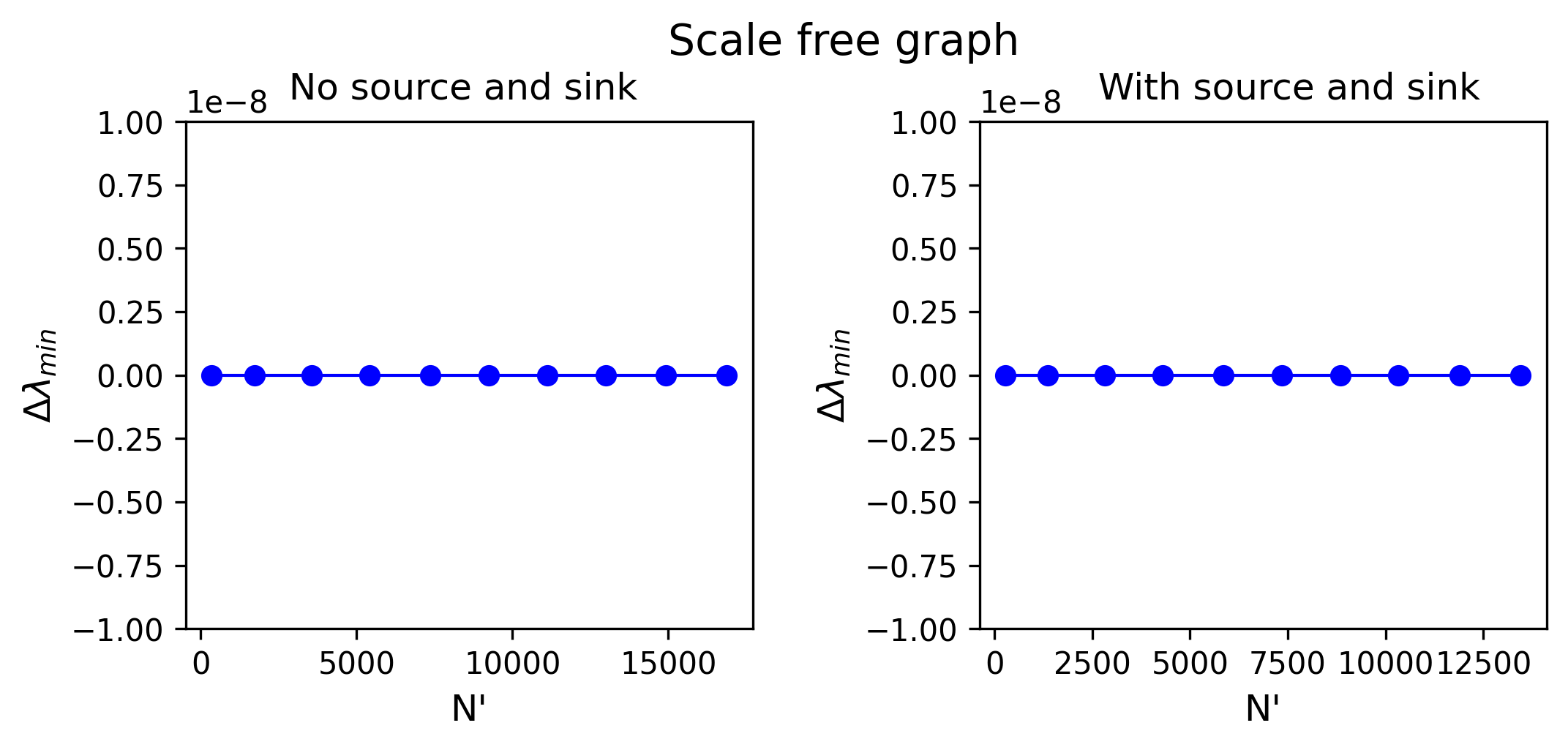} \\
    \vspace{0.5em}
    \makebox[10cm][c]{\makebox[5cm][c]{(i)}\makebox[5cm][c]{(j)}} \\
\end{tabular}
\caption{Figures representing the difference in minimum eigenvalues between two threshold choices, $10^{-6}$ (our main results) and $10^{-10}$, for those considered graph incidence matrices whose minimum eigenvalues show a downward trend with system size, $N'$. } \label{app:Icutoff} 
\end{figure*}

\begin{figure*}[t]
\begin{tabular}{c}
\includegraphics[width=14cm]{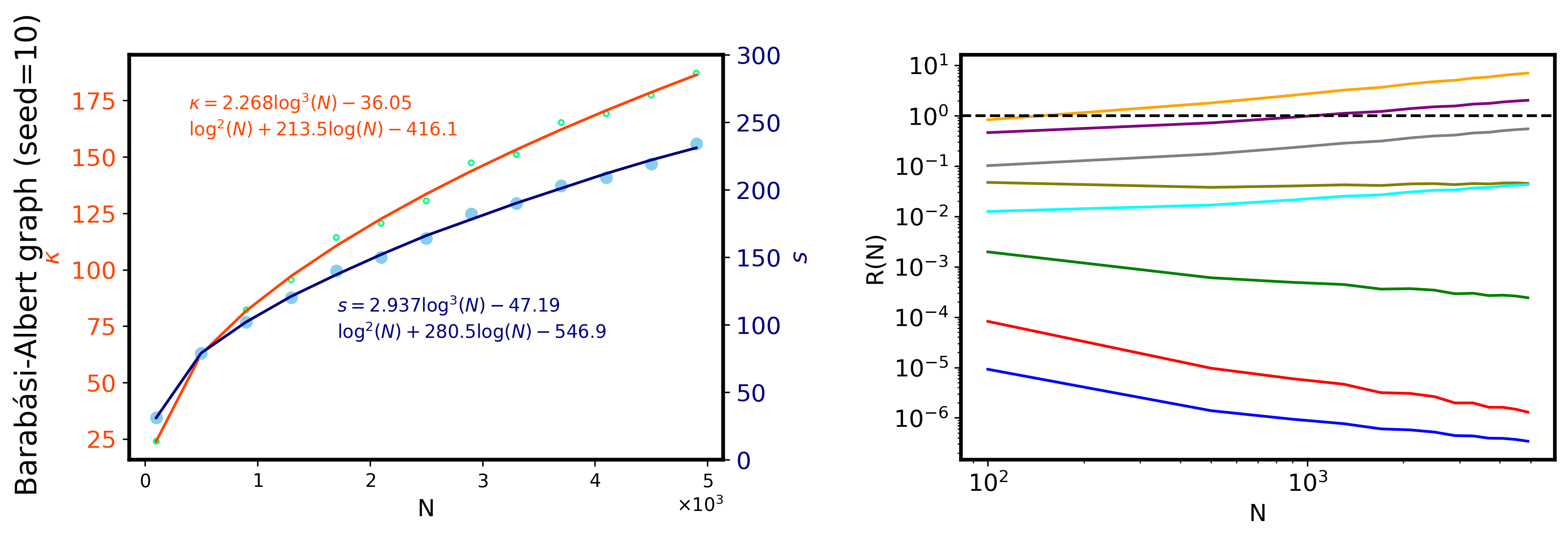} \\ 
\vspace{0.5em}
\makebox[0.45\textwidth]{(a)}
\makebox[0.36\textwidth]{(b)} \\

\includegraphics[width=14cm]{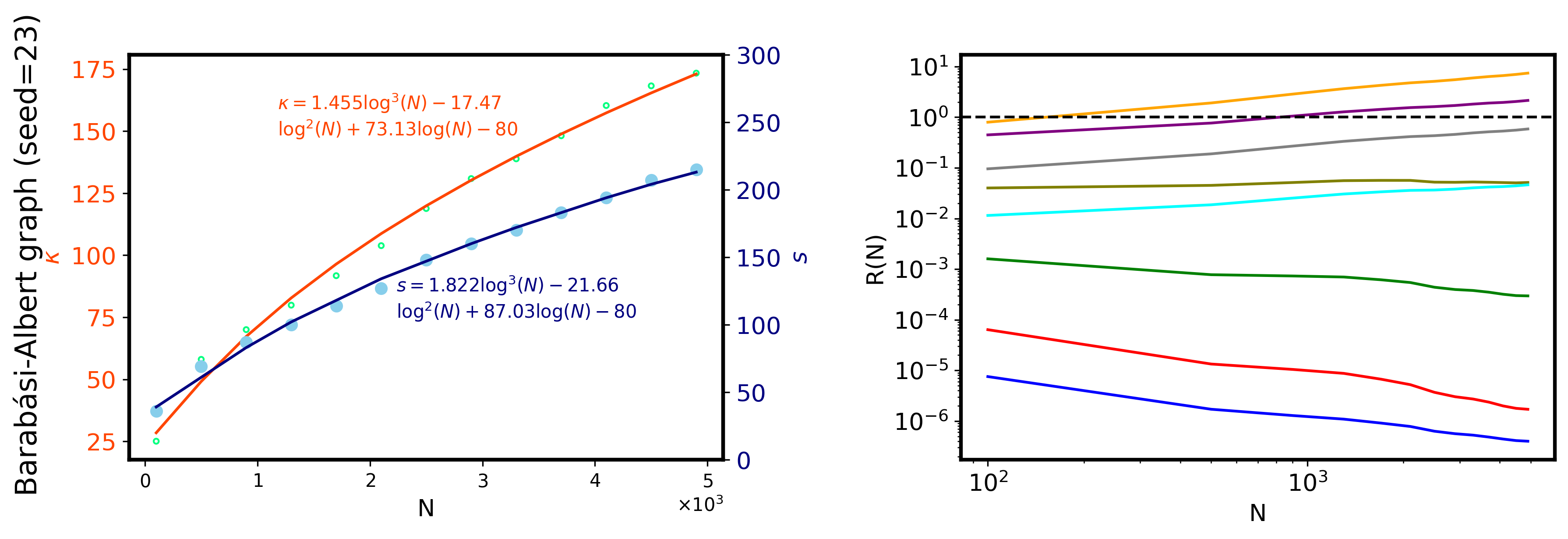} \\
\vspace{0.5em}
\makebox[0.45\textwidth]{(c)}
\makebox[0.36\textwidth]{(d)} \\

\includegraphics[width=14cm]{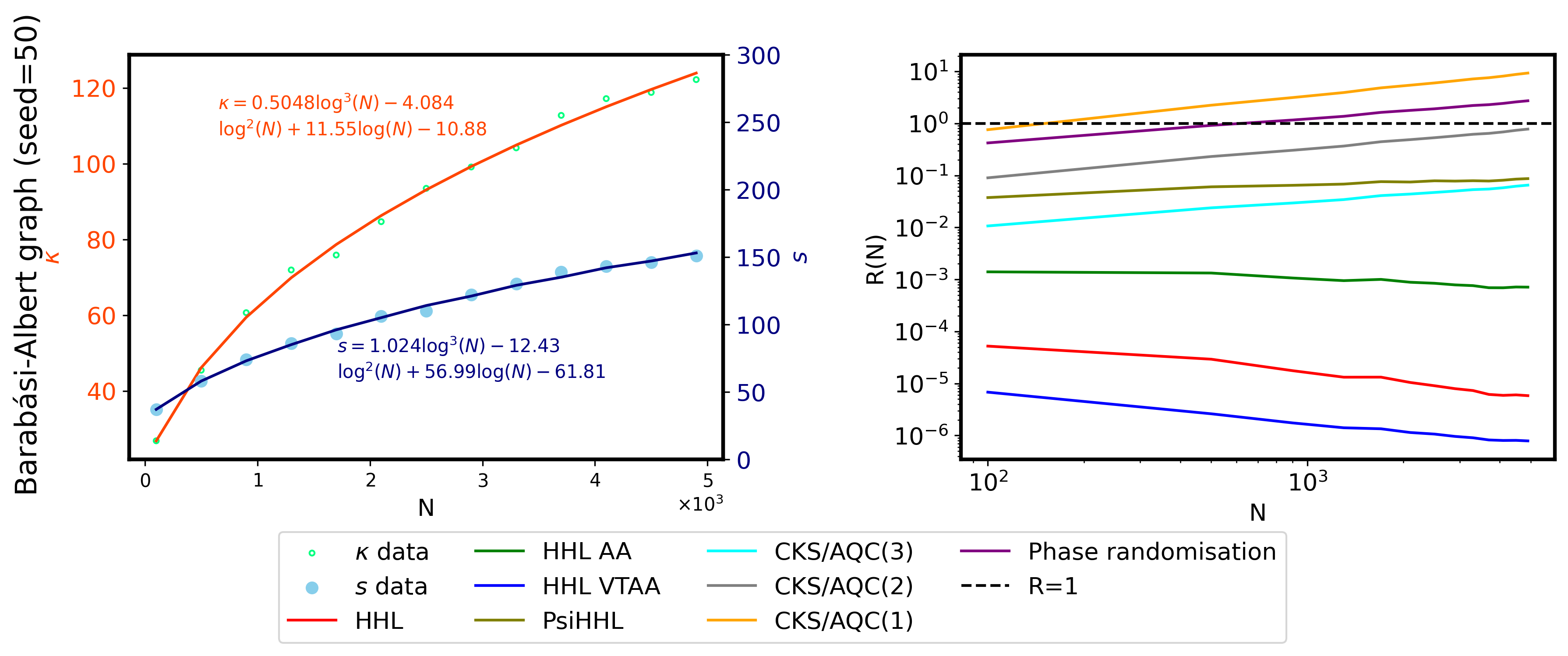} \\
\vspace{0.5em}
\makebox[0.45\textwidth]{(e)}
\makebox[0.36\textwidth]{(f)} \\

\end{tabular} 
\end{figure*}

\begin{figure*}[h!]
\begin{tabular}{c}

\includegraphics[width=14cm]{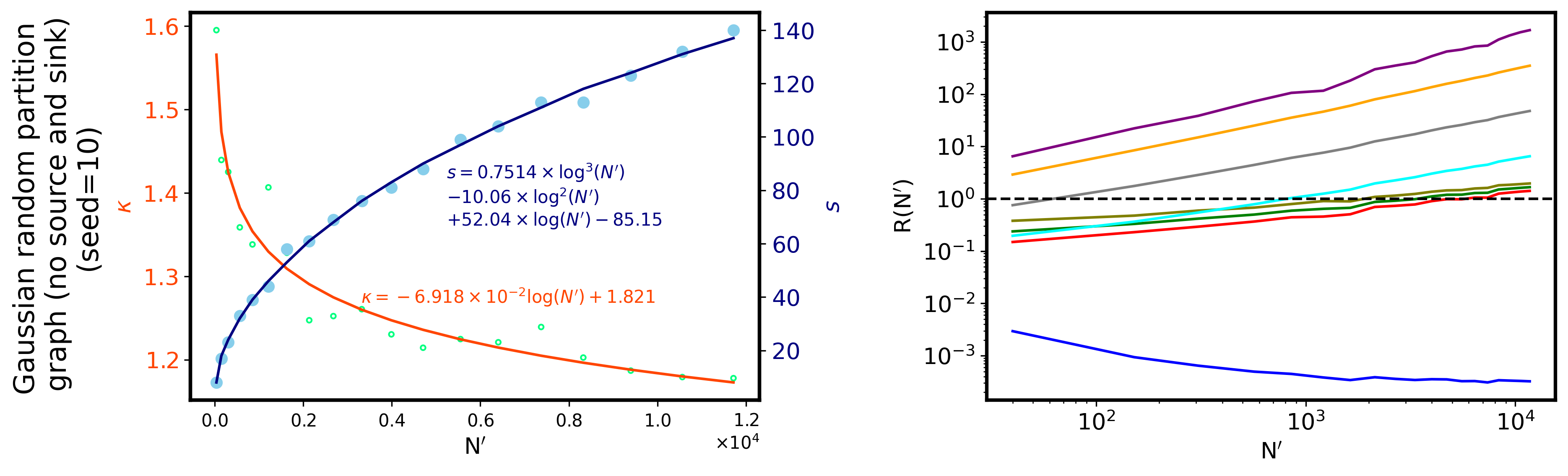} \\
\vspace{0.5em}
\makebox[0.45\textwidth]{(g)}
\makebox[0.36\textwidth]{(h)} \\

\includegraphics[width=14cm]{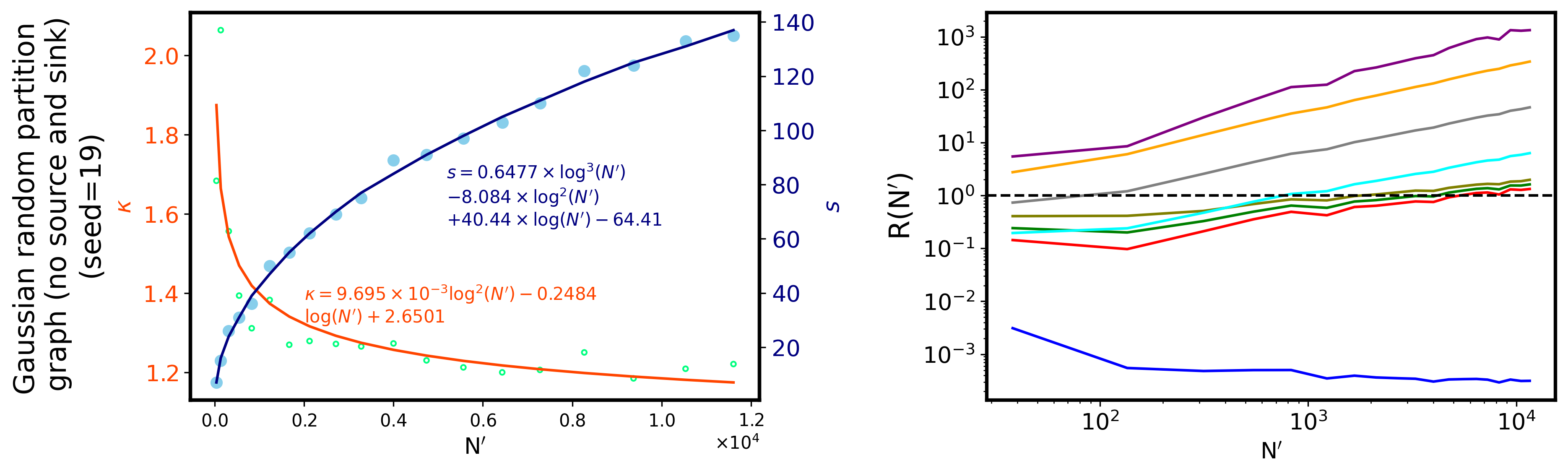} \\
\vspace{0.5em}
\makebox[0.45\textwidth]{(i)}
\makebox[0.36\textwidth]{(j)} \\

\includegraphics[width=14cm]{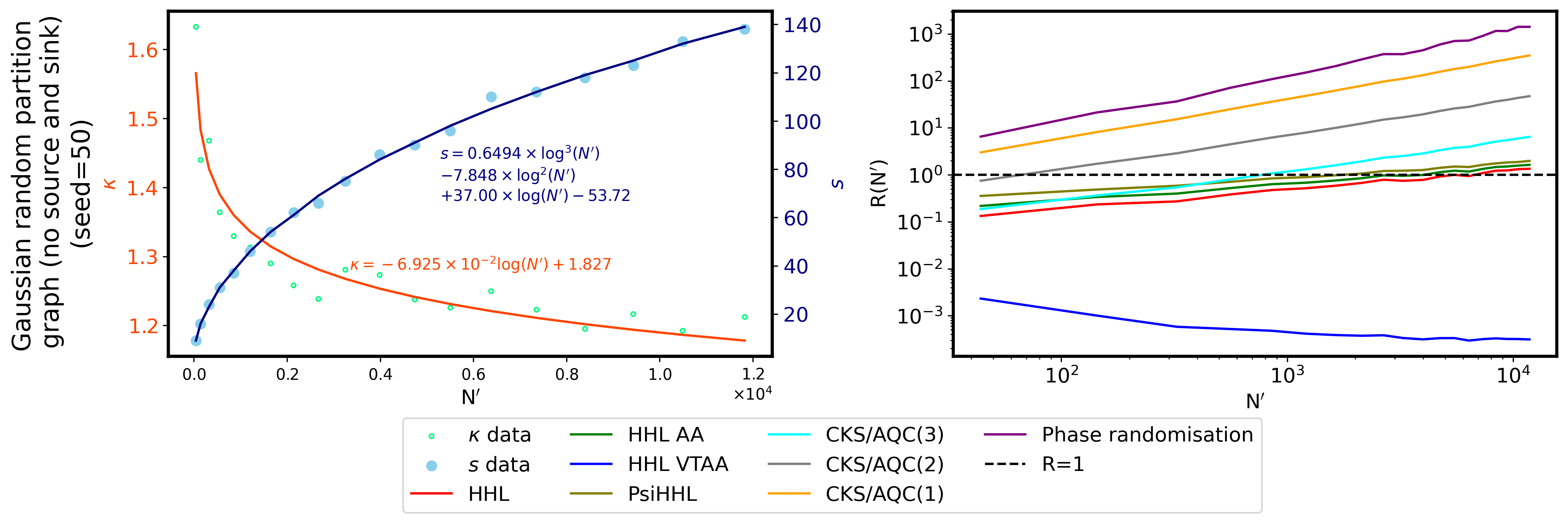} \\
\vspace{0.5em}
\makebox[0.45\textwidth]{(k)}
\makebox[0.36\textwidth]{(l)} \\

\multicolumn{1}{r}{(\textit{Continued})} \\   
\end{tabular} 
\end{figure*} 

\begin{figure*}[h!]
\begin{tabular}{c}
\includegraphics[width=14cm]{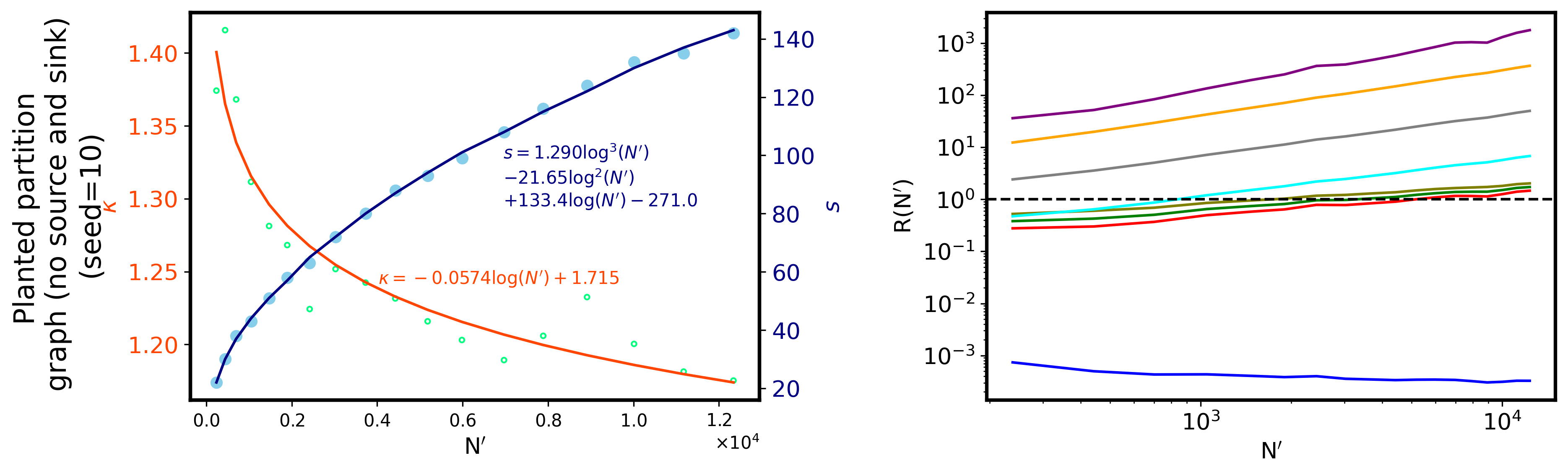} \\
\vspace{0.5em}
\makebox[0.45\textwidth]{(m)}
\makebox[0.36\textwidth]{(n)} \\   
\includegraphics[width=14cm]{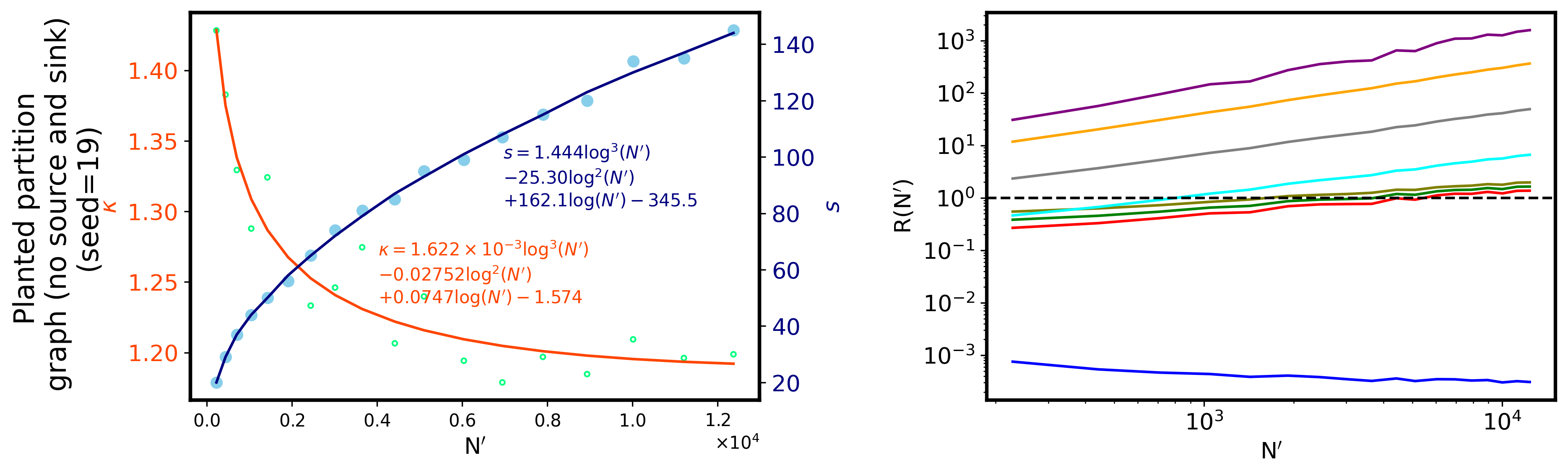} \\
\vspace{0.5em}
\makebox[0.45\textwidth]{(o)}
\makebox[0.36\textwidth]{(p)} \\

\includegraphics[width=14cm]{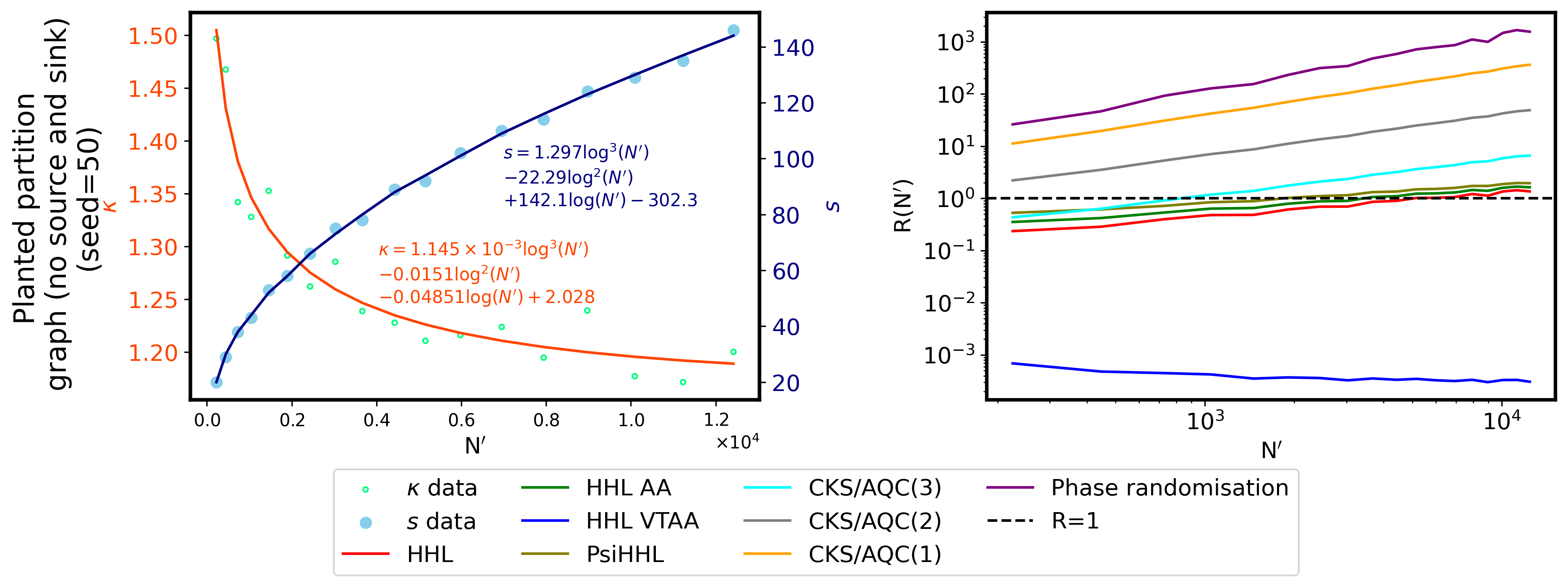} \\
\vspace{0.5em}
\makebox[0.45\textwidth]{(q)}
\makebox[0.36\textwidth]{(r)} \\
\end{tabular} 
\caption{Subfigures (a)-(f) show the advantage offered by Barabási-Albert graph family for three different seed values: 10, 23 and 50. Subfigures (g)-(l) represent advantages offered by directed Gaussian random partition graph family for seed values 10, 19 and 50. From (m) - (r), figures demonstrate the advantages offered by planted partition graph for seed values: 10, 12 and 50.}\label{fig:error_analysis}
\end{figure*}

\begin{figure*}[t]
\begin{tabular}{c}
\includegraphics[width=14cm]{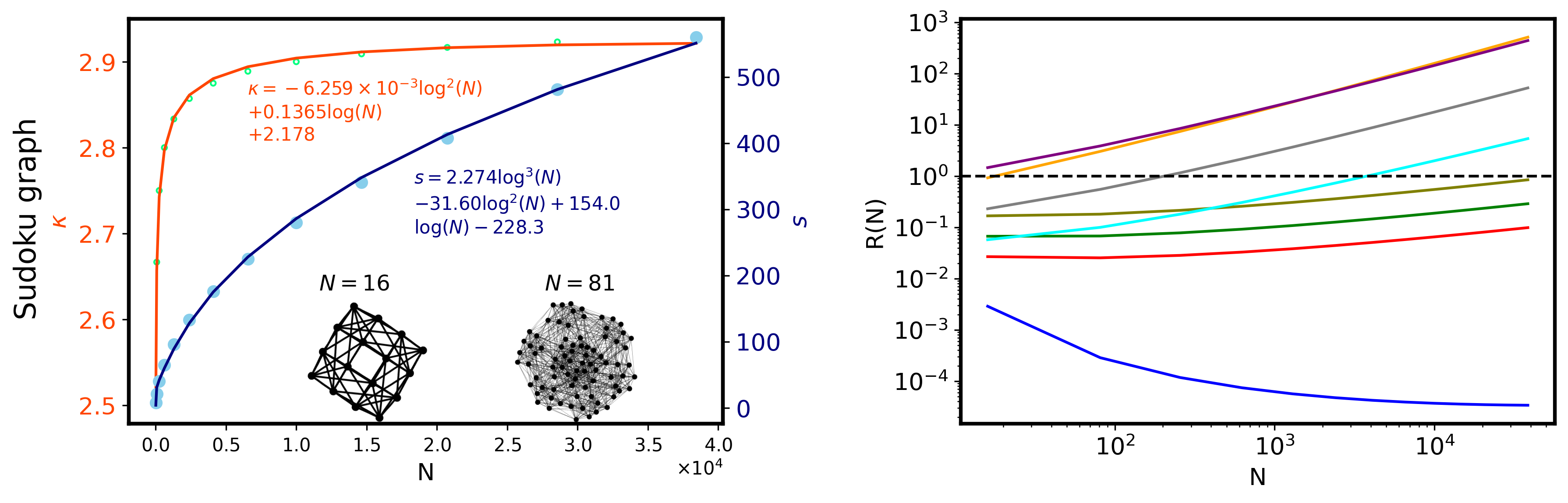} \\
\vspace{0.5em}
\makebox[0.45\textwidth]{(a)}
\makebox[0.36\textwidth]{(b)} \\

\includegraphics[width=14cm]{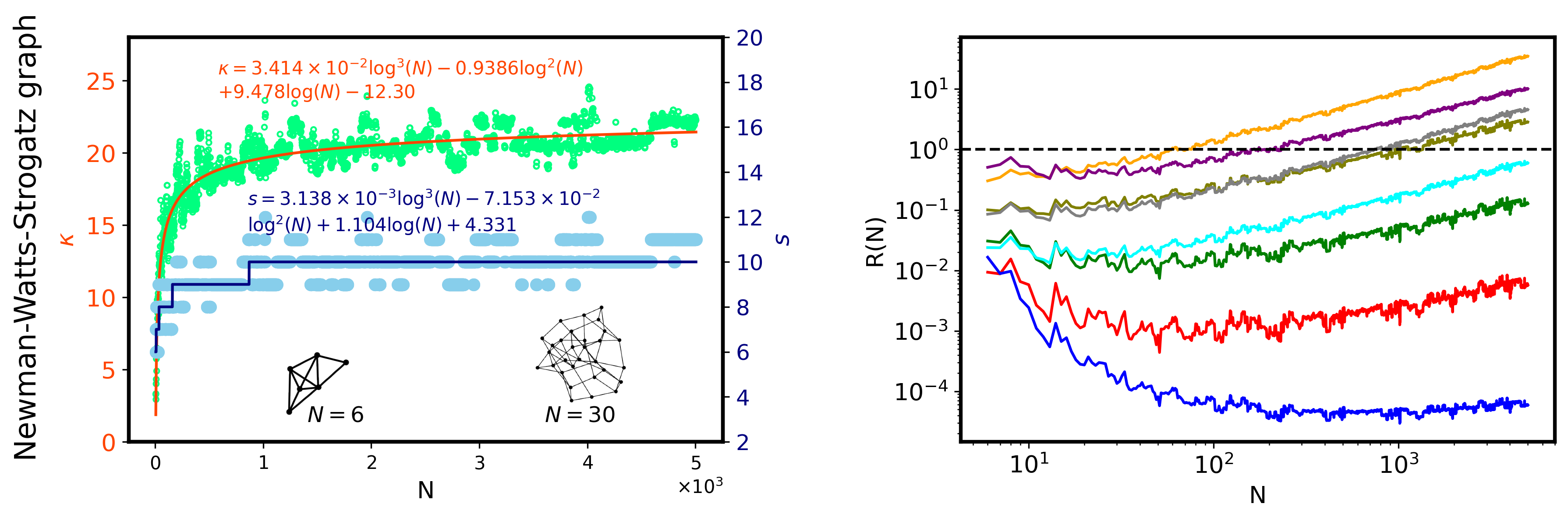} \\
\vspace{0.5em}
\makebox[0.45\textwidth]{(c)}
\makebox[0.36\textwidth]{(b)} \\

\includegraphics[width=14cm]{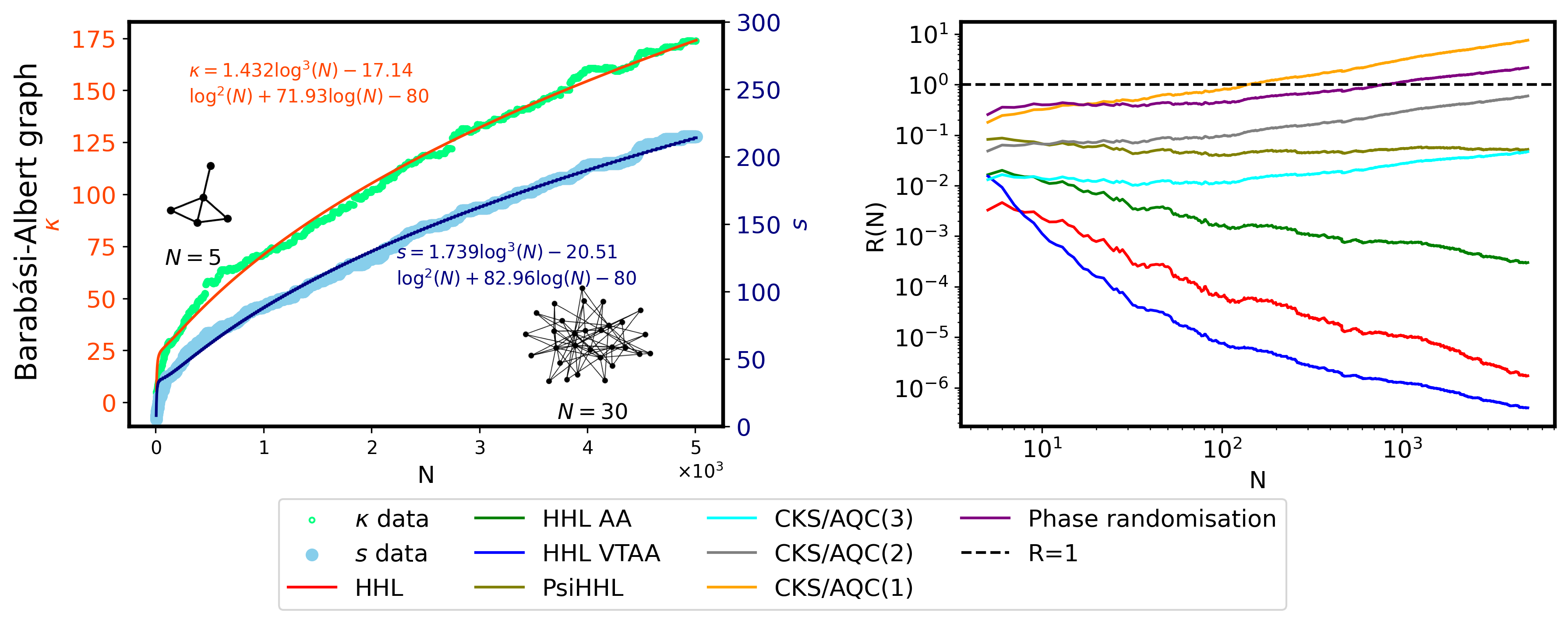} \\ 
\vspace{0.5em}
\makebox[0.45\textwidth]{(e)}
\makebox[0.36\textwidth]{(f)} \\

\multicolumn{1}{r}{(\textit{Continued})} \\ 
\end{tabular} 
\end{figure*}

\begin{figure*}[h!]
\begin{tabular}{c}

\includegraphics[width=14cm]{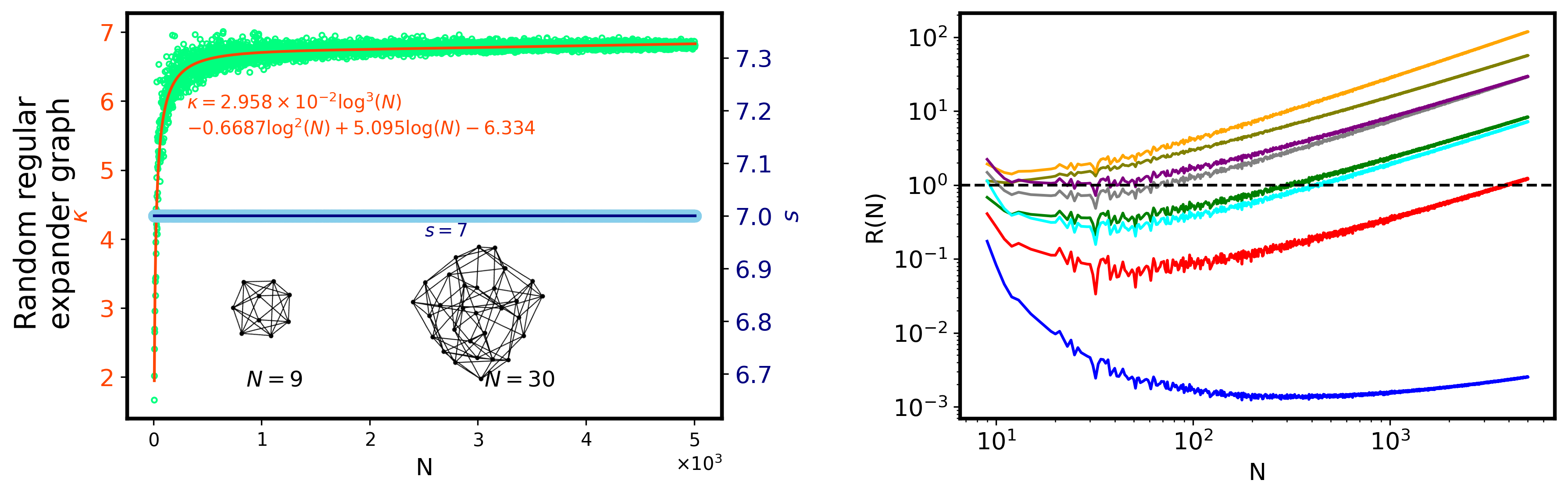} \\
\vspace{0.5em}
\makebox[0.45\textwidth]{(g)}
\makebox[0.36\textwidth]{(h)} \\

\includegraphics[width=14cm]{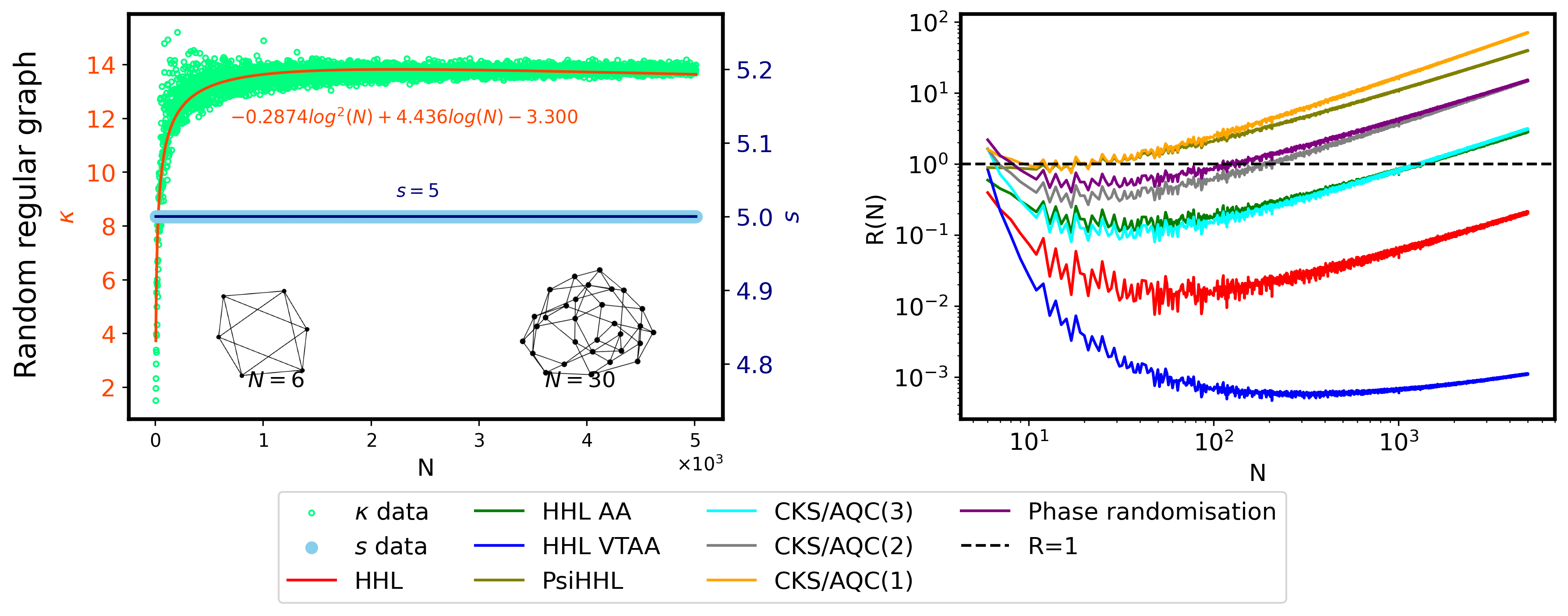} \\
\vspace{0.5em}
\makebox[0.45\textwidth]{(i)}
\makebox[0.36\textwidth]{(j)} \\

\end{tabular} 
\caption{Sub-figures showing $\kappa$ and $s$ behaviour of good graphs listed in Table II of the main manuscript. }\label{fig:sm_goodgraph_lap}
\end{figure*} 

\begin{figure*}[t]
\begin{tabular}{c}

\includegraphics[width=14cm]{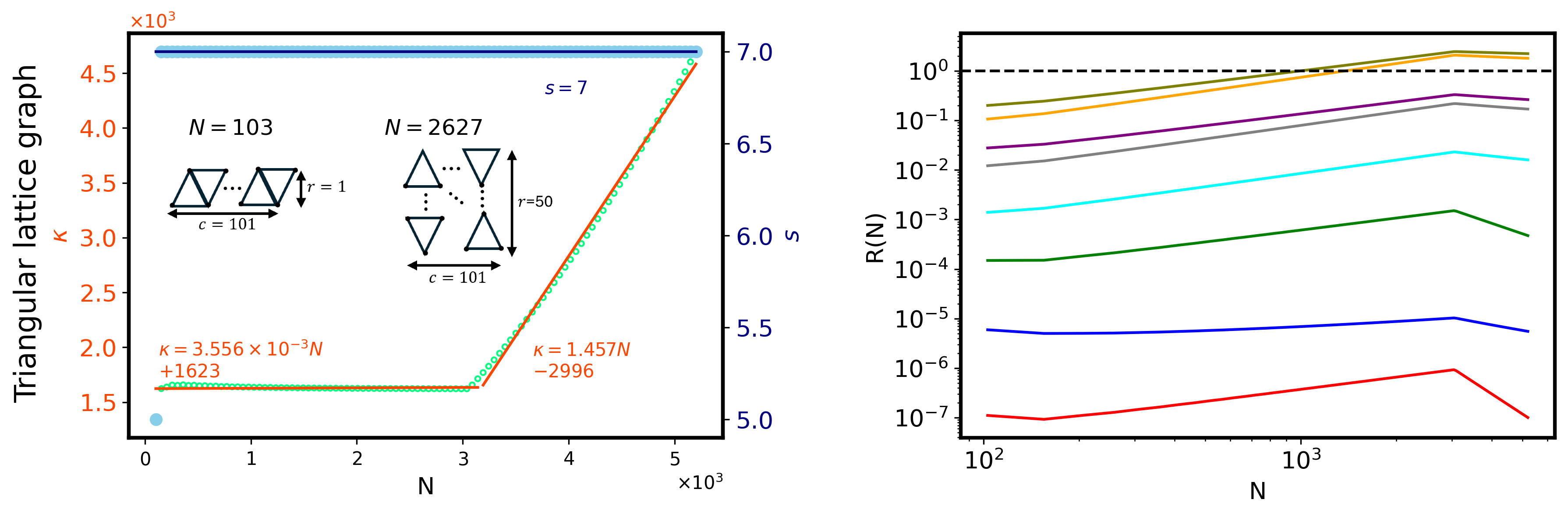} \\ 
\makebox[0.45\textwidth]{(a)}
\makebox[0.36\textwidth]{(b)} \\

\includegraphics[width=14cm]{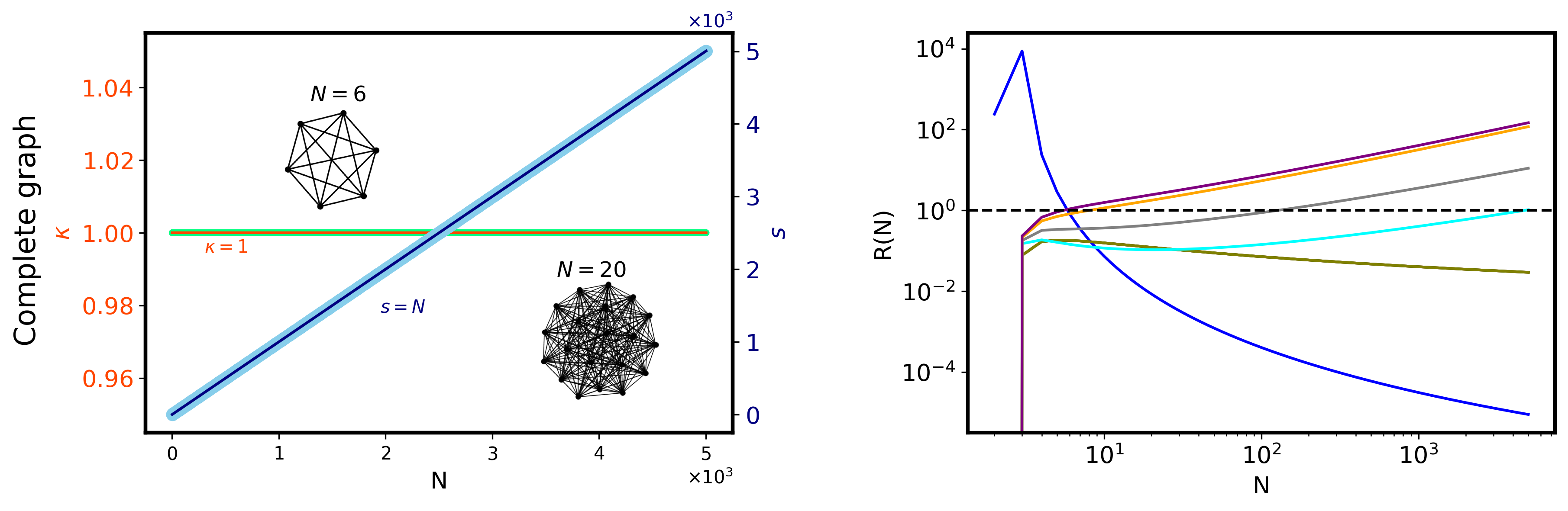} \\
\makebox[0.45\textwidth]{(c)}
\makebox[0.36\textwidth]{(d)} \\

\includegraphics[width=14cm]{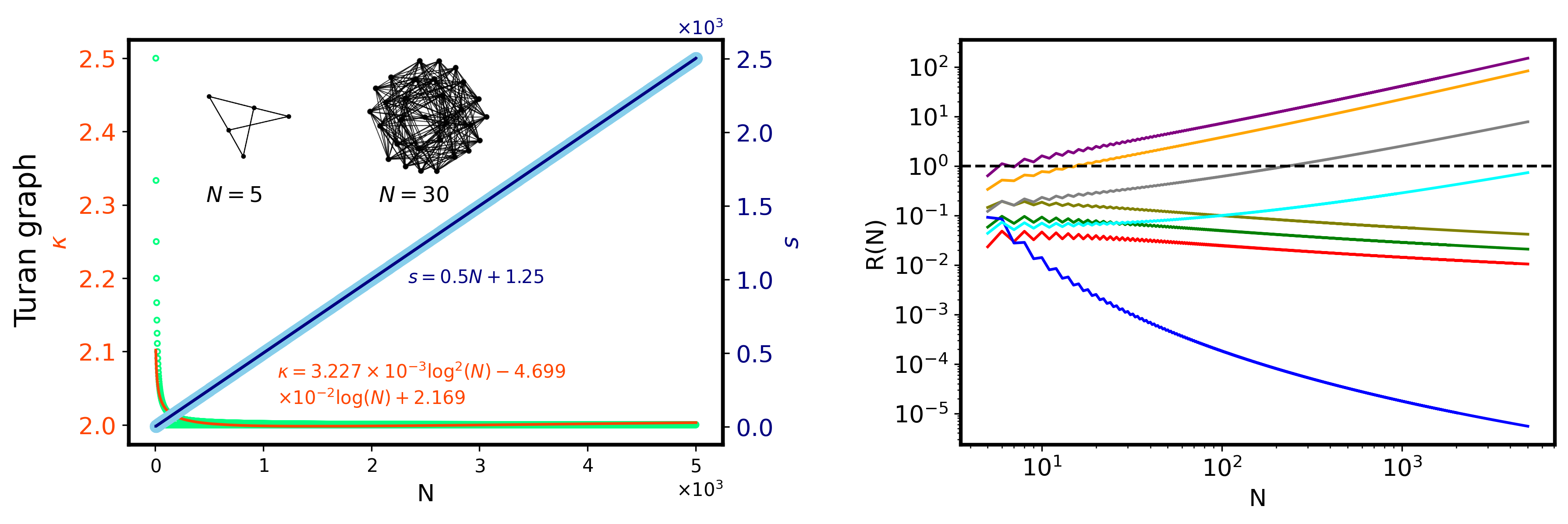} \\
\makebox[0.45\textwidth]{(e)}
\makebox[0.36\textwidth]{(f)} \\

\includegraphics[width=14cm]{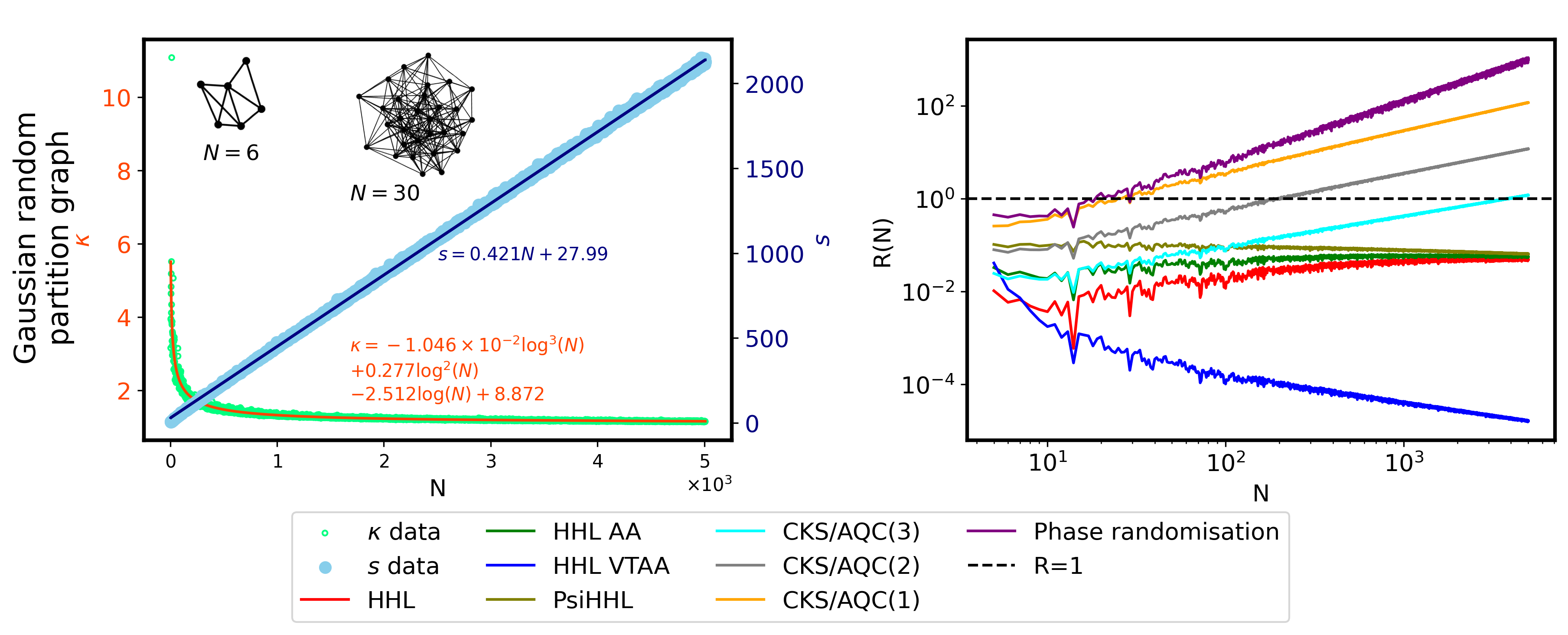} \\
\makebox[0.45\textwidth]{(g)}
\makebox[0.36\textwidth]{(h)} \\

\multicolumn{1}{r}{(\textit{Continued})} \\  
\end{tabular} 
\end{figure*}

\begin{figure*}[h!]
\begin{tabular}{c}

\includegraphics[width=14cm]{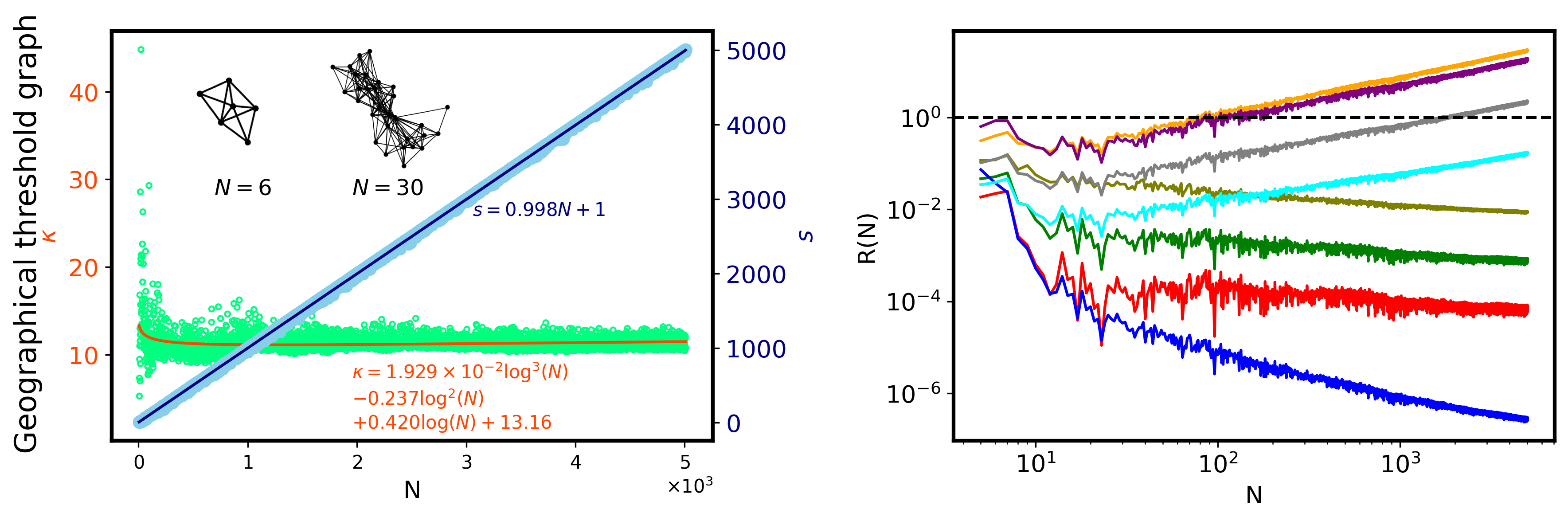} \\
\makebox[0.45\textwidth]{(i)}
\makebox[0.36\textwidth]{(j)} \\

\includegraphics[width=14cm]{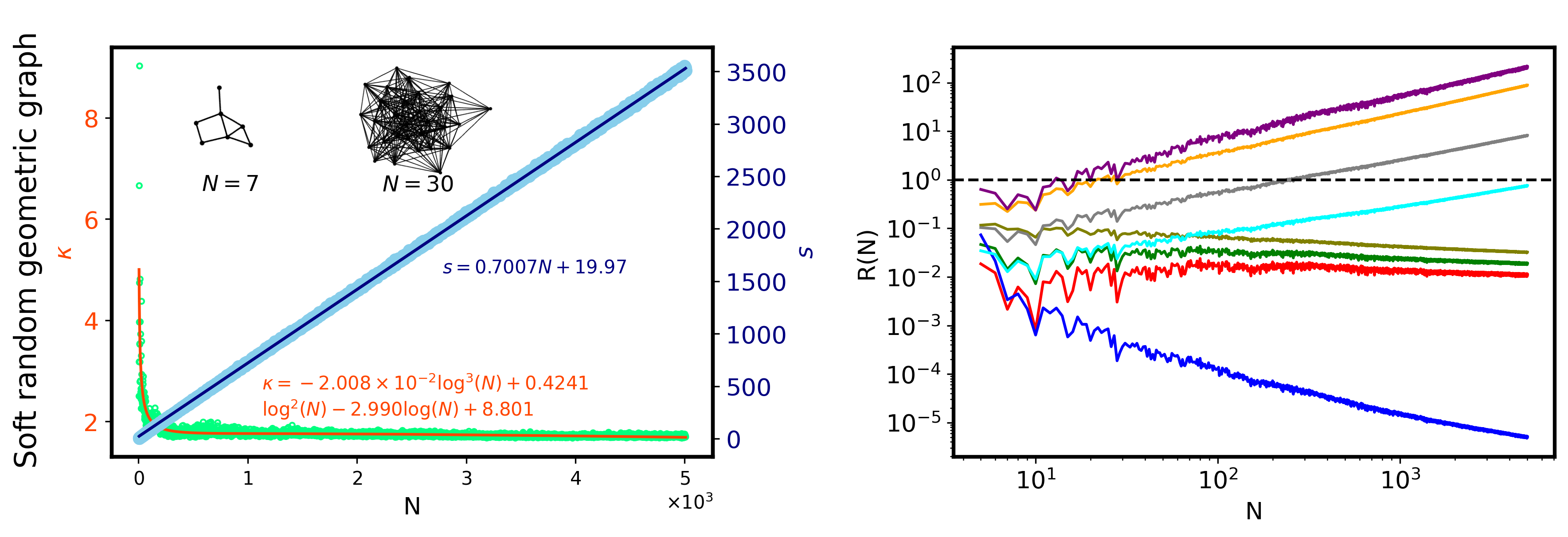} \\
\makebox[0.45\textwidth]{(k)}
\makebox[0.36\textwidth]{(l)} \\

\includegraphics[width=14cm]{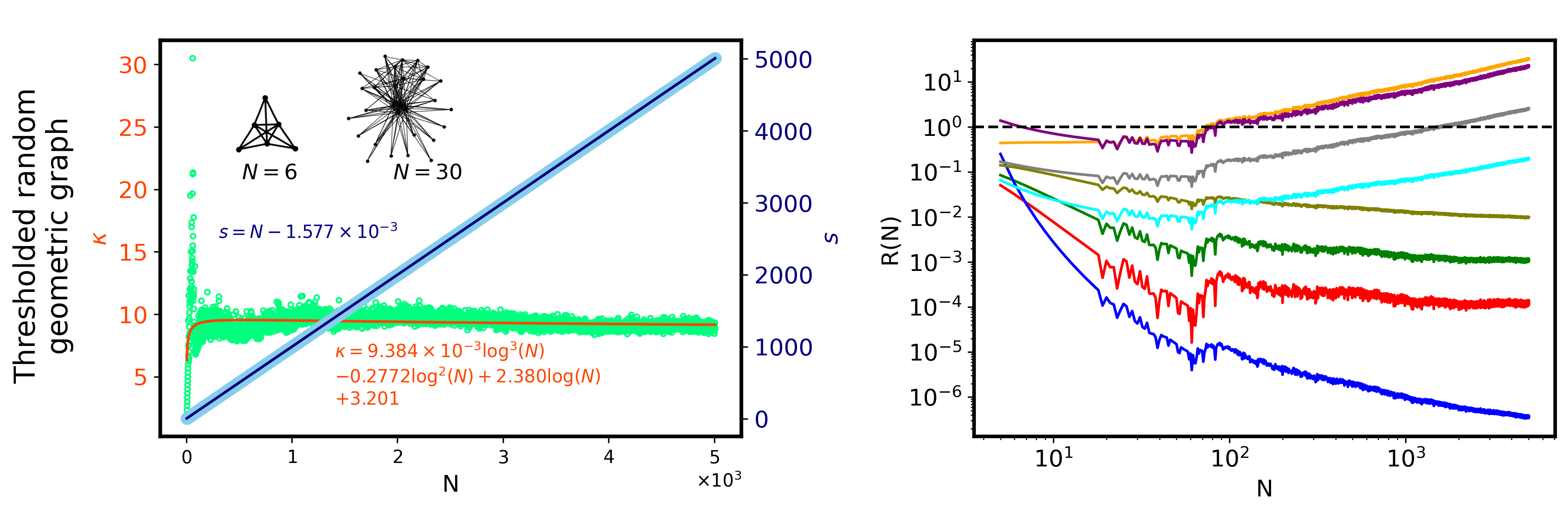} \\
\makebox[0.45\textwidth]{(m)}
\makebox[0.36\textwidth]{(n)} \\

\includegraphics[width=14cm]{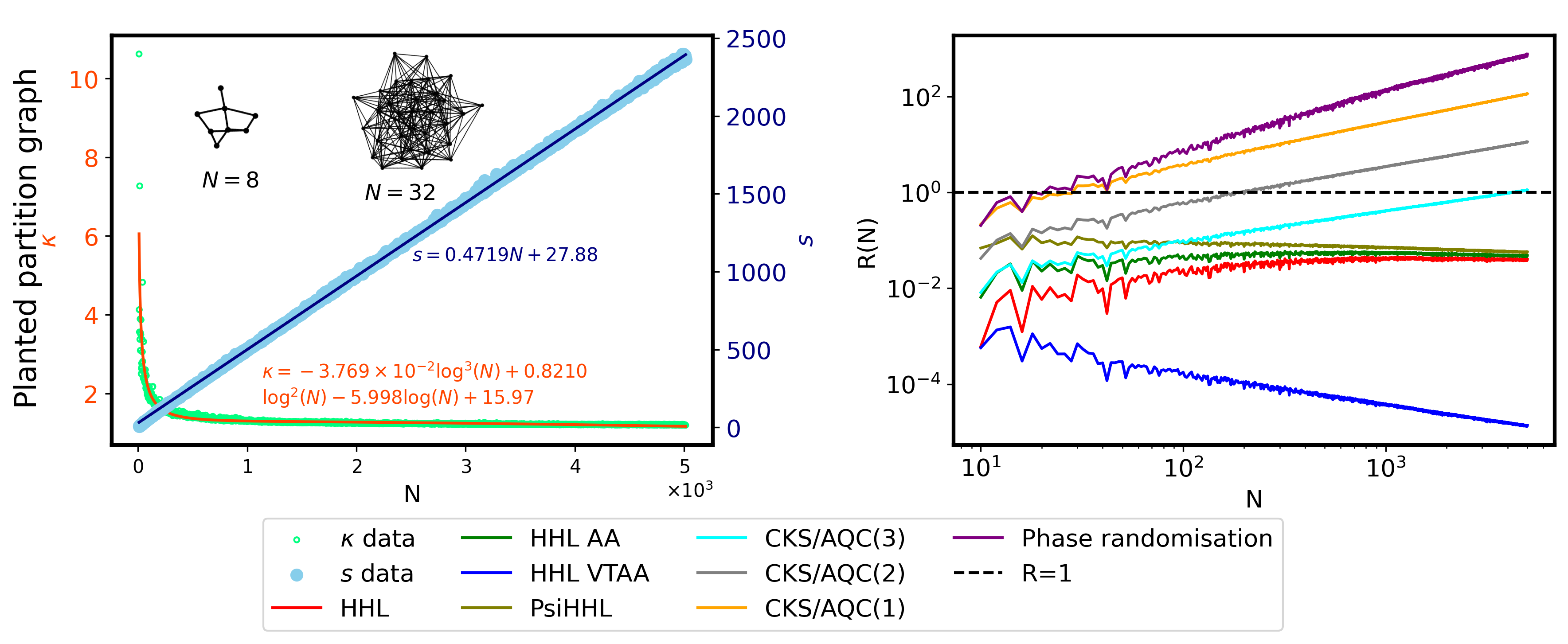} \\
\makebox[0.45\textwidth]{(o)}
\makebox[0.36\textwidth]{(p)} \\

\multicolumn{1}{r}{(\textit{Continued})} \\ 
\end{tabular} 
\end{figure*}

\begin{figure*}[h!]
\begin{tabular}{c}

\includegraphics[width=14cm]{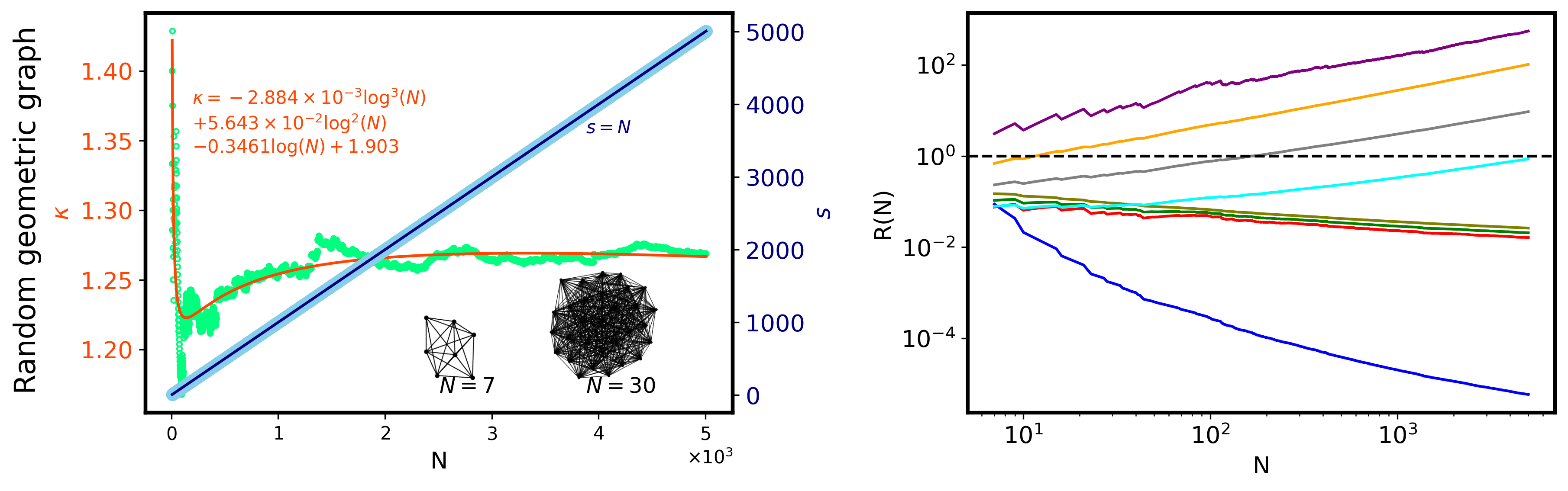} \\
\makebox[0.45\textwidth]{(q)}
\makebox[0.36\textwidth]{(r)} \\


\includegraphics[width=14cm]{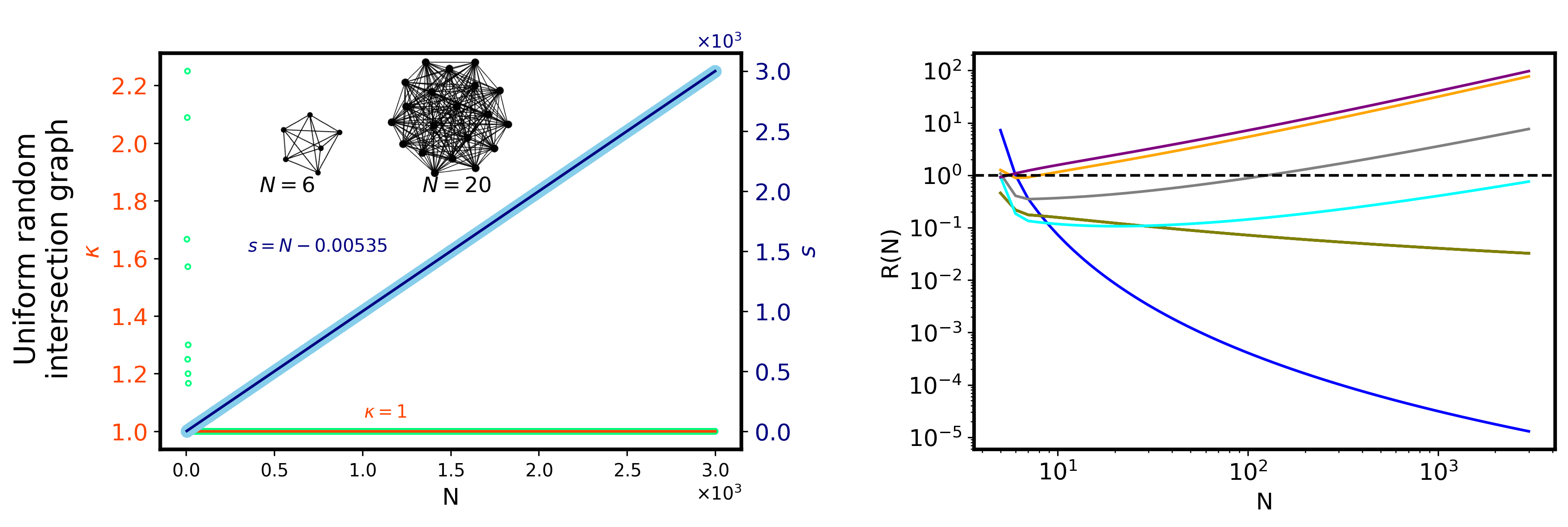} \\
\makebox[0.45\textwidth]{(s)}
\makebox[0.36\textwidth]{(t)} \\

\includegraphics[width=14cm]{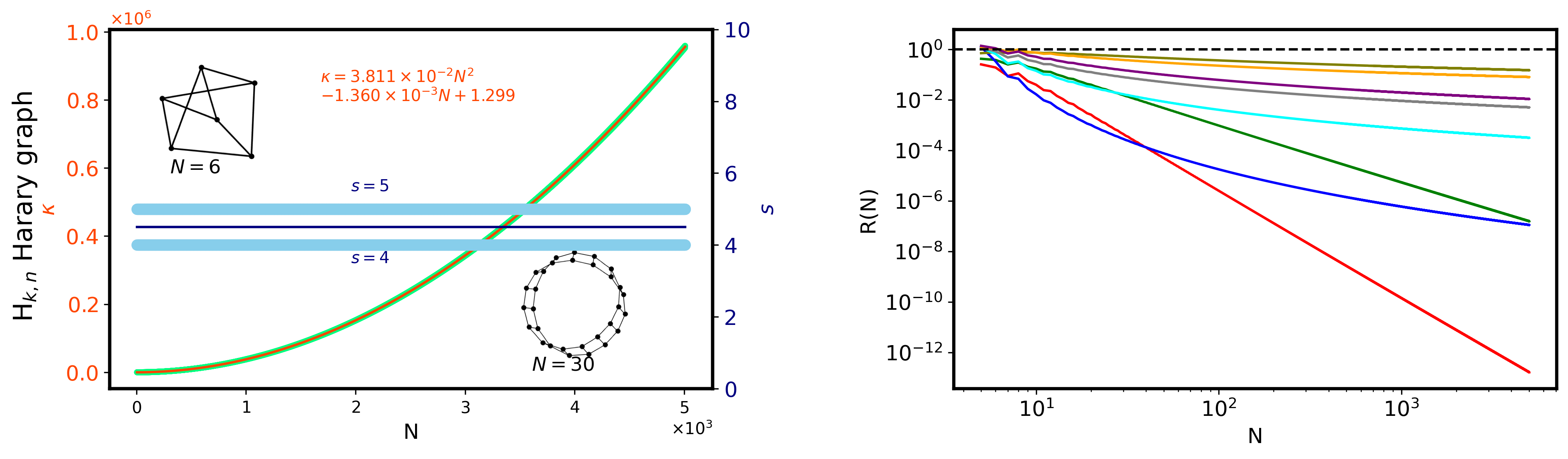} \\
\makebox[0.45\textwidth]{(u)}
\makebox[0.36\textwidth]{(v)} \\ 
    
\includegraphics[width=14cm]{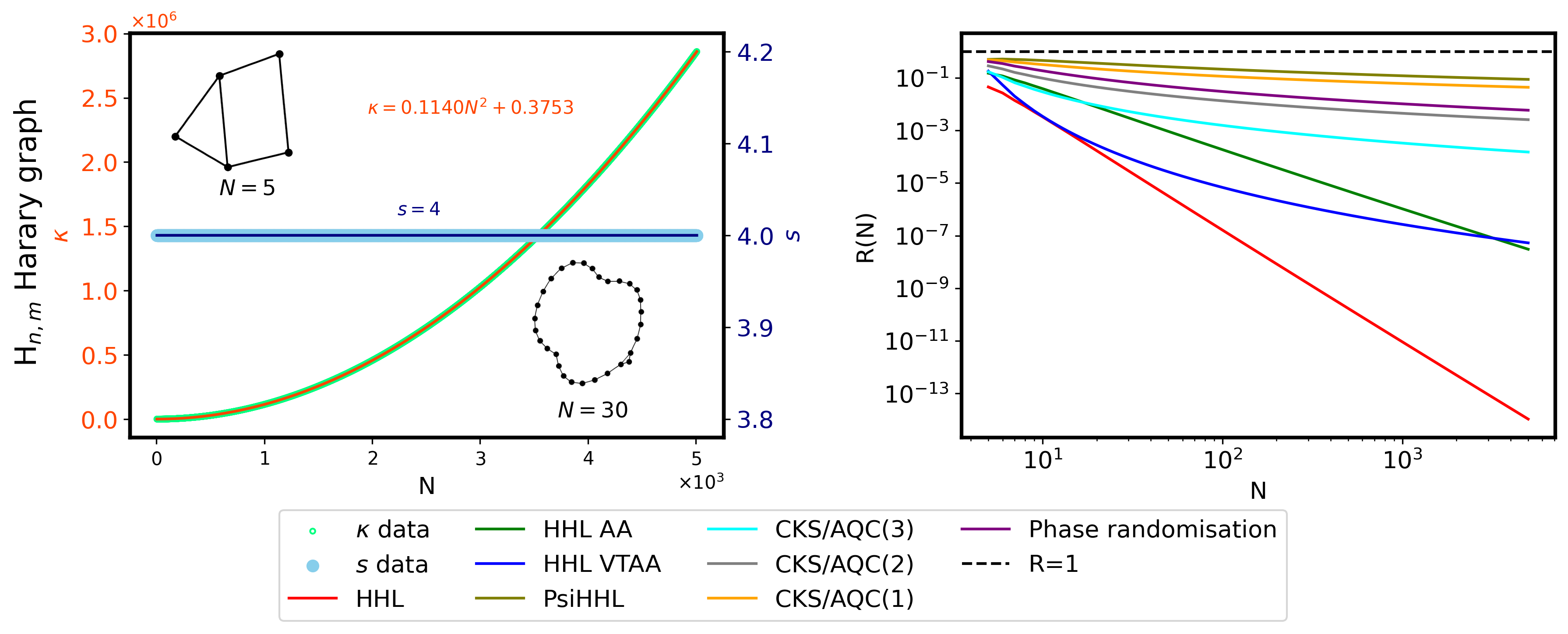} \\
\makebox[0.45\textwidth]{(w)}
\makebox[0.36\textwidth]{(x)} \\

\multicolumn{1}{r}{(\textit{Continued})} \\ 
\end{tabular} 
\end{figure*}

\begin{figure*}[h!]
\begin{tabular}{c}

\includegraphics[width=14cm]{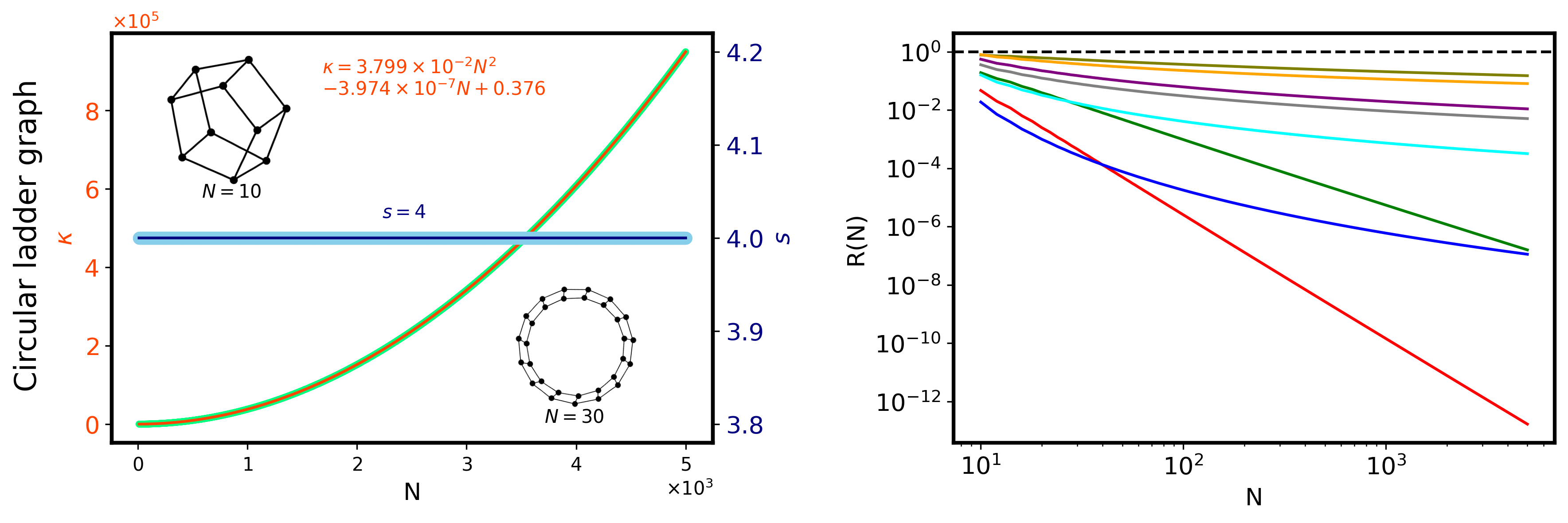} \\
\makebox[0.45\textwidth]{(y)}
\makebox[0.36\textwidth]{(z)} \\ 

\includegraphics[width=14cm]{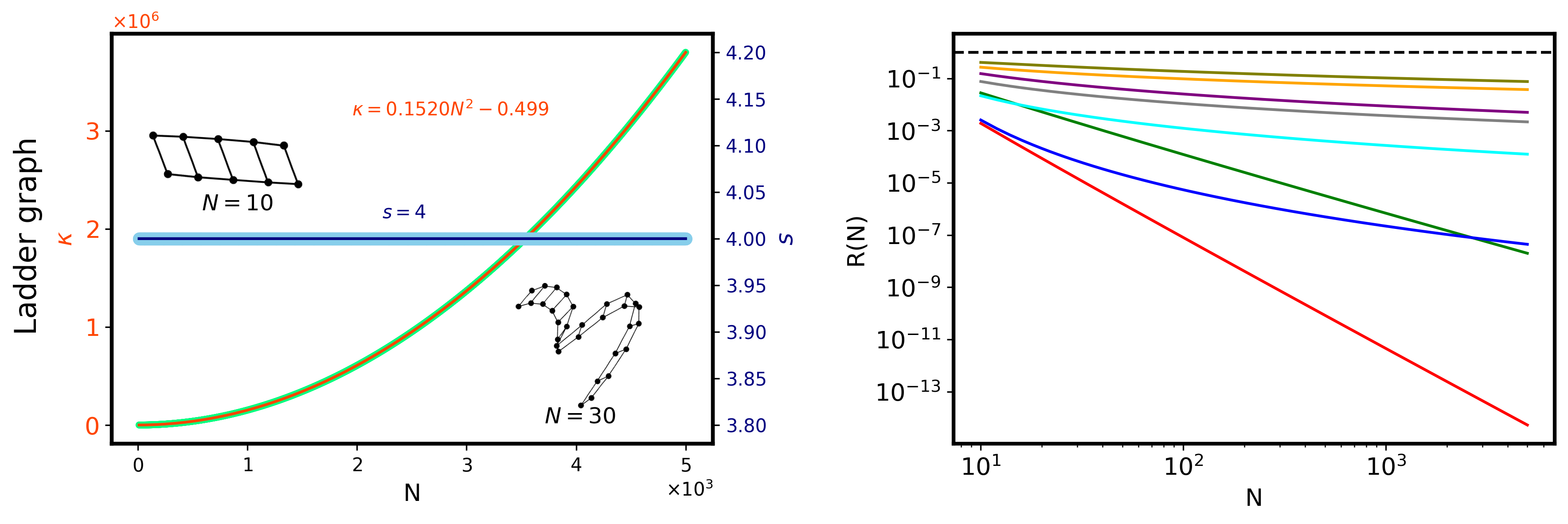} \\
\makebox[0.45\textwidth]{(aa)}
\makebox[0.36\textwidth]{(ab)} \\

\includegraphics[width=14cm]{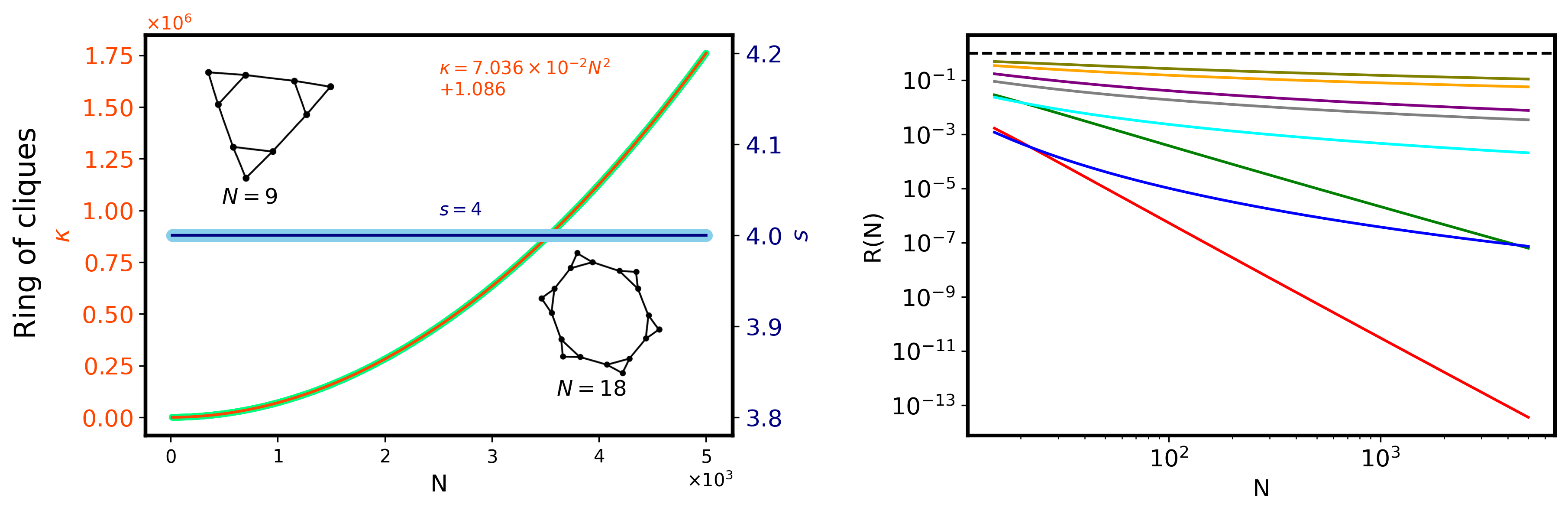} \\
\makebox[0.45\textwidth]{(ac)}
\makebox[0.36\textwidth]{(ad)} \\

\includegraphics[width=14cm]{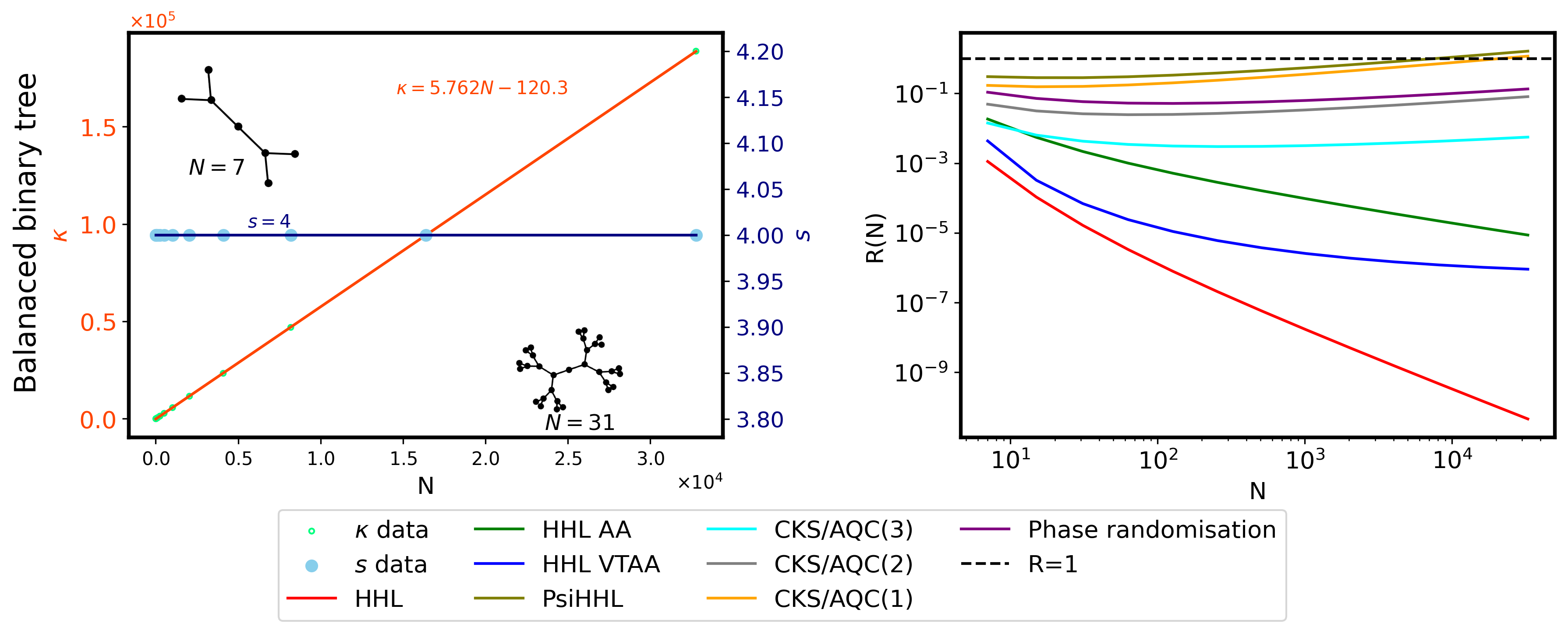} \\
\makebox[0.45\textwidth]{(ae)}
\makebox[0.36\textwidth]{(af)} \\

\multicolumn{1}{r}{(\textit{Continued})} \\ 
\end{tabular} 
\end{figure*}

\begin{figure*}[h!]
\begin{tabular}{c}

\includegraphics[width=14cm]{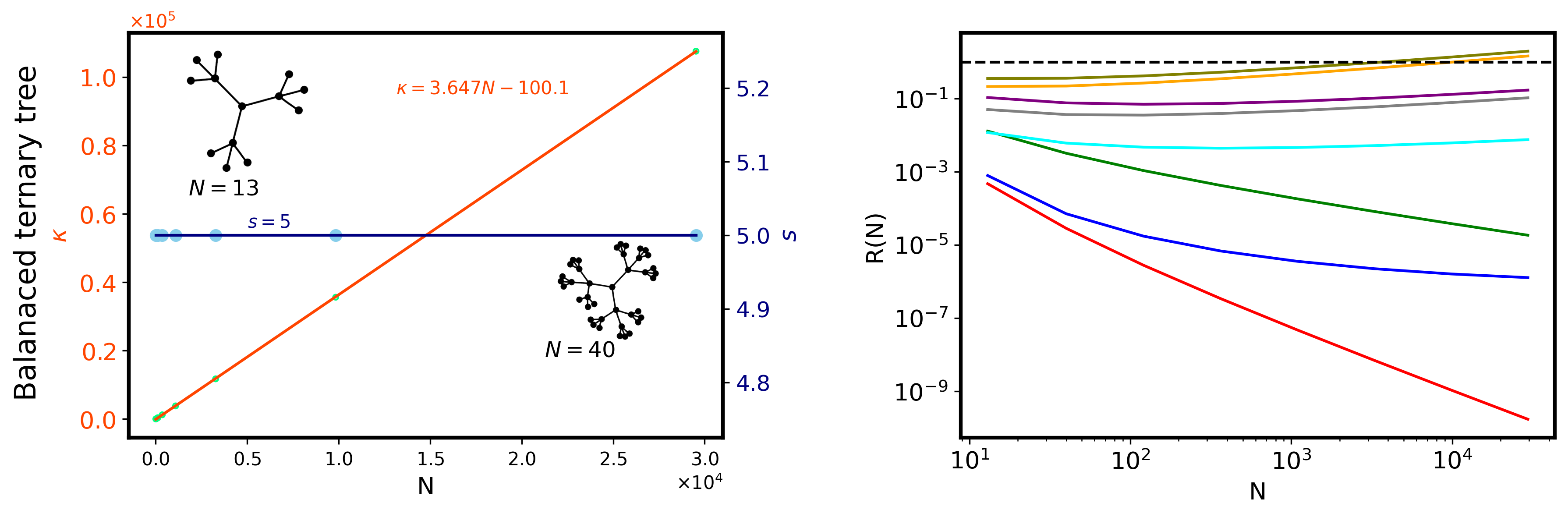} \\
\makebox[0.45\textwidth]{(ag)}
\makebox[0.36\textwidth]{(ah)} \\ 
    
\includegraphics[width=14cm]{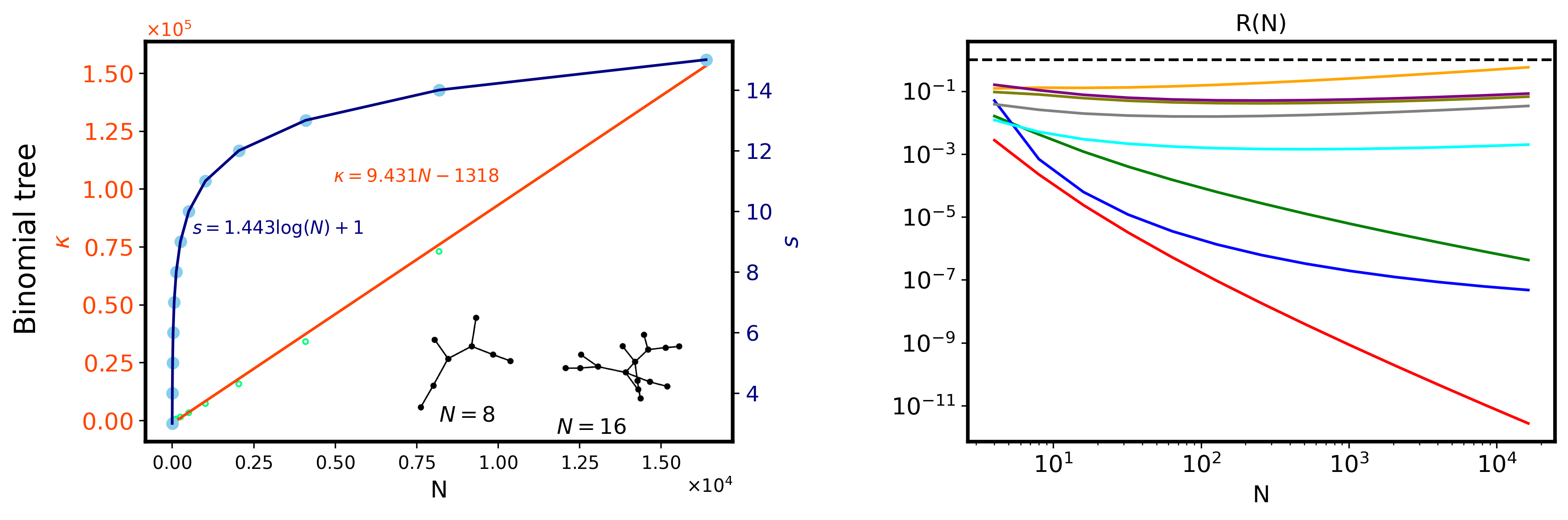} \\
\makebox[0.45\textwidth]{(ai)}
\makebox[0.36\textwidth]{(aj)} \\

\includegraphics[width=14cm]{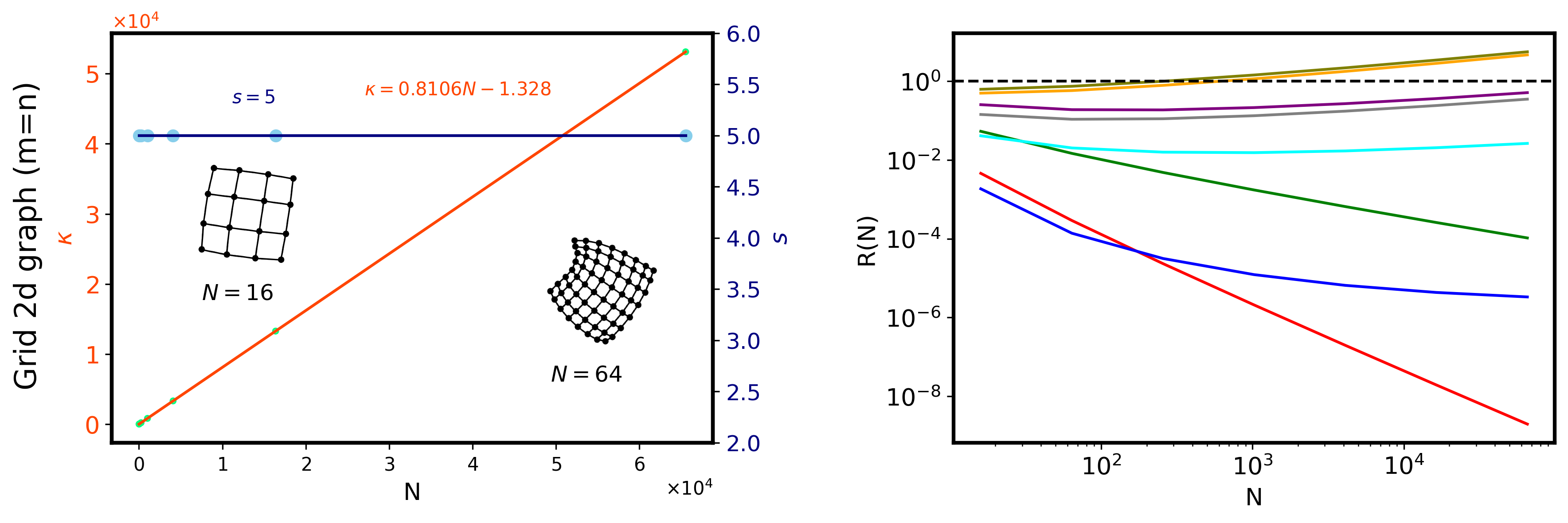} \\
\makebox[0.45\textwidth]{(ak)}
\makebox[0.36\textwidth]{(al)} \\ 

\includegraphics[width=14cm]{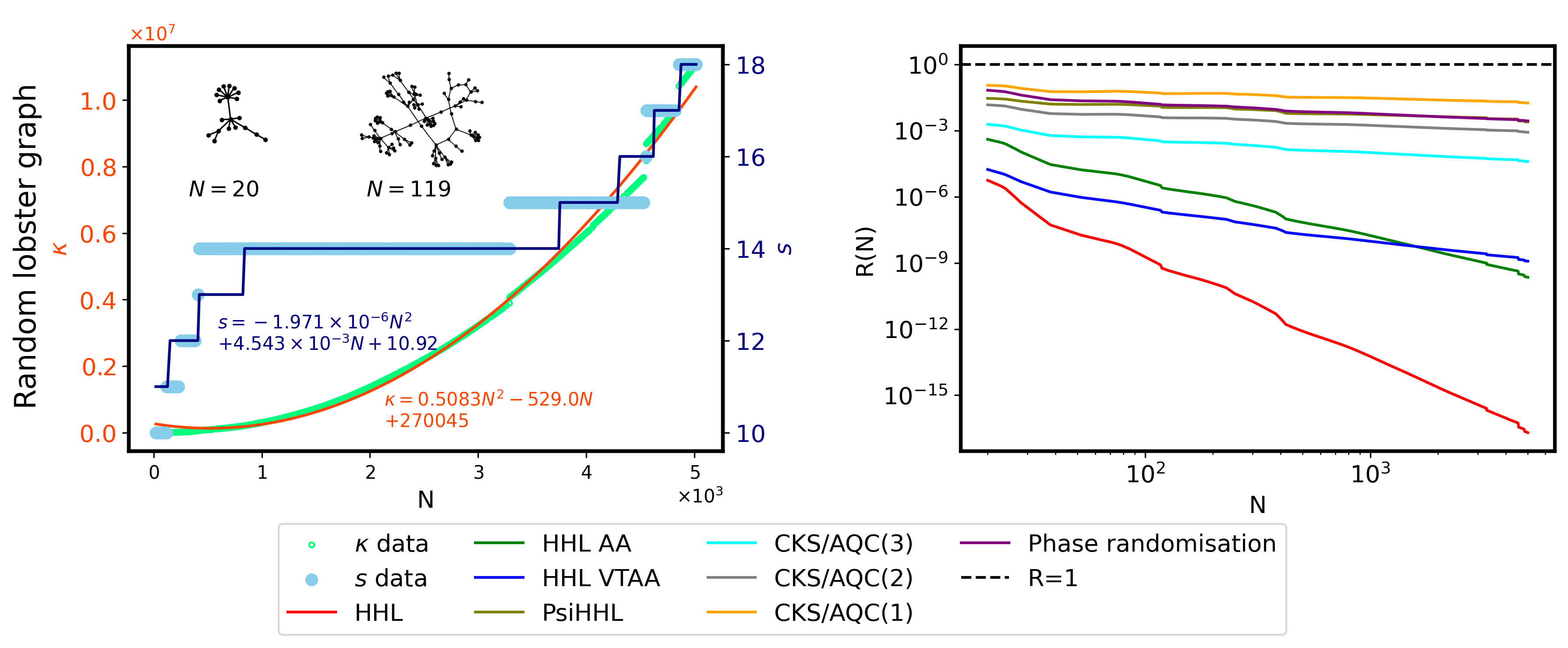} \\
\makebox[0.45\textwidth]{(am)}
\makebox[0.36\textwidth]{(an)} \\ 
    
\multicolumn{1}{r}{(\textit{Continued})} \\
\end{tabular} 

\end{figure*}

\begin{figure*}[t]
\begin{tabular}{c}

\includegraphics[width=14cm]{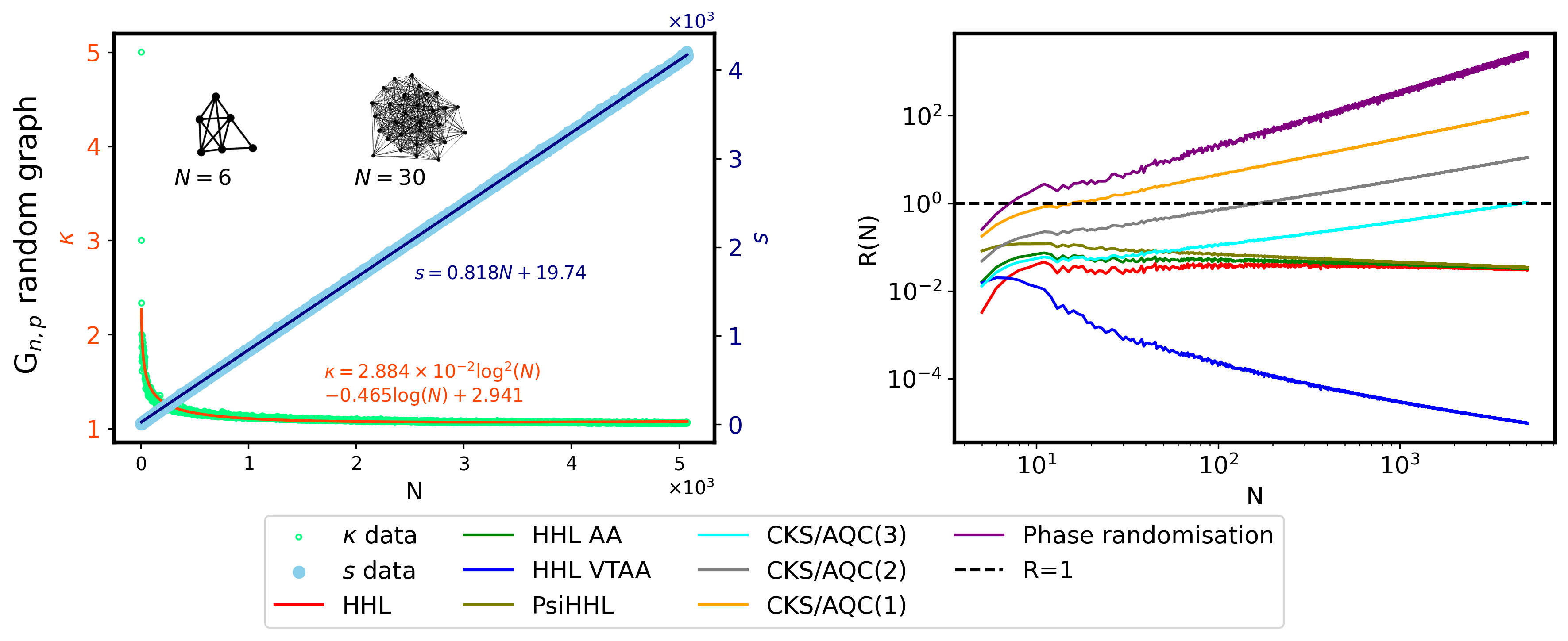} \\
\makebox[0.45\textwidth]{(ao)}
\makebox[0.36\textwidth]{(ap)} \\ 
    
\end{tabular} 
\caption{Sub-figures showing $\kappa$ and $s$ behaviour of bad graphs listed in Table III of main manuscript. We also plot the runtime ratios for different QLSs, considered in our study, with the numerical data. }\label{fig:sm_badgraph_lap}
\end{figure*}

\begin{figure*}[t]
\begin{tabular}{c}
    
\includegraphics[width=14cm]{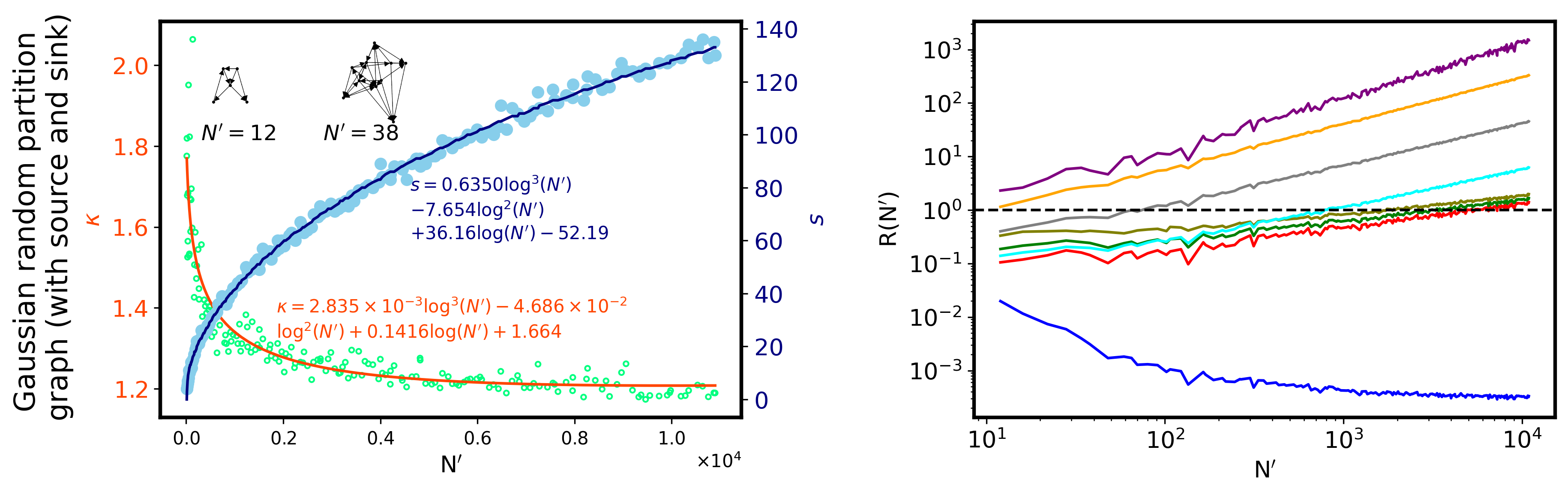} \\
\makebox[0.45\textwidth]{(a)}
\makebox[0.36\textwidth]{(b)} \\ 

\includegraphics[width=14cm]{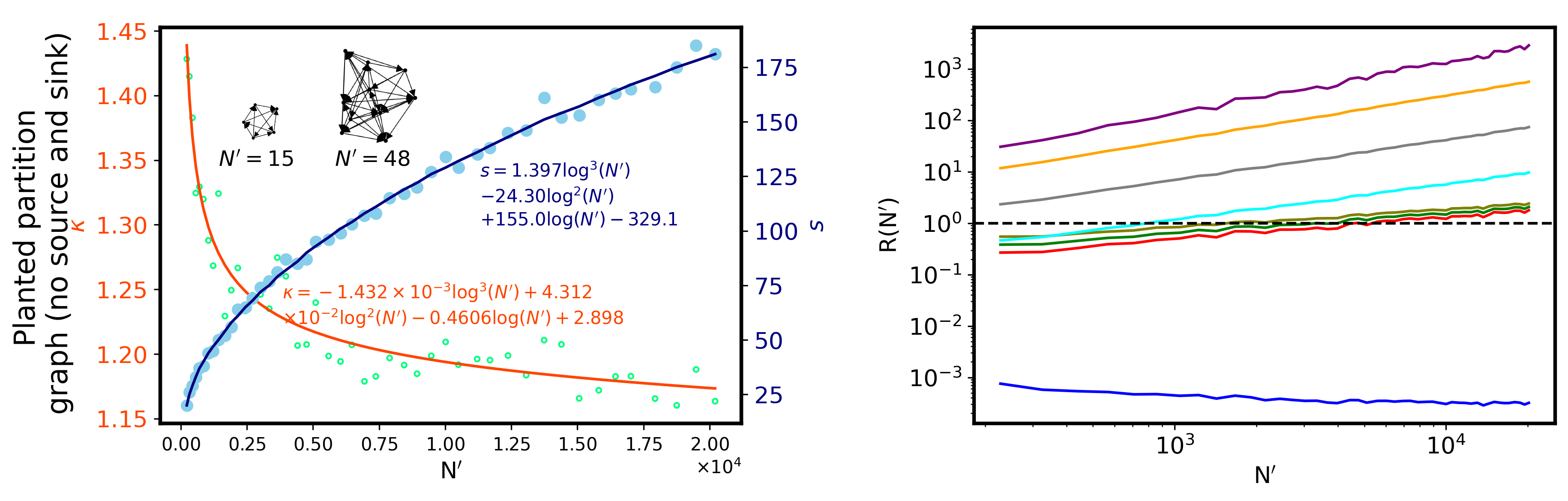} \\
\makebox[0.45\textwidth]{(c)}
\makebox[0.36\textwidth]{(d)} \\ 

\includegraphics[width=14cm]{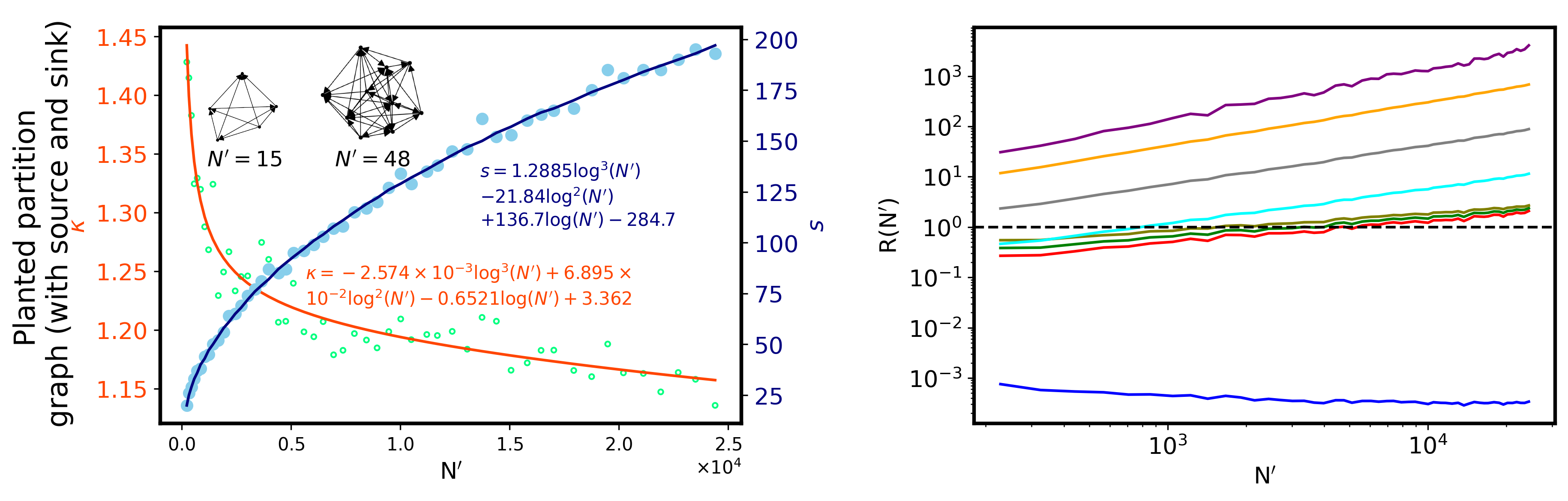} \\
\makebox[0.45\textwidth]{(e)}
\makebox[0.36\textwidth]{(f)} \\ 

\includegraphics[width=14cm]{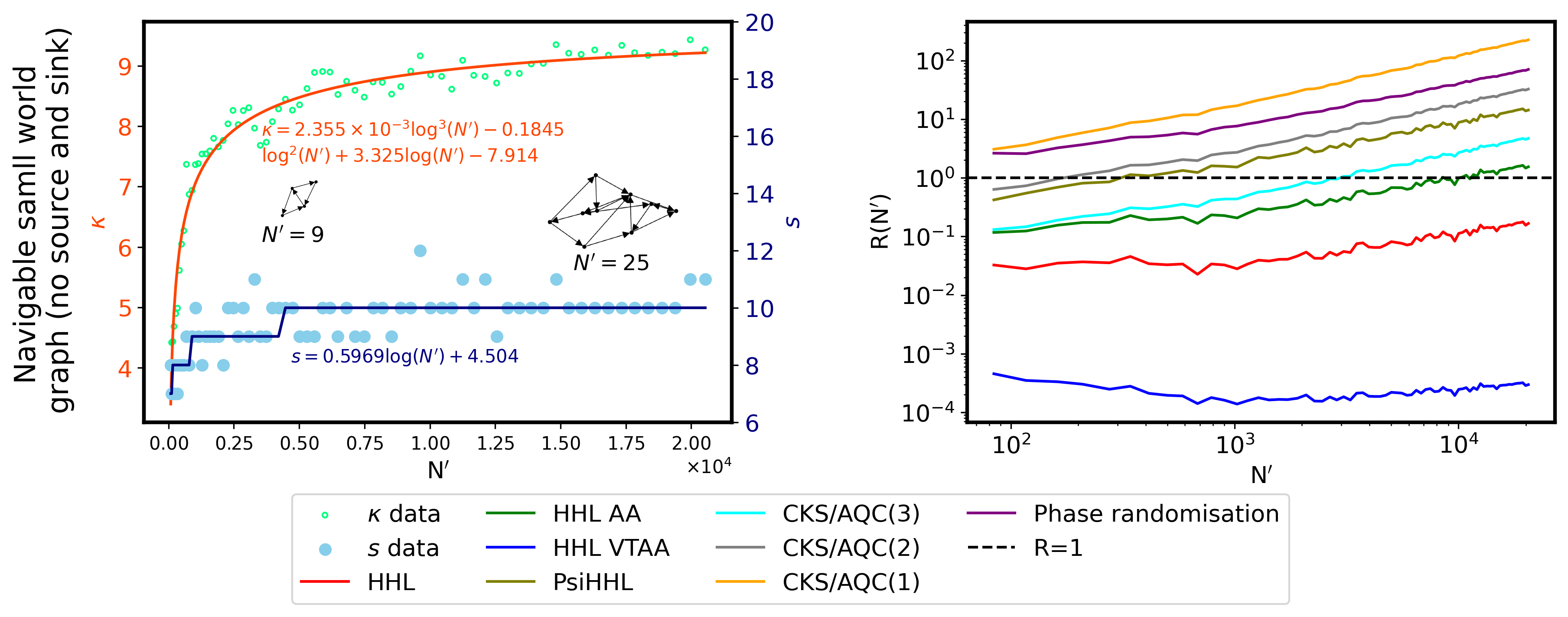} \\
\makebox[0.45\textwidth]{(g)}
\makebox[0.36\textwidth]{(h)} \\ 

\multicolumn{1}{r}{(\textit{Continued})} \\
\end{tabular} 

\end{figure*}

\begin{figure*}[t]
\begin{tabular}{c}

\includegraphics[width=14cm]{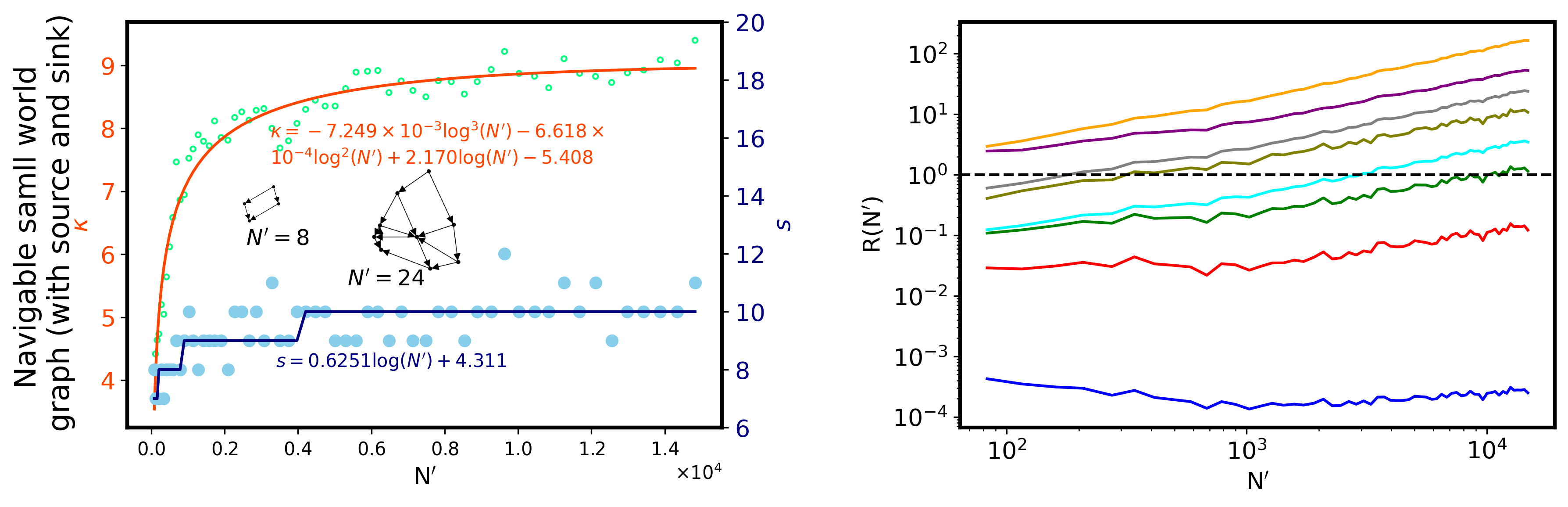} \\
\makebox[0.45\textwidth]{(i)}
\makebox[0.36\textwidth]{(j)} \\ 

\includegraphics[width=14cm]{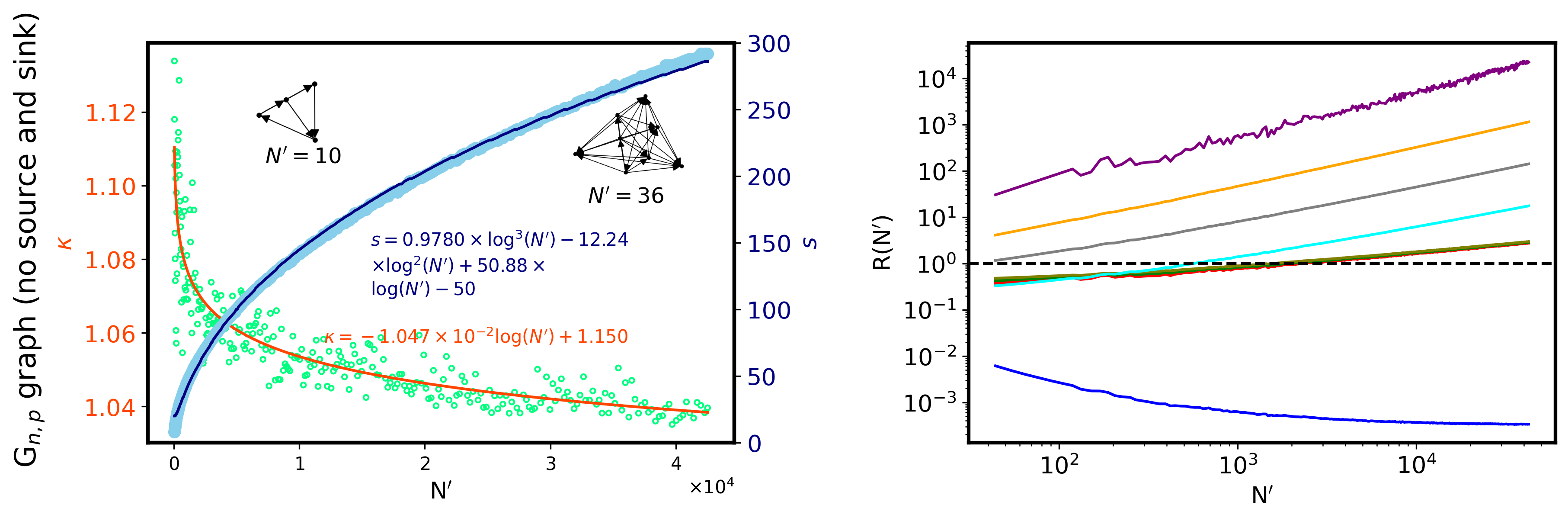} \\
\makebox[0.45\textwidth]{(k)}
\makebox[0.36\textwidth]{(l)} \\ 

\includegraphics[width=14cm]{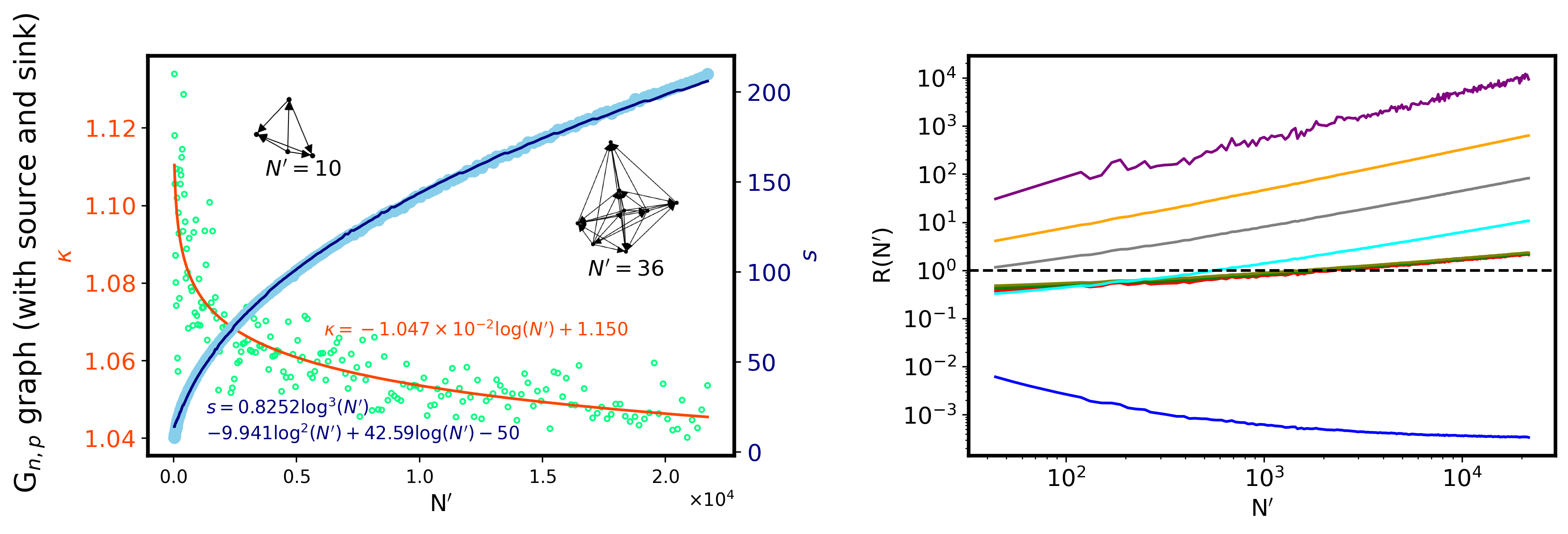} \\
\makebox[0.45\textwidth]{(m)}
\makebox[0.36\textwidth]{(n)} \\ 

\includegraphics[width=14cm]{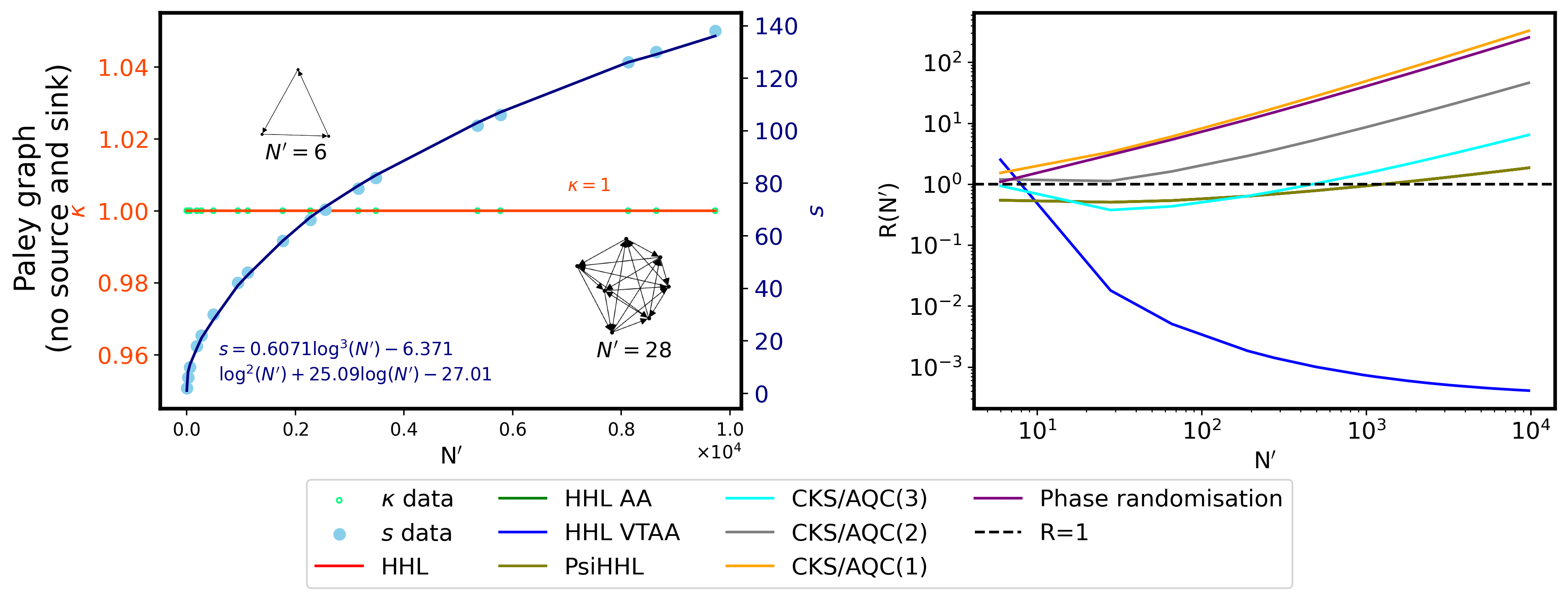} \\
\makebox[0.45\textwidth]{(o)}
\makebox[0.36\textwidth]{(p)} \\ 

\multicolumn{1}{r}{(\textit{Continued})} \\
\end{tabular} 

\end{figure*}

\begin{figure*}[h!]
\begin{tabular}{c}

\includegraphics[width=14cm]{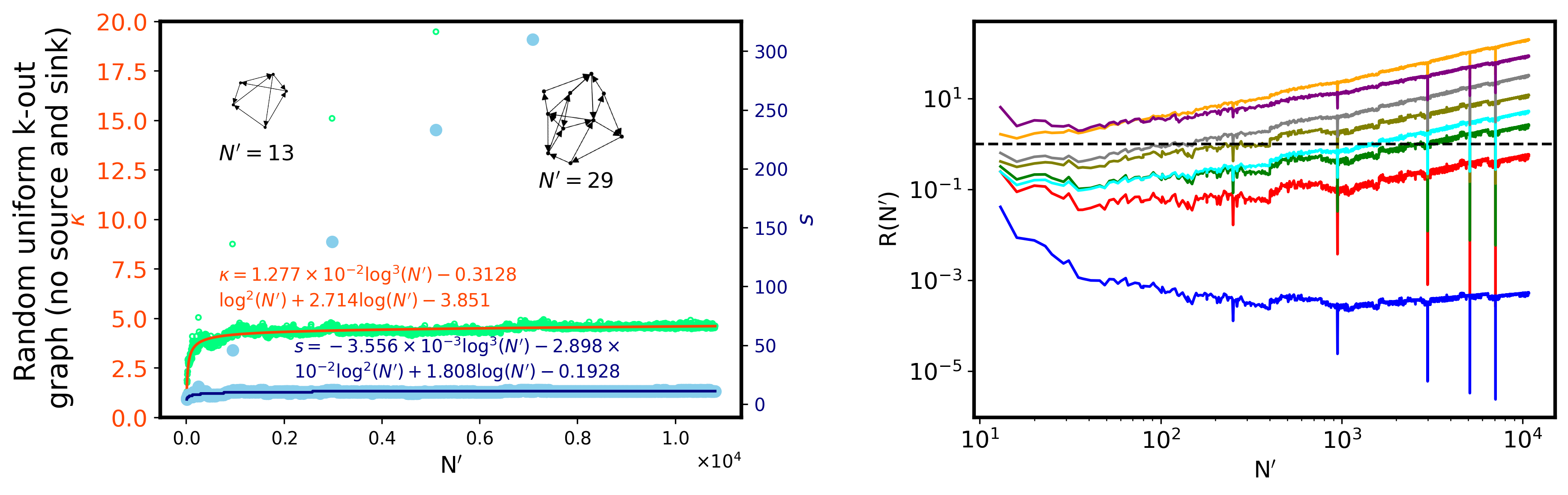} \\
\makebox[0.45\textwidth]{(q)}
\makebox[0.36\textwidth]{(r)} \\ 

\includegraphics[width=14cm]{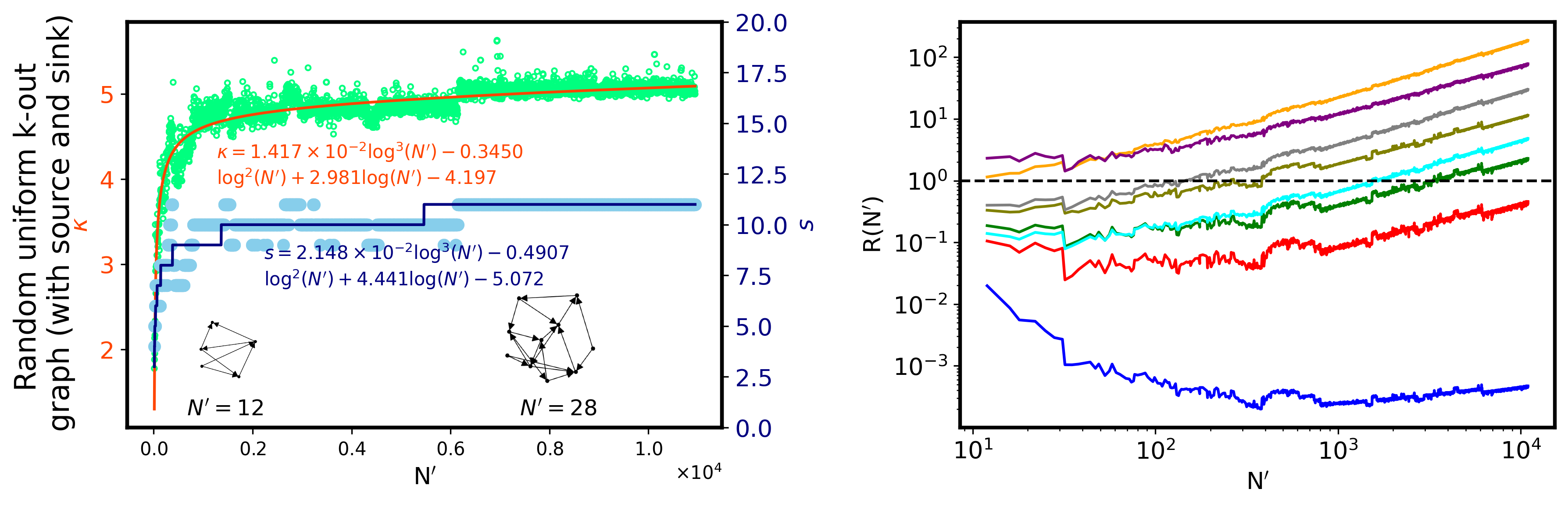} \\
\makebox[0.45\textwidth]{(s)}
\makebox[0.36\textwidth]{(t)} \\ 

\includegraphics[width=14cm]{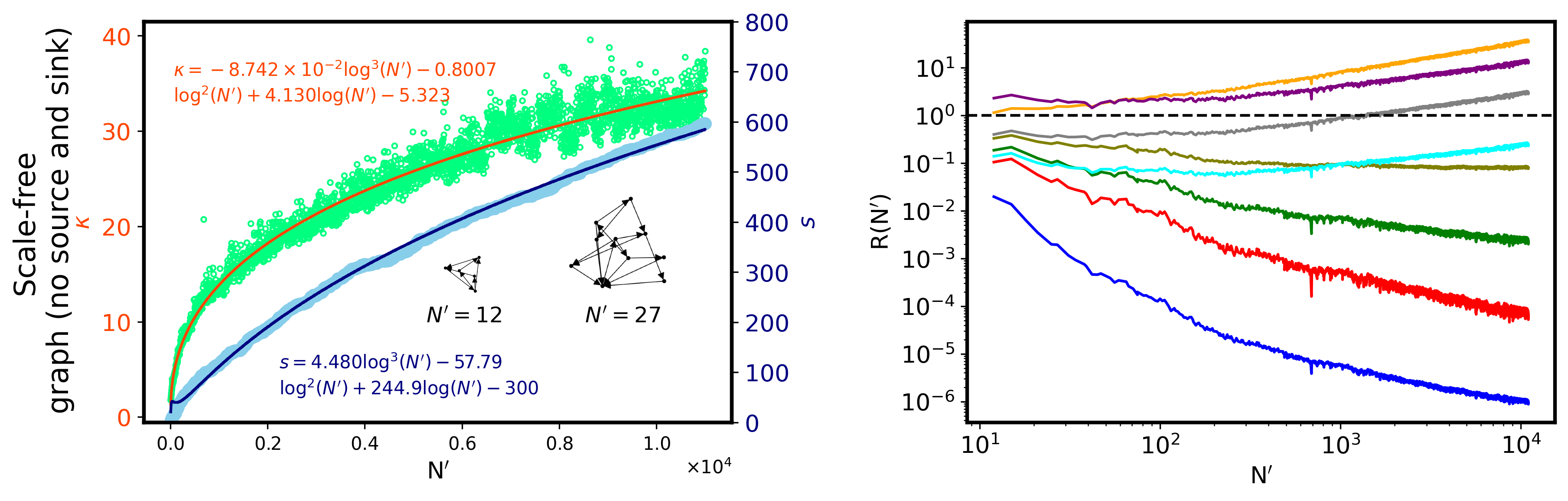} \\
\makebox[0.45\textwidth]{(u)}
\makebox[0.36\textwidth]{(v)} \\ 

\includegraphics[width=14cm]{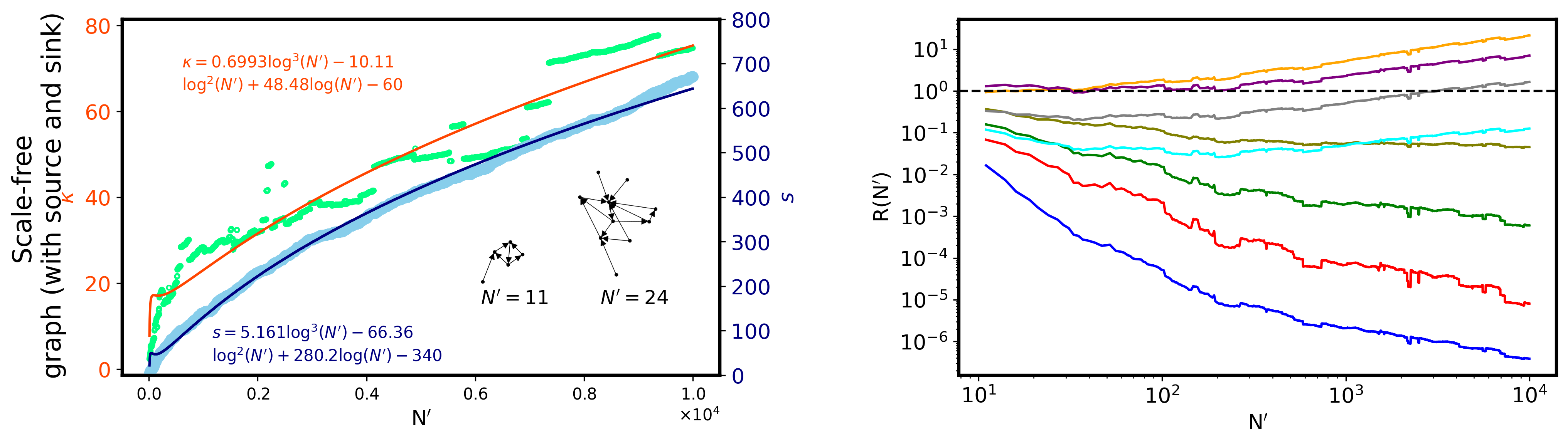} \\
\makebox[0.45\textwidth]{(w)}
\makebox[0.36\textwidth]{(x)} \\
\end{tabular}

\caption{Sub-figures showing $\kappa$ and $s$ behaviour of different good graphs listed in Table IV of the main manuscripts as well as as well as runtime ratios of different QLSs obtained from our numerical data. Sub-figures of first two graph families from Table IV can be found in the main manuscript.   }\label{fig:sm_good_inc}

\end{figure*}


\begin{figure*}[h!]
\begin{tabular}{c}

\includegraphics[width=14cm]{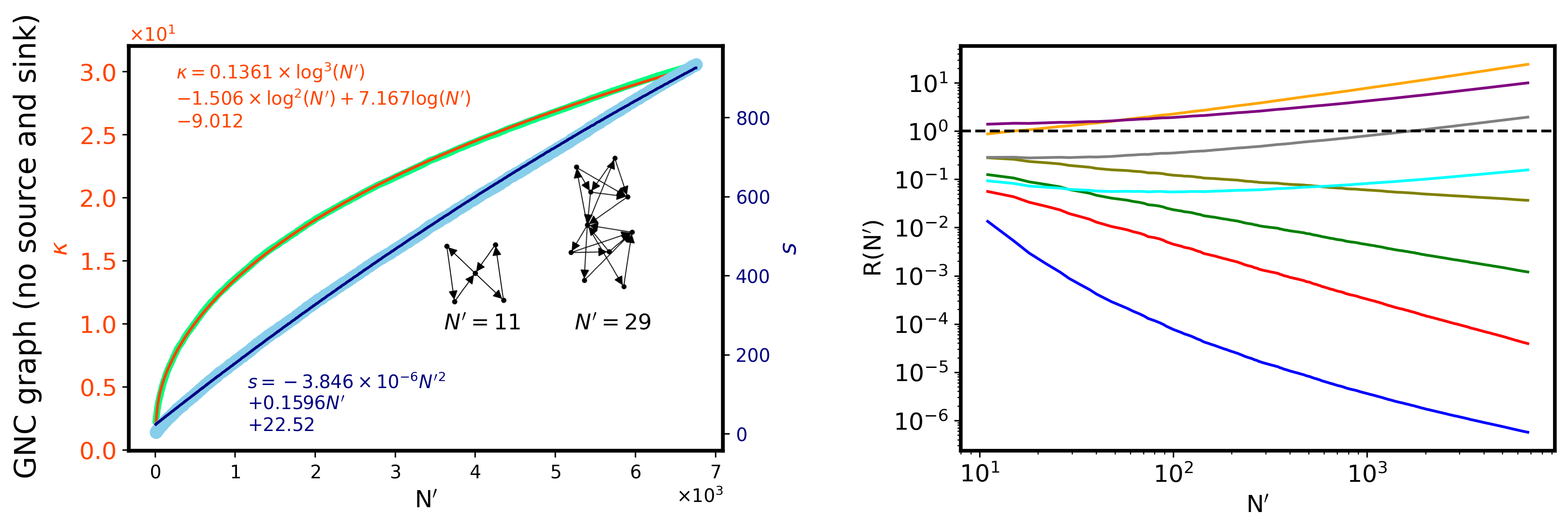} \\
\makebox[0.45\textwidth]{(a)}
\makebox[0.36\textwidth]{(b)} \\

\includegraphics[width=14cm]{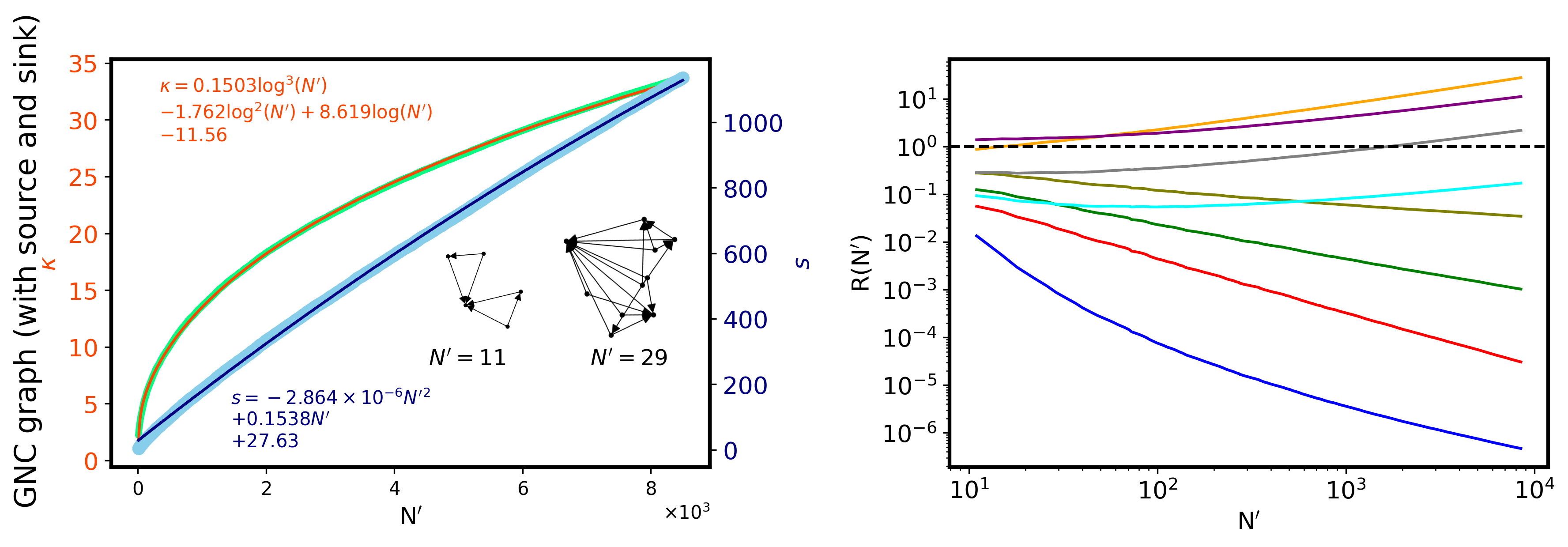} \\
\makebox[0.45\textwidth]{(c)}
\makebox[0.36\textwidth]{(d)} \\

\includegraphics[width=14cm]{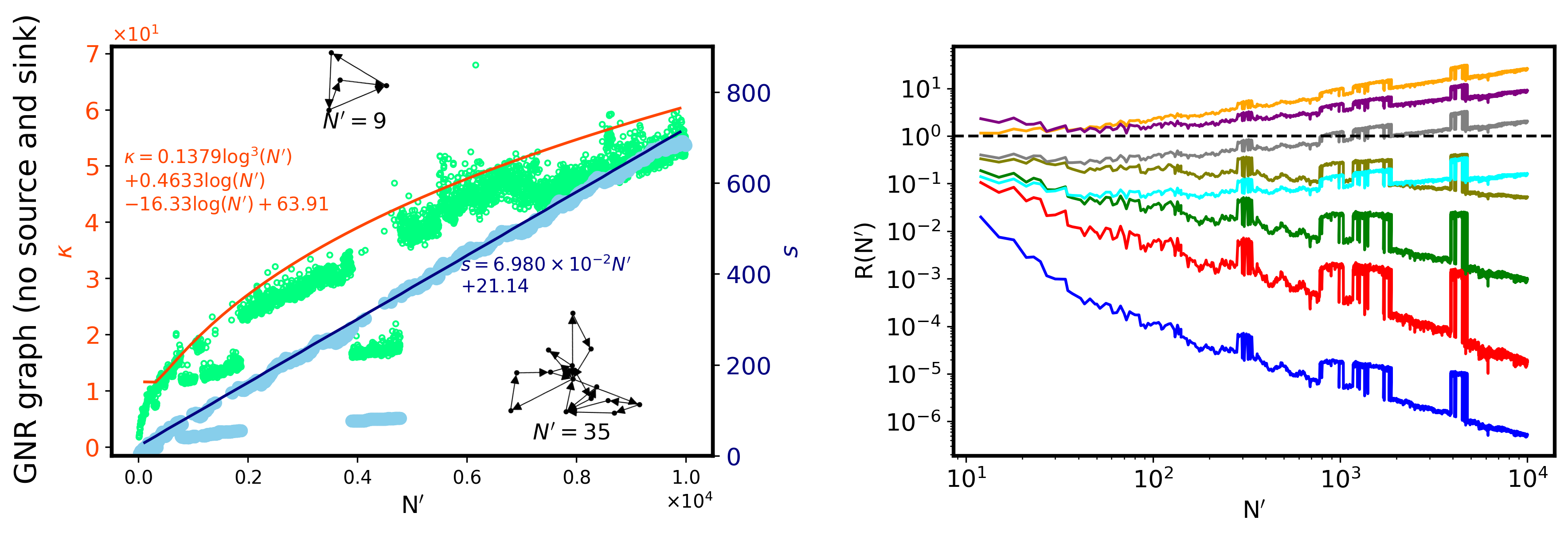} \\
\makebox[0.45\textwidth]{(e)}
\makebox[0.36\textwidth]{(f)} \\

\includegraphics[width=14cm]{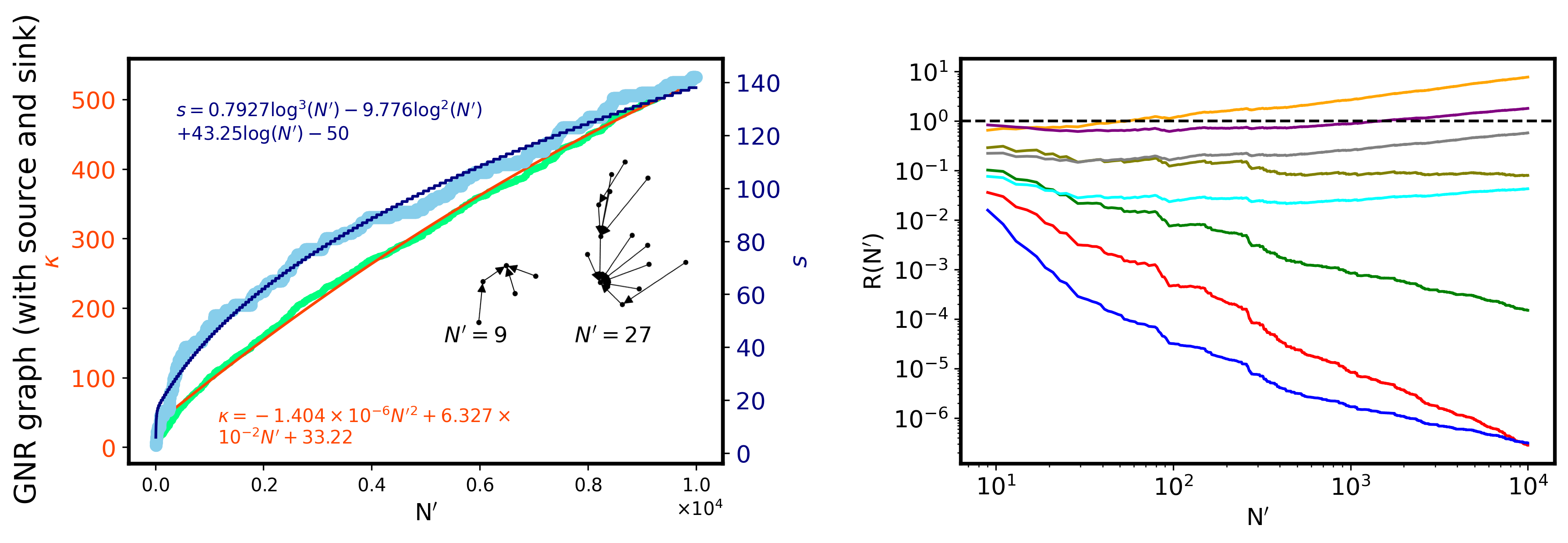} \\
\makebox[0.45\textwidth]{(g)}
\makebox[0.36\textwidth]{(h)} \\
\end{tabular}

\caption{Sub-figures showing $\kappa$ and $s$ behaviour of different bad graphs listed in Table V of the main manuscript and plots of runtime ratios with different QLSs considered in our study.}\label{fig:sm_bad_inc}

\end{figure*}

\begin{figure*}[h!]
\begin{tabular}{c}


\includegraphics[width=17cm]{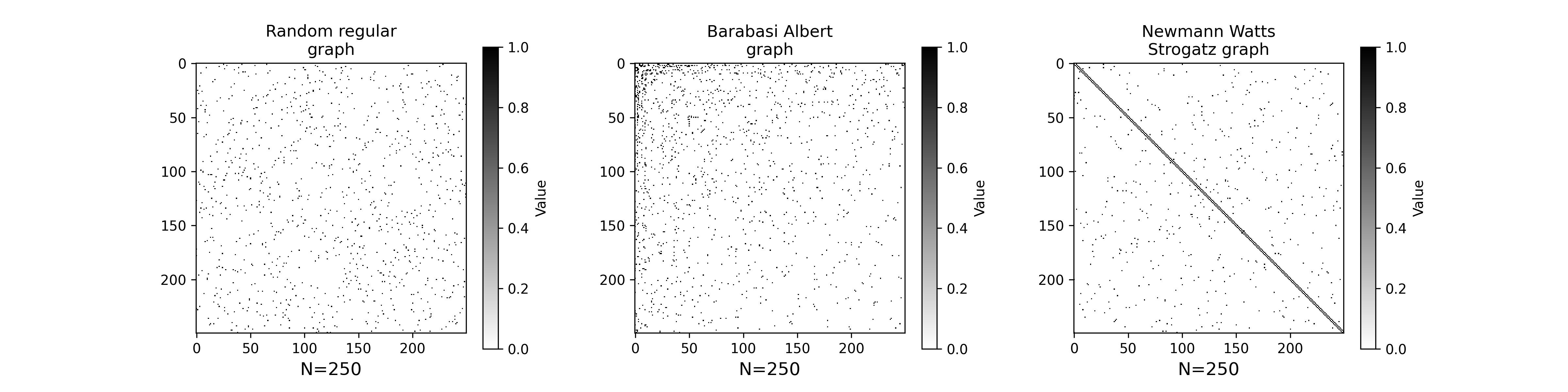} \\
    \vspace{0.5em}
    \makebox[0.31\textwidth]{(a)}
    \makebox[0.31\textwidth]{(b)}
    \makebox[0.31\textwidth]{(c)}\\ 
\includegraphics[width=17cm]{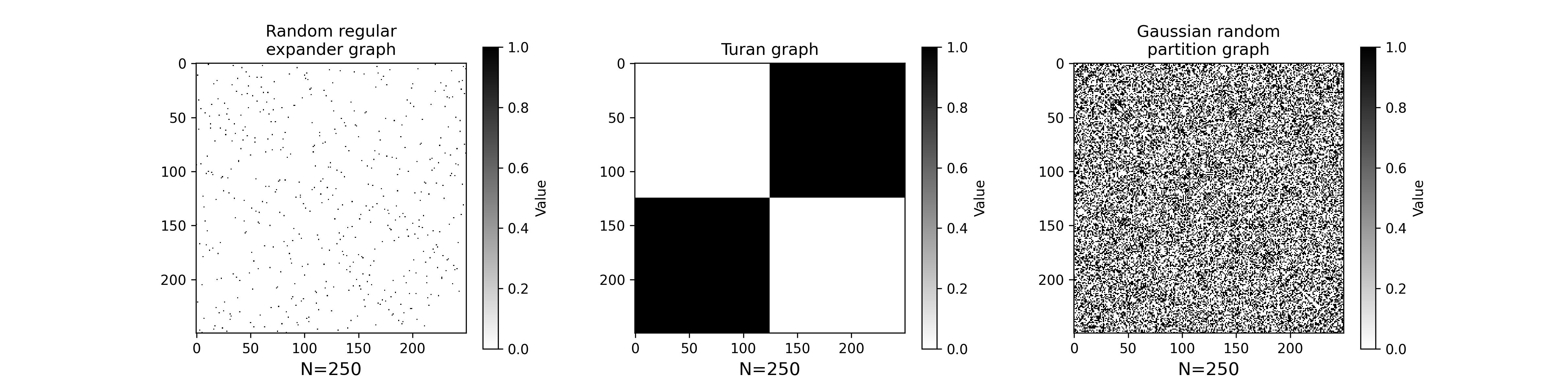} \\
    \vspace{0.5em}
    \makebox[0.31\textwidth]{(d)}
    \makebox[0.31\textwidth]{(e)}
    \makebox[0.31\textwidth]{(f)}\\

\includegraphics[width=17cm]{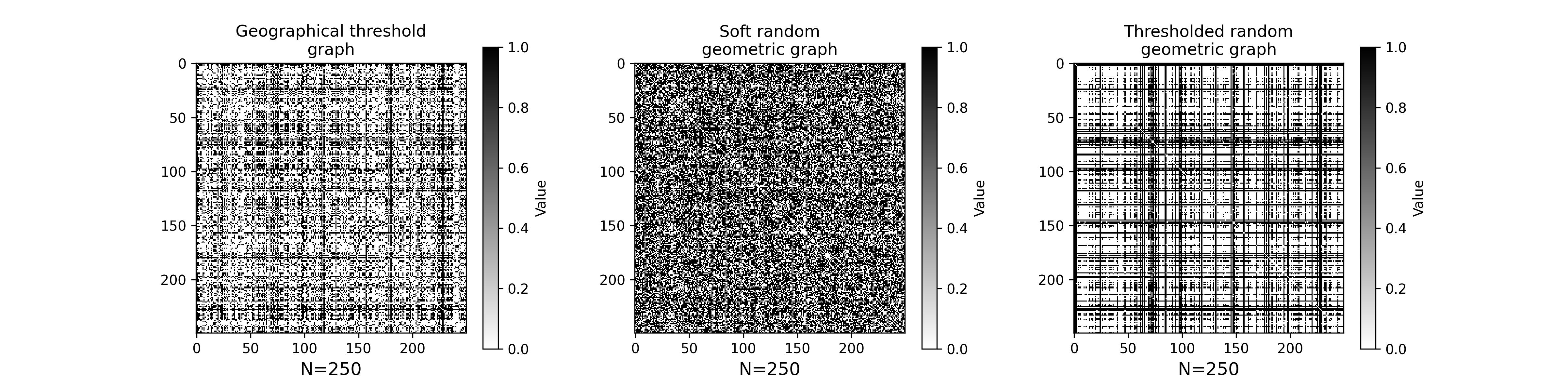} \\
    \vspace{0.5em}
    \makebox[0.31\textwidth]{(g)}
    \makebox[0.31\textwidth]{(h)}
    \makebox[0.31\textwidth]{(i)}\\ 
    
\multicolumn{1}{r}{(\textit{Continued})} \\   
\end{tabular} 
\end{figure*} 

\begin{figure*}[h!]
\begin{tabular}{c}

\includegraphics[width=17cm]{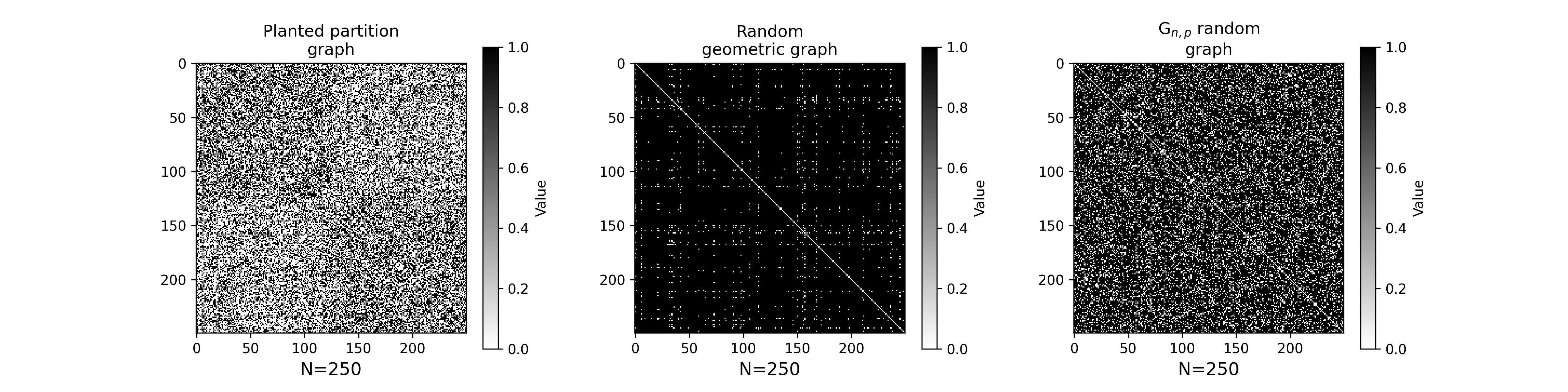} \\
    \vspace{0.5em}
    \makebox[0.31\textwidth]{(j)}
    \makebox[0.31\textwidth]{(k)} 
    \makebox[0.31\textwidth]{(l)}\\

\includegraphics[width=17cm]{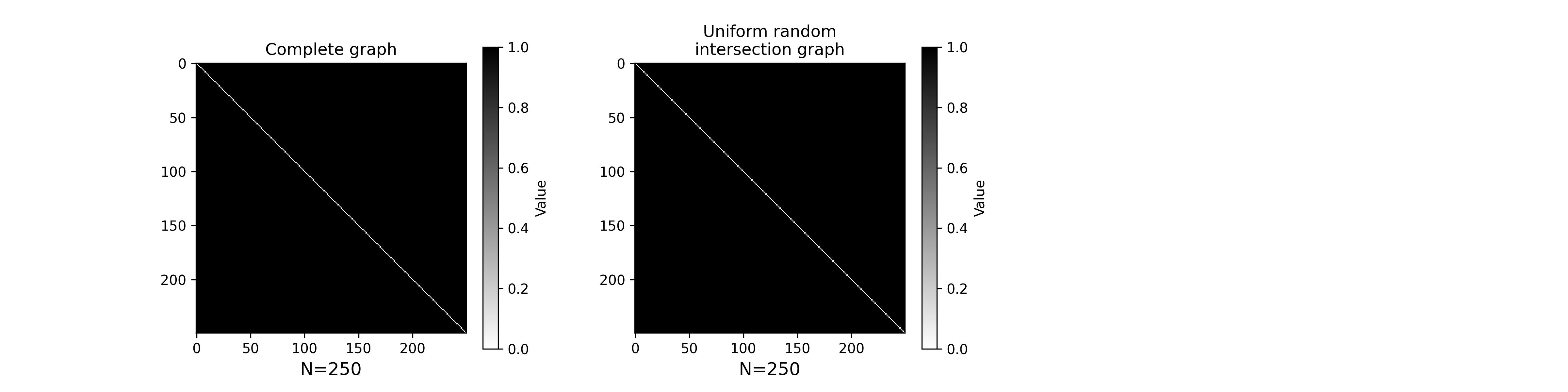} \\
    \vspace{0.5em}
    \makebox[0.31\textwidth]{(m)}
    \makebox[0.31\textwidth]{(n)} 
    \makebox[0.31\textwidth]{}\\

\end{tabular} 
\caption{We have presented one instance from those graph families whose condition number was polylog function of $\mathcal{N}$. We observe that for all these graph families, the entries in the adjacency matrix, thus the Laplacian matrix appear diffused throughout the matrix. }\label{fig:dispersed_graph}

\end{figure*} 

\begin{figure*}[h!]
\begin{tabular}{c}


\includegraphics[width=17cm]{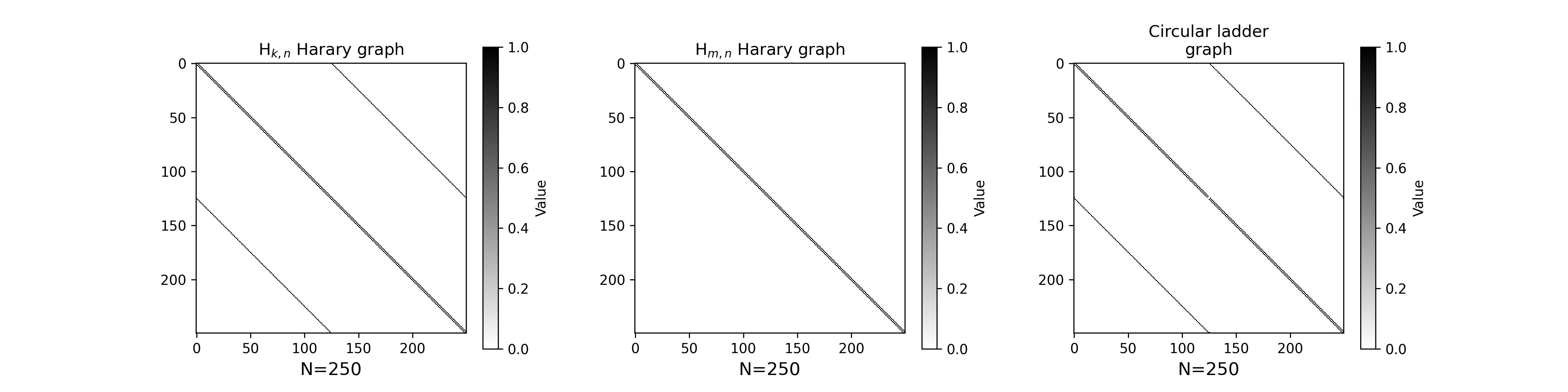} \\
    \vspace{0.5em}
    \makebox[0.31\textwidth]{(a)}
    \makebox[0.31\textwidth]{(b)}
    \makebox[0.31\textwidth]{(c)}\\ 
\includegraphics[width=17cm]{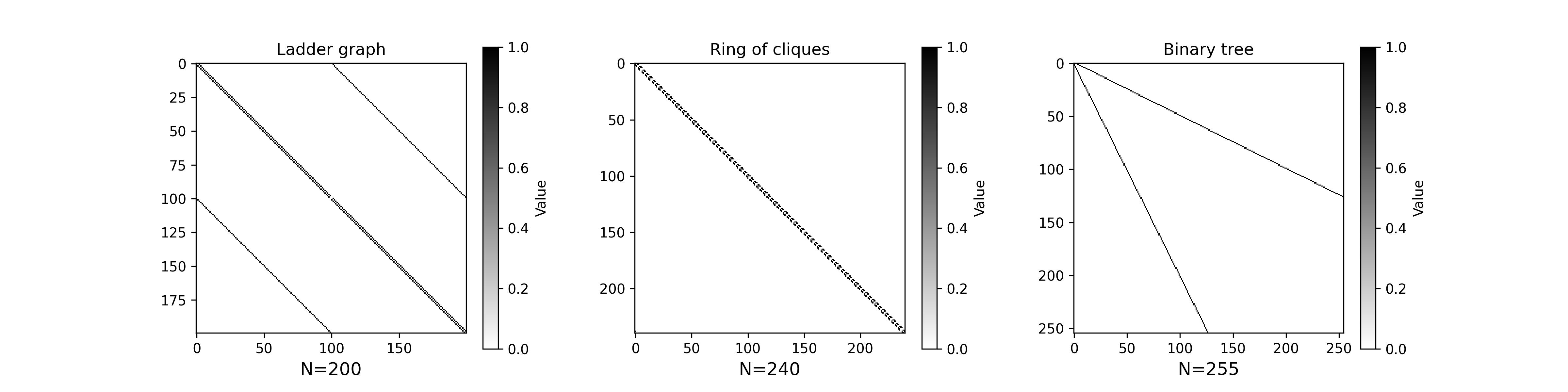} \\
    \vspace{0.5em}
    \makebox[0.31\textwidth]{(d)}
    \makebox[0.31\textwidth]{(e)}
    \makebox[0.31\textwidth]{(f)}\\ 

    \includegraphics[width=17cm]{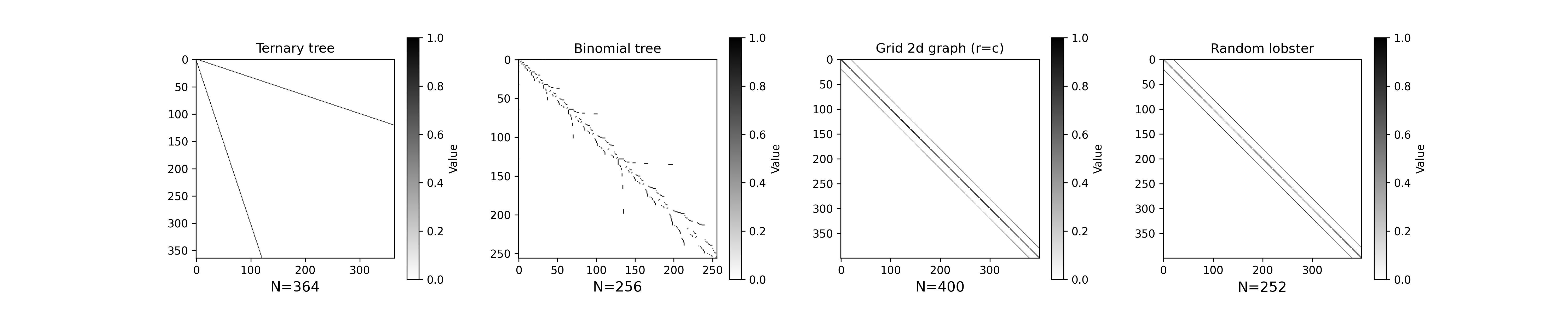} \\
    \vspace{0.5em}
    \makebox[0.31\textwidth]{(g)}
    \makebox[0.31\textwidth]{(h)}
    \makebox[0.31\textwidth]{(i)}\\ 
 
\end{tabular} 

\end{figure*} 

\begin{figure*}[h!]
\begin{tabular}{c}

\end{tabular} 
\caption{We have presented one instance from those graph families whose condition number was a polynomial function of $\mathcal{N}$. For all the graph families mentioned here, we witness a sharp structure in their adjacency matrix and so in its Laplacian matrix.  }\label{fig:band_graph}

\end{figure*} 

\end{appendix}

\clearpage 

\bibliography{apssamp} 

\end{document}